\newtheorem{algorithm}[theorem]{Algorithm}
\let\vec\relax
\let\nbd\nobreakdash
\DeclareMathOperator\vec{vec}
\DeclareMathOperator\sep{sep}
\newcommand\blockmatrix[4]{\left(\begin{array}{c|c} #1 & #2 \\ \hline #3 & #4 \end{array}\right)}
\newcommand\blockvector[2]{\left(\begin{array}{c} #1 \\ #2 \end{array}\right)}
\newcommand{\Gate}[1]{\gate{\raisebox{0pt}[.7em][.5em]{\makebox[1.2em]{$#1$}}}}
\begin{document}

\title{Dissipative ground state preparation and the Dissipative Quantum Eigensolver}
\author{Toby S.\ Cubitt\thanks{%
    \href{mailto:t.cubitt@ucl.ac.uk}{t.cubitt@ucl.ac.uk};
    \href{mailto:toby@phasecraft.io}{toby@phasecraft.io}}}
\affil{Phasecraft Ltd., UK}
\affil{Department of Computer Science, UCL, London, UK}
\maketitle

\begin{abstract}
  For any local Hamiltonian $H$, I construct a local CPT map and stopping condition which converges to the ground state subspace of $H$.
  Like any ground state preparation algorithm, this algorithm necessarily has exponential run-time in general (otherwise BQP$=$QMA), even for gapped, frustration-free Hamiltonians (otherwise BQP$\supseteq$NP).
  However, this dissipative quantum eigensolver has a number of interesting characteristics, which give advantages over previous ground state preparation algorithms.
  \begin{itemize}
  \item%
    The entire algorithm consists simply of iterating the same set of local measurements repeatedly.
  \item%
    The expected overlap with the ground state subspace increases monotonically with the length of time this process is allowed to run.
  \item%
    It converges to the ground state subspace unconditionally, without any assumptions on or prior information about the Hamiltonian.
  \item%
    The algorithm does not require any variational optimisation over parameters.
  \item%
    It is often able to find the ground state in low circuit depth in practice.
  \item%
    It has a simple implementation on certain types of quantum hardware, in particular photonic quantum computers.
  \item%
    The process is immune to errors in the initial state.
  \item%
    It is inherently error- and noise-resilient, i.e.\ to errors during execution of the algorithm and also to faulty implementation of the algorithm itself, without incurring any computational overhead: the overlap of the output with the ground state subspace degrades smoothly with the error \emph{rate}, independent of the algorithm's run-time.
  \end{itemize}

  I give rigorous proofs of the above claims, and benchmark the algorithm on some concrete examples numerically.
\end{abstract}

\clearpage

\tableofcontents

\clearpage

\section{Introduction}\label{sec:introduction}
Finding ground states of quantum many-body systems is one of the most important---and one of the most notoriously difficult---problems in physics, both in scientific research and in practical applications.
Large amounts of computational resources in academia and industry are devoted to solving this or closely-related problems in computational modelling of quantum chemistry, materials science and condensed matter physics.

A number of successful classical variational algorithms for solving ground state problems have been developed over the last half century, and are now routinely used in computational modelling.
Ranging from density functional theory (DFT)~\cite{DFT}, widely used in quantum chemistry and materials modelling; to quantum Monte-Carlo, the method of choice for many condensed matter and high-energy physics problems; to tensor-network methods~\cite{tensor_networks} (of which the density matrix renormalisation group (DMRG)~\cite{DMRG} algorithm for 1D systems can be viewed as an example), which are increasingly seeing use in condensed matter theory.

However, although these methods are highly successful at solving ground state problems in certain cases, they fail to give meaningful results in others.
Indeed, we know from a complexity-theoretic perspective that all these methods must necessarily fail even for certain classical systems.
The ground state energy problem is NP-hard even for simple 2D many-body spin-lattice systems with nearest-neighbour interactions~\cite{Barahona}.
I.e.\ it is in a precise sense at least as hard as every other problem in the complexity class NP.
NP-hardness holds even for classical Hamiltonians that are frustration-free and have a constant, uniformly bounded spectral gap.

The argument is standard, but bears repeating.
Rewriting e.g.\ a 3SAT problem as a Hamiltonian in the standard way gives a classical, local Hamiltonian with spectral gap $\geq 1$, which is frustration-free iff the 3SAT problem is satisfiable.
If there were a polynomial-time algorithm that was guaranteed to find ground states of gapped, classical, local, frustration-free Hamiltonians, then one could run it for the appropriate time on a 3SAT Hamiltonian (regardless of whether or not it's frustration-free), and check whether the resulting state satisfies the 3SAT problem.
(The latter can be done efficiently because 3SAT is in NP.)
If it does, we've found a satisfying assignment; if it doesn't, we know the 3SAT problem is unsatisfiable, otherwise the frustration-free ground state algorithm would have found a solution.

For quantum systems, the situation is more challenging still.
The quantum ground state energy problem (often called the ``local Hamiltonian problem'') is QMA-hard~\cite{Kitaev_book}: a complexity class containing---and believed to be strictly larger than---the class NP.
As in the classical setting, it is still QMA-hard for simple 2D quantum spin-lattice models with nearest-neighbour interactions~\cite{KKR,Oliveira-Terhal,quantum_Schaeffer,quantum_Schaeffer_long}, and even for 1D spin chains~\cite{AGIK,Gottesman-Irani}.
(On the other hand, all QMA-hard many-body systems to date exhibit a spectral gap that decreases with system size~\cite{QMA-critical}.)
It is QMA${}_1$-hard for frustration-free quantum Hamiltonians---a class that sits somewhere between NP and QMA, but is still believed to be strictly larger than NP---by similar arguments to the classical case but applied to the QSAT problem.~\cite{QSAT}

In the early 1980s, Feynman~\cite{Feynman} suggested that, if quantum many-body systems were so difficult to solve computationally, perhaps quantum mechanics had greater computational power than classical mechanics, therefore one should use quantum computers to solve these problems.
However, the problem Feynman was targeting was \emph{not} ground state problems, but the (also classically challenging) problem of simulating the time-dynamics of quantum many-body systems.
We now know in great detail~\cite{Feynman,Lloyd,lincomb_unitaries,signal_processing,Haah_simulation} that time-dynamics simulation \emph{is} tractable on quantum computers: it is in the class BQP---indeed, it is complete for that class---whereas we do not know of any general sub-exponential-time classical algorithm for simulating quantum time-dynamics.
However, the quantum ground state energy problem is QMA-hard, not in BQP.
We therefore do not expect even quantum computers to be able to solve ground state problems in sub-exponential time in general.\footnote{Unless BQP$=$QMA, which would be the quantum analogue of NP collapsing to P in the infamous P vs.\ NP problem and is generally considered just as unlikely.}

Nonetheless, as ground state problems are of such practical importance, and classical algorithms are often successful despite the theoretical exponential worst-case complexity, a number of quantum algorithms for the ground state problem have been proposed and studied.
We review a number of these briefly in \cref{sec:background}.
The most relevant of these to this work is the dissipative state engineering algorithm of~\cite{VWC,Barbara}.
(Ref.~\cite{VWC} also contains other results, discussed in more detail in \cref{sec:background}.)

The dissipative state engineering algorithm provably finds the ground state of any frustration-free Hamiltonian, i.e.\ any Hamiltonian $H=\sum_{i=1}^m h_i$ such that the ground state $\ket{\psi_0}$ of $H$ is simultaneously the ground state of each local term $h_i$ individually.
The algorithm is deceptively simple and elegant, and is closely related to classical local search algorithms for constrain satisfaction problems:
Given a many-body Hamiltonian, for each local term of the Hamiltonian in turn, it performs a projective measurement onto the local ground space of that term or its complement.
If the measurement does not successfully project onto the local ground space, the state of the measured qubits is ``resampled'' (e.g.\ by replacing it with the maximally mixed state, i.e.\ resetting those qubits to a random computational basis state).
This process corresponds to iterating the following completely positive, trace-preserving (CPT) map:
\begin{equation}\label{eq:dissipative_engineering}
  \cE(\rho) = \frac{1}{m} \sum_{i=1}^m\left(
    \Pi_i\rho\Pi_i + \tr((\1-\Pi_i)\rho)\frac{\1}{D}
  \right),
\end{equation}
where $\Pi_i$ is the projector onto the ground space of the local term $h_i$.

Ref.~\cite{VWC} proves that, for any frustration-free Hamiltonian, iterating this CPT map is guaranteed to eventually converge to the ground state (albeit in exponential time in the worst case, consistent with the complexity theoretic considerations just discussed).

The dissipative state engineering algorithm has the interesting feature of promising to be inherently resilient to errors and noise, without any additional error-correction or fault-tolerance overhead.
Since the output of the algorithm is the fixed point of a quantum Markov process (i.e.\ the fixed point of the CPT map \cref{eq:dissipative_engineering}), the algorithm will converge to the ground state from any initial state.
Moreover, if the state is hit by an error part way through the process, it will inexorably converge back to the fixed point and hence to the desired ground state.
Since noise and errors are by far the biggest obstacle to practical implements of quantum algorithms, this feature is particularly attractive for near-term implementations.

The big limitation of this algorithm is that it only works for frustration-free Hamiltonians.
Not only does this restrict its applicability, it is also complexity-theoretically intractable to know in advance whether a given Hamiltonian is frustration-free or not.
This has thus far limited the applicability of dissipative state engineering.
The subsequent literature on ground state preparation has largely dismissed dissipative state engineering as a viable method for finding ground states on these grounds, and focused on other alternatives (see \cref{sec:background}).

\section{Main Results}

In this work, I develop a new quantum algorithm for preparing ground states of Hamiltonians, which overcomes the limitations of dissipative state engineering and also many of the limitations of other methods.
In particular, this Dissipative Quantum Eigensolver (DQE) algorithm retains all the advantages of dissipative state engineering, but is not restricted to the class of frustration-free Hamiltonians; it works for any Hamiltonian, without restriction and without any prior knowledge of, or assumptions on, any spectral or other properties of the Hamiltonian.

At a high level, there are two main ingredients that allow DQE to overcome the limitations of the original dissipative state engineering process:
\begin{description}
\item[Weak measurements:]%
  Rather than performing projective measurements onto the ground spaces of the local terms in the Hamiltonian, the DQE algorithm performs weak measurements of the local terms, i.e.\ generalised quantum measurements where one of the measurement operators is close to the identity.
  Intuitively, for frustration-free Hamiltonians, projective measurements are OK because successfully projecting onto the local ground space cannot take the state out of the global ground space (by definition of frustration-freeness).
  For frustrated Hamiltonians, however, successfully projecting onto the local ground space might drive the state completely out of the global ground space.

  On the other hand, intuitively, a successful weak measurement only ``nudges'' the state a towards a locally lower-energy state.
  This might still be the wrong direction globally.
  But the weak measurement only weakly perturbs the global state, rather than projecting it to completely the wrong subspace.
  By iterating the local weak measurements, the state is nudged incrementally towards the global ground state, rather than bouncing back and forth between local ground spaces.\footnote{Though the details are different, this is somewhat reminiscent of the approach used to generalise the constructive Quantum Lovasz Local Lemma~\cite{QLLL} from commuting Hamiltonians~\cite{commuting_QLLL,commuting_QLLL2} to general Hamiltonians~\cite{constructive_QLLL}.}

\item[Stopping conditions:]%
  Weak-measurements alone are not sufficient to solve the problem for general Hamiltonians.
  The fixed point of the process will now depend on the balance between the probability that a weak-measurement of a local term, performed on the global ground state, succeeds; and the probability that it fails and the corresponding qubits get resampled.
  I.e.\ it depends on how frustrated the Hamiltonian is.
  The more frustrated the Hamiltonian, the further the global ground space is from the ground spaces of the local terms, the further the fixed point will be from the true global ground state.

  However, the original fixed-point dissipative state engineering algorithm makes no use of the information accrued from the measurement outcomes whilst running the process; it simply discards this information after each iteration of the CPT map.
  By making use of the information provided by the measurement outcomes to conditionally stop the process, depending on the sequence of measurement outcomes observed up to that point, we can ensure the process stops on a state that is not the fixed point, but rather one guaranteed to be close to the ground space.\footnote{This is somewhat reminiscent of Sch\"oning's algorithm for SAT~\cite{Schoning}, which stops after a carefully chosen number of steps that is too small to have converged to the steady state.
    Though the stopping conditions for DQE are not the same as those for Sch\"oning's algorithm.
    Sch\"oning's algorithm has also been generalised to the quantum case for 2QSAT~\cite{quantum_Schoening}.}
\end{description}

First, we need a general definition of the DQE family of algorithms.

\begin{algorithm}[DQE]\label{DQE}
  Let $K = \sum_i k_i$ be an approximate ground state projector (AGSP) for a Hamiltonian $H$ (see \cref{sec:AGSP}), where $k_i$ are Hermitian operators.
  Let $\{\cE_{i,0}^{(t)},\cE_{i,1}^{(t)}\}$ be the quantum instrument defined by
  \begin{equation}
    \cE_{i,0}^{(t)}(\rho) = E_i^{(t)} \rho (E_i^{(t)})^\dg, \quad
    \cE_{i,1}^{(t)}(\rho) = \left(1-\tr\bigl(E_i^{(t)} \rho (E_i^{(t)})^\dg \bigr)\right) \cR_i^{(t)}(\rho),
  \end{equation}
  where $E_i^{(t)} := (1-\epsilon_t)\1 + \epsilon_t k_i$, the resampling map $\cR_i^{(t)}$ is CPT, and $0 < \epsilon_t < 1$ is a sequence of real numbers.
  Let $\tau$ be a stopping rule.

  The DQE algorithm consists of successively applying the weak-measurements $\cE_i^{(t)} := \{\cE_{i,0}^{(t)},\cE_{i,1}^{(t)}\}$ in a prescribed order (which could be deterministic or stochastic), and stopping according to the stopping rule $\tau$.
\end{algorithm}

The main result is that this algorithm converges to the ground state for any Hamiltonian, without any spectral or other assumptions on the Hamiltonian.

\begin{theorem}[informal, see \cref{decaying_CPT_map} for precise version]
  We can give an explicit choice of parameters in the DQE \cref{DQE} such that, for any $k$-local Hamiltonian, \cref{DQE} can be implemented using $k$-local generalised measurements and, as the allowed total run-time $t$ increases, the state at the stopping time $\tau$ converges to a ground state of $H$:
  \begin{equation}
    \lim_{t\to\infty} \tr(\Pi_0\rho_\tau) = 1,
  \end{equation}
  where $\Pi_0$ is the projector onto the ground space of $H$.
\end{theorem}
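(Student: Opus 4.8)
The plan is to exploit the fact that, along any trajectory in which no resampling occurs, the DQE process implements an approximation to imaginary-time evolution generated by the AGSP $K$. Writing $\eta_t := \epsilon_t/(1-\epsilon_t)$, one full sweep of successful weak measurements applies the unnormalised operator $\prod_i E_i^{(t)} = (1-\epsilon_t)^m \prod_i (\1 + \eta_t k_i)$, and since the scalar prefactor $(1-\epsilon_t)^m$ cancels on renormalisation, this equals $\1 + \eta_t K$ to leading order in $\eta_t$. First I would make this precise by bounding the non-commuting correction $\prod_i(\1+\eta_t k_i) - (\1 + \eta_t K)$ in operator norm by $O(\eta_t^2)$ (a Trotter-type estimate, using boundedness of the $\lVert k_i\rVert$ and $k$-locality to control the number of overlapping terms), so that a block of consecutive successful sweeps acts as $\exp\bigl(s\,K\bigr)$ with accumulated imaginary time $s=\sum_t\eta_t$ and a cumulative error I can keep below any fixed $\delta$ by taking the $\epsilon_t$ small enough: for fixed target $s$ the total error scales like $\bigl(\max_t\eta_t\bigr)\,s\to 0$.

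Next I would invoke the defining property of the AGSP (\cref{sec:AGSP}): the top eigenspace of $K$ coincides with the ground space $\Pi_0$ of $H$, separated from the rest of the spectrum by a gap $\gamma>0$. Consequently $\exp\bigl(sK\bigr)$, renormalised to a state, amplifies the ground-space component, and the residual overlap with the orthogonal complement decays like $e^{-2\gamma s}$. Combined with the Trotter estimate of Step~1, this shows that after a sufficiently long uninterrupted run of successful measurements the renormalised conditional state $\rho$ satisfies $\tr(\Pi_0\rho)\ge 1-\delta$ for any prescribed $\delta$, provided the accumulated imaginary time exceeds a threshold $s_0(\delta)$.

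With this in hand I would define the stopping rule $\tau$ to fire as soon as the imaginary time accumulated since the most recent resampling event exceeds $s_0(\delta)$ — equivalently, once a long enough block of consecutive successes has been observed — and I would choose the schedule so that $\epsilon_t\to 0$ (forcing each success probability $\tr(E_i^{(t)}\rho(E_i^{(t)})^\dg)\to 1$, so resamplings become increasingly rare) while $\sum_t\eta_t=\infty$ (guaranteeing the threshold is eventually reached). Because $\tr(\Pi_0\,\cdot)$ is linear, the expected stopped state $\rho_\tau$ is the trajectory-average of the conditional states at their individual stopping times, and every such state carries overlap $\ge 1-\delta$ by construction; the only loss comes from trajectories that fail to meet the stopping condition within the allotted run-time $t$. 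The final step is therefore to show that $\Pr[\tau>t]\to 0$ as $t\to\infty$: since the per-step failure probability tends to $0$, a Borel--Cantelli / geometric-tail argument gives $\tau<\infty$ almost surely and a vanishing un-stopped probability. Taking $\delta=\delta(t)\to 0$ together with $t\to\infty$ then yields $\tr(\Pi_0\rho_\tau)\to 1$.

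The main obstacle I anticipate is the interplay between the two competing error sources in Step~3: demanding a longer success run (larger $s_0$, smaller $\delta$) improves the conditional fidelity but makes the stopping event rarer, whereas decaying $\epsilon_t$ faster improves the per-step success probability but slows the accumulation of imaginary time. Balancing the required block length, the decay rate of $\epsilon_t$, and the run-time budget $t$ so that the conditional fidelity tends to $1$ and the un-stopped probability tends to $0$ \emph{simultaneously} is the quantitative heart of the argument, and is presumably what the precise statement in \cref{decaying_CPT_map} pins down. A secondary technical point is ensuring the Trotter bound of Step~1 is uniform over all intermediate states $\rho$ and survives the renormalisations, so that the per-sweep error accumulates at most additively across a block.
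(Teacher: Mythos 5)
Your physical picture is the right one, and it is essentially the paper's mechanism in different clothing: an uninterrupted run of successful weak measurements applies a product of approximate ground-state projectors, which amplifies the ground-space component relative to the rest; one stops on a sufficiently long run; unstopped trajectories contribute a loss that vanishes as the run-time budget grows. The paper's \cref{decaying_CPT_map} has exactly this architecture. However, your error accounting has a genuine gap, and it sits precisely at the point you yourself flag as ``the quantitative heart.'' Your Trotter bound controls the block operator by an \emph{additive} operator-norm error of size $O\bigl((\max_t\eta_t)\,s\bigr)$. After renormalisation, such an error must be compared against the ground-space amplitude, which at the start of a block (maximally mixed state) is only $N/D$; a fixed, non-vanishing accumulated error therefore produces an error \emph{floor} of order $(\text{accumulated error})\times D/N$ in the final overlap, so $\tr(\Pi_0\rho_\tau)$ does not tend to $1$ for any fixed schedule under your accounting. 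You can only escape this by co-tuning the schedule with the target accuracy $\delta$ and diagonalising $\delta(t)\to0$, which you gesture at but do not carry out. The paper shows this diagonalisation is unnecessary, and the reason is exactly the structure your crude bound throws away: in \cref{Kn_bounds}, the error committed at step $t$ (a projector misalignment $\epsilon'_t=O(\epsilon_t^2)$) is itself \emph{contracted by all subsequent steps}, picking up factors $\prod_{s>t}\Delta'_s$ in the excited sector. With the reset-harmonic schedule $\epsilon_t=\epsilon/(t-t_1)$ and a fixed small constant $\epsilon$, the error sums $\sum_t\epsilon'_t\prod_{s\le t}(1/\Gamma'_s)$ converge (\cref{sum_prod}), so all errors end up inside a \emph{bounded prefactor} multiplying $\prod_t\Delta'_t/\Gamma'_t$, and that product tends to $0$ because $\sum_t\epsilon_t$ diverges (\cref{convergent_product}). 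Summable errors, non-summable amplification, and contraction of early errors by later steps: those three together give the exact limit $1$ with no schedule/accuracy balancing, and none of them survive in a bound of the form $(\max_t\eta_t)\,s$.

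A second, smaller error: your claim that choosing $\epsilon_t\to0$ forces ``resamplings to become increasingly rare'' is not correct as stated. The per-step failure probability is $O(\epsilon_t)$, but the number of steps needed to accumulate one unit of imaginary time is $\Theta(1/\epsilon_t)$, so the failure rate \emph{per unit of accumulated imaginary time} --- and hence per block attempt --- stays roughly constant; the probability of completing a block of imaginary time $s_0$ without a failure is bounded below (roughly $e^{-c s_0}$ for a constant $c$ depending on the number of local terms) but does not tend to $1$. Your geometric-tail conclusion $\Pr[\tau>t]\to0$ survives, but for a different reason: each attempt succeeds with probability bounded away from zero, and the budget allows arbitrarily many attempts. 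Finally, note that a threshold $s_0(\delta)$ calibrated to guarantee fidelity $1-\delta$ requires knowing the gap of $K$ (and $D/N$), which the theorem expressly does not assume; this is fixable, as in the paper, by simply letting the stopping threshold grow unboundedly with the run-time budget $t$ (cf.\ choosing $n$ increasing with $t$ in \cref{decaying_CPT_map}) rather than calibrating it to a target $\delta$.
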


I derive exact expressions for the expected ground state overlap and run-time of \cref{DQE}, both for global resampling $\cR(\rho) = \1/D$ (see \cref{sec:stopped}) and for general resampling maps (see \cref{sec:resampling}).
The total expected run-time scales exponentially in the system size (number of qudits), as it must on complexity-theoretic grounds discussed above.
However, for non-unitary processes, there is an important distinction between total run-time and circuit depth -- the time over which coherence of the quantum state must be maintained in order for the algorithm to succeed.
I show that the circuit depth scales linearly in the system size:

\begin{corollary}[See \cref{stopping_time,stopped_circuit_depth} for precise statements.]
  For an $n$-qubit system, to obtain an output state $\rho$ satisfying
  \begin{equation}
    \tr(\Pi_0\rho) \leq 1 - \epsilon
  \end{equation}
  the DQE algorithm requires a total run time that scales exponentially in the system size $n$, but a circuit depth that only scales as $O(n)$.
\end{corollary}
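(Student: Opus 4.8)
The plan is to establish the two claimed scalings separately, since they concern different quantities: the \emph{total expected run-time} (governed by the stopping rule $\tau$ and the resampling failures) and the \emph{circuit depth} (the time over which coherence must be maintained before a stopping event resets the relevant qudits). First I would recall the exact expressions derived earlier for the expected ground-state overlap and expected run-time of \cref{DQE} under the stopping rule $\tau$ (the referenced \cref{stopping_time}). The key structural observation is that the DQE process is built from \emph{local} weak measurements $\cE_i^{(t)}$ together with resampling maps $\cR_i^{(t)}$; each application of the ``fail'' branch $\cE_{i,1}^{(t)}$ resets a $k$-local block of qudits back to a fresh (e.g.\ maximally mixed) state. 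Coherence only needs to be preserved \emph{between} successive resampling events on overlapping supports, not across the whole run.

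Second, to obtain the run-time bound I would analyse the overlap $\tr(\Pi_0\rho_\tau)$ as a function of the number of weak-measurement steps and the weak-measurement strength $\epsilon_t$. Using the exact overlap formula, I would show that achieving $\tr(\Pi_0\rho_\tau)\ge 1-\epsilon$ requires the process to run until a stopping event with a sufficiently long sequence of ``success'' outcomes has occurred, and that the probability of such a sequence is exponentially small in $n$ (consistent with the QMA-hardness lower bound and the explicit parameter choice from the precise theorem). The expected waiting time for a geometrically-rare stopping event is the reciprocal of its probability, giving $\exp(\Theta(n))$ total run-time; here I would lean directly on the quantitative overlap-versus-run-time tradeoff established in the cited sections rather than rederiving it.

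Third, and this is the crux, I would argue the $O(n)$ circuit-depth bound. The idea is that the \emph{circuit depth} counts only the number of sequential weak-measurement layers applied to a coherently-maintained state \emph{since the last resampling reset}, rather than the total number of iterations. Because each fail-branch resampling destroys coherence on its block and effectively restarts the coherent evolution there, the relevant depth is the number of consecutive \emph{successful} weak measurements needed to drive the local state into the ground space once a favourable measurement record begins. I would show that once the process commits to the final accepting record, only $O(n)$ layers of local weak measurements are applied before the stopping rule $\tau$ fires, so the coherence requirement is the time to execute that $O(n)$-deep circuit; all the exponentially-many preceding steps are interrupted by resampling and hence impose no coherence demand across them.

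The main obstacle I anticipate is making the separation between ``total run-time'' and ``circuit depth'' rigorous: one must argue that the exponentially-long expected waiting time is entirely absorbed into classically-conditioned restarts (each bounded-depth and interrupted by decohering resampling), so that no single coherent sub-circuit exceeds depth $O(n)$. Concretely, the challenge is to track how a resampling event on a $k$-local block affects the coherence requirements of neighbouring blocks whose weak measurements have already partially progressed, and to verify that the stopping rule $\tau$ can be chosen to fire after an $O(n)$-depth window of uninterrupted successes without the depth count secretly accumulating across resamplings. Once that bookkeeping is in place, the corollary follows by combining the exponential waiting-time estimate with the $O(n)$ per-attempt depth bound.
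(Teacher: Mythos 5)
Your proposal is correct and follows essentially the same route as the paper: the exponential total run-time is read off from the expected-waiting-time result (\cref{stopping_time}), and the $O(n)$ depth follows because coherence need only survive from the last resampling to the stopping event, whose length is fixed by the overlap bound $\tr(\Pi_0\rho_n)\geq 1-\epsilon-\tfrac{D}{N}(\Delta/\Gamma)^n$ of \cref{stopped_CPT_map}, i.e.\ logarithmic in $D/N\epsilon$ and hence linear in the number of qudits. The bookkeeping obstacle you anticipate about partially-progressed neighbouring blocks does not arise in the precise statements (\cref{stopped_CPT_map,stopped_circuit_depth}), which use \emph{global} resampling $\cR(\rho)=\1/D$, so every failure discards the entire state and the coherent window is exactly the final run of 0's; the paper only remarks afterwards (\cref{sec:resampling}) that other resampling strategies leave the depth unchanged.
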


Moreover, I show that the DQE \cref{DQE} is inherently noise- and fault-resilient if the noise is below a threshold.
(A similar noise-resilience result also applies to the fixed-point process; see \cref{fixed-point_fault-resilience}.)

\begin{theorem}[informal; see \cref{stopped_fault-resilience} for precise version]
  Let $\{\cE_0,\cE_1\}$ be the quantum instrument of \cref{DQE}, and $\{\cE'_0,\cE'_1\}$ be the quantum instrument describing a faulty implementation of this process with
  \begin{equation}
    \norm{\cE'_0-\cE_0}_1 \leq O(\delta).
  \end{equation}
  We can give an explicit choice of parameters in \cref{DQE} such that, under the faulty process $\{\cE'_0,\cE'_1\}$, the state $\rho'_\tau$ at the stopping time satisfies
  \begin{equation}
    \lim_{t\to\infty}\tr(\Pi_0\rho'_\tau) = 1 - O(\epsilon) - O(\delta),
  \end{equation}
  where $\Pi_0$ is the projector onto the ground space of $H$.
\end{theorem}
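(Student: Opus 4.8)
The plan is to combine two facts available from earlier in the paper: the clean process converges to overlap $1-O(\epsilon)$ with the ground space (the convergence theorem, made precise in \cref{decaying_CPT_map}), and the number of \emph{coherent} steps between two resampling events is controlled by the circuit depth $O(n)$ rather than by the exponential total run-time (\cref{stopped_circuit_depth}). The entire content of the statement is hidden in the phrase ``independent of the algorithm's run-time'': a naive accounting that charges $O(\delta)$ per step across an exponentially long trajectory would yield a useless bound of $O(\delta)$ times an exponential. I would instead exploit the self-correcting, history-erasing structure of the dissipative process to localise the effect of the faults.

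First I would convert the instrument-level hypothesis $\norm{\cE'_0-\cE_0}_1 \le O(\delta)$ into an operator-level bound $\norm{(E'_i)^{(t)} - E_i^{(t)}} \le O(\delta)$ on the single Kraus operator of the success branch, together with an $O(\delta)$ bound on the discrepancy of the faulty resampling map. The key structural step is then that, for global resampling $\cR(\rho)=\1/D$ (\cref{sec:stopped}), a failure outcome completely erases the prior state; hence the state at the stopping time, conditioned on the observed outcome history, depends only on the block of successful weak measurements since the last failure. On any trajectory whose last failure occurred $L$ steps before stopping, the unnormalised conditional state is $E_{i_L}^{(t)} \cdots E_{i_1}^{(t)} (\1/D) (E_{i_1}^{(t)})^\dg \cdots (E_{i_L}^{(t)})^\dg$, and the stopping rule $\tau$ fixes $L$ to be at most the stopping threshold, i.e.\ $O(n)$, crucially independent of $t$.

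For each such fixed window I would run a telescoping (hybrid) argument, swapping the clean $E_{i_j}^{(t)}$ for the faulty $(E'_{i_j})^{(t)}$ one at a time; each swap perturbs the normalised conditional state by $O(\delta)$ in trace norm, with the normalisation controlled because the stopping event certifies a lower bound on the success probability over the window. Summing the $L=O(n)$ swaps (and the one resampling swap) gives $\norm{\sigma'_\tau-\sigma_\tau}_1 = O(\delta)$, absorbing the window length into the constant fixed by the explicit choice of parameters, and this bound is uniform over every trajectory and therefore over $t$. Because it holds per trajectory, I can take the expectation over the faulty trajectory distribution without needing to compare that distribution to the clean one, and conclude $\tr(\Pi_0\rho'_\tau) \ge \tr(\Pi_0\rho_\tau) - O(\delta) = 1 - O(\epsilon) - O(\delta)$ as $t\to\infty$.

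The main obstacle I anticipate is establishing run-time independence rigorously: isolating the ``last window'' and showing that both the conditional state and the success probability that normalises it remain well-controlled uniformly over the exponentially many steps of the run. The danger is rare trajectories on which the window's success probability is anomalously small, which would blow up the normalisation in the hybrid argument. I expect the stopping rule in \cref{stopped_fault-resilience} to be engineered precisely to forbid stopping on such trajectories -- stopping only on outcome patterns that certify a uniform lower bound on the relevant success probabilities -- and verifying that this lower bound survives the $O(\delta)$ perturbation to the dynamics is where the genuine work lies.
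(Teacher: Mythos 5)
Your proposal correctly identifies the localisation structure (with global resampling, the state at the stopping time is ${\cE'_0}^n(\rho'_0)/\tr({\cE'_0}^n(\rho'_0))$, depending only on the last window of $n$ successes), and the paper uses this too. But the core mechanism you propose for analysing that window --- a hybrid/telescoping argument charging $O(\delta)$ per swap over $L=O(n)$ swaps, with the claim that the window length can be ``absorbed into the constant'' --- does not work, and this is a genuine gap rather than a deferrable technicality. In the limit $t\to\infty$ claimed by the theorem, the stopping rules take $n\to\infty$ (see \cref{secretary_CPT_map,decaying_CPT_map}, where $n$ grows with $t$; the exponential convergence to overlap $1-O(\epsilon)$ \emph{requires} $n\to\infty$), so a bound of $O(n\delta)$ is vacuous exactly where the theorem makes its claim. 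Even for fixed target accuracy, $n$ scales linearly with system size by \cref{stopped_circuit_depth}, so your ``constant'' is not a constant. The reason the true error does not accumulate linearly in $n$ is that an error injected at step $j$ of the window is itself damped by the subsequent $n-j$ applications of the (faulty) contraction: making this rigorous is precisely a spectral statement about the \emph{faulty} dynamics, not a step-by-step comparison with the clean one. The paper's proof of \cref{stopped_fault-resilience} never compares trajectories at all: it applies invariant-subspace perturbation theory (\cref{eigenspace_perturbation,Q_to_P,spectral_projector_perturbation}) to the transfer matrix $E'_0$ to produce a projector $P'$ with $\norm{P'-P}\leq 2\delta/(\sqrt{\Gamma}(\sqrt{\Gamma}-\sqrt{\Delta})-2\delta)$, bounds $\norm{P'{E'_0}^n}$ via a Jordan-form contraction estimate (\cref{triangular_bound,contraction_bound}, needed because $E'_0$ is not normal), and lower-bounds the normalisation $\tr({\cE'_0}^n(\rho'_0))\geq\mu(\Gamma-\delta)^n$ by Perron--Frobenius (\cref{spectral_radius_eigenvalue,contraction_lower_bound,spectral_radius_perturbation}). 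The $O(\delta)$ error that survives as $n\to\infty$ is the \emph{rotation of the dominant invariant subspace}, $\norm{P-P'}$, not a sum of per-step perturbations.

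Your anticipated resolution of the normalisation problem is also a misreading of how the precise theorem works: the stopping rule in \cref{stopped_fault-resilience} is the same plain ``stop on a run of $n$ 0's'' rule and is not engineered to exclude rare trajectories. The lower bound on the success probability over the window comes instead from an explicit hypothesis on the noisy resampling map, $\cE'_1(\rho)\geq\mu\1$ (\cref{def:noise_model}), combined with the spectral-radius lower bound on ${\cE'_0}$; without that assumption (or something like it) your hybrid argument's normalisation control has no basis, since conditioning on the stopping event changes the measure and certifies nothing about the unconditioned trace of the window. In short: the part of the argument you flag as ``where the genuine work lies'' is indeed the entire content of the proof, and the mechanism you propose to carry it out (telescoping plus stopping-rule engineering) is the wrong one; the correct mechanism is spectral perturbation of the faulty transfer matrix.
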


As an interesting footnote, I also show that for any Hamiltonian there \emph{exists} a CPT map of the form of \cref{DQE} whose fixed point is as close as desired to the ground state.
However, the parameters required for this depend on the spectrum of the Hamiltonian, so cannot be found efficiently.
It is therefore not useful for constructing a general, practical ground state preparation algorithm, but may be of interest theoretically.

\begin{theorem}[informal, see \cref{Chebyshev_fixed-point} for precise version]
  There exists a choice of parameters in the DQE \cref{DQE} such that, for any $k$-local Hamiltonian and any $\epsilon$, \cref{DQE} can be implemented using $k'$-local generalised measurements with $k'=k\log(1/\epsilon)$ and the fixed point of $\cE$ satisfies:
  \begin{equation}
    \tr(\Pi_0\rho_\infty) \geq 1 - O(\epsilon D),
  \end{equation}
  where $\Pi_0$ is the projector onto the ground space of $H$.
\end{theorem}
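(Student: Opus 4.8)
The plan is to exhibit an explicit choice of parameters built from a polynomial approximation of the ground-space projector, and then read off the bound from the block structure of the fixed point. First I would fix an AGSP of the form $K := p(H)$, where $p$ is the degree-$d$ polynomial obtained by shifting and rescaling a Chebyshev polynomial so that $0\le p\le 1$, with $p(E_0)\ge 1-\epsilon$ at the ground energy $E_0$ and $p(E_a)\le\epsilon$ at every excited energy $E_a$; the standard Chebyshev estimate then gives $\norm{K-\Pi_0}\le\epsilon$ with $d = O(\log(1/\epsilon))$ (the shift and scale are set by $E_0$ and the spectral gap, which is exactly why this choice needs spectral information and cannot be found efficiently). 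Since $H$ is $k$-local, $p(H)=\sum_i k_i$ is a sum of terms each supported on at most $kd$ qudits, so the weak measurements $E:=(1-\epsilon)\1+\epsilon K$ are $k'$-local with $k'=k\log(1/\epsilon)$, and $(1-\epsilon)\1\le E\le\1$ makes $\{\cE_0,\cE_1\}$ a valid instrument. Taking global resampling $\cR(\rho)=\1/D$, the homogeneous map is $\cE(\rho)=\cE_0(\rho)+\bigl(1-\tr\cE_0(\rho)\bigr)\1/D$ with $\cE_0(\rho)=E\rho E^\dg$.

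The heart of the argument is then to analyse any fixed point $\rho_\infty$ (one exists since $\cE$ is CPT on a finite-dimensional space). The crucial structural point is that $K=p(H)$ commutes with $H$, so $\cE_0$ preserves the splitting $\1=\Pi_0+Q$ with $Q:=\1-\Pi_0$: on the ground eigenspace $E^2$ is the exact scalar $\nu:=\bigl((1-\epsilon)+\epsilon p(E_0)\bigr)^2$, while $E^2Q\le\mu Q$ with $\mu:=\bigl((1-\epsilon)+\epsilon\max_{E_a>E_0}p(E_a)\bigr)^2$. Writing $c:=\tr\cE_0(\rho_\infty)$ and $d_0:=\tr\Pi_0$ and applying $\tr(\Pi_0\,\cdot\,)$ and $\tr(Q\,\cdot\,)$ to the fixed-point equation $\rho_\infty=\cE_0(\rho_\infty)+(1-c)\1/D$ yields
\begin{equation}
  (1-\nu)\tr(\Pi_0\rho_\infty) = (1-c)\frac{d_0}{D},
  \qquad
  (1-\mu)\tr(Q\rho_\infty) \leq (1-c)\frac{D-d_0}{D}.
\end{equation}
The unknown factor $1-c$ cancels on dividing, so $\tr(Q\rho_\infty)/\tr(\Pi_0\rho_\infty)\le\frac{D-d_0}{d_0}\cdot\frac{1-\nu}{1-\mu}$.

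To finish I would insert the AGSP estimates. Because $p(E_0)\ge 1-\epsilon$, the ground-space damping is second order, $1-\nu\le 2\epsilon\bigl(1-p(E_0)\bigr)\le 2\epsilon^2$, whereas $\max_{E_a>E_0}p(E_a)\le\epsilon$ makes the rest of the spectrum damped at first order, $1-\mu=\Theta(\epsilon)$. Hence $\tfrac{1-\nu}{1-\mu}=O(\epsilon)$, and using $\tfrac{D-d_0}{d_0}\le D$ gives $\tr(Q\rho_\infty)\le O(\epsilon D)\,\tr(\Pi_0\rho_\infty)$, which rearranges to $\tr(\Pi_0\rho_\infty)\ge 1-O(\epsilon D)$ as claimed.

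The main obstacle is conceptual rather than computational: the concentration of the fixed point hinges entirely on $K$ commuting with $H$, which makes $\Pi_0$ an exact invariant eigenspace of the success branch and lets the ground space be damped only at order $\epsilon^2$ while the rest of the spectrum is damped at order $\epsilon$. The genuinely local terms $k_i$ of a generic DQE instrument do not commute with $H$, and a direct computation shows that on the global ground state each local weak measurement fails at the same leading rate $\Theta(\epsilon)$ as on an excited state, so a truly local process does not concentrate its fixed point at all. Obtaining a commuting, ground-space-respecting $K$ of low locality therefore forces the Chebyshev construction and, with it, the dependence on the a priori unknown ground energy and spectral gap---which is precisely why this is only an existence statement and not an efficient algorithm. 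The remaining steps (verifying the Chebyshev degree and norm bounds and controlling the $O(\epsilon^2)$ terms in $\nu$ and $\mu$) are routine.
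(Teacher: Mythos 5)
Your fixed-point analysis is correct, and it is essentially the paper's own argument: the block decomposition coming from $[K,H]=0$, the two traced fixed-point equations, and the cancellation of the unknown resampling weight $1-c$ reproduce the bound of \cref{CPTP_map} (equivalently, it can be read off from the exact fixed point $\rho_\infty\propto(\1-K^2)^{-1}$ of \cref{fixed-point}). Likewise your Chebyshev choice of $p$, its degree $O(\log(1/\epsilon))$, and the observation that this forces prior knowledge of the ground energy and gap all match \cref{AGSP_H,Chebyshev_H}.

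The gap is the locality claim, which is half of the statement being proved. From ``$p(H)=\sum_i k_i$ is a sum of $k'$-local terms'' you conclude that the measurement operator $E=(1-\epsilon)\1+\epsilon\, p(H)$ ``is $k'$-local.'' It is not: a sum of local operators is a global operator, and a generalised measurement with Kraus operator $E$ is a global measurement; nothing in your proposal shows it can be enacted by $k'$-local instruments. In the template of \cref{DQE} the implementable objects are the individual weak measurements $E_i=(1-\epsilon)\1+\epsilon k_i$, one per local term; your construction amounts to taking the trivial decomposition with the single ``term'' $k_1=p(H)$, which forfeits locality. The missing step is exactly the paper's localisation machinery: \cref{AGSP-to-AGSP,AGSP_product} (ordered products of the local weak measurements) and \cref{AGSP-to-CP,AGSP_mixture} (randomly sampled local weak measurements) show that composing the genuinely local measurements reproduces the global weak measurement up to $O(\epsilon^2)$, and \cref{AGSP_perturbation} (via the Davis--Kahan bound, \cref{Davis-Kahan}) controls how that $O(\epsilon^2)$ discrepancy perturbs the AGSP data; this yields the locally-implementable AGSP of \cref{AGSP_Chebyshev}, which is then inserted into \cref{fixed-point} to give \cref{Chebyshev_fixed-point}. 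Nor is this a mere technicality: as your own closing paragraph observes, each local weak measurement also damps the global ground state at rate $\Theta(\epsilon)$, so after localisation the ground-space eigenvalue is $1-\Theta(m\epsilon)$ rather than your $1-O(\epsilon^2)$, and the entire difficulty is showing that the $\Theta(\epsilon)$ ground/excited differential survives both the $O(\epsilon^2)$ localisation error and this overall damping of the fixed point --- precisely the content of \cref{AGSP_Chebyshev}, and precisely what your proposal omits. Indeed, your final paragraph asserts that a ``truly local process does not concentrate its fixed point at all,'' which, taken at face value, contradicts the theorem you set out to prove. What you have actually proved is the weaker statement that some CPT map of DQE form with a \emph{global} Kraus operator has a fixed point satisfying $\tr(\Pi_0\rho_\infty)\geq 1-O(\epsilon D)$; the $k'$-local implementability asserted in the theorem is the part that remains open in your write-up.
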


The DQE algorithm has some distinct advantages over previous ground state preparation methods (see \cref{tbl:alg_comparison} for a summary):

\begin{itemize}
\item%
  Unlike dissipative state engineering, the DQE algorithm is not restricted to the special class of frustration-free Hamiltonians, but works for any Hamiltonian (\cref{CPTP_map,stopped_CPT_map}).
\item%
  Unlike the heuristic VQE algorithm, DQE is provably guaranteed to find the correct ground state in circuit-depth scaling linearly with system size (\cref{stopped_CPT_map,stopped_circuit_depth}). (But note that the total run-time remains exponential, as it must on complexity-theoretic grounds.)
\item%
  Unlike other non-heuristic algorithms, DQE is guaranteed to converge to the ground state without any assumptions on or prior information about the Hamiltonian (\cref{decaying_CPT_map}).
\item%
  Unlike VQE, DQE does not require any costly and heuristic optimisation over parameters (\cref{secretary_CPT_map,expectation_CPT_map}).
\item%
  Unlike adiabatic state preparation and phase estimation, DQE often finds the ground state quickly and in low circuit depth in practice.
\item%
  Unlike adiabatic state preparation, phase estimation, imaginary-time evolution and VQE (and indeed any unitary algorithm), DQE is inherently noise- and fault-resilient, without any additional computational overhead (\cref{stopped_fault-resilience}).\footnote{\label{fn:fault-resilience}This form of fault-resilience has not been proven for dissipative state engineering, but the results of this work imply it also holds there too.}
\item%
  DQE lends itself particularly well to implementation on certain types of quantum hardware, such as photonic quantum computers.
\end{itemize}

It also has some disadvantages compared to VQE.
To estimate the expectation value of any observable on the ground state requires repeatedly preparing the state, measuring the observable, and averaging over measurement outcomes.
To generate a new copy of the ground state, in principle the whole DQE algorithm must be run again.
Contrast this with VQE where, once the variational parameters in the VQE circuit have been trained by the (expensive) outer optimisation step, in the absence of noise the VQE ground state can be generated deterministically by running the VQE circuit with those parameters, without rerunning the optimisation part of the VQE algorithm.
In practice, in the presence of noise, even the trained VQE circuit will not deterministically produce the correct state, and one may need to rerun the VQE circuit and even the optimisation step more than once to get good results.
And in practice, continuing to run the DQE algorithm on the post-measurement state after measuring an observable, may converge back to the ground state relatively quickly, allowing observables to be estimated more efficiently than by re-running the whole process from scratch after each measurement.
Nonetheless, the classical overhead for generating multiple copies of the ground state is likely to be higher in the case of DQE.
(On the other hand, VQE gives no guarantee it is producing the correct state in the first place.)

\begin{savenotes}
\begin{table}[!htbp]
  \center
  \makebox[\textwidth][c]{
    \begin{tabular}{lcccccc}
      \hline
      Algorithm
      & Adiabatic    & QPE          & $i$-time     & VQE          & Dissipative & DQE \\
      \hline
      No conditions on $H$
      &              &              &              & $\checkmark$ &              & $\checkmark$ \\
      Provably succeeds
      &              & $\checkmark$ & $\checkmark$ &              & $\checkmark$ & $\checkmark$ \\
      Low-depth in practice
      &              &              &              & $\checkmark$ &              & $\checkmark$ \\
      No parameter optimisation
      & $\checkmark$ & $\checkmark$ & $\checkmark$ &              & $\checkmark$ & $\checkmark$ \\
      Efficient to generate multiple copies
      &              &              &              & $\checkmark$ &              &              \\
      Convenient implementation
      &              &              &              & $\checkmark$ & $\checkmark$ & $\checkmark$ \\
      Insensitive to initial state
      &              &              &              &              & $\checkmark$ & $\checkmark$ \\
      Noise- and fault-resilient
      &              &              &              &              & $\checkmark$\cref{fn:fault-resilience} & $\checkmark$ \\
      \hline
    \end{tabular}
  }
  \caption{Overview of advantages and disadvantages of the adiabatic state preparation, quantum phase estimation, imaginary-time dynamics, VQE, dissipative state engineering and DQE algorithms for preparing ground states of Hamiltonians.}
  \label{tbl:alg_comparison}
\end{table}
\end{savenotes}

In \cref{sec:background}, I review previous ground state preparation methods and results.
In the remainder of the paper, I give a detailed analysis of the DQE \cref{DQE} and rigorous proofs of the main (and other) results.
In \cref{sec:AGSP} I introduce and give constructions of approximate ground state projectors (AGSPs), which form the basis in \cref{sec:stopping} for constructing dissipative dynamics that converge to the ground state subspace of a Hamiltonian and proving rigorous bounds on their convergence.
These dynamics in turn form the basis, in \cref{sec:stopping,sec:epsilon,sec:resampling}, for the Dissipative Quantum Eigensolver (DQE) of \cref{DQE}, where convergence to the ground state subspace is proven.
\Cref{sec:fault-resilience} contains the discussion and proof of noise- and fault-resilience of DQE, and concludes the analytical results in this work.

\Cref{sec:implementation} shows how the DQE algorithm can be implemented on quantum computing hardware.
\Cref{sec:conclusions} concludes with a discussion of various extensions of the DQE algorithm, and possible future directions.

\clearpage

\section{Background and Previous Work}\label{sec:background}

\subsection{Phase estimation}
For classical Hamiltonians on finite-dimensional systems and for combinatorial constraint optimisation problems, the naive brute-force search algorithm is guaranteed to find the ground state (respectively, optimal assignment): search over all possible states and compute their energies (cost).
The energy can be computed efficiently, but the state space to search over is exponentially large.
So the run-time is exponential.

In the quantum case, it's less clear what the quantum generalisation of naive brute-force search is.
Arguably, the closest analogue (though it's no longer ``naive'') is to use quantum phase estimation.
The quantum phase estimation algorithm~\cite[\S 5.2]{Nielsen+Chuang} applied to the unitary time-evolution operator generated by the Hamiltonian collapses the state into a random eigenstate of the Hamiltonian, along with an estimate of its corresponding energy eigenvalue (modulo $2\pi$).
If the Hamiltonian is suitably rescaled to ensure all eigenvalues are in the interval $[-\pi,\pi]$, and assuming the eigenvalues are estimated to sufficient precision to resolve the ground state energy, this allows one to rank the energy of the eigenstates produced, and with sufficiently (exponentially in the number of qudits) many repetitions, generate and identify the ground state with high probability.

However, knowing what precision in the energy estimate suffices requires prior knowledge of (a lower bound on) the spectral gap.
Moreover, implementing each phase estimation repetition requires a complex (albeit polynomial-time) quantum computation, involving both the quantum Fourier transform and the Hamiltonian simulation algorithm.

\subsection{Adiabatic state preparation}
Adiabatic state preparation, like its close cousin adiabatic quantum computation~\cite{AQC}, is based on the adiabatic theorem in quantum mechanics.
In adiabatic state preparation, the quantum system is initialised in the ground state of some simple Hamiltonian $H_0$ whose ground state is easy to prepare, e.g.\ the all~$\ket{0}$ ground state of the Hamiltonian $H_0 = \sum_i \proj{1}_i$.
By slowly transforming the Hamiltonian from $H_0$ to the Hamiltonian of interest, $H_1$ (e.g.\ by varying the parameter $s$ linearly from 0 to 1 in $H(s) = (1-s)H_0 + s H_1$; more complex and sophisticated paths are also well-studied), the adiabatic theorem tells us that the state of the system will track the instantaneous ground state of $H(s)$.
Thus, at the end of the process, this will have prepared the ground state of $H_1$.

The adiabatic theorem itself follows from solving the time-dependent Schr\"odinger equation and taking the adiabatic approximation, where the rate of change of the Hamiltonian is much slower than the instantaneous spectral gap of $H(t)$.
This gives the conditions under which adiabatic state preparation will succeed: the spectral gap along the entire adiabatic path must remain large relative to the speed at which the Hamiltonian is varied.
In particular, this means that if the final Hamiltonian $H_1$ is in a different phase to the initial Hamiltonian $H_0$, so that the spectral gap vanishes at some point along the path, the adiabatic algorithm may fail.
Moreover, the number of states within energy~$E$ of the ground state (the density of states) must scale at most polynomially in~$E$.

Whilst adiabatic state preparation provably succeeds if the conditions of the adiabatic theorem are fulfilled, in certain cases the spectral gap becomes exponentially small along the given path, and the algorithm is necessarily exponential-time.
In other cases, the density of state scales exponentially (e.g.\ for NP-hard classical Hamiltonians).
In many cases of interest, no bound is known on how fast the spectral gap closes, or even any proof that the spectral gap doesn't vanish along the path.
(Indeed, in general, the spectral gap question is not just hard, it is provably unsolvable~\cite{spectral-gap_long,spectral-gap_short,spectral-gap_1D}.
Though to date this has only been shown for extremely artificial Hamiltonians.)

Ref.~\cite{BoixoKnillSomma2010} develops a discrete-time, digital version of adiabatic state preparation, which relaxes some of the requirements of the standard adiabatic theorem.
But it still requires conditions on the eigenvalues and state overlaps along the whole discrete sequence of Hamiltonians.

Due to these limitations, in practice adiabatic state preparation is generally proposed as a heuristic algorithm.
By trying different rates of adiabatic variation, if the final state is consistent when the rate of variation is low enough, it will hopefully be a good estimate of the ground state.
Of course, due to the complexity theory results discussed above, this must fail for at least some cases.
Moreover, determining which cases it will succeed and fail on is at least as hard as solving the original ground state problem itself.

Even where it works, implementing adiabatic state preparation on a (digital) quantum computer requires significant overhead, since the continuous-time dynamics under the time-varying Hamiltonian $H(s)$ must be approximated by a discretised time evolution (e.g.\ by Trotterisation, or one of the other techniques for quantum time-dynamics simulation).
This run-time overhead can be large, as the discretisation errors must be sufficiently small relative to the minimum spectral gap along the adiabatic path.

Furthermore, as in any unitary quantum computation, in the absence of error correction, any gate errors during the computation accumulate linearly with the run-time of the algorithm.
Meanwhile, any dissipative errors and noise accumulate exponentially with run-time, unless fault-tolerant computation is used.

\subsection{Variational Quantum Eigensolver (VQE)}
The VQE algorithm takes its inspiration partly from classical variational algorithms, partly from adiabatic quantum state preparation.
Variational algorithms for finding ground states are based on the variational characterisation of the minimum eigenvalue of the Hamiltonian $H$: $E_{\mathrm{min}} = \min_{\ket{\psi}}\braket{\psi|H|\psi}$.
Variational algorithms in general work by choosing some suitable class of states $\{\ket{\psi_i}\}$ that hopefully includes (a good approximation to) the ground state, then minimising the energy over this class of states using some minimisation algorithm.
DMRG can be viewed as variational minimisation over the class of matrix product states, which are efficiently representable classically, using a minimisation algorithm that is efficiently (in practice~\cite{DMRG} or even provably~\cite{rigorous_DMRG}) computable classically.
DFT can be viewed as implicit variational minimisation over a class of states parameterised by the density functional, using an iterative self-consistent optimisation loop.

In VQE, rather than choosing states that are efficiently computable classically, one chooses a class of states that can be prepared efficiently \emph{quantumly}, i.e.\ states generated by some class of polynomial-sized quantum circuits.
The energy, which can be computed efficiently on a quantum computer, is then minimised over this class of circuits, which hopefully converges to a good approximation to the ground state.
Various different classes of VQE circuits and various optimisation algorithms have been proposed and studied.

One promising class is the ``Hamiltonian variational ansatz'',\cite{VHA} where the class of circuits consists of layers of unitaries generated by the local interactions of the Hamiltonian of interest.
The free parameters are the scalar coefficients---or ``angles''---multiplying the terms in each layer.
For sufficiently small angles and sufficiently many layers, this class of the circuits includes the circuit implementing Trotterised adiabatic state preparation for the Hamiltonian.
Thus, under the same spectral gap and density of states assumptions as the adiabatic algorithm, there at least \emph{exists} a VQE solution that achieves the correct ground state.

However, the idea in VQE is rather to choose a much smaller number of layers, and hope that the variational minimisation will find more efficient, lower-depth circuits with larger angles that reach the ground state much faster than the full adiabatic circuit.
As the circuit depth is significantly lower, and varying over the parameters may compensate for some portion of the errors in the computation---at least coherent gate errors---VQE may be somewhat less affected by errors and noise during the algorithm.
Thanks to these properties, low-depth VQE can already be run on current and near-term quantum computers~\cite{PhasecraftVQE}.

On the other hand, VQE is an entirely heuristic algorithm.
Whether it will work for any given Hamiltonian is unknowable without running it, and even then there is no way to be sure it has produced the correct state or energy at the output.
Even if the variational class does include a good approximation to the ground state (which is rarely guaranteed or even knowable), the optimisation algorithm may fail to find it.

There is a difficult trade-off: the more free parameters in the variational class, the larger the class of states and the higher the chances it contains a good ground state approximation.
However, generally, the larger the variational class of states, the more computationally expensive it is to optimise over them, and the lower the chance of the optimiser finding a good solution even if one in principle exists.

\subsection{Imaginary-time evolution}
Imaginary-time evolution is a successful classical computational method for certain many-body quantum problems.
As it is a non-unitary evolution, it cannot be implemented directly on a quantum computer.
But it can be approximated unitarily~\cite{Brandao_QITE,Yeter_QITE,Nishi_QITE,Benjamin_QITE,Sun_QITE} or probabilistically~\cite{Lin_PITE,Liu_PITE,Kosugi_PITE}.
There are too many recent results to do justice to all the variations of imaginary-time evolution here, but detailed discussions can be found in these works and the references therein.

However, implementing imaginary-time evolution via these methods is broadly similar to implementing (and sometimes uses as a subroutine) real-time Hamiltonian dynamics simulation, which, although it can be implemented in polynomial time on a quantum computer~\cite{Feynman,Lloyd,lincomb_unitaries,signal_processing,Haah_simulation}, is a more complex quantum algorithm than the simple local measurements needed for the dissipative state engineering algorithm of~\cite{VWC}.

\subsection{Dissipative state engineering}
In 2008, \cite{VWC,Barbara} proposed a new algorithm for preparing ground states of quantum many-body systems based on local dissipative dynamics rather than unitary evolution.
(\cite{VWC} also showed how this dissipative process could be used to implement quantum computation, whilst avoiding some of the ``DiVincenzo'' criteria~\cite{diVincenzo} generally considered at that time to be necessary for quantum computation.)

In its simplest form, the dissipative state engineering algorithm of~\cite{VWC} consists in measuring each local term of the Hamiltonian in turn, and if the result of a measurement is not the minimum eigenstate of that local term, ``resampling'' the qudits involved in that term by randomising their state (or otherwise replacing the state of those qubits according to some procedure).

They prove that this algorithm succeeds in preparing the ground state of any frustration-free Hamiltonian: a Hamiltonian whose ground state simultaneously minimises the energy of all of its local terms.
Another variant of the algorithm can efficiently prepare certain tensor network states, such as stabiliser states, matrix product states (MPS) that satisfy certain generic properties, and their higher-dimensional analogues: projected entangled pair states (PEPS).

In the paper, \cite{VWC} also claim that a version of the dissipative state engineering algorithm with a slightly more sophisticated resampling procedure can prepare any frustration-free ground state \emph{efficiently}.
However, there is an important subtlety to this claim which means that the efficient version of the algorithm is not useful for solving ground-state problems, even for frustration-free Hamiltonians.
Whilst it is true \cite{VWC} shows that, for any frustration-free ground state, there exists a local dissipative process which efficiently prepares that state, one can only construct this process if one already knows in advance the state to be prepared.
(Contrast this with their exponential-time process, which \emph{can} prepare the ground state given only a description of the frustration-free Hamiltonian.)

To understand the crucial difference between \emph{preparing a known} ground state and \emph{finding an unknown} ground state, it is illustrative to consider the case of classical Hamiltonians.
Since ground states of classical Hamiltonians are always product states, it is trivially true that there is an efficient algorithm for preparing any known ground state: one can simply flip the state of each particle to match the ground state.
However, as discussed above, finding the ground state of frustration-free classical Hamiltonians is already NP-hard.
So, whilst any classical state can be prepared efficiently, finding the ground state in the first place is in general intractable.

Since \cite{VWC,Barbara}, there have been many follow-up papers inspired by their results.
These can be divided into two classes:
\begin{enumerate}
\item Algorithms for finding ground states~\cite{demon-cooling,state_restoration}.
\item Algorithms for constructing tensor network states~\cite{Christandl2021,Kim2017,Kim+Swingle2017,Sewell+Jordan2021}.
\end{enumerate}

The latter build on the efficient dissipative state engineering algorithm of~\cite{VWC} for preparing states with a known tensor-network description.
These are important classes of states, and lead to interesting state-preparation algorithms.
But this class of state-preparation results are not applicable to the Hamiltonian ground state problem, for the reasons already discussed above.

The demon-like cooling algorithm of \cite{demon-cooling} overcomes the restriction to frustration-free Hamiltonians of \cite{VWC}, by combining Hamiltonian time-dynamics simulation with projective measurements to produce a dynamics closely related to imaginary-time evolution.
(They also implement their algorithm experimentally using photons, on a toy single-qubit Hamiltonian.)
This achieves some of the benefits of dissipative state engineering, such as iteratively applying the same quantum circuit repeatedly, leading to a simpler implementation than algorithms such as adiabatic state preparation, and immunity to errors in the initial state.
(There is no analysis in~\cite{demon-cooling}, analytical or numerical, of whether the algorithm is resilient to noise and faulty implementation, though as with \cite{VWC} this may follow from results shown in this work.)
However, as with imaginary-time evolution, this comes at the cost of requiring (coherently controlled) Hamiltonian time-dynamics simulation.
To guarantee it succeeds, the algorithm also still requires knowledge of (a lower-bound on) the spectral gap of the Hamiltonian.

The quantum state restoration results of~\cite{state_restoration} are motivated by a different but related problem: of restoring the ground state when part of it is traced out.
They assume access to projective measurements onto the global ground state, and show that this together with access to a single copy of the ground state suffices to compute local observables and other local ground state properties.
As it requires access to the projector onto the global ground state, this is not (nor were the results intended to be) practicably applicable to ground state preparation of many-body Hamiltonians.
Rather, practical ground state preparation algorithms could provide a useful building block on which to implement the algorithms developed in~\cite{state_restoration} for estimating local ground state properties.

\subsection{Preliminaries}
Throughout this paper, $\norm{\;\cdot\;}$ denotes the standard operator norm, whilst $\norm{\;\cdot\;}_p$ denotes the Schatten $p$-norm (so $\norm{\;\cdot\;} = \norm{\;\cdot\;}_\infty$).
When there is no ambiguity, and in particular where $f(X)$ and $g(X)$ are polynomials in the same operator $X$ (hence commute), I will sometimes write $\frac{f(X)}{g(X)} = f(X) g(X)^{-1} = g(X)^{-1} f(X)$ where the former notation increases legibility.

I will sometimes make use of the fact that the action of a CP map $\cE(\rho) = \sum_i A_i\rho A_i^\dg$ as a linear map on the $D$\nbd-dimensional vector space of operators can be represented by matrix multiplication.
For a density matrix $\rho$, I write $\kett{\rho}$ to denote the vectorised density matrix, which concretely can be taken to be the vector formed by stacking the columns of $\rho$.
The action of $\cE$ can then be represented by the $D^2$-dimensional transfer matrix $E = \sum_i A_i\ox\bar{A_i}$ where $A_i$ are the Kraus operators of $\cE$, so that $\kett{\cE(\rho)} = E\kett{\rho}$.

I will frequently refer to ``stopping rules'' and ``stopping times'' defined on the stochastic process generated by the sequence of measurement outcomes produced by \cref{DQE}.
The formal definitions of stopping rules and stopping times for stochastic processes (see e.g.~\cite{Williams_book}) have a simple intuitive interpretation: the decision to stop (or not) can only depend on past outcomes, not future ones.
All stopping rules discussed in this work will manifestly satisfy this requirement.

\clearpage

\section{Approximate Ground State Projectors}\label{sec:AGSP}

Approximate ground state projectors (AGSP) are operators that approximate in some suitable sense the projector onto the ground state of a (typically many-body) quantum Hamiltonian.
AGSPs were first defined in the context of proving area-laws for many-body Hamiltonians~\cite{AGSP}, where the rate at which the operator increases the entanglement of a state plays an important role, but actual implementation of the AGSP is not relevant.
Here, I repurpose and generalise AGSPs for the different context of algorithmic ground state preparation.
The key results are \cref{AGSP_product,AGSP_mixture}, which show how AGSPs can be implemented by local quantum instruments.

\begin{definition}[AGSP]\label{def:AGSP}
  Let $\Pi_0$ be the projector onto the ground space of a Hamiltonian.
  A Hermitian operator $K$ is a \emph{$(\Delta,\Gamma,\epsilon)$-approximate ground state projector} (AGSP) for $\Pi_0$ if there exists a projector $\Pi$ such that:
  \begin{enumerate}
  \item $[K,\Pi]=0$, \label{def:AGSP:commutator}
  \item $K \Pi \geq \sqrt{\Gamma}\Pi$, \label{def:AGSP:Gamma}
  \item $\norm{(\1-\Pi)K(\1-\Pi)} \leq \sqrt{\Delta}$, \label{def:AGSP:Delta}
  \item $\norm{\Pi-\Pi_0} \leq \epsilon$. \label{def:AGSP:epsilon}
  \end{enumerate}
\end{definition}

\begin{remark}
  The AGSPs of \cite[Definition~2.1]{AGSP} correspond to the special case of $(\Delta,1,0)$-AGSPs in \cref{def:AGSP}.
  However, in \cite{AGSP} they also impose an additional condition on the Schmidt-rank blow-up of the AGSP, which is critical to their proof of the 1D area law but is not relevant to this work.

  Note that, since $[K,\Pi]=0$, \labelcref{def:AGSP:Delta} could equivalently be written $K(\1-\Pi) \leq \sqrt{\Delta}(\1-\Pi)$.
  I have chosen to write it as in \labelcref{def:AGSP:Delta}, to match the form of the AGSP definition of~\cite[Definition~2.1]{AGSP}.
\end{remark}

\begin{lemma}[{\cite[\S 4]{AGSP}}]\label{AGSP_H}
  Let $H=\sum_i h_i$ be a local Hamiltonian, with minimum eigenvalue $\lambda_0$ and spectral gap $\delta$.
  Let $\Pi_0$ be the spectral projector corresponding to $\lambda_0$.
  Let $C_\ell(x)$ be the rescaled Chebyshev polynomial of degree $\ell$ from \cite[Lemma~4.1]{AGSP}.
  Then $C_\ell(H)$ is a $(\Delta,1,0)$-AGSP with $\sqrt{\Delta} = 2e^{-2\ell\sqrt{\delta/(\norm{H}-\lambda_0)}}$.
\end{lemma}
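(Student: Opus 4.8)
The plan is to take $\Pi = \Pi_0$ throughout, so that condition~\labelcref{def:AGSP:epsilon} holds trivially with $\epsilon = 0$, and to verify the remaining three conditions of \cref{def:AGSP} by reducing each, via the functional calculus, to a scalar statement about $C_\ell$ evaluated on the spectrum of $H$. Writing the spectral decomposition $H = \lambda_0\Pi_0 + \sum_{j} \lambda_j P_j$, where the $\lambda_j \geq \lambda_0 + \delta$ are the eigenvalues above the ground energy, condition~\labelcref{def:AGSP:commutator} is immediate: $C_\ell(H)$ is a polynomial in $H$ and $\Pi_0$ is a spectral projector of $H$, so they commute.

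For condition~\labelcref{def:AGSP:Gamma} I would use that $H\Pi_0 = \lambda_0\Pi_0$, whence $C_\ell(H)\Pi_0 = C_\ell(\lambda_0)\Pi_0$; the rescaled Chebyshev polynomial of \cite[Lemma~4.1]{AGSP} is normalised so that $C_\ell(\lambda_0) = 1 = \sqrt{\Gamma}$, giving the required operator inequality (in fact equality). For condition~\labelcref{def:AGSP:Delta}, the operator $(\1-\Pi_0)C_\ell(H)(\1-\Pi_0)$ is Hermitian and supported on the orthogonal complement of the ground space, so its operator norm equals $\max_j |C_\ell(\lambda_j)|$, the maximum of $|C_\ell|$ over the excited eigenvalues, all of which lie in $[\lambda_0 + \delta,\,\norm{H}]$. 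It therefore suffices to establish the scalar bound $\max_{x \in [\lambda_0+\delta,\,\norm{H}]} |C_\ell(x)| \leq \sqrt{\Delta}$.

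This last bound is the only substantive step and is exactly the defining property of the rescaled polynomial. I would recall the construction: compose the degree-$\ell$ Chebyshev polynomial $T_\ell$ of the first kind with the affine map carrying $[\lambda_0+\delta,\,\norm{H}]$ onto $[-1,1]$, and divide by the value of $T_\ell$ at the image $y_0$ of $\lambda_0$, so that $C_\ell(\lambda_0) = 1$. Since $|T_\ell| \leq 1$ on $[-1,1]$, on the excited interval one has $|C_\ell(x)| \leq 1/|T_\ell(y_0)|$. Setting $W = \norm{H}-\lambda_0$, a direct computation gives $|y_0| = (W+\delta)/(W-\delta)$, and combining $T_\ell(\cosh\theta) = \cosh(\ell\theta) \geq \tfrac12 e^{\ell\theta}$ with the identity $\operatorname{arccosh}|y_0| = \ln\frac{1+\sqrt{\delta/W}}{1-\sqrt{\delta/W}} \geq 2\sqrt{\delta/W}$ (the inequality from $\ln\frac{1+u}{1-u}\geq 2u$) yields $|T_\ell(y_0)| \geq \tfrac12 e^{2\ell\sqrt{\delta/W}}$, and hence $\sqrt{\Delta} = 2e^{-2\ell\sqrt{\delta/(\norm{H}-\lambda_0)}}$ as claimed.

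The main obstacle is no more than carefully tracking the affine rescaling and the $\operatorname{arccosh}$ estimate so as to land on the precise constant in $\sqrt{\Delta}$ rather than an asymptotic version of it. Since the polynomial and its scalar bound are already furnished by \cite[Lemma~4.1]{AGSP}, the bulk of the remaining work is simply the translation from those scalar facts about $C_\ell$ to the operator-norm and positivity statements of \cref{def:AGSP} through the functional calculus, which is routine once the reduction to the spectrum is in place.
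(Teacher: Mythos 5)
Your proposal is correct, but note that the paper itself contains no proof of \cref{AGSP_H}: the result is imported by citation from \cite{AGSP}, so the only meaningful comparison is with the argument in that reference. Your reconstruction follows exactly that standard route --- reduce to scalar statements via the spectral calculus with $\Pi=\Pi_0$, then bound the rescaled Chebyshev polynomial on the excited interval via $T_\ell(\cosh\theta)=\cosh(\ell\theta)\geq\tfrac12 e^{\ell\theta}$ and $\operatorname{arccosh}\frac{W+\delta}{W-\delta}=\ln\frac{1+\sqrt{\delta/W}}{1-\sqrt{\delta/W}}\geq 2\sqrt{\delta/W}$ --- and it lands on the exact constant $\sqrt{\Delta}=2e^{-2\ell\sqrt{\delta/(\norm{H}-\lambda_0)}}$ claimed in the lemma, so it would serve as a self-contained substitute for the citation.
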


\begin{lemma}\label{Chebyshev_H}
  Let $H=\sum_i h_{i=1}^m$ be a $k$-local Hamiltonian, and $C_\ell(x)$ be a degree $\ell$ rescaled Chebyshev polynomial satisfying $\abs{C_\ell(x)} \leq 1$ for $\lambda_{\min}(H) \leq x \leq \lambda_{\max}(H)$.
  Then $C_\ell(H) = \sum_{i=1}^{m'} h'_i$ is a $k\ell$-local Hamiltonian, with $m'\leq \left(\frac{e}{\ell}\right)^\ell m^\ell = \poly(m)$ terms.
\end{lemma}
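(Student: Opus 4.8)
The plan is to treat this as a purely algebraic and combinatorial statement about polynomials in $H$: the specific Chebyshev structure and the bound $\abs{C_\ell(x)}\le 1$ play no role here (they matter only for the AGSP application in \cref{AGSP_H}), so I would use only that $C_\ell$ has degree $\ell$. Writing $C_\ell(x)=\sum_{j=0}^\ell c_j x^j$ and expanding each power,
\[
  C_\ell(H)=\sum_{j=0}^\ell c_j H^j=\sum_{j=0}^\ell c_j\sum_{i_1,\dots,i_j}h_{i_1}h_{i_2}\cdots h_{i_j},
\]
exhibits $C_\ell(H)$ as a linear combination of products of at most $\ell$ of the local terms $h_i$.

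Locality is then immediate: each factor $h_{i_r}$ is supported on at most $k$ qudits, so a product of $j\le\ell$ of them is supported on the union of their supports and hence on at most $k\ell$ qudits. The key idea for the term count is to group these monomials by the \emph{set} of distinct indices they contain. For a monomial $h_{i_1}\cdots h_{i_j}$ let $S=\{i_1,\dots,i_j\}\subseteq\{1,\dots,m\}$; then $\abs{S}\le j\le\ell$, and every monomial with index set $S$ — across all powers $j$ — is supported on $\bigcup_{i\in S}\operatorname{supp}(h_i)$, a region of at most $k\ell$ qudits. Summing all such monomials gives a single $k\ell$-local term $h'_S$, so $C_\ell(H)=\sum_S h'_S$ with one term per subset $S$ of size at most $\ell$, whence $m'\le\sum_{j=0}^\ell\binom{m}{j}$.

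The only genuinely quantitative step, and the one I would flag as the crux, is bounding this sum of binomial coefficients. I would invoke the standard estimate $\sum_{j=0}^\ell\binom{m}{j}\le(em/\ell)^\ell$, which follows from the one-line computation $(\ell/m)^\ell\sum_{j=0}^\ell\binom{m}{j}\le\sum_{j=0}^m\binom{m}{j}(\ell/m)^j=(1+\ell/m)^m\le e^\ell$ (valid in the natural regime $\ell\le m$). Since $(em/\ell)^\ell=(e/\ell)^\ell m^\ell$ and this is $O(m^\ell)=\poly(m)$ for fixed $\ell$, the stated bound follows. Everything beyond this inequality is bookkeeping, so I expect no real difficulty once the index-set grouping is set up correctly.
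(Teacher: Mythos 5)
Your proposal is correct and follows essentially the same route as the paper, whose proof is the one-line remark that the result is ``immediate from $C_\ell$ being a degree $\ell$ polynomial and standard binomial coefficient bounds'': your expansion of $C_\ell(H)$, grouping of monomials by their index set into $k\ell$-local terms, and use of the standard estimate $\sum_{j=0}^\ell\binom{m}{j}\leq(em/\ell)^\ell$ are precisely the details that remark leaves implicit.
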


\begin{proof}
  Immediate from $C_\ell$ being a degree $\ell$ polynomial and standard binomial coefficient bounds (see e.g.\ \cite{partial_binom_sum}).
\end{proof}

The AGSPs of \cref{AGSP_H,Chebyshev_H} are examples of AGSPs with arbitrarily good parameters that can (via the results of \cref{sec:local_AGSPs}) be implemented using local operations.
However, the rescaled Chebyshev polynomials $C_\ell$ of \cite[Lemma~4.1]{AGSP} can only be constructed given knowledge of the ground state energy and spectral gap of $H$, which may not be available.
In \cref{sec:local_AGSPs}, we will see examples of local AGSPs which can be constructed without detailed spectral information.

\subsection{AGSP perturbation}
We now prove some results that allow us to construct new AGSPs by perturbing an existing AGSP.
For this, we will need the Davis-Kahan $\sin\Theta$ Theorem, the following variant of which is due to~\cite{Yu+2015}.

\begin{theorem}\label{Davis-Kahan}
  Let $A$, $A'$ be Hermitian matrices.
  Let $P$, $P'$ be the orthogonal projectors onto the eigenspaces associated with the $k$ largest eigenvalues of $A$ and $A'$, respectively.
  Assume that the spectrum of $A$ satisfies $\lambda_k-\lambda_{k+1} \geq \delta$, where $\lambda_i$ is the $i$'th largest eigenvalue of $A$.
  Then
  \begin{equation}
    \norm{P-P'} \leq \frac{2\sqrt{k}\norm{A-A'}}{\delta}.
  \end{equation}
\end{theorem}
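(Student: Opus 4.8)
The plan is to reduce the symmetric quantity $\norm{P-P'}$ to the norm of a single off-diagonal block, and then control that block through a Sylvester equation governed by the spectral gap. First I would invoke the standard fact that, since $P$ and $P'$ both have rank $k$, $\norm{P-P'}=\norm{(\1-P)P'}$ (both equal the sine of the largest canonical angle between the two ranges, by the CS decomposition). Writing $Q:=\1-P$ and $T:=QP'$, it therefore suffices to bound $\norm{T}$.

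Next I would turn $T$ into the solution of a Sylvester equation. Using that $Q$ commutes with $A$ and $P'$ commutes with $A'$ (so that $QAP=0$ and $A'P'=P'A'P'$), a short block computation gives the identity
\begin{equation*}
  Q(A-A')P' = (QAQ)\,T - T\,(P'A'P').
\end{equation*}
Here $QAQ$ acts on the range of $Q$ with eigenvalues $\leq \lambda_{k+1}$, while $P'A'P'$ acts on the range of $P'$ with eigenvalues $\geq \lambda'_k$, where $\lambda'_k$ is the $k$'th largest eigenvalue of $A'$. Since both operators are Hermitian with separated spectra, the Sylvester map $X\mapsto (QAQ)X - X(P'A'P')$ is invertible on this block; its inverse has operator norm at most $1/\sep(QAQ,P'A'P')\leq 1/(\lambda'_k-\lambda_{k+1})$, as one sees from the integral representation $X=-\int_0^\infty e^{(QAQ)t}\,C\,e^{-(P'A'P')t}\,dt$ of the solution of $(QAQ)X-X(P'A'P')=C$. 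Combined with $\norm{Q(A-A')P'}\leq\norm{A-A'}$ this gives
\begin{equation*}
  \norm{T} \leq \frac{\norm{A-A'}}{\lambda'_k-\lambda_{k+1}}.
\end{equation*}

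The main obstacle is that this denominator features the \emph{perturbed} gap $\lambda'_k-\lambda_{k+1}$ rather than the clean gap $\delta=\lambda_k-\lambda_{k+1}$ of the statement: Weyl's inequality only gives $\lambda'_k-\lambda_{k+1}\geq \delta-\norm{A-A'}$, which degrades badly once the perturbation is comparable to the gap. I would remove this dependence with a case split. If $\norm{A-A'}\geq \delta/2$, the claim is immediate, since $\norm{P-P'}\leq 1\leq \sqrt{k}\leq 2\sqrt{k}\,\norm{A-A'}/\delta$. If instead $\norm{A-A'}<\delta/2$, then $\lambda'_k-\lambda_{k+1}>\delta/2$, and the Sylvester bound yields $\norm{P-P'}=\norm{T}\leq 2\norm{A-A'}/\delta\leq 2\sqrt{k}\,\norm{A-A'}/\delta$. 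Combining the two cases proves the theorem.

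A final remark on the constant: in the operator norm the factor $\sqrt{k}$ is in fact slack (the argument above delivers the stronger bound $2\norm{A-A'}/\delta$). It becomes natural, and the bound essentially tight, in the Frobenius-norm version, where one instead bounds $\norm{(\1-P)P'}_F$ and uses that $P'$ has rank $k$ to pass from $\norm{A-A'}_F$-type estimates to $\sqrt{k}\,\norm{A-A'}$; this is the form in which the result is cleanest to state and is presumably why the statement carries the $\sqrt{k}$.
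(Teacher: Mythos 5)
Your proof is correct, and it takes a genuinely different route from the paper's. The paper gives no self-contained argument at all: it observes that the statement is a special case of \cite[Theorem~2]{Yu+2015}, combined with the identity $\norm{P-P'}=\norm{P^\perp P'}=\norm{\sin\Theta(V,V')}$ (citing \cite[Exercise~VII.1.10--11]{Bhatia}) and the norm relation $\norm{X}\leq\norm{X}_2$. You instead reprove the bound from scratch: the equal-rank reduction $\norm{P-P'}=\norm{(\1-P)P'}$, the Sylvester identity $Q(A-A')P'=(QAQ)T-T(P'A'P')$ for $T=QP'$ (which checks out, using only $[A,Q]=0$, $[A',P']=0$ and idempotence), the $1/\sep$ bound on the inverse Sylvester map via the integral representation (valid here because the two blocks are Hermitian with spectra separated by $\lambda'_k-\lambda_{k+1}$), Weyl's inequality, and the case split on whether $\norm{A-A'}\geq\delta/2$, which removes the dependence on the perturbed gap. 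This is essentially the strategy Yu, Wang and Samworth use internally, so in effect you have inlined the proof of the result the paper merely cites. Your route buys self-containedness and a strictly stronger conclusion, $\norm{P-P'}\leq 2\norm{A-A'}/\delta$ with no $\sqrt{k}$ factor; and your closing diagnosis is exactly right: the $\sqrt{k}$ in the stated theorem is inherited from the Frobenius-norm form of \cite[Theorem~2]{Yu+2015}, whose numerator is $2\min\bigl(\sqrt{k}\,\norm{A-A'},\norm{A-A'}_2\bigr)$, followed by the lossy relaxation $\norm{X}\leq\norm{X}_2$. What the paper's route buys is brevity---three lines and a citation---at the cost of a slack constant.
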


\begin{proof}
  This is a special case of \cite[Theorem~2]{Yu+2015}, together with the fact that\linebreak $\norm{P-P'} = \norm{P^\perp P'} = \norm{\sin\Theta(V,V')}$ (see e.g.\ \cite[Exercise~VII.1.10--11]{Bhatia}) and the standard relation $\norm{X}\leq\norm{X}_2$.
\end{proof}

\begin{lemma}[AGSP perturbation]\label{AGSP_perturbation}
  Let $K$ be a $(\Delta,\Gamma,0)$-AGSP for $\Pi_0$ with ground state degeneracy $N:=\tr\Pi_0$.
  If $K'$ is a Hermitian operator such that $\delta := \norm{K-K'} < \abs{\sqrt{\Gamma}-\sqrt{\Delta}}$, then $K'$ is a $(\Delta+\delta, \Gamma-\delta, \epsilon)$-AGSP, with
  \begin{equation}
    \epsilon = \frac{2\sqrt{N}\delta}{\sqrt{\Gamma}-\sqrt{\Delta}}.
  \end{equation}
\end{lemma}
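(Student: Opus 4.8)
The plan is to take the perturbed projector $\Pi'$ to be the spectral projector of $K'$ onto its $N$ largest eigenvalues, where $N=\tr\Pi_0$. Since $\epsilon=0$ for $K$, the projector witnessing its AGSP property is exactly $\Pi_0$, so the hypotheses say that the $N$ largest eigenvalues of $K$ (those supported on $\Pi_0$) are all $\geq\sqrt{\Gamma}$, while the remaining eigenvalues lie in $[-\sqrt{\Delta},\sqrt{\Delta}]$. Because $\delta>0$ together with $\delta<\sqrt{\Gamma}-\sqrt{\Delta}$ forces $\sqrt{\Gamma}>\sqrt{\Delta}$, these two blocks are separated and $\Pi_0$ is genuinely the top-$N$ spectral projector of $K$. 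Defining $\Pi'$ as the analogous top-$N$ projector for $K'$ makes condition \labelcref{def:AGSP:commutator}, i.e.\ $[K',\Pi']=0$, automatic.

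First I would dispatch conditions \labelcref{def:AGSP:Gamma} and \labelcref{def:AGSP:Delta} by eigenvalue perturbation. Weyl's inequality gives $\abs{\lambda_i(K)-\lambda_i(K')}\leq\norm{K-K'}=\delta$ for every $i$, so the $N$ largest eigenvalues of $K'$ are $\geq\sqrt{\Gamma}-\delta$ and the remaining eigenvalues lie in $[-\sqrt{\Delta}-\delta,\sqrt{\Delta}+\delta]$. Restricting to the ranges of $\Pi'$ and $\1-\Pi'$ respectively yields $K'\Pi'\geq(\sqrt{\Gamma}-\delta)\Pi'$ and $\norm{(\1-\Pi')K'(\1-\Pi')}\leq\sqrt{\Delta}+\delta$, which are exactly the shifted thresholds for the perturbed AGSP parameters (with the understanding that the new parameters are read off from these shifted square-root thresholds; reconciling the additive shifts $\sqrt{\Gamma}\mp\delta$ from Weyl with the square-root form appearing in the statement is a minor point requiring a little care).

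The substantive step is condition \labelcref{def:AGSP:epsilon}, and here I would invoke the Davis--Kahan theorem (\cref{Davis-Kahan}) with $A=K$, $A'=K'$ and $k=N$. The required spectral gap of $A=K$ is precisely $\lambda_N(K)-\lambda_{N+1}(K)\geq\sqrt{\Gamma}-\sqrt{\Delta}$ (smallest top eigenvalue minus largest bottom eigenvalue), which is why the hypothesis is phrased in terms of $\sqrt{\Gamma}-\sqrt{\Delta}$. Since the top-$N$ projectors of $K$ and $K'$ are $\Pi_0$ and $\Pi'$, the theorem immediately gives $\norm{\Pi_0-\Pi'}\leq\frac{2\sqrt{N}\,\norm{K-K'}}{\sqrt{\Gamma}-\sqrt{\Delta}}=\frac{2\sqrt{N}\delta}{\sqrt{\Gamma}-\sqrt{\Delta}}$, which is exactly the claimed $\epsilon$. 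The hypothesis $\delta<\abs{\sqrt{\Gamma}-\sqrt{\Delta}}$ ensures this gap is strictly positive, so that Davis--Kahan applies and $\Pi'$ is well defined as a genuine top-$N$ projector.

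The main obstacle is really bookkeeping rather than a deep difficulty: the one genuinely nontrivial ingredient is the Davis--Kahan bound, which is quoted as \cref{Davis-Kahan} and supplies condition \labelcref{def:AGSP:epsilon} with the exact constant $2\sqrt{N}$. The only points requiring attention are (i) verifying that $\Pi_0$ really is the top-$N$ spectral projector of $K$, so that Davis--Kahan is applied to the correct pair of projectors, which uses $\sqrt{\Gamma}>\sqrt{\Delta}$, and (ii) matching the additive Weyl shifts against the square-root parametrisation of the AGSP parameters in conditions \labelcref{def:AGSP:Gamma} and \labelcref{def:AGSP:Delta}.
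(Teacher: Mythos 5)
Your proposal is correct and takes essentially the same route as the paper's proof: an eigenvalue-perturbation bound (the paper invokes \cref{eigenvalue_perturbation}, which for the Hermitian $K$ reduces exactly to the Weyl inequality you use) shifts the two spectral bands by at most $\delta$, then \cref{Davis-Kahan} with $k=N$ gives the projector bound $\epsilon$, and the first three AGSP conditions are verified directly on the perturbed top-$N$ spectral projector. Your closing caveat about reconciling the additive shifts $\sqrt{\Gamma}-\delta$ and $\sqrt{\Delta}+\delta$ with the stated parameters $(\Delta+\delta,\Gamma-\delta)$ applies equally to the paper, whose proof establishes exactly those shifted thresholds.
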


\begin{proof}
  Since $[K,\Pi_0]=0$ by \cref{def:AGSP}\labelcref{def:AGSP:commutator}, it decomposes as $K=K_0\oplus K_1$ w.r.t.\ the $\Pi_0,(\1-\Pi_0)$ decomposition of the Hilbert space.
  By \cref{def:AGSP}\labelcref{def:AGSP:Gamma,def:AGSP:Delta}, all eigenvalues of $K_0$ are $\geq \sqrt{\Gamma}$ and all eigenvalues of $K_1$ are $\leq \sqrt{\Delta}$.
  Thus by \cref{eigenvalue_perturbation}, noting that $K$ is Hermitian hence diagonalised by a unitary, $K'$ has some eigenvalues $\geq\sqrt{\Gamma}-\delta$ and the rest $\leq\sqrt{\Delta}+\delta$.

  Applying \cref{Davis-Kahan}, we have
  \begin{equation}
    \norm{\Pi'_0-\Pi_0} \leq \frac{2\sqrt{N}\norm{K-K'}}{\sqrt{\Gamma}-\sqrt{\Delta}} = \epsilon,
  \end{equation}
  where $\Pi'_0$ is the projector onto the $\geq\sqrt{\Gamma}-\delta$ eigenspace of $K'$.
  Thus $\Pi'_0$ fulfils \cref{def:AGSP}\labelcref{def:AGSP:epsilon}.

  Let $K'=K'_0\oplus K'_1$ be the decomposition of $K'$ with respect to $\Pi'_0,(\1-\Pi'_0)$.
  By definition of $\Pi'_0,K'_0,K'_1$, we have
  \begin{align}
    [K',\Pi'_0] &= 0, \\
    K'\Pi'_0 &= K'_0\Pi'_0 \geq (\sqrt{\Gamma}-\delta)\Pi'_0, \\
    \norm{(\1-\Pi'_0)K'(\1-\Pi'_0)} &= \norm{K'_1(\1-\Pi'_0)} \leq \sqrt{\Delta}+\delta.
  \end{align}
  Thus $K'$ fulfils \cref{def:AGSP}\labelcref{def:AGSP:Gamma,def:AGSP:Delta,def:AGSP:commutator}.
\end{proof}

Note that \cref{AGSP_perturbation} can readily be generalised to the case where $K$ is an $(\Delta,\Gamma,\epsilon)$-AGSP, rather than assuming $\epsilon=0$.
However, we will not need this generalisation here.

\subsection{Local AGSPs}\label{sec:local_AGSPs}
The following results allow us to construct AGSPs from local operations.

\begin{lemma}\label{AGSP_sum}
  If $H = \sum_i h_i$ is a Hamiltonian, then
  \begin{equation}
    \begin{gathered}
      K = \frac{\1 - H/\kappa}{2} = \sum_i \kappa_i k_i\\
      \text{where}\quad
      k_i = \frac{\1-h_i/\norm{h_i}}{2}, \quad
      \kappa = \sum_i\norm{h_i}, \quad
      \kappa_i = \frac{\norm{h_i}}{\kappa}
    \end{gathered}
  \end{equation}
  is a $(\Delta,\Gamma,0)$-AGSP for the projector $\Pi_0$ onto the ground space of $H$, with
  \begin{equation}
    \sqrt{\Gamma} = \frac{1-\lambda_0/\sum_i\norm{h_i}}{2},\qquad
    \sqrt{\Delta} = \frac{1-\lambda_1/\sum_i\norm{h_i}}{2},
  \end{equation}
  where $\lambda_0$ and $\lambda_1$ are, respectively, the minimum and next-minimum eigenvalues of $H$ (not counting degeneracies).
\end{lemma}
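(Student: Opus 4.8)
The plan is to recognise $K$ as a fixed affine function of $H$ and then read off the four AGSP conditions directly from the spectral decomposition of $H$. First I would verify the claimed decomposition $K=\sum_i\kappa_i k_i$, which is a one-line computation: substituting $k_i=(\1-h_i/\norm{h_i})/2$ and $\kappa_i=\norm{h_i}/\kappa$ gives $\sum_i\kappa_i k_i = \frac{1}{2\kappa}\sum_i(\norm{h_i}\1-h_i)=\frac{1}{2\kappa}(\kappa\1-H)=(\1-H/\kappa)/2=K$. This identity is what later permits a local implementation of the AGSP, but it plays no role in establishing the AGSP property itself.

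The core of the argument is that $K=f(H)$ for the affine, strictly decreasing function $f(x)=(1-x/\kappa)/2$. Since $K$ is a polynomial in $H$, it commutes with every spectral projector of $H$, in particular $[K,\Pi_0]=0$, giving \cref{def:AGSP}\labelcref{def:AGSP:commutator}; and taking $\Pi=\Pi_0$ makes \cref{def:AGSP}\labelcref{def:AGSP:epsilon} hold trivially with $\epsilon=0$. For the remaining two conditions I would invoke the spectral mapping theorem. On the ground space, $H\Pi_0=\lambda_0\Pi_0$, so $K\Pi_0=f(\lambda_0)\Pi_0=\frac{1-\lambda_0/\kappa}{2}\Pi_0$, which gives \cref{def:AGSP}\labelcref{def:AGSP:Gamma} with equality and identifies $\sqrt{\Gamma}=f(\lambda_0)$. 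On the orthogonal complement the eigenvalues of $H$ are all $\geq\lambda_1$, and since $f$ is decreasing the eigenvalues of $(\1-\Pi_0)K(\1-\Pi_0)$ are all $\leq f(\lambda_1)=\frac{1-\lambda_1/\kappa}{2}$, which is \cref{def:AGSP}\labelcref{def:AGSP:Delta} with $\sqrt{\Delta}=f(\lambda_1)$.

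The only thing requiring care --- and what I regard as the main (if modest) obstacle --- is checking that the square-root notation is legitimate, i.e.\ that $\sqrt{\Gamma}$ and $\sqrt{\Delta}$ are genuinely nonnegative and that $K\geq 0$, so that the bound ``$\geq\sqrt{\Gamma}\Pi$'' and the operator-norm bound make sense. This follows from the normalisation: by the triangle inequality $\norm{H}\leq\sum_i\norm{h_i}=\kappa$, so the spectrum of $H$ lies in $[-\kappa,\kappa]$, whence $f$ maps it into $[0,1]$; in particular $0\leq\sqrt{\Delta}<\sqrt{\Gamma}\leq 1$, the strict inequality holding because $\lambda_1>\lambda_0$ as distinct eigenvalues. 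With this in hand all four conditions of \cref{def:AGSP} are verified directly, completing the proof.
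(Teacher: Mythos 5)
Your proposal is correct and follows essentially the same route as the paper's proof: take $\Pi=\Pi_0$ with $\epsilon=0$, note $[K,\Pi_0]=0$ since $K$ is a function of $H$, and read off $\sqrt{\Gamma}=f(\lambda_0)$ and $\sqrt{\Delta}=f(\lambda_1)$ from the spectral decomposition. The only difference is that you additionally verify $K\geq 0$ via $\norm{H}\leq\kappa$ so that the operator-norm bound in \cref{def:AGSP}\labelcref{def:AGSP:Delta} is legitimate --- a detail the paper's proof leaves implicit (it only establishes the one-sided operator inequality $K(\1-\Pi_0)\leq\sqrt{\Delta}(\1-\Pi_0)$) --- which is a worthwhile, if minor, tightening.
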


\begin{proof}
  Clearly $[K,\Pi_0] = 0$, so $K$ satisfies \cref{def:AGSP}\labelcref{def:AGSP:epsilon} with $\Pi=\Pi_0$ and $\epsilon=0$.

  Furthermore,
  \begin{equation}
    K\Pi_0 = \frac{\Pi_0 - \lambda_0\Pi_0/\kappa}{2} = \frac{1-\lambda_0/\kappa}{2}
  \end{equation}
  and
  \begin{equation}
    K(\1-\Pi_0)
    \leq \frac{\1-\Pi_0 - (\1-\Pi_0)\lambda_1/\kappa}{2}
    = \frac{1-\lambda_1/\kappa}{2} (\1-\Pi_0).
  \end{equation}
  So $K$ satisfies \cref{def:AGSP}\labelcref{def:AGSP:commutator,def:AGSP:Delta,def:AGSP:Gamma} with $\Gamma$ and $\Delta$ as claimed.
\end{proof}

The AGSP of \cref{AGSP_sum} cannot be implemented locally.
But we can approximate it locally.
\Cref{AGSP-to-AGSP,AGSP_mixture} provide two ways of doing so.

\begin{proposition}\label{AGSP-to-AGSP}
  Let $K = \sum_{i=1}^m k_i$ be a $(\Delta,\Gamma,0)$-AGSP for $\Pi_0$.
  Then
  \begin{align}
    K' = \prod_{i=1}^m\Bigl((1-\epsilon)\1 + \epsilon k_i\Bigr)
         \prod_{i=m}^1\Bigl((1-\epsilon)\1 + \epsilon k_i\Bigr)
  \end{align}
  is a $(\Delta',\Gamma',\epsilon')$-AGSP with
  \begin{align}
    \Delta'   &= (1-\epsilon)^{2m-1}\Bigl(1 - (1-2\sqrt{\Delta})\epsilon\Bigr) + O(\epsilon^2),
      \label{eq:Deltap}\\
    \Gamma'   &= (1-\epsilon)^{2m-1}\Bigl(1-(1-2\sqrt{\Gamma})\epsilon\Bigr) - O(\epsilon^2),
      \label{eq:Gammap}\\
    \epsilon' &= O(\epsilon^2).
      \label{eq:epsilonp}
  \end{align}
\end{proposition}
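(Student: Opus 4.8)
The plan is to realise $K'$ as a small, controlled perturbation of an operator that commutes with $\Pi_0$, and then to feed this into the AGSP perturbation result \cref{AGSP_perturbation}. The first step is to record that $K'$ is Hermitian: each factor $(1-\epsilon)\1+\epsilon k_i$ is Hermitian and the two products run over $i$ in opposite (palindromic) orders, so $K' = M^\dg M$ with $M := \prod_{i=m}^{1}\bigl((1-\epsilon)\1+\epsilon k_i\bigr)$; in particular $K'\geq 0$, so it is a legitimate candidate for \cref{def:AGSP}.

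Next I would Taylor-expand in $\epsilon$. Writing each factor as $\1+\epsilon(k_i-\1)$ and collecting the term that is linear in the $k_i$, the key combinatorial point is that each index $i$ appears in exactly two of the $2m$ factors, so the linear term is $2\epsilon(1-\epsilon)^{2m-1}K$ and the purely scalar contribution is $(1-\epsilon)^{2m}\1$. Hence
\begin{equation}
  K' = \tilde K + R, \qquad
  \tilde K := (1-\epsilon)^{2m-1}\bigl((1-\epsilon)\1 + 2\epsilon K\bigr), \qquad
  \norm{R} = O(\epsilon^2),
\end{equation}
where $R$ collects all contributions containing at least two factors of $\epsilon k_i$ together with the higher-order pieces of the scalar prefactor; for fixed $m$ and bounded $\norm{k_i}$ this is $O(\epsilon^2)$ in operator norm. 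Since $\tilde K$ is an increasing affine function of $K$ for small $\epsilon$, it commutes with $\Pi_0$ and keeps $\Pi_0$ as its exact top eigenspace, so it is a $(\Delta',\Gamma',0)$-AGSP. Pushing the eigenvalue thresholds of $K$ from \cref{def:AGSP}\labelcref{def:AGSP:Gamma,def:AGSP:Delta} (namely $\geq\sqrt\Gamma$ on $\Pi_0$ and $\leq\sqrt\Delta$ on $\1-\Pi_0$) through this affine map yields precisely the thresholds in \cref{eq:Gammap,eq:Deltap}.

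I would then recover $K'$ from $\tilde K$ by applying \cref{AGSP_perturbation} with perturbation size $\delta := \norm{K'-\tilde K} = \norm{R} = O(\epsilon^2)$. The lemma degrades the base thresholds to $\Gamma'-\delta$ and $\Delta'+\delta$, which accounts for the $\mp O(\epsilon^2)$ terms in \cref{eq:Gammap,eq:Deltap}, and produces the genuine top eigenspace projector $\Pi'_0$ of $K'$ with $\norm{\Pi'_0-\Pi_0}=\epsilon'$.

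The delicate step --- and the one I expect to carry the real work --- is the bound \cref{eq:epsilonp} on $\epsilon'$. The estimate supplied by \cref{AGSP_perturbation} (equivalently by Davis--Kahan, \cref{Davis-Kahan}) is $\epsilon' \leq 2\sqrt N\,\delta/(\sqrt{\Gamma'}-\sqrt{\Delta'})$, but the gap of the approximant is itself only $\sqrt{\Gamma'}-\sqrt{\Delta'} = (1-\epsilon)^{2m-1}\,2\epsilon(\sqrt\Gamma-\sqrt\Delta)+O(\epsilon^2) = O(\epsilon)$, not $O(1)$. With $\delta=O(\epsilon^2)$ this off-the-shelf route gives only $\epsilon'=O(\epsilon)$, so reaching $O(\epsilon^2)$ requires squeezing more out of the structure of $R$. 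Using the $K'=M^\dg M$ form, the part of the second-order term that is genuinely off-diagonal with respect to $\Pi_0$ reduces to $-\epsilon^2\sum_i k_i^2$ (restricted to the off-diagonal block), because every other second-order contribution is a polynomial in $K$ and hence commutes with $\Pi_0$. The obstacle is that even this $O(\epsilon^2)$ off-diagonal block must be divided by the $O(\epsilon)$ gap, so establishing the stated $\epsilon'=O(\epsilon^2)$ seems to need either a dedicated second-order eigenspace-perturbation computation exhibiting an extra cancellation, or an additional structural hypothesis on the $k_i$; this is exactly the point I would scrutinise most carefully.
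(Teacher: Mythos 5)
Your proposal is, up to its final paragraph, the paper's own proof: the paper likewise introduces the affine approximant $\tilde K := (1-\epsilon)^{2m}\1 + 2\epsilon(1-\epsilon)^{2m-1}K$, checks that it commutes with $\Pi_0$ and inherits the thresholds in \cref{eq:Deltap,eq:Gammap}, observes $\norm{K'-\tilde K}=O(\epsilon^2)$, and then invokes \cref{AGSP_perturbation}. (Your $K'=M^\dg M$ observation, giving Hermiticity and positivity for free, is a small addition; the paper merely asserts Hermiticity.)

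The step you flag as delicate is not a weakness of your write-up but a genuine gap in the paper's proof. The paper reads off $\epsilon'=O(\epsilon^2)$ from \cref{AGSP_perturbation}, but the bound that lemma supplies is $2\sqrt{N}\,\delta/(\sqrt{\tilde\Gamma}-\sqrt{\tilde\Delta})$, and---exactly as you compute---the spectral gap of the approximant is $\sqrt{\tilde\Gamma}-\sqrt{\tilde\Delta}=2\epsilon(1-\epsilon)^{2m-1}(\sqrt\Gamma-\sqrt\Delta)\bigl(1+O(\epsilon)\bigr)=\Theta(\epsilon)$, so only $\epsilon'=O(\epsilon)$ follows. Your structural analysis of the residual is also correct and shows that no hidden cancellation rescues the stronger claim: the second-order term of $K'$ is $\epsilon^2(1-\epsilon)^{2m-2}\bigl(2K^2-\sum_i k_i^2\bigr)$, whose block off-diagonal part w.r.t.\ $\Pi_0$ is $-\epsilon^2(1-\epsilon)^{2m-2}\,\Pi_0\bigl(\sum_i k_i^2\bigr)(\1-\Pi_0)$ plus its adjoint, generically nonzero. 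Concretely, on $\C^2$ take
\begin{equation}
  k_1=\frac12\begin{pmatrix}1&1\\1&1\end{pmatrix},\qquad
  k_2=\begin{pmatrix}1&0\\0&0\end{pmatrix}-k_1,
\end{equation}
so that $K=k_1+k_2$ is a $(0,1,0)$-AGSP for $\Pi_0=\begin{pmatrix}1&0\\0&0\end{pmatrix}$; an exact computation gives $K'=\alpha\1+\beta\sigma_x+\gamma\sigma_z$ with $\gamma=\epsilon(1-\epsilon)^2$ and $\beta=-\tfrac12\epsilon^2+O(\epsilon^3)$, so the top eigenspace of $K'$ is rotated away from $\Pi_0$ by $\norm{\Pi'_0-\Pi_0}=\epsilon/4+O(\epsilon^2)=\Theta(\epsilon)$. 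Hence \cref{eq:epsilonp} fails as stated (it does hold, trivially with $\epsilon'=0$, under the extra hypothesis $[k_i,\Pi_0]=0$ for all $i$, i.e.\ in the frustration-free case, which is consistent with your suspicion that an additional structural assumption is what is missing). What your argument---and the paper's---actually establishes is the proposition with $\epsilon'$ weakened to $O(\epsilon)$. This is worth flagging because downstream results, in particular the proof of \cref{decaying_CPT_map}, consume $\epsilon'_t=O(\epsilon_t^2)$ and would need to be re-examined with the weaker bound.
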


\begin{proof}
  Let $\tilde{K} := (1-\epsilon)^{2m}\1 + 2\epsilon(1-\epsilon)^{2m-1} K$.
  We have
  \begin{align}
    \tilde{K}\Pi_0
    &= (1-\epsilon)^{2m}\Pi_0 + 2\epsilon(1-\epsilon)^{2m-1} K\Pi_0 \\
    &\geq (1-\epsilon)^{2m-1}\Bigl(1-(1-2\sqrt{\Gamma})\epsilon\Bigr) \Pi_0.
  \end{align}
  Similarly,
  \begin{align}
    \norm{(&1-\Pi_0)\tilde{K}(1-\Pi_0)} \\
           &\leq (1-\epsilon)^{2m} \norm{(1-\Pi_0)\1(1-\Pi_0)} + 2\epsilon(1-\epsilon)^{2m-1}
             \norm{(1-\Pi_0)K(1-\Pi_0)} \\
           &\leq (1-\epsilon)^{2m-1}\Bigl(1 - (1-2\sqrt{\Delta})\epsilon\Bigr).
  \end{align}
  Thus $\tilde{K}$ is a $(\tilde{\Delta},\tilde{\Gamma},0)$-AGSP with
  \begin{align}
    \tilde{\Delta} &= (1-\epsilon)^{2m-1}\Bigl(1 - (1-2\sqrt{\Delta})\epsilon\Bigr), \\
    \tilde{\Gamma} &= (1-\epsilon)^{2m-1}\Bigl(1-(1-2\sqrt{\Gamma})\epsilon\Bigr).
  \end{align}

  Now, $K'$ is Hermitian and
  \begin{equation}
    K' = (1-\epsilon)^{2m}\1 + 2\epsilon(1-\epsilon)^{2m-1} K + O(\epsilon^2),
  \end{equation}
  so that
  \begin{equation}
    \norm{K'-\tilde{K}} = O(\epsilon^2).
  \end{equation}
  Thus by \cref{AGSP_perturbation}, $K'$ is an $(\tilde{\Delta}+O(\epsilon^2),\tilde{\Gamma}-O(\epsilon^2),O(\epsilon^2))$-AGSP, as claimed.
\end{proof}

Applying \cref{AGSP-to-AGSP} to the AGSP from \cref{AGSP_sum} immediately gives the following.

\begin{corollary}\label{AGSP_product}
  If $H = \sum_i h_i$ is a Hamiltonian whose smallest and second-smallest eigenvalues (not counting degeneracies) are $\lambda_0$ and $\lambda_1$, respectively, then
  \begin{equation}
    \begin{gathered}
      K' = \prod_{i=1}^m\Bigl((1-\epsilon)\1 + \epsilon\kappa_i k_i\Bigr)
      \prod_{i=m}^1\Bigl((1-\epsilon)\1 + \epsilon\kappa_i k_i\Bigr) \\
      \text{where}\quad
      k_i = \frac{\1-h_i/\norm{h_i}}{2}, \quad
      \kappa = \sum_i\norm{h_i}, \quad
      \kappa_i = \frac{\norm{h_i}}{\kappa}
    \end{gathered}
  \end{equation}
  is a $(\Delta',\Gamma',\epsilon')$-AGSP with
  \begin{align}
    \Delta'   &= (1-\epsilon)^{2m-1}\left(1 - \frac{\epsilon\lambda_1}{\kappa}\right) + O(\epsilon^2), \\
    \Gamma'   &= (1-\epsilon)^{2m-1}\left(1 - \frac{\epsilon\lambda_0}{\kappa}\right) - O(\epsilon^2), \\
    \epsilon' &= O(\epsilon^2).
  \end{align}
\end{corollary}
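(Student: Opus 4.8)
The plan is to recognise that the AGSP from \cref{AGSP_sum} already has exactly the form $K=\sum_i(\kappa_i k_i)$ required as the hypothesis of \cref{AGSP-to-AGSP}, and then simply invoke that proposition with the weighted summands $\kappa_i k_i$ playing the role of the local terms. Writing $\tilde{k}_i := \kappa_i k_i$, \cref{AGSP_sum} establishes that $K = \sum_i \tilde{k}_i$ is a $(\Delta,\Gamma,0)$-AGSP for $\Pi_0$ with $\sqrt{\Gamma} = (1-\lambda_0/\kappa)/2$ and $\sqrt{\Delta} = (1-\lambda_1/\kappa)/2$. Substituting $\tilde{k}_i$ for the $k_i$ appearing in the product formula of \cref{AGSP-to-AGSP} reproduces verbatim the operator $K'$ displayed in the statement, so the construction matches immediately.

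It then remains only to translate the output parameters. \Cref{AGSP-to-AGSP} expresses $\Delta',\Gamma',\epsilon'$ in terms of $\sqrt{\Delta}$ and $\sqrt{\Gamma}$, so I would substitute the explicit values from \cref{AGSP_sum}. A one-line calculation gives $1-2\sqrt{\Delta} = \lambda_1/\kappa$ and $1-2\sqrt{\Gamma} = \lambda_0/\kappa$; feeding these into \labelcref{eq:Deltap,eq:Gammap,eq:epsilonp} turns the expressions $(1-\epsilon)^{2m-1}(1-(1-2\sqrt{\Delta})\epsilon)$ and $(1-\epsilon)^{2m-1}(1-(1-2\sqrt{\Gamma})\epsilon)$ into precisely the claimed $(1-\epsilon)^{2m-1}(1-\epsilon\lambda_1/\kappa)$ and $(1-\epsilon)^{2m-1}(1-\epsilon\lambda_0/\kappa)$, while $\epsilon'=O(\epsilon^2)$ carries over unchanged.

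There is essentially no obstacle to overcome: the corollary is a direct specialisation. The only point requiring care is the bookkeeping of which operators count as the ``local terms'' when matching the two statements---namely that it is the weighted summands $\kappa_i k_i$, and not the bare $k_i$, that must be fed into \cref{AGSP-to-AGSP}, so that the factors $(1-\epsilon)\1 + \epsilon\kappa_i k_i$ appear in $K'$. Once this identification is made, both the operator and the three parameter formulas follow by substitution, justifying the word ``immediately'' in the surrounding text.
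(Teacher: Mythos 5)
Your proposal is correct and is exactly the paper's own route: the paper derives this corollary by applying \cref{AGSP-to-AGSP} to the AGSP of \cref{AGSP_sum}, with the weighted summands $\kappa_i k_i$ serving as the terms in the product, and the parameter formulas following from the substitution $1-2\sqrt{\Gamma}=\lambda_0/\kappa$, $1-2\sqrt{\Delta}=\lambda_1/\kappa$. Your explicit bookkeeping of this identification is precisely what the paper leaves implicit in the word ``immediately.''
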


\Cref{AGSP_product} gives a way of approximating the AGSP of \cref{AGSP_sum} locally, by weak-measuring local terms in the Hamiltonian in sequence.
The following result gives an alternative way, by repeatedly sampling terms to weak-measure uniformly at random.

\begin{proposition}\label{AGSP-to-CP}
  Let $K = \sum_{i=1}^m k_i$ be an AGSP, and let $K'$ be the AGSP from \cref{AGSP-to-AGSP}.
  Define the (trace-non-increasing) CP map
  \begin{equation}
    \cE(\rho) := \sum_i E_i\rho E_i^\dg, \qquad
    E_i = \frac{1}{\sqrt{m}} \left((1-\epsilon)\1 + \epsilon k_i\right)
  \end{equation}
  Then
  \begin{equation}
    \cE^{2m}(\rho) = K'\rho {K'}^\dg + O(\epsilon^2).
  \end{equation}
\end{proposition}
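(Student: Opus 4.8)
The plan is to expand both sides to first order in $\epsilon$, absorbing everything of order $\epsilon^2$ and higher into the stated error, and to check that the linear-in-$\epsilon$ terms coincide. First I would rewrite a single Kraus operator as $E_i = \frac{1}{\sqrt m}\bigl(\1 + \epsilon(k_i-\1)\bigr)$, so that one application of the map becomes
\begin{equation}
  \cE(\rho) = \frac1m\sum_{i=1}^m \bigl(\1 + \epsilon(k_i-\1)\bigr)\rho\bigl(\1 + \epsilon(k_i-\1)\bigr)
            = \rho + \epsilon\,(L\rho + \rho L) + O(\epsilon^2),
\end{equation}
where $L := \tfrac{1}{m}(K - m\1)$ and I have used $\sum_i k_i = K$. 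Equivalently, as a superoperator $\cE = \mathcal{I} + \epsilon\,\mathcal{L} + O(\epsilon^2)$, with $\mathcal{I}$ the identity channel and $\mathcal{L}(\rho) := L\rho + \rho L$.

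Next I would iterate this $2m$ times. Because the $k_i$ are bounded (as the terms of an AGSP must be), $\mathcal{L}$ has bounded superoperator norm independent of $\epsilon$, so composing $\cE = \mathcal{I} + \epsilon\mathcal{L} + O(\epsilon^2)$ with itself $2m$ times and collecting terms gives
\begin{equation}
  \cE^{2m} = \mathcal{I} + 2m\,\epsilon\,\mathcal{L} + O(\epsilon^2),
\end{equation}
since every contribution beyond the single linear term carries at least two factors of $\epsilon$. Substituting $2mL = 2K - 2m\1$ then yields
\begin{equation}
  \cE^{2m}(\rho) = \rho + \epsilon\bigl[(2K-2m\1)\rho + \rho(2K-2m\1)\bigr] + O(\epsilon^2).
\end{equation}

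On the other side, I would expand $K'$ directly from its product definition in \cref{AGSP-to-AGSP}: each of the $2m$ factors equals $\1 + \epsilon(k_i-\1)$, so keeping only linear terms picks up one factor $\epsilon(k_i-\1)$ at a time, and since each $k_i$ appears in two of the factors this gives $K' = \1 + \epsilon\,(2K - 2m\1) + O(\epsilon^2)$. As $K'$ is Hermitian, ${K'}^\dg$ has the same expansion, and hence
\begin{equation}
  K'\rho {K'}^\dg = \rho + \epsilon\bigl[(2K-2m\1)\rho + \rho(2K-2m\1)\bigr] + O(\epsilon^2),
\end{equation}
which is identical to the expansion of $\cE^{2m}(\rho)$ above, proving the claim.

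The one step that needs genuine care is the iteration: one must confirm that raising $\mathcal{I} + \epsilon\mathcal{L} + O(\epsilon^2)$ to the $2m$-th power does not let the remainder grow to order $\epsilon$. This is where I would be explicit that, for fixed $m$ and with $\norm{\mathcal{L}}$ bounded uniformly in $\epsilon$, the multinomial expansion of the $2m$-fold composition places every cross-term and every remainder contribution at order $\epsilon^2$ or higher, with a constant depending on $m$ and $\norm{K}$ but not on $\epsilon$. Everything else is routine first-order bookkeeping, and the zeroth-order terms trivially agree (both sides equal $\rho + O(\epsilon)$).
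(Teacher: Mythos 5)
Your proof is correct and follows essentially the same route as the paper: expand $\cE(\rho)$ to first order in $\epsilon$, iterate $2m$ times noting all cross-terms are $O(\epsilon^2)$, expand $K'$ to first order from its product form, and match the linear terms. The only cosmetic difference is that the paper keeps the $(1-\epsilon)^{4m}$ and $(1-\epsilon)^{4m-1}$ prefactors unexpanded (writing both sides as $(1-\epsilon)^{4m}\rho + 2\epsilon(1-\epsilon)^{4m-1}(K\rho+\rho K^\dg) + O(\epsilon^2)$), whereas you expand everything around $\1$, which is equivalent up to the same $O(\epsilon^2)$ reshuffling.
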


\begin{proof}
  From \cref{AGSP-to-AGSP}, we have
  \begin{equation}
    K'\rho {K'}^\dg = (1-\epsilon)^{4m} \rho + 2\epsilon(1-\epsilon)^{4m-1} (K\rho + \rho K^\dg) + O(\epsilon^2).
  \end{equation}
  Meanwhile,
  \begin{align}
    \cE(\rho)
    &= \sum_i E_i\rho E_i^\dg \\
    &= \sum_i \frac{1}{m} \Bigl((1-\epsilon)\1 + \epsilon k_i\Bigr) \rho \Bigl((1-\epsilon)\1 + \epsilon k_i^\dg\Bigr) \\
    &= (1-\epsilon)^2\rho + \frac{1}{m}\epsilon(1-\epsilon) (K\rho + \rho K^\dg) + O(\epsilon^2),
  \end{align}
  so that
  \begin{align}
    \cE^{2m}
    &= (1-\epsilon)^{4m}\rho + 2m\frac{1}{m}\epsilon(1-\epsilon)(1-\epsilon)^{2(2m-1)}[K,\rho] + O(\epsilon^2) \\
    &= (1-\epsilon)^{4m}\rho + 2\epsilon(1-\epsilon)^{4m-1}(K\rho + \rho K^\dg) + O(\epsilon^2) \\
    &= K'\rho{K'}^\dg + O(\epsilon^2),
  \end{align}
  as claimed.
\end{proof}

Once again, applying \cref{AGSP-to-CP} to the AGSP from \cref{AGSP_sum} gives a way of approximating the AGSP locally.

\begin{corollary}\label{AGSP_mixture}
  If $H = \sum_i h_i$ is a Hamiltonian whose smallest and second-smallest eigenvalues (not counting degeneracies) are $\lambda_0$ and $\lambda_1$, respectively, then
  \begin{equation}
    \begin{gathered}
      \cE(\rho) := \sum_i E_i\rho E_i^\dg, \qquad
      E_i = \frac{1}{\sqrt{m}} \left((1-\delta)\1 + \epsilon \kappa_i k_i\right) \\
      \text{where}\quad
      k_i = \frac{\1-h_i/\norm{h_i}}{2}, \quad
      \kappa = \sum_i\norm{h_i}, \quad
      \kappa_i = \frac{\norm{h_i}}{\kappa}
    \end{gathered}
  \end{equation}
  is a CPT map which acts as
  \begin{equation}
    \cE^{2m}(\rho) = K'\rho {K'}^\dg + O(\epsilon^2)
  \end{equation}
  where $K'$ is the AGSP from \cref{AGSP_product}.
\end{corollary}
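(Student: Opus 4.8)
The plan is to derive this exactly as \cref{AGSP_product} followed from \cref{AGSP-to-AGSP}: by specialising the generic \cref{AGSP-to-CP} to the concrete local AGSP furnished by \cref{AGSP_sum}. First I would recall from \cref{AGSP_sum} that $K=\sum_i\kappa_i k_i$, with $k_i=(\1-h_i/\norm{h_i})/2$, $\kappa=\sum_i\norm{h_i}$ and $\kappa_i=\norm{h_i}/\kappa$, is a $(\Delta,\Gamma,0)$-AGSP for $\Pi_0$, with $\Gamma,\Delta$ determined by $\lambda_0,\lambda_1$. The only structural difference from the generic AGSP $K=\sum_i k_i$ appearing in \cref{AGSP-to-CP} is that its local summands are $\kappa_i k_i$ rather than bare $k_i$.

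Next I would relabel $k_i\mapsto\kappa_i k_i$ throughout \cref{AGSP-to-CP}. Under this substitution its Kraus operators $\tfrac{1}{\sqrt m}\bigl((1-\epsilon)\1+\epsilon k_i\bigr)$ become $\tfrac{1}{\sqrt m}\bigl((1-\epsilon)\1+\epsilon\kappa_i k_i\bigr)$, which is precisely the $E_i$ in the statement (identifying the constant $\delta$ with $\epsilon$). \cref{AGSP-to-CP} then yields $\cE^{2m}(\rho)=K'\rho{K'}^\dg+O(\epsilon^2)$, where $K'$ is the operator produced by \cref{AGSP-to-AGSP} from this same AGSP, namely $K'=\prod_{i=1}^m\bigl((1-\epsilon)\1+\epsilon\kappa_i k_i\bigr)\prod_{i=m}^1\bigl((1-\epsilon)\1+\epsilon\kappa_i k_i\bigr)$. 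This is word-for-word the AGSP of \cref{AGSP_product}, so the two $K'$ coincide and the claimed identity follows, with $\Delta',\Gamma',\epsilon'$ inherited verbatim from \cref{AGSP_product}.

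The one point needing genuine care---and the main, if minor, obstacle---is the assertion that $\cE$ is \emph{CPT}, whereas \cref{AGSP-to-CP} only delivers a trace-non-increasing CP map. Computing $\sum_i E_i^\dg E_i=(1-\epsilon)^2\1+\tfrac{2(1-\epsilon)\epsilon}{m}K+O(\epsilon^2)$ shows the success branch alone loses trace at order $\epsilon$ per step, and the residual $K$ term cannot be removed by any scalar rescaling, so these Kraus operators are not trace-preserving on their own. I would resolve this by completing $\cE$ to a genuine CPT map in the DQE sense of \cref{DQE}, adjoining a resampling branch that carries the missing trace; the free constant $\delta$ in the statement is then pinned down by the trace-preservation constraint. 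Crucially, the stated relation $\cE^{2m}(\rho)=K'\rho{K'}^\dg+O(\epsilon^2)$ concerns only the success (no-resampling) Kraus part, exactly the object computed in \cref{AGSP-to-CP}, so identifying its leading action with $K'\rho{K'}^\dg$ is insensitive to how the trace is completed and the substitution argument goes through unchanged.
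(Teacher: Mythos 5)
Your proposal matches the paper's approach exactly: the paper gives no separate proof of this corollary, treating it as immediate from applying \cref{AGSP-to-CP} to the AGSP of \cref{AGSP_sum} (i.e.\ substituting $k_i \mapsto \kappa_i k_i$, under which the generic Kraus operators and the $K'$ of \cref{AGSP-to-AGSP} become precisely those of the statement and of \cref{AGSP_product}), just as you do. Your point about ``CPT'' is also a fair catch of looseness in the paper's own statement: the constructed map is only trace-non-increasing (the paper says so in \cref{AGSP-to-CP} and calls it merely ``a CP map'' in the discussion following the corollary), the stray $\delta$ is evidently a typo for $\epsilon$, and genuine trace preservation would indeed require completing the instrument with a resampling branch as in \cref{DQE}---which, as you note, does not affect the $\cE^{2m}(\rho) = K'\rho K'^\dg + O(\epsilon^2)$ identity for the success branch.
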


However, note that \cref{AGSP-to-CP,AGSP_mixture} (unlike \cref{AGSP-to-AGSP,AGSP_product}) do \emph{not} construct a local AGSP (as in \cref{def:AGSP}) that approximates $K$.
Rather, they construct a CP map whose action on a density matrix approximates the action of the CP map with Kraus operator $K$.
Thus the results in this work that apply to AGSPs do not apply directly to the maps of \cref{AGSP-to-CP,AGSP_mixture}.
Instead, the results of \cref{sec:fault-resilience} show that these maps approximately recover the same output as the ideal AGSP.

Finally, we note that applying \cref{AGSP_product} or \cref{AGSP_mixture} to \cref{Chebyshev_H,AGSP_H} gives a locally-implementable approximation to the Chebyshev AGSP:

\begin{corollary}\label{AGSP_Chebyshev}
  For the Chebyshev Hamiltonian of \cref{AGSP_H}, \cref{AGSP_product,AGSP_mixture} give a $\bigl(\1-O(\epsilon),\, 4e^{-4\ell\sqrt{\delta/(\norm{H}-\lambda_0)}},\, O(\epsilon^2)\bigr)$-AGSP which can be implemented using $k\ell$-local measurements.
\end{corollary}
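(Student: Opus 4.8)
The plan is to treat \cref{AGSP_Chebyshev} as a direct composition of three ingredients already in hand: the Chebyshev AGSP of \cref{AGSP_H}, the locality bound of \cref{Chebyshev_H}, and the local weak-measurement implementations of \cref{AGSP_product,AGSP_mixture}. No new estimate is needed; the entire content is in identifying the correct substitution and tracking the support of the operators involved.

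First I would recall from \cref{AGSP_H} that $K := C_\ell(H)$ is a $(\Delta_C,1,0)$-AGSP for $\Pi_0$ with $\sqrt{\Delta_C} = 2e^{-2\ell\sqrt{\delta/(\norm{H}-\lambda_0)}}$, so that $\Delta_C = 4e^{-4\ell\sqrt{\delta/(\norm{H}-\lambda_0)}}$ --- exactly the exponential factor appearing in the statement. Next, \cref{Chebyshev_H} lets me write $C_\ell(H) = \sum_{i=1}^{m'} h'_i$ as a sum of $\poly(m)$ Hermitian terms, each supported on at most $k\ell$ qudits. This exhibits $K$ in the summed form $\sum_i k_i$ required as input by \cref{AGSP-to-AGSP,AGSP-to-CP}, with $k_i = h'_i$ and each $k_i$ acting on $\le k\ell$ qudits.

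I would then feed this decomposition into \cref{AGSP-to-AGSP} (product form) or \cref{AGSP-to-CP} (random-sampling CP-map form). Each elementary factor $(1-\epsilon)\1 + \epsilon k_i$ is a weak measurement acting only on the $\le k\ell$ qudits in the support of $h'_i$, which is precisely the ``$k\ell$-local measurements'' claim. The AGSP parameters then drop out by substituting the Chebyshev data $\sqrt{\Gamma}=1$ and $\sqrt{\Delta}=\sqrt{\Delta_C}$ into \cref{eq:Deltap,eq:Gammap,eq:epsilonp}: the exponentially small $\Delta_C$ is inherited at leading order, while \cref{AGSP_perturbation} contributes the $O(\epsilon^2)$ error recorded in $\epsilon'$. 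For the \cref{AGSP-to-CP}/\cref{AGSP_mixture} route I would note, as in the remark following \cref{AGSP_mixture}, that the output is a CP map whose action approximates $K'\rho{K'}^\dg$ rather than a literal local AGSP, so the locality and parameter claims are read off from the associated $K'$ of the product form.

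The only genuine care required is bookkeeping rather than estimation, and this is where I expect the statement to need the most attention. I must fix the sign and normalisation so that the ground space of $H$ is identified with the \emph{top} eigenspace of $C_\ell(H)$ --- this is cleanest if one applies the general \cref{AGSP-to-AGSP} directly to the AGSP $C_\ell(H)$ (so that the $\Gamma$- and $\Delta$-roles of \cref{def:AGSP} are assigned to $\Pi_0$ and its complement respectively), rather than routing through the Hamiltonian-specialised \cref{AGSP_sum,AGSP_product}, whose parameters would instead involve the term-norm sum $\kappa = \sum_i\norm{h'_i}$. I would also confirm that the $O(\epsilon^2)$ terms do not swamp the inherited exponential separation, which holds provided $\epsilon$ is taken small relative to $\Delta_C$. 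Everything else follows immediately from the cited results.
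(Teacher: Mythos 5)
Your overall route is the paper's own: \cref{AGSP_Chebyshev} is stated there without proof, beyond the one-line remark that one applies \cref{AGSP_product,AGSP_mixture} to \cref{AGSP_H,Chebyshev_H}, and that is exactly what you do. Your locality bookkeeping via \cref{Chebyshev_H} is correct, your caveat that the \cref{AGSP-to-CP,AGSP_mixture} route yields a CP map rather than a literal AGSP matches the paper's own remark, and your orientation point is a genuine improvement on the paper's phrasing: feeding $C_\ell(H)$ into the Hamiltonian-specialised \cref{AGSP_sum,AGSP_product} would target the \emph{bottom} eigenspace of $C_\ell(H)$, which is not the ground space of $H$, so one must indeed apply \cref{AGSP-to-AGSP} to $C_\ell(H)$ viewed directly as an AGSP.

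The gap is in the sentence ``the exponentially small $\Delta_C$ is inherited at leading order.'' It is not. Substituting $\sqrt{\Gamma}=1$ and $\sqrt{\Delta}=\sqrt{\Delta_C}=2e^{-2\ell\sqrt{\delta/(\norm{H}-\lambda_0)}}$ into \cref{eq:Deltap,eq:Gammap} gives
\begin{align}
  \Delta' &= (1-\epsilon)^{2m'-1}\Bigl(1-\epsilon+2\sqrt{\Delta_C}\,\epsilon\Bigr)+O(\epsilon^2)=1-O(\epsilon),\\
  \Gamma' &= (1-\epsilon)^{2m'-1}(1+\epsilon)-O(\epsilon^2)=1-O(\epsilon),
\end{align}
so $\Delta_C$ survives only in the order-$\epsilon$ correction $2\sqrt{\Delta_C}\,\epsilon$, not as the leading term. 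This is structural rather than a matter of careful error tracking: every factor $(1-\epsilon)\1+\epsilon h'_i$ retains the identity with weight $1-\epsilon$, so $(\1-\Pi)K'(\1-\Pi)=(1-\epsilon)^{2m'}(\1-\Pi)+O(\epsilon)$ has norm $1-O(m'\epsilon)$; no product of local weak measurements can act exponentially weakly off the ground space. Your proposed remedy of taking $\epsilon$ small relative to $\Delta_C$ moves in the wrong direction, since smaller $\epsilon$ pushes $\Delta'$ closer to $1$. What the substitution actually proves is an AGSP with $\Gamma',\Delta'$ both $1-O(\epsilon)$, separation $\Gamma'-\Delta'=2\epsilon(1-\epsilon)^{2m'-1}\bigl(1-\sqrt{\Delta_C}\bigr)-O(\epsilon^2)=\Theta(\epsilon)$, and $\epsilon'=O(\epsilon^2)$ --- a perfectly serviceable AGSP (it is what \cref{decaying_CPT_map} in effect relies on), but not the triple claimed in \cref{AGSP_Chebyshev}: its entry $4e^{-4\ell\sqrt{\delta/(\norm{H}-\lambda_0)}}$ (read as $\Delta$, as the later use in \cref{Chebyshev_fixed-point} requires, or as $\Gamma$ per the ordering of \cref{def:AGSP}) is attained only by the global operator $C_\ell(H)$ itself, i.e.\ $\epsilon=1$, which is no longer a product of local measurements. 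In short, your strategy is the paper's strategy, but the substitution step does not deliver the stated parameters; the corollary cannot be proven this way, and any repair must either weaken its $\Delta$ to $1-O(\epsilon)$ or give up locality of the implementation.
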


\clearpage

\section{Stopping conditions}\label{sec:stopping}

Iterating an AGSP will approximate to increasing accuracy the projection onto the ground state subspace, up to normalisation.
The AGSP can be viewed as the Kraus operator of a completely-positive (CP) map.
However, AGSPs are not trace-preserving, so this does not constitute a full description of a quantum dynamics.
Instead, the AGSP describes one outcome of a quantum instrument (generalised quantum measurement).
That is, one component of a set of CP maps which sum to a completely-positive and trace preserving (CPT) map.
Physically, this corresponds to the fact that an AGSP can be applied probabilistically by quantum measurement, but not necessarily deterministically.
To obtain a full description of a dissipative dynamics, we must extend the AGSP to a full quantum instrument that also describes the other possible measurement outcome(s), so that the overall dynamics is trace-preserving.

In \cref{sec:fixed-point}, I will extend the AGSP to a CPT map, analogous to the dissipative state engineering map of \cite{VWC}.
For a good AGSP, the fixed point of this CPT map is close to the ground state subspace, so the dynamics converges inexorably to the ground state subspace (albeit taking exponentially long to get there in general).
This generalises the results of \cite{VWC}, which were restricted to frustration-free Hamiltonians, to arbitrary Hamiltonians.
However, constructing a good AGSP for this CPT map is itself QMA-hard, as it requires knowing the ground state energy of the Hamiltonian.
So, except in special cases, this fixed-point construction cannot serve as the foundation for a general, practical ground state preparation algorithm.

\Cref{sec:stopped} remedies this by constructing a quantum instrument, rather than a single CPT map.
By keeping track of the measurement outcomes of this instrument, I show that the state can be driven as close as desired to the ground state subspace, without requiring any prior knowledge of the spectral properties of the Hamiltonian.

\subsection{Stopped process ground state preparation}
\label{sec:stopped}
A fixed-point process constructed from the AGSP can be interpreted (and would in practice be implemented) as repeatedly performing a binary generalised quantum measurement, and resampling qudits when the 1~outcome is obtained.
But such a process blindly iterates the CPT map, ignoring all the information potentially provided by the measurement outcomes obtained along the way.

Instead of blindly running until convergence, we can make use of the measurement outcomes to decide when to stop the process.
I will show that suitable choices of stopping rule can guarantee that the state at the stopping time is a good approximation to the ground state.

\begin{theorem}\label{stopped_CPT_map}
  Let $K$ be a $(\Delta,\Gamma,\epsilon)$-AGSP for $\Pi_0$ on a Hilbert space of dimension~$D$, and let $N:=\tr\Pi_0$ be the ground state degeneracy.
  Let $\{\cE_0,\cE_1\}$ be the quantum instrument defined by
  \begin{equation}
    \cE_0(\rho) = K\rho K^\dg, \quad \cE_1(\rho) = \left(1-\tr(K\rho K^\dg)\right) \frac{\1}{D}.
  \end{equation}
  Consider the stopped process whereby we iterate $\{\cE_0,\cE_1\}$, starting from the maximally mixed state $\rho_0=\tfrac{\1}{D}$, until we obtain a sequence of $n$ 0's.

  The state $\rho_n$ at the stopping time satisfies
  \begin{equation}\label{eq:stopped_overlap}
    \tr(\Pi_0\rho_n) \geq 1 - \frac{1-\tr\Pi\rho_0}{\tr\Pi\rho_0} \left(\frac{\Delta}{\Gamma}\right)^n - \epsilon
                     \geq 1 - \epsilon - \frac{D}{N} \left(\frac{\Delta}{\Gamma}\right)^n.
  \end{equation}
\end{theorem}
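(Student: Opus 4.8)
The plan is to first identify the stopped state $\rho_n$ in closed form, then exploit the AGSP structure of $K$ to bound its overlap with $\Pi_0$. For the closed form, I would observe that the failure map always resamples to $\cE_1(\rho)/\tr\cE_1(\rho) = \1/D = \rho_0$, independent of its input, so after \emph{any} $1$-outcome the process restarts from exactly the initial state. Hence the event of recording $n$ consecutive $0$'s---whether from the start or after any number of resets---produces the same normalised state, namely the result of conditioning on $\cE_0$ succeeding $n$ times in a row on $\rho_0$. Since $\cE_0(\rho)=K\rho K^\dg$ with $K$ Hermitian,
\begin{equation}
  \rho_n = \frac{K^n\rho_0 K^n}{\tr(K^n\rho_0 K^n)} = \frac{K^{2n}}{\tr(K^{2n})},
\end{equation}
using $\rho_0=\1/D$ in the last step.

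Next I would use $[K,\Pi]=0$ from \cref{def:AGSP}\labelcref{def:AGSP:commutator} to diagonalise $K$ compatibly with the splitting $\Pi\oplus(\1-\Pi)$. By \cref{def:AGSP}\labelcref{def:AGSP:Gamma} every eigenvalue of $K$ on the range of $\Pi$ is $\geq\sqrt{\Gamma}$, and by \cref{def:AGSP}\labelcref{def:AGSP:Delta} every eigenvalue on the range of $\1-\Pi$ has modulus $\leq\sqrt{\Delta}$; raising to the even power $2n$ gives eigenvalues $\geq\Gamma^n$ and $\leq\Delta^n$ respectively. Writing $\tr(\Pi\rho_n)=\tr(\Pi K^{2n})/\tr(K^{2n})$ and splitting the denominator across $\Pi$ and $\1-\Pi$, I would lower-bound the numerator by $\Gamma^n\tr\Pi$ and upper-bound the orthogonal contribution by $\Delta^n(D-\tr\Pi)$, giving
\begin{equation}
  \tr(\Pi\rho_n) \geq \frac{\Gamma^n\tr\Pi}{\Gamma^n\tr\Pi + \Delta^n(D-\tr\Pi)} = \frac{1}{1+x}, \qquad x = \frac{D-\tr\Pi}{\tr\Pi}\left(\frac{\Delta}{\Gamma}\right)^n.
\end{equation}
Then $\tfrac{1}{1+x}\geq 1-x$ together with the identity $\tr\Pi\rho_0 = \tr\Pi/D$, so that $\tfrac{D-\tr\Pi}{\tr\Pi}=\tfrac{1-\tr\Pi\rho_0}{\tr\Pi\rho_0}$, yields the bound on $\tr(\Pi\rho_n)$.

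Finally I would pass from $\Pi$ to the true ground-space projector $\Pi_0$ using \cref{def:AGSP}\labelcref{def:AGSP:epsilon}. By trace-norm duality (Hölder), $\abs{\tr((\Pi_0-\Pi)\rho_n)}\leq\norm{\Pi_0-\Pi}\,\norm{\rho_n}_1\leq\epsilon$, hence $\tr(\Pi_0\rho_n)\geq\tr(\Pi\rho_n)-\epsilon$, delivering the first inequality of \labelcref{eq:stopped_overlap}. For the second, cleaner inequality I would note that $\norm{\Pi-\Pi_0}\leq\epsilon<1$ forces $\Pi$ and $\Pi_0$ to have equal rank, so $\tr\Pi=\tr\Pi_0=N$, and then bound $\tfrac{D-N}{N}\leq\tfrac{D}{N}$.

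The main obstacle I anticipate lies entirely in the first step: correctly arguing that the stopped state is path-independent and equals $K^{2n}/\tr(K^{2n})$. The subtlety is that the stopping rule permits arbitrarily many resets before $n$ consecutive successes finally occur, so one must verify that conditioning on the terminal run of $n$ zeros---rather than on the entire measurement history---gives a well-defined state; this holds precisely because $\cE_1$ collapses every state to $\rho_0$. The remaining spectral bookkeeping is routine, the only point needing care being that the $\Pi^\perp$ estimate uses the \emph{even} power $2n$, so any negative eigenvalues of $K$ on that subspace cause no sign problems.
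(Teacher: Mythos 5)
Your proposal is correct and follows essentially the same route as the paper's proof: the same closed form $\rho_n = K^{2n}/\tr(K^{2n})$ obtained from the reset-to-$\1/D$ structure, the same $\Pi$ versus $\1-\Pi$ splitting using $[K,\Pi]=0$ with the $\Gamma^n$/$\Delta^n$ bounds, and the same final $\epsilon$-perturbation step. Your explicit justification that $\tr\Pi = \tr\Pi_0 = N$ via equality of ranks when $\norm{\Pi-\Pi_0}\leq\epsilon<1$ is a small point the paper glosses over, but otherwise the two arguments coincide.
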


\begin{proof}
  Let $\Pi$ be as in \cref{def:AGSP}, so $[K,\Pi]=0$.
  Thus
  \begin{equation}
    \tr\left(K^n \rho (K^\dg)^n \Pi\right)
      = \tr\left((K\Pi)^n \rho (\Pi K^\dg)^n\right)
      \geq \Gamma^n \tr\Pi\rho
  \end{equation}
  by \cref{def:AGSP}, and similarly
  \begin{equation}
    \tr\left(K^n \rho (K^\dg)^n (\1-\Pi)\right) \leq \Delta^n (1-\tr\Pi\rho).
  \end{equation}

  For any state $\rho$, after a sequence of $n$ 0's we have
  \begin{equation}\label{eq:stopped_state}
    \rho_n = \frac{\cE_0^n(\rho)}{\tr\cE_0^n(\rho)} = \frac{K^n\rho(K^\dg)^n}{\tr(K^n\rho(K^\dg)^n)}.
  \end{equation}
  So
  \begin{align}
    \tr\Pi\rho_n
    &= \frac{\tr\bigl(K^n\rho (K^\dg)^n \Pi\bigr)}
            {\tr\bigl(K^n\rho(K^\dg)^n \Pi\bigr) + \tr\bigl(K^n\rho (K^\dg)^n (\1-\Pi)\bigr)}\\
    &= 1 - \frac{\tr\bigl(K^n\rho (K^\dg)^n (\1-\Pi)\bigr)}
                {\tr\bigl(K^n\rho(K^\dg)^n \Pi\bigr) + \tr\bigl(K^n\rho (K^\dg)^n (\1-\Pi)\bigr)}\\
    &\geq 1 - \frac{\Delta^n(1-\tr\Pi\rho)}{\Gamma^n \tr\Pi\rho}\\
    &= 1 - \frac{1-\tr\Pi\rho}{\tr\Pi\rho} \left(\frac{\Delta}{\Gamma}\right)^n.
  \end{align}

  Since the state immediately before any run of 0's is $\tfrac{\1}{D}$, we have
  \begin{equation}
    \tr\Pi\rho_n \geq 1 - \frac{1-N/D}{N/D} \left(\frac{\Delta}{\Gamma}\right)^n
      \geq 1 - \frac{D}{N} \left(\frac{\Delta}{\Gamma}\right)^n.
  \end{equation}
  Finally, by \cref{def:AGSP} we have $\norm{\Pi-\Pi_0}\leq\epsilon$, so
  \begin{equation}
    \abs{\tr\Pi_0\rho_n - \tr\Pi\rho_n} \leq \epsilon,
  \end{equation}
  and the \namecref{stopped_CPT_map} follows.
\end{proof}

We also note the following simple result for later reference.

\begin{lemma}\label{expected_stopped_state}
  The expected state at the stopping time in the process of \cref{stopped_CPT_map} is given by
  \begin{equation}
    \E(\rho_n) = \frac{K^{2n}}{\tr(K^{2n})},
  \end{equation}
  hence satisfies
  \begin{equation}
    \E\tr(\Pi_0\rho) = \frac{\tr(\Pi_0K^{2n})}{\tr(K^{2n}}.
  \end{equation}
\end{lemma}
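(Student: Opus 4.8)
The plan is to exploit the fact that the resampling map $\cE_1$ discards all information about its input, always outputting the maximally mixed state $\1/D$. This renders the stopped process memoryless at every $1$-outcome, and I expect this to collapse the apparent expectation into a deterministic quantity.

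First I would unpack the stopping rule. The process stops on the first run of $n$ consecutive $0$ outcomes, so the final $n$ applications of the instrument are all $\cE_0$, and they are preceded either by the start of the process or by an application of $\cE_1$. In both cases the state entering this terminal run of zeros is $\rho_0 = \1/D$---precisely the observation already used in the proof of \cref{stopped_CPT_map}, that the state immediately before any run of $0$'s is maximally mixed.

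Next, conditioning on these $n$ zero outcomes and normalising, the state at the stopping time is
\begin{equation}
  \rho_n = \frac{\cE_0^n(\1/D)}{\tr\bigl(\cE_0^n(\1/D)\bigr)}
         = \frac{K^n (\1/D) (K^\dg)^n}{\tr\bigl(K^n (\1/D) (K^\dg)^n\bigr)}.
\end{equation}
Since $K$ is Hermitian by \cref{def:AGSP}, we have $K^\dg = K$ and the scalar $1/D$ cancels between numerator and denominator, leaving $\rho_n = K^{2n}/\tr(K^{2n})$. The crucial point is that this value is independent of the earlier history of measurement outcomes, so $\rho_n$ is in fact a constant random variable; the expectation in the statement is therefore trivially $\E(\rho_n) = K^{2n}/\tr(K^{2n})$. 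The second identity then follows at once by applying $\tr(\Pi_0\,\cdot\,)$ to both sides and using linearity of the trace.

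The statement is labelled as a simple result, and indeed I anticipate no genuine obstacle. The only subtlety worth stating with care is the memorylessness argument: one must verify that \emph{every} trajectory which reaches the stopping time enters its terminal run of zeros from the maximally mixed state, which is exactly what the form of $\cE_1$ guarantees. Once that is in place, the Hermiticity of $K$ and cancellation of the normalisation do all the remaining work.
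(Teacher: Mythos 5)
Your proposal is correct and follows essentially the same route as the paper: the paper's proof simply substitutes $\rho = \1/D$ into \cref{eq:stopped_state}, which rests on exactly the observation you make explicit---that global resampling guarantees the state entering the terminal run of $n$ zeros is always the maximally mixed state, so $\rho_n$ is deterministic and Hermiticity of $K$ collapses $K^n(\1/D)(K^\dg)^n$ to $K^{2n}/D$. Your write-up merely spells out the memorylessness argument that the paper leaves implicit.
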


\begin{proof}
  Immediate by substituting $\rho = \1/D$ in \cref{eq:stopped_state}.
\end{proof}

Remarkably, we can also derive an exact analytic expression for the (necessarily exponentially growing with system size) expected stopping time of this process.

\begin{theorem}\label{stopping_time}
  For the process of \cref{stopped_CPT_map}, the expected stopping time---i.e.\ the time until the process first produces a sequence of $n$ 0's---is given by
  \begin{equation}
    \E(\tau_n) = \frac{1}{\tr(K^{2n})} \tr\left(\frac{1-K^{2n}}{1-K^2}\right)
    \leq \frac{1}{\Gamma^n} \left(n + \frac{1-\Delta^n}{1-\Delta\phantom{{}^n}} \left(\frac{D}{N}-1\right)\right).
  \end{equation}
\end{theorem}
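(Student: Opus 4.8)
The key structural observation is that outcome~$1$ resets the state: since $\cE_1(\rho) \propto \1/D$ independently of $\rho$, every time the instrument records a $1$ the process restarts from the maximally mixed state, forgetting its history. The plan is therefore to treat the run as a sequence of independent, identically distributed \emph{attempts}, each beginning at $\1/D$, where an attempt applies $\cE_0$ repeatedly and terminates either by producing $n$ consecutive $0$'s (a success, after which we stop) or by producing a $1$ (a failure, after which a fresh attempt begins). Because $K$ is Hermitian, after $j$ consecutive $0$'s from $\1/D$ the unnormalised state is $K^j(\1/D)K^j = K^{2j}/D$, so the probability that a single attempt produces at least $j$ consecutive $0$'s is $q_j := \tr(K^{2j})/D$, with $q_0 = 1$ and per-attempt success probability $q_n$.

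First I would compute the exact expectation. The number of attempts until the first success is geometric with mean $1/q_n$. For the length $L$ of a single attempt, the event $\{L \geq m\}$ for $1 \leq m \leq n$ is exactly the event that the first $m-1$ measurements are all $0$, which has probability $q_{m-1}$, while $P(L \geq m) = 0$ for $m > n$; the tail-sum formula then gives $\E(L) = \sum_{m=1}^n q_{m-1} = \sum_{j=0}^{n-1} q_j$. Applying Wald's identity to $\tau_n = \sum_{i=1}^A L_i$ --- with $A$ the number of attempts and the filtration generated by the full record of each attempt, so that the $L_i$ are i.i.d.\ and adapted and $A$ is a stopping time --- yields $\E(\tau_n) = \E(A)\,\E(L) = \frac{1}{q_n}\sum_{j=0}^{n-1} q_j = \sum_{j=0}^{n-1}\tr(K^{2j})\big/\tr(K^{2n})$. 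Recognising $\sum_{j=0}^{n-1} K^{2j} = (\1-K^{2n})(\1-K^2)^{-1}$ as a geometric operator series and taking the trace gives the claimed exact identity. One subtlety to flag is that a failed attempt producing $n-1$ zeros before a $1$ has the same length $n$ as a success, so $A$ is a stopping time for the sequence of full attempt-records, not for the lengths alone; I would state Wald in the filtration form accordingly.

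For the inequality I would diagonalise $K$, which commutes with $\Pi$ by \cref{def:AGSP}, and split its spectrum into the $N = \tr\Pi$ eigenvalues $\mu_a \geq \sqrt{\Gamma}$ on the support of $\Pi$ and the $D-N$ eigenvalues with $\abs{\mu_a} \leq \sqrt{\Delta}$ on the support of $\1-\Pi$, writing $\E(\tau_n) = \bigl(\sum_a \sum_{j=0}^{n-1}\mu_a^{2j}\bigr)\big/\bigl(\sum_a \mu_a^{2n}\bigr)$. I lower-bound the denominator by its ground-block part, $\tr(K^{2n}) \geq N\Gamma^n$, and split the numerator. The off-ground block contributes at most $(D-N)\sum_{j=0}^{n-1}\Delta^j = (D-N)\frac{1-\Delta^n}{1-\Delta}$, giving the term $\frac{D-N}{N\Gamma^n}\frac{1-\Delta^n}{1-\Delta}$. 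The ground block I handle termwise: for each $\mu_a \geq \sqrt{\Gamma}$, factoring out $\mu_a^{2n}$ gives $\sum_{j=0}^{n-1}\mu_a^{2j} = \mu_a^{2n}\sum_{k=1}^n \mu_a^{-2k} \leq \mu_a^{2n}\, n\,\Gamma^{-n}$, where the bound $\mu_a^{-2k} \leq \Gamma^{-k} \leq \Gamma^{-n}$ uses $\mu_a^2 \geq \Gamma$ and $\Gamma \leq 1$ (which holds for all the AGSPs constructed in \cref{sec:AGSP}). Summing over the ground block shows this block contributes at most $n\Gamma^{-n}$ to $\E(\tau_n)$ directly, and adding the two contributions yields $\E(\tau_n) \leq \frac{1}{\Gamma^n}\bigl(n + \frac{1-\Delta^n}{1-\Delta}(\frac{D}{N}-1)\bigr)$.

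The routine parts are the geometric-series manipulations and the eigenvalue bookkeeping. The one place that needs genuine care --- and which I expect to be the main obstacle --- is the probabilistic reduction: verifying that the reset really does render the attempts i.i.d.\ and that Wald's identity applies with the correct filtration, precisely because length alone does not distinguish a success from a failure of length $n$. A second point to be careful about in the eigenvalue step is to keep the ground-block numerator over the full denominator (rather than crudely bounding the denominator by $N\Gamma^n$ there), so that only $\Gamma \leq 1$, and not an upper bound on $\norm{K}$, is needed. Everything else follows from $K$ being Hermitian and the AGSP inequalities of \cref{def:AGSP}.
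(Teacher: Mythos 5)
Your proof is correct, and it derives the exact formula by a genuinely different route from the paper's. The paper works directly on the outcome sequence: it defines the bespoke ``betting-strategy'' process $M_t$ of \cref{eq:Martingale}, verifies it is a martingale, invokes Doob's optional stopping theorem to get $\E(M_{\tau_n})=0$, and then solves a recurrence for $M_{\tau_n}$ backwards along the terminal run of 0's to read off $\E(\tau_n)=\tr\bigl(\sum_{k=0}^{n-1}K^{2k}\bigr)/\tr(K^{2n})$. You instead exploit the renewal structure that global resampling creates: the segments between 1~outcomes are i.i.d.\ attempts launched from $\1/D$, the number of attempts is geometric with mean $D/\tr(K^{2n})$, the tail-sum formula gives the expected attempt length $\frac{1}{D}\sum_{j=0}^{n-1}\tr(K^{2j})$, and Wald's identity multiplies the two. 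This is more elementary and arguably more transparent, and you identify precisely the point needing care: a failed attempt of length $n$ is indistinguishable from a success by its length alone, so Wald must be applied with respect to the filtration of full attempt records rather than lengths. Neither approach extends beyond global resampling---the paper notes in a footnote to the proof of \cref{local_resampling_run-time} that the martingale argument ``doesn't generalise so easily'' to arbitrary $\cR$, and switches to generating functions there---so nothing is lost by your substitution. For the upper bound, the two arguments are the same spectral bookkeeping in different clothes (operator inequalities in the paper, explicit eigenvalues in yours); your refinement of keeping the ground-block numerator over the full denominator is marginally tidier, needing only $\Gamma\leq 1$, whereas the paper's step $\tr(K^{2k}\Pi)\leq N$ also uses the implicit normalisation $\norm{K}\leq 1$---though the latter is forced anyway by $\{\cE_0,\cE_1\}$ being a quantum instrument, so both versions are legitimate.
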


\begin{proof}
  Since the state is reset to the maximally mixed state whenever the outcome 1 is obtained, the probability of obtaining another 0 after a run of $k$ 0's since the last 1 is given by
  \begin{equation}
    \Pr(0|0^k1\dots)
    = \tr\left(K \frac{(K^{2k}/D)}{\tr(K^{2k}/D)} K\right)
    = \frac{\tr(K^{2(k+1)})}{\tr(K^{2k})}.
  \end{equation}

  Let $X_t$ denote the $t$'th outcome of the process.
  Define the stochastic process\footnote{%
    The process $M$ can be understood intuitively as the net winnings from the following betting strategy on the sequence of random events $X_t$.
    At time-step $t$, stake all your winnings so far plus an additional $t$ on the outcome $X_t=0$, at fair odds.
    If you indeed get the outcome $0$ then, because the odds are fair, you get back your stake multiplied by a factor determined by the probability of that outcome given everything that's happened so far.
    So your net winnings in this case are the amount you get back, minus the additional $t$ you put in.
    If you get the outcome $1$, you lose your whole stake.
    So your net winnings are~0 minus the additional~$t$ you put in.
    The fact that you're betting at fair odds means that you expect on average to break even, and overall to walk away with~0 net winnings; i.e.\ the process is a Martingale.}
  \begin{equation}\label{eq:Martingale}
    M_0 = 0, \qquad
    M_t = \begin{cases}
              \frac{M_{t-1} + t}{\Pr(X_t=0 | X_{t-1},X_{t-2},\dots,X_1)} - t & X_t = 0 \\
             -t & X_t = 1.
           \end{cases}
  \end{equation}
  Now,
  \begin{align}
    \E(&M_t|X_{t-1},\dots,X_1) \\
    \begin{split}
      &= \Pr(X_t=0|X_{t-1},\dots,X_1) \left(\frac{M_{t-1} + t}{\Pr(X_t=0 | X_{t-1},X_{t-2},\dots,X_1)} - t\right) \\
      &\mspace{20mu} - t Pr(X_t=1|X_{t-1},\dots,X_1)
    \end{split}\\
    &= M_{t-1},
  \end{align}
  so $M$ is a Martingale with respect to $X_t$.
  By Doob's optional stopping theorem, $M_{\tau_n}$ is also a Martingale.
  Therefore, $\E(M_{\tau_n}) = \E(M_0) = 0$.

  But at $\tau_n$, we have just had a run of $n$ 0's since the last 1.
  Thus, by \cref{eq:Martingale},
  \begin{align}
    M_{\tau_n-n} &= -\tau_n+n, \\
    M_{\tau_n-k}   &= \frac{1}{\Pr(0|0^{n-1}1\dots)} \left( M_{\tau_n-k-1} + \tau_n-k \right) - \tau_n+k.
  \end{align}
  Solving this recurrence relation, we obtain
  \begin{align}
    M_{\tau_n}
    &= \frac{1}{\Pr(0|0^{n-1}1\dots)} \Biggl(
         \frac{1}{\Pr(0|0^{n-2}1\dots)} \Biggl(
           \cdots \notag \\
    &\mspace{80mu}
           \cdots \left(
             \frac{1}{\Pr(0|01\dots)} \left(
               \frac{1}{\Pr(0|1\dots)} + 1
             \right) + 1
           \right) \cdots + 1
           \Biggr) + 1
      \Biggr) - \tau_n\\
    &= \frac{\tr(K^{2(n-1)})}{\tr(K^{2n})} \Biggl(
         \frac{\tr(K^{2(n-2)})}{\tr(K^{2(n-1)})} \Biggl(
           \cdots \notag \\
    &\mspace{80mu}
           \cdots \left(
             \frac{\tr(K^2)}{\tr(K^{2\cdot 2})} \left(
               \frac{1}{\tr(K^2)} + 1
             \right) + 1
           \right) \cdots + 1
           \Biggr) + 1
      \Biggr) - \tau_n\\
    &= \frac{\tr\left(\sum_{k=0}^{n-1}K^{2k}\right)}{\tr(K^{2n})} - \tau_n\\
    &= \frac{1}{\tr(K^{2n})} \tr\left(\frac{1-K^{2n}}{1-K^2}\right) - \tau_n.
  \end{align}
  Since $\E(M_{\tau_n}) = 0$, we have
  \begin{equation}
    \E(\tau_n) = \frac{1}{\tr(K^{2n})} \tr\left(\frac{1-K^{2n}}{1-K^2}\right)
  \end{equation}
  as claimed.

  Using $K^2 \geq \Gamma\Pi$ and $K^2 \leq \Pi + \Delta(\1-\Pi)$ from \cref{def:AGSP}, we can bound this as
  \begin{align}
    \E(\tau_n) &= \frac{\tr\left(\sum_{k=0}^{n-1}K^{2k}\right)}{\tr(K^{2n})}
    \leq \frac{\sum_{k=0}^{n-1}\tr\left(\Pi + \Delta^k(\1-\Pi)\right)}{N\Gamma^n} \\
    &\leq \frac{n N + \frac{1-\Delta^n}{1-\Delta\phantom{{}^n}}(D-N)}{N\Gamma^n}
    = \frac{1}{\Gamma^n} \left(n + \frac{1-\Delta^n}{1-\Delta\phantom{{}^n}} \left(\frac{D}{N}-1\right)\right),
  \end{align}
  again as claimed.
\end{proof}

\begin{remark}
  When $K=\Pi$, we have $\Gamma=1$, $\Delta=0$, and \cref{stopping_time} gives $\E(\tau_n) \leq \frac{D}{N}+n-1$.
  I.e.\ when the AGSP is exactly a projector, the expected time to obtain a sequence of $n$ 0's is just the expected number of attempts to project the maximally mixed state onto $\Pi$, plus another $n-1$ steps to deterministically obtain a further $n-1$ 0's, as expected.
\end{remark}

\begin{figure}[!htbp]
  \centering
  \includegraphics[width=\textwidth]{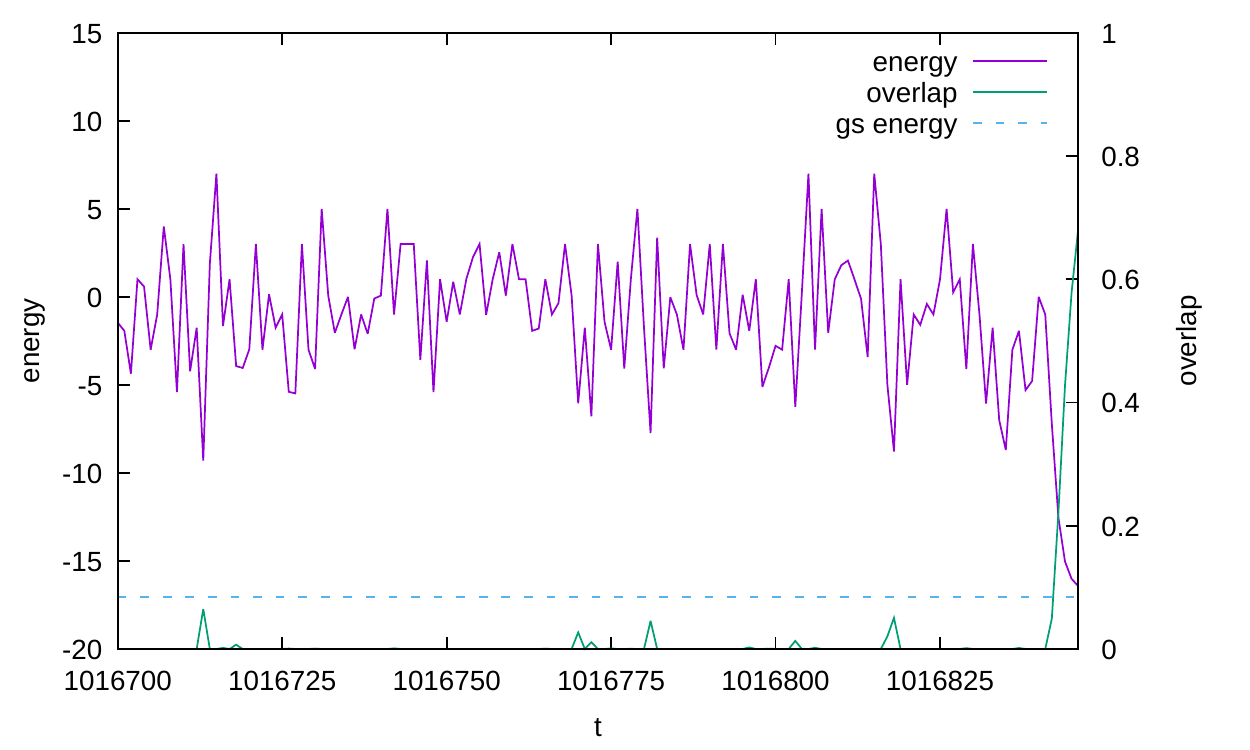}
  \stepcounter{footnote}{\makeatletter\protected@xdef\@thefnmark{\thefootnote}\makeatother}
  \caption{%
    The evolution of the energy (upper purple solid line) and ground state overlap (lower green solid line) in (part of) a typical run of the DQE \cref{DQE} for the 1D Heisenberg chain of length 10, with $\epsilon=0.2$, stopping on the first run of 4~zeros, and using global resampling.\makeatletter\@footnotemark\makeatother
    Although this run of the algorithm took a total of 1016846 iterations before it stopped, the quantum circuit depth required to achieve this is significantly lower.
    Because global resampling discards the entire state whenever the measurement outcome~1 is obtained, coherence of the quantum state need only be maintained for a maximum of 4~iterations to attain an energy close to the ground state energy (dashed blue line).}
  \label{fig:1D_Heisenberg_fixed_global}
\end{figure}
\footnotetext{Numerical simulations performed with the help of the QuEST package~\cite{QuEST}.}

\Cref{stopping_time} proves that the expected run-time scales exponentially, as it must on complexity-theoretic grounds.
However, there is an important distinction here between the total run-time and required quantum circuit depth, which is different from a quantum algorithm that only uses unitary quantum gates.
Since the entire state is discarded and replaced with the maximally mixed state when we obtain the ``1''~outcome (i.e.\ the $\cE_1$ outcome of the quantum instrument of \cref{DQE}), this means that the required \emph{circuit depth} -- i.e.\ the time for which the quantum computer must maintain coherence in order for the algorithm to succeed -- only scales with the number of applications of $\cE_0$ required to get close to the ground state.
I.e.\ it scales as $O(n)$ where $n$ is the number of 0's we stop on in \cref{DQE}, leading directly to the following result.

\begin{corollary}\label{stopped_circuit_depth}
  To obtain an output state $\rho_n$ satisfying
  \begin{equation}
    \tr(\Pi_0\rho_n) \leq 1 - \epsilon
  \end{equation}
  the algorithm of \cref{stopped_CPT_map} requires a circuit depth that scales as
  \begin{equation}
    O\left(\frac{\log\left(\frac{D}{N}\cdot\frac{1}{\epsilon}\right)}
                {\log\left({\Gamma/\Delta}\right)}\right),
  \end{equation}
  i.e.\ linearly in the system size (total number of qudits).\footnote{Recall that $D$ is the total Hilbert space dimension, so $\log D$ scales linearly in the total number of qudits.}
\end{corollary}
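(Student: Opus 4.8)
The plan is to read off the required number of consecutive $0$-outcomes directly from the overlap bound in \cref{stopped_CPT_map}, and then to argue separately that \emph{this} number, rather than the exponentially large expected stopping time of \cref{stopping_time}, governs the circuit depth.

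First I would start from the second inequality in \cref{eq:stopped_overlap}, namely $\tr(\Pi_0\rho_n) \geq 1 - \epsilon - \tfrac{D}{N}(\Delta/\Gamma)^n$, and observe that the $n$-dependent term decays geometrically in $n$ since $\Delta < \Gamma$. To push the overlap error down to the target level it therefore suffices to choose $n$ so that $\tfrac{D}{N}(\Delta/\Gamma)^n \leq \epsilon$. Taking logarithms and using that $\log(\Gamma/\Delta) > 0$, this rearranges to
\[
  n \;\geq\; \frac{\log\!\left(\tfrac{D}{N}\cdot\tfrac{1}{\epsilon}\right)}{\log(\Gamma/\Delta)},
\]
which is exactly the claimed expression. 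This part is routine algebra.

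The substantive step is to identify the circuit depth with $n$ rather than with the total run-time. Here I would use the fact that, in the global-resampling process of \cref{stopped_CPT_map}, the $\cE_1$ outcome replaces the \emph{entire} state by $\1/D$; since $\1/D$ is a fixed separable state that can be prepared with no accumulated coherence (e.g.\ by resetting all qudits), each ``$1$'' outcome resets the computation afresh. Consequently coherence need only be maintained across a single maximal run of consecutive $0$-outcomes. Because the process halts on the \emph{first} run of $n$ consecutive $0$'s, every run of $0$'s produced along the way has length at most $n$, so the quantum state need only survive at most $n$ applications of $\cE_0$. Hence the circuit depth is $O(n)$, and combining with the bound on $n$ above gives the stated scaling. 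Finally, since $\log D$ grows linearly in the number of qudits while $N$, $\Gamma$ and $\Delta$ are fixed properties of the AGSP (with $\Gamma/\Delta$ bounded below), this is $O$ of the number of qudits.

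I expect the only genuine subtlety---and the step most in need of care---to be this separation of total run-time from coherence time: one must justify that the (exponentially many) failed runs do not contribute to the coherence requirement, which relies crucially on global resampling discarding the state entirely rather than merely resampling a local subsystem. The logarithmic inversion of the overlap bound, by contrast, is immediate.
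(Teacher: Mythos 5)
Your proposal is correct and follows essentially the same route as the paper: the paper's proof is simply ``Immediate from \cref{stopped_CPT_map}'', with the surrounding text making exactly your two points---inverting the bound $\tr(\Pi_0\rho_n) \geq 1 - \epsilon - \tfrac{D}{N}(\Delta/\Gamma)^n$ to get the required run length $n$, and observing that global resampling discards the entire state on a 1~outcome, so coherence need only be maintained over a single run of 0's, giving circuit depth $O(n)$. Your write-up is in fact somewhat more careful than the paper's, since it makes explicit that every 0-run before stopping has length at most $n$ and that this separation of run-time from coherence time relies crucially on the resampling being global.
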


\begin{proof}
  Immediate from \cref{stopped_CPT_map}.
\end{proof}

In \cref{sec:resampling}, we will see how the total run-time can be improved by using better resampling strategies than the simplistic global resampling in \cref{stopped_CPT_map,stopping_time} (which amounts to simply discarding the entire state and restarting the process scratch upon failure, and naively repeating until success).
However, more sophisticated resampling strategies do not change the required circuit depth, which is still given by \cref{stopped_circuit_depth} and therefore scales linearly in the system size.

\subsection{Optimally stopped process}
The process of \cref{stopped_CPT_map} is guaranteed to stop on the first sufficiently long run of successful applications of the AGSP.
But this comes at the cost of a non-deterministic run-time, whose expectation scales exponentially in the system size.
In practice, it may be more useful to set a fixed maximum run-time $T$, and stop on the longest run of AGSPs that occurs within time $t<T$.
However, we cannot know what the longest run is until the process finishes, at which point the state produced by that run has already been destroyed.
To address this, we need some results from optimal stopping theory.

In principle, the optimal stopping strategy given knowledge of the measurement outcomes so far can be derived from the AGSP $K$.\footnote{For simplicity, we restrict the exposition to time-independent $K$ here, but the argument generalises straightforwardly.}
Let $X = X_0,X_1,\dots$ denote the sequence of measurement outcomes in \cref{DQE}, starting from initial state $\rho_0$, and let $\rho_X$ denote the state after the sequence of measurement outcomes $X$.
If our goal is to maximise the overlap with the ground state, then the Bellmann equation for the value $v_s$ of the process with maximum run-time $s$ is
\begin{align}
  v_0(\rho_0) &= \tr\Pi_0\rho_0, \\
  \begin{split}
    v_{s+1}(\rho_X)
    &= \max \biggl\{\tr\Pi_0\rho_X,\;
                   \tr(K\rho_X K^\dg)\; v_{t+1}(\rho_{X0})\\
    &\mspace{180mu} + (1-\tr(K\rho_XK^\dg))\; v_{t+1}(\cR(\rho_X)) \biggr\}
  \end{split}
\end{align}

The optimal stopping condition is then given by
\begin{equation}
  \tau = \min\{t\geq 0: v_{T-t}(X) \geq \tr\Pi_0\rho_X\}.
\end{equation}
This follows directly from the standard finite-horizon optimal stopping argument: that the optimal policy at time $t$ is to stop iff the current ground state overlap is at least as large as the expected ground state overlap if we continue for one more step and employ the optimal policy from then on.
This is a dynamic programming problem which can, in principle, be solved numerically for any choice of parameters in \cref{DQE} by back-substitution.

However, this is not particularly useful to us.
It requires computing the full density matrix $\rho_X$ conditioned on any sequence of measurement outcomes, which is at least as hard as computing the ground state itself.
What we require is a stopping policy that only takes into account the information that is readily accessible to us during \cref{DQE}: the measurement outcomes themselves.
In the absence of any knowledge of the spectral properties of the Hamiltonian or AGSP, the only property we know from \cref{stopped_CPT_map} is that the expected ground state overlap is monotonically increasing in the string of 0~outcomes obtained.
We can therefore focus our attention on stopping policies that optimise the length of the string of 0's we stop on.
This is (closely related to) classic problems in optimal stopping theory, and we can derive suitable stopping rules by applying these to the case of \cref{DQE}.

There is a beautiful theory of optimal stopping for iid Bernoulli processes (i.e.\ repeated tosses)~\cite{success_runs,success_runs_note}.
However, in our case, the probabilities of the 0 and 1~outcomes are not iid, but rather depend on the quantum state which itself depends on the history of the measurement outcomes up to now; in particular, with the global resampling strategy considered so far, it depends on the length of the current run of 0's.
Whilst it may be feasible to generalise these results to our case, by focusing on the length of the 0-runs we can instead reduce the problem to simpler, standard optimal stopping problems and still obtain the correct asymptotic scaling.

\subsubsection{Secretary stopping policy}
The secretary problem is perhaps the best-known example of an optimal stopping problem~\cite{secretary}.
In this classic problem, we are tasked with hiring a secretary under the following (somewhat artificial) conditions: the total number of candidates is known in advance; candidates are interviewed one by one, and must either be hired or rejected immediately; only the relative rank of the current candidate can be determined, relative to the candidates seen so far; there are no ties.
We are tasked with hiring the best candidate; selecting any other candidate is counted as a failure.
The optimal strategy for the secretary problem is well known~\cite{secretary}:

\begin{lemma}\label{secretary}
  Let $n$ be the total number of candidates in the secretary problem.
  For any $n$, the optimal stopping policy is a threshold policy: reject the first $\gamma n$ candidates, where $0<\gamma<1$, then accept the first candidate that is better than any seen previously.
  In the limit $n\to\infty$, $\gamma\to e^{-1}$ and the probability of selecting the best candidate also $\to e^{-1}$.
\end{lemma}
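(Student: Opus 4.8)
The plan is to set up the threshold policy explicitly, compute the exact probability of selecting the best candidate as a function of the threshold, and then optimise. First I would fix a threshold $r$ (reject the first $r-1$ candidates, then accept the first candidate better than all seen so far) and compute the probability $P(r)$ that this policy selects the overall best candidate. The key observation is that by symmetry every permutation of the $n$ candidates is equally likely, so I can condition on the position $k$ of the best candidate. The policy selects the best candidate exactly when the best candidate is in position $k \geq r$ \emph{and} the best among the first $k-1$ candidates happens to lie in the first $r-1$ positions (otherwise the policy would have stopped too early on a ``false best''). The probability of the first event is uniform, contributing a factor $1/n$ for each position, and conditioned on it the second event has probability $\frac{r-1}{k-1}$.

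Carrying this out, I would write
\begin{equation}
  P(r) = \sum_{k=r}^{n} \frac{1}{n}\cdot\frac{r-1}{k-1}
       = \frac{r-1}{n}\sum_{k=r}^{n}\frac{1}{k-1}.
\end{equation}
This reduces the discrete optimisation to maximising a sum of reciprocals. To extract the asymptotics I would set $r = \gamma n$ and approximate the harmonic sum $\sum_{k=r}^{n}\frac{1}{k-1}$ by $\log(n/r) = -\log\gamma$, giving $P \approx -\gamma\log\gamma$. Differentiating $-\gamma\log\gamma$ in $\gamma$ and setting the derivative to zero yields $\log\gamma = -1$, i.e.\ $\gamma \to e^{-1}$, and substituting back gives $P \to e^{-1}$, establishing both claims in the limit.

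To argue that the threshold policy is genuinely \emph{optimal} among all stopping rules (not merely the best threshold policy), I would invoke the backward-induction / dynamic-programming characterisation of optimal stopping. Since at each step we only learn the relative rank of the current candidate, the state is summarised by whether the current candidate is a ``record'' (better than all predecessors) and its position; one shows by induction from the last candidate backwards that the optimal value function is monotone in position, so the optimal policy stops precisely when the current candidate is a record and the position exceeds a single threshold. This is exactly the structure of the claimed policy. The main obstacle is this optimality argument rather than the computation: verifying rigorously that the optimal policy must be a monotone threshold rule requires the backward-induction monotonicity lemma, whereas the probability computation and the $\gamma,P\to e^{-1}$ asymptotics are routine once the combinatorial identity above is in hand. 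Since this is a classical and well-documented result (see~\cite{secretary}), I would state the optimality via the standard dynamic-programming argument and present the explicit computation of $P(r)$ and its asymptotic optimum in full.
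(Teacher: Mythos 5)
The paper never proves this lemma: it is imported as a classical result, with the citation~\cite{secretary} standing in for the proof. Your reconstruction is the standard textbook argument for exactly that result, and it is essentially correct, so in effect you have supplied the proof the paper outsources. The conditioning identity
\begin{equation}
  P(r) = \frac{r-1}{n}\sum_{k=r}^{n}\frac{1}{k-1}
\end{equation}
is right (given the best candidate at position $k\geq r$, the policy reaches and selects it precisely when the best of the first $k-1$ candidates lies among the first $r-1$ positions, an event of probability $(r-1)/(k-1)$ by symmetry of random permutations), and the harmonic-sum asymptotics $P\approx -\gamma\log\gamma$, maximised at $\gamma=e^{-1}$ with value $e^{-1}$, follow routinely. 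Two points to tighten. First, a cosmetic one: your formula needs a boundary convention at $r=1$ (where $P(1)=1/n$, whereas the displayed sum gives an indeterminate $0/0$ leading term). Second, and more substantively: the claim that the optimum over \emph{all} stopping rules is attained by a threshold rule is the genuinely nontrivial step, and you only gesture at it via backward induction. To make it self-contained you would need the monotone-crossing argument: the reward for stopping on a record at position $k$ is $k/n$, which is strictly increasing in $k$, while the optimal continuation value is non-increasing in $k$, so the two cross at most once and the optimal rule is ``stop on the first record after a fixed position''. As written, your proposal is a correct outline that defers this crux to standard optimal-stopping theory---which is no worse than what the paper itself does by citing the literature, but it is the part a referee would ask you to spell out if the lemma were to be proved rather than quoted.
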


We can use this to prove a bounded-time version of the dissipative ground state preparation process.
We will make use of the following standard result, where the notation $f(n) \sim g(n)$ to mean $\lim_{n\to\infty} f(n)/g(n) \to 1$.

\begin{lemma}[{see e.g.\ \cite[Example~V.4]{Analytic_Combinatorics}}]\label{longest_run}
  Consider a sequence of $t$ two-outcome Bernoulli trials with probability $\Pr(0) = p$.
  Let $L_t$ denote the longest run of 0's.
  Then $\mathbbm{E}(L_t) \sim \log_{1/p} t$.
\end{lemma}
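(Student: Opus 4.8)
The plan is to analyze the longest run of $0$'s in a sequence of $t$ independent Bernoulli trials via the first-order asymptotics of its expectation. The key intuition is that a run of length $r$ starting at a fixed position occurs with probability roughly $p^r$, so the expected number of runs of length $\geq r$ among $t$ trials is of order $t\,p^r$; this quantity crosses from large to small as $r$ passes through $\log_{1/p} t$, which pins down the typical value of $L_t$. I would therefore aim to show $L_t/\log_{1/p} t \to 1$ in expectation by a two-sided argument: an upper bound via a union (first-moment) estimate, and a matching lower bound via concentration of $L_t$ about its typical value.

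First I would establish the upper bound. For any threshold $r$, let $A_j$ be the event that positions $j,\dots,j+r-1$ are all $0$; then $\Pr(L_t \geq r) \leq \sum_{j} \Pr(A_j) \leq t\,p^r$ by a union bound. Taking $r = (1+\eta)\log_{1/p} t$ makes $t\,p^r = t^{-\eta} \to 0$, so $\Pr(L_t \geq (1+\eta)\log_{1/p} t) \to 0$ for every $\eta>0$. Combined with the deterministic bound $L_t \leq t$, this controls the expectation from above, giving $\limsup_n \E(L_t)/\log_{1/p} t \leq 1$.

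Next I would prove the lower bound, which is the more delicate half. I would partition the $t$ trials into $\lfloor t/r \rfloor$ disjoint blocks of length $r$, where $r = (1-\eta)\log_{1/p} t$. Each block is entirely $0$ independently with probability $p^r = t^{-(1-\eta)}$, so the number of all-$0$ blocks is a sum of independent Bernoulli variables with mean $\sim t^{\eta}/\log_{1/p} t \to \infty$. By a second-moment / Chebyshev (or Borel--Cantelli) argument, with probability tending to $1$ at least one block is entirely $0$, whence $L_t \geq r = (1-\eta)\log_{1/p} t$. Feeding this into $\E(L_t) \geq (1-\eta)\log_{1/p}t \cdot \Pr(L_t \geq r)$ yields $\liminf_n \E(L_t)/\log_{1/p} t \geq 1-\eta$, and letting $\eta \to 0$ closes the gap.

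The main obstacle is the lower bound's concentration step: the independence exploited by the disjoint-block decomposition is what makes the second moment manageable, and one must be careful that the block count grows fast enough ($t^{\eta} \to \infty$) to force existence of an all-$0$ block with high probability while keeping the run length $r$ within $(1-\eta)\log_{1/p} t$. Since the statement only asserts the first-order asymptotic equivalence $\E(L_t) \sim \log_{1/p} t$ rather than the full distributional limit, these first- and second-moment estimates suffice and I would not need the sharper extreme-value analysis (the doubly-exponential Gumbel-type fluctuations) referenced in the analytic combinatorics literature; invoking the cited result directly is of course the cleanest route, but the self-contained two-sided argument above is the approach I would take.
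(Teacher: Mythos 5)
Your lower bound is sound: the disjoint blocks are independent, the expected number of all-zero blocks is of order $t^{\eta}/\log_{1/p}t \to \infty$, the probability that no block is all zeros is at most $\bigl(1-t^{-(1-\eta)}\bigr)^{\lfloor t/r\rfloor} \to 0$, and $\E(L_t) \geq r\,\Pr(L_t\geq r)$ then gives $\liminf_t \E(L_t)/\log_{1/p}t \geq 1$. The upper bound, however, has a genuine gap in its final step. With $r_0 = (1+\eta)\log_{1/p}t$ you correctly get $\Pr(L_t \geq r_0) \leq t\,p^{r_0} = t^{-\eta}$, but converting this into an expectation bound via the deterministic cap $L_t \leq t$ reads $\E(L_t) \leq r_0 + t\,\Pr(L_t \geq r_0) \leq r_0 + t^{1-\eta}$, and for any fixed $\eta \in (0,1)$ the error term $t^{1-\eta}$ dwarfs $\log t$, so no conclusion about the $\limsup$ follows. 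Pushing $\eta \geq 1$ to kill the error term only yields $\limsup_t \E(L_t)/\log_{1/p}t \leq 1+\eta \geq 2$. The problem is that the tail probability decays only polynomially in $t$, so multiplying it by the worst case $L_t \leq t$ is too lossy to recover the constant $1$.

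The repair is standard and stays entirely within your framework: use the layer-cake identity $\E(L_t) = \sum_{r\geq 0}\Pr(L_t > r)$, bound each summand by $\min\bigl(1, t\,p^{r}\bigr)$, and exploit the \emph{geometric} decay in $r$ beyond the threshold rather than the polynomial decay in $t$ at the threshold: $\E(L_t) \leq r_0 + \sum_{r \geq r_0} t\,p^{r} \leq r_0 + t\,p^{r_0}/(1-p) = (1+\eta)\log_{1/p}t + t^{-\eta}/(1-p)$, whose error term vanishes, giving $\limsup_t \E(L_t)/\log_{1/p}t \leq 1+\eta$ for every $\eta>0$. With that one-line change both directions hold and your proof is complete. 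For comparison, the paper does not prove this lemma at all; it cites \cite[Example~V.4]{Analytic_Combinatorics}, where the result is obtained by singularity analysis of the rational generating function counting binary strings containing no run of $k$ zeros, which yields the finer statement $\E(L_t) = \log_{1/p}t + O(1)$ together with the fluctuation behaviour. Your (patched) argument is thus a genuinely different, elementary and self-contained probabilistic route, at the price of obtaining only the first-order asymptotics---which is all the lemma asserts and all that \cref{secretary_CPT_map} needs.
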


\begin{theorem}\label{secretary_CPT_map}
  Fix $t\in\N$ and let $\{\cE_0,\cE_1\}$ be as in \cref{stopped_CPT_map}.
  Consider the stopped process in which we repeatedly measure $\{\cE_0,\cE_1\}$ for $t/e$ steps without stopping, then stop at the first run of 0's that is longer than any previous run, breaking ties uniformly at random.

  In the limit $t\to\infty$, with probability~$\to e^{-1}$ we will stop on a run of zeros of length $n = \Omega(log_{1/\Gamma}(tN/D))$.
  The resulting state $\rho_n$ satisfies \cref{eq:stopped_overlap}.
\end{theorem}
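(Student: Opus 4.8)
The key structural fact is that the global resampling makes the process renewal-like: whenever the outcome $1$ is obtained, $\cE_1$ resets the state to $\1/D$, so every maximal block of consecutive $0$'s (a ``run'') begins from the same state $\1/D$, and hence the run lengths $\ell_1,\ell_2,\dots$ are independent and identically distributed. My first step is to compute their common distribution. Exactly the telescoping computation of $\Pr(0\mid 0^k1\dots)=\tr(K^{2(k+1)})/\tr(K^{2k})$ from the proof of \cref{stopping_time} gives
\[
  \Pr(\ell\geq k) = \frac{\tr(K^{2k})}{D},
\]
and the AGSP bounds $K^2\geq\Gamma\Pi$ and $K^2\leq\Pi+\Delta(\1-\Pi)$ of \cref{def:AGSP}\labelcref{def:AGSP:Gamma,def:AGSP:Delta} yield
\[
  \Gamma^k\frac{N}{D} \;\leq\; \Pr(\ell\geq k) \;\leq\; \frac{N+\Delta^k(D-N)}{D}.
\]
Since $\norm{K}\leq 1$ is required for $\{\cE_0,\cE_1\}$ to be a valid instrument, and $\norm{K}<1$ for the explicit AGSPs of \cref{sec:local_AGSPs}, the run length has finite mean $\E(\ell)<\infty$ and (using $\Gamma>\Delta$) a tail that is geometric with ratio $\Gamma$ and prefactor $N/D$.

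Next I would view each run as a ``candidate'' whose value is its length and control the longest run over the horizon. Writing $R(s)$ for the number of runs completed within the first $s$ steps, the renewal law of large numbers gives $R(s)/s\to 1/\mu$ almost surely, with $\mu:=1+\E(\ell)$ the (finite) mean time between $1$-outcomes; in particular $R:=R(t)=\Theta(t)$. Applying the longest-run asymptotics of \cref{longest_run} with continuation ratio $\Gamma$, and accounting for the prefactor $N/D$ and the $R=\Theta(t)$ runs, the longest run has length $\sim\log_{1/\Gamma}(tN/D)$. For the lower bound I actually need it suffices to use the tail lower bound: for $L:=\max_{i\leq R}\ell_i$,
\[
  \Pr(L\geq k)\;\geq\;1-\Bigl(1-\Gamma^k\tfrac{N}{D}\Bigr)^{R}\;\geq\;1-e^{-R\Gamma^k N/D},
\]
which tends to $1$ whenever $k\leq(1-o(1))\log_{1/\Gamma}(tN/D)$; hence $L=\Omega\bigl(\log_{1/\Gamma}(tN/D)\bigr)$ with probability $\to 1$.

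It remains to show the prescribed policy stops on this longest run with probability $\to e^{-1}$, which I would get by reducing to the secretary problem (\cref{secretary}). The policy observes for $t/e$ steps and then stops on the first run strictly exceeding all earlier runs -- a threshold policy, with ties resolved by the uniform randomisation in the statement, which turns the integer-valued lengths into a strict ranking; since the runs are iid this ranking is exchangeable, exactly as \cref{secretary} requires. The only point needing care is that the prescribed threshold is a \emph{time} threshold $t/e$, whereas \cref{secretary} uses a \emph{candidate-count} threshold $\gamma=e^{-1}$. These coincide asymptotically: by the renewal law of large numbers $R(t/e)/R(t)\to e^{-1}$, so the time threshold rejects an $e^{-1}$ fraction of the candidates in the limit, and \cref{secretary} then yields probability $\to e^{-1}$ of stopping on a longest run. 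Combining the two events, with probability at least $\Pr(\text{select longest})-\Pr(L\text{ short})\to e^{-1}$ the policy stops on a run of length $n=\Omega(\log_{1/\Gamma}(tN/D))$; and because each run begins from $\1/D$, stopping at the end of a run of length $n$ leaves the state $\rho_n$ of \cref{eq:stopped_state}, so \cref{stopped_CPT_map} applies verbatim and $\rho_n$ obeys \cref{eq:stopped_overlap}.

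The main obstacle I anticipate is the secretary reduction in the last paragraph: making rigorous that the fixed time threshold $t/e$ asymptotically realises the optimal $e^{-1}$-fraction candidate threshold despite the random, fluctuating candidate count, and that the $\to e^{-1}$ success probability survives both this randomness and the abundance of ties inherent in integer-valued run lengths. The renewal concentration of $R(s)$ and the tie-broken exchangeability needed to invoke \cref{secretary} are where the work lies; by contrast the run-length tail bounds and the longest-run scaling are comparatively routine given \cref{longest_run} and \cref{def:AGSP}.
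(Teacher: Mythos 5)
Your overall strategy---iid run lengths arising from global resampling, reduction to the secretary problem of \cref{secretary}, conversion of the time threshold $t/e$ into an asymptotic $e^{-1}$ candidate-count threshold by a renewal argument, and the geometric tail bound $\Pr(\ell\geq k)\geq\Gamma^k N/D$ to control run lengths---is essentially the paper's proof, and in places (the renewal law of large numbers, the explicit high-probability tail bound in place of the expectation statement of \cref{longest_run}) it is more careful than the paper. However, your final step contains a genuine error, on precisely the point to which the paper devotes a separate paragraph. You write that ``stopping at the end of a run of length $n$ leaves the state $\rho_n$''. But the stopper cannot stop at the end of a run: the runs are revealed one 0 at a time, and the only way to learn that a run has ended is to perform the next measurement and observe a 1---at which point the global resampling has already replaced the state by $\1/D$, so the run's state is destroyed. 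The only implementable version of the policy is to stop as soon as the current run \emph{first} exceeds the previous maximum, i.e.\ mid-run, at length $m+1$, where $m$ is the maximum length over all earlier runs. The state held at the stopping time is therefore $\rho_{m+1}$, not $\rho_\ell$ with $\ell$ the full length of the selected run. Consequently your lower bound on $L=\max_{i\le R}\ell_i$ (the maximum over the whole horizon) bounds the wrong quantity: on the secretary-success event the selected run \emph{is} the overall longest, so its full length is $L$, but that only shows $L\geq m+1$; bounding $L$ from below says nothing about the stopping length $m+1$.

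The fix is short and uses your own machinery. On the success event the selected run occurs after the observation phase, so $m$ is at least the maximum run length among the $R(t/e)=\Theta(t)$ runs completed within the first $t/e$ steps; applying your tail bound to those runs gives, with probability $\to 1$,
\begin{equation}
  m+1 \;\geq\; (1-o(1))\log_{1/\Gamma}\!\left(\frac{t}{e}\cdot\frac{N}{D}\right)+1
  \;=\; \Omega\!\left(\log_{1/\Gamma}(tN/D)\right),
\end{equation}
which is exactly the paper's bound (the paper obtains it in expectation, via the reduction to a biased coin followed by Bernoulli trials and \cref{longest_run}). Since the stopping time occurs immediately after a run of $m+1$ consecutive 0's starting from $\1/D$, the state is given by \cref{eq:stopped_state} and \cref{stopped_CPT_map} applies. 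With this one correction your proof is complete and coincides with the paper's.
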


\begin{proof}
  A given run of the process will have some sequence of~0 runs (possibly including some runs of zero length), each terminated by a~1.
  Since the state at the start of any 0 run is identical (namely the maximally mixed state), the probability of a 0-run depends only on its length.
  Thus the sequence of 0-run-lengths is a sequence of iid random variables, each chosen according to the same distribution determined by the AGSP.
  If we rank longer runs as better than shorter ones, breaking ties at random, we can uniquely rank each run relative to the others.
  If we were given a sequence of run-lengths one-by-one, and restrict ourselves to only making use of the relative rank of a run (not the length of the run itself), then selecting the longest run reduces to the secretary problem.

  Since the length of each run is iid, the expected number of 0-runs occurring in any time interval depends only on the length of that interval.
  Thus the expected proportion of 0-runs that occur within the first $t/e$ time steps therefore tends to $1/e$ as $t\to\infty$.

  Therefore, under the stopping policy of \cref{secretary_CPT_map}, we expect to discard the first $e^{-1}$ fraction of the total number of runs of 0's, and then select the first run longer than any seen previously.
  I.e.\ as $t\to\infty$, the stopping policy converges in expectation to the stopping policy of \cref{secretary}, and $\Pr(\text{stop on longest run}) \to e^{-1}$.

  However, unlike the situation in \cref{secretary}, we are not presented with the 0-runs in their entirety, but sequentially, one~0 at a time.
  We cannot know if the latest~0 will be the end of the current 0-run until we see the next 1~outcome, at which point it is too late to stop on that 0-run.
  Instead, we stop as soon as we encounter a run of 0's longer than any seen previously.
  Thus we still stop within the same 0-run as we would have, had we been able to see the whole run before deciding to stop.
  However, the length of the 0-run when we stop will not necessarily be as long as it could have been, and the payoff is slightly altered.
  Rather than stopping on the longest 0-run within $t$ steps, we instead end up stopping on a run 1~longer than the longest run observed within the first $t/2$ steps.

  Now, the probability of a run of $n$ 0's is $\tr(K^{2n})/D \geq \Gamma^n N/D$ by \cref{def:AGSP}.
  The latter is equal to the probability of first getting heads on a biased coin with $\Pr(\text{heads})=N/D$, and if one gets heads then running a sequence of Bernoulli trials with $p=\Gamma$ and obtaining a run of $n$ successes.
  The expected length of time required to obtain the initial head before starting a Bernoulli trial sequence is $D/N$.
  Thus the expected length of the longest 0-run within $t/e$ steps is $\geq \log_{1/\Gamma}(tN/eD)$ and the expected length of the run we stop on is $n \geq \log_{1/\Gamma}(tN/eD) + 1 = \Omega(\log_{1/\Gamma}(tN/D))$.

  The claim on $\rho_n$ follows by \cref{stopped_CPT_map}.
\end{proof}

\begin{remark}
  \Cref{secretary_CPT_map} tells us that we can obtain a sequence of $n$ 0's with constant probability in run-time $t = O(D/N\Gamma^n)$.
  Which is consistent with the expected time to obtain a sequence of $n$ 0's from \cref{stopping_time}, which goes as $D/N\Gamma^n$ to leading order.
\end{remark}

\subsubsection{Expected run-length stopping policy}
Although the stopping policy in \cref{secretary_CPT_map} is optimal for the secretary problem we reduced it to, we can still do somewhat better for the ground state preparation task (even without any additional knowledge of the AGSP).
Rather than optimising the probability of stopping on the longest run of 0's, what we would really like to do is maximise the expected length of the 0-run we stop at.
The optimal stopping policy for the minimum-expected-rank variant of the secretary problem is also known~\cite{Chow1964}:

\begin{lemma}[\cite{Chow1964}]
  Let $n$ be the total number of candidates in the secretary problem.
  Define recursively
  \begin{align}
    s_i = \left\lfloor\frac{i+1}{n+1} c_i\right\rfloor, \qquad
    c_{i-1} = \frac{1}{i}\left(\frac{t+1}{i+1}\cdot\frac{s_i(s_i+1)}{2} + (i-s_i)c_i\right), \qquad s_n = n.
  \end{align}
  For any $n$, the following stopping policy minimises the expected rank: stop at the first candidate $k\geq 1$ such that $y_k\leq s_k$, where $y_k$ is the relative rank of the $k$'th candidate relative to the all the candidates seen so far.

  The expected rank is given by $c_0$, and $\lim_{n\to\infty} c_0 = \prod_{k=1}^\infty\left(\frac{k+2}{k}\right)^{1/k+1} \mbox{$< 3.8695$}$.
\end{lemma}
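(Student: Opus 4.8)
The plan is to reconstruct the dynamic-programming argument behind \cite{Chow1964} (a full proof could simply cite it, but here is how I would rebuild it). The two distributional facts I would establish first are that the relative ranks $y_1,y_2,\dots,y_n$ are independent with $y_k$ uniform on $\{1,\dots,k\}$, and that if the $k$'th candidate has relative rank $j$ among the first $k$, then its expected \emph{absolute} rank $R$ among all $n$ candidates is
\begin{equation}
  \E[R \mid y_k = j] = j + (n-k)\frac{j}{k+1} = \frac{n+1}{k+1}\, j,
\end{equation}
since, conditioned on $y_k=j$, the expected number of the remaining $n-k$ candidates that outrank it is $(n-k)j/(k+1)$. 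I would then define the continuation values $c_i$ by backward induction: $c_i$ is the minimum expected absolute rank achievable given that candidates $1,\dots,i$ have all been rejected and one plays optimally from candidate $i+1$ onwards. The forced acceptance of the last candidate supplies the boundary condition, which is exactly the threshold $s_n=n$ in the statement.

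The optimal policy at candidate $i$ compares the immediate expected rank $\frac{n+1}{i+1}y_i$ against the continuation value $c_i$: since the former is increasing in $y_i$, stopping is optimal exactly when $y_i \le \frac{i+1}{n+1}c_i$, i.e.\ when $y_i \le s_i := \lfloor \frac{i+1}{n+1}c_i\rfloor$. This is precisely the threshold form asserted, and it yields a threshold policy for each $n$. Averaging the Bellman equation over the $i$ equally likely values of $y_i$, and splitting the sum at $s_i$ (stop for $j\le s_i$, continue for $j>s_i$), gives
\begin{equation}
  c_{i-1} = \frac{1}{i}\left(\frac{n+1}{i+1}\sum_{j=1}^{s_i} j + (i-s_i)c_i\right)
          = \frac{1}{i}\left(\frac{n+1}{i+1}\cdot\frac{s_i(s_i+1)}{2} + (i-s_i)c_i\right),
\end{equation}
which matches the stated recurrence (the $t+1$ there should read $n+1$), with $c_0$ the overall optimal expected rank.

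The genuinely hard part is the asymptotic evaluation $\lim_{n\to\infty}c_0 = \prod_{k\ge1}\bigl(\frac{k+2}{k}\bigr)^{1/(k+1)}$. Here I would pass to the limit by rescaling the index $x=i/n$ and tracking the ranges over which the integer threshold $s_i$ takes each successive value $r=0,1,2,\dots$; in the regime $s_i=0$ the recurrence collapses to $c_{i-1}=c_i$, so $c_0$ equals the stabilised value at the first transition, and one shows the transition points $i/n$ converge to explicit constants as $n\to\infty$. Telescoping the recurrence across these asymptotic blocks and identifying the resulting product is the crux; convergence of the infinite product itself is routine, since $\frac{1}{k+1}\ln\frac{k+2}{k}=O(1/k^2)$, and a short numerical tail estimate confirms the bound $<3.8695$. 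The main obstacle is justifying the interchange of the $n\to\infty$ limit with the backward recursion and pinning down the block boundaries precisely enough to recover the exact product rather than merely an inequality.
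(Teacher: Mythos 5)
The paper does not prove this lemma at all: it is quoted directly from the cited reference (Chow et al.), so there is no in-paper argument to compare against. Judged on its own terms, your reconstruction of the finite-$n$ part is correct and is essentially the argument of that reference: the independence and uniformity of the relative ranks $y_k$, the formula $\E[R \mid y_k = j] = \frac{n+1}{k+1}\,j$ for the expected absolute rank, and the backward-induction Bellman equation together yield exactly the stated threshold policy and recurrence. You are also right that the $t+1$ in the recurrence is a typo for $n+1$ (no $t$ is defined in the lemma), and that $s_n = n$ encodes forced acceptance of the last candidate.

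However, your treatment of the final claim, $\lim_{n\to\infty} c_0 = \prod_{k\geq 1}\bigl(\frac{k+2}{k}\bigr)^{1/(k+1)} < 3.8695$, is a plan rather than a proof, and you acknowledge as much. This is not a minor loose end: the limit evaluation is the substantive content of the cited theorem, while the finite-$n$ dynamic program is routine. The obstacle you name --- interchanging the $n\to\infty$ limit with the backward recursion --- is exactly where the real work lies, and the original argument does not proceed by a direct rescaling-and-telescoping of the recurrence. Instead it first establishes that $c_0(n)$ is nondecreasing in $n$ (so the limit exists), and then sandwiches the limit using auxiliary threshold policies with a bounded number of distinct threshold values, whose asymptotic performance can be computed in closed form; the infinite product emerges from matching the upper and lower bounds. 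Without some such device, the block-boundary analysis you sketch yields the product only heuristically. So: the policy-and-recurrence half of the lemma is fully justified by your argument, but the asymptotic formula and the numerical bound remain unproven as written.
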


Employing this stopping policy gives, by similar arguments to \cref{secretary_CPT_map}, the following variant.

\begin{theorem}\label{expectation_CPT_map}
  Fix $t\in\N$ and let $\{\cE_0,\cE_1\}$ be as in \cref{stopped_CPT_map}.
  Consider the stopped process in which we repeatedly measure $\{\cE_0,\cE_1\}$ and stop on the $i$'th run of 0's, where $i$ is the first 0-run such that $y_i\leq s_i$, up to a maximum of $t$ runs.
  Here, $y_i$ is the relative rank of the $i$'th 0-run relative to all previous 0-runs, ranking the runs by length (longer is better) and breaking ties randomly.

  In the limit $t\to\infty$, we will stop on a run of zeros of length $n = \Omega(\log_{1/\Gamma}(tN/D))$.
  The resulting state $\rho_n$ satisfies \cref{eq:stopped_overlap}.
\end{theorem}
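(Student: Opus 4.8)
The plan is to mirror the argument for \cref{secretary_CPT_map}, replacing the classical secretary problem by its minimum-expected-rank variant from \cite{Chow1964}. As in that proof, the key structural observation is that the state is reset to the maximally mixed state $\1/D$ after every $1$-outcome, so the lengths of successive $0$-runs form a sequence of iid random variables, each distributed according to the law $\Pr(\text{length}\geq n) = \tr(K^{2n})/D \geq \Gamma^n N/D$ inherited from \cref{def:AGSP}. Ranking the runs by length (longer being better) and breaking ties at random therefore assigns each of the (up to) $t$ runs a well-defined relative rank, and the task of selecting the run of smallest expected rank using only the relative ranks seen so far is exactly the minimum-expected-rank secretary problem with $t$ candidates. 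The threshold rule ``stop on the first run with $y_i\leq s_i$'' in the statement is precisely the optimal policy of \cite{Chow1964} applied to this reduction.

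By \cite{Chow1964}, this policy achieves expected rank $\leq c_0 < 3.8695$ as $t\to\infty$. It then remains to convert this bound on the \emph{rank} of the selected run into the claimed bound on its \emph{length}. For this I would analyse the order statistics of the iid run-lengths: writing $\ell_{(k)}$ for the $k$'th longest run among the $t$ runs, the tail bound $\Pr(\text{length}\geq n)\geq\Gamma^n N/D$ gives, via a binomial estimate on the number of runs exceeding length $n$, that $\ell_{(k)} \geq \log_{1/\Gamma}(tN/D) - O(\log k)$ with constant probability for every fixed $k$. Combining this with Markov's inequality applied to the rank (which has bounded expectation $c_0$, so rank $>k$ occurs with probability $\leq c_0/k$) shows that with probability bounded away from $0$ we stop on a run of rank $O(1)$, and hence of length $n = \Omega(\log_{1/\Gamma}(tN/D))$, as claimed.

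Finally, I would address the same online-observation subtlety as in \cref{secretary_CPT_map}: the relative rank $y_i$ of the $i$'th run can only be certified once the run has grown long enough to exceed the $s_i$'th longest previous run, so in practice one stops mid-run as soon as this threshold is crossed rather than after observing the run in its entirety. This means the run on which we stop is the correct one, but its realised length may differ from its eventual length by an additive $O(1)$, which does not affect the $\Omega(\log_{1/\Gamma}(tN/D))$ scaling. The bound \cref{eq:stopped_overlap} on $\rho_n$ then follows directly from \cref{stopped_CPT_map}. I expect the main obstacle to be the middle step --- turning the constant expected-rank guarantee of \cite{Chow1964} into a length guarantee --- since it requires controlling the extreme-value behaviour of the run-length distribution and verifying that the top $O(1)$ order statistics all share the same leading asymptotic length.
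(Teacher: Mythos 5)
Your proposal is correct and matches the paper's intended argument: the paper gives no standalone proof of \cref{expectation_CPT_map}, asserting only that it follows ``by similar arguments to'' \cref{secretary_CPT_map}, i.e.\ precisely your reduction---iid run-lengths via global resampling, relative ranks with random tie-breaking, and the policy of \cite{Chow1964} in place of the classical secretary policy. Your middle step converting the constant expected-rank guarantee into a length guarantee (tail bound $\Pr(\text{length}\geq n)\geq\Gamma^n N/D$ on the order statistics, combined with Markov's inequality on the rank) correctly fills in the one detail the paper leaves implicit, and your treatment of the online-observation subtlety mirrors the corresponding discussion in the proof of \cref{secretary_CPT_map}.
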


\subsection{Fixed-point ground state preparation}
\label{sec:fixed-point}

In the case of frustration-free Hamiltonians, \cite{VWC} showed how to construct a CPT map whose fixed point lies in the ground state subspace.
The rate of convergence to the fixed point depends on spectral properties of the Hamiltonian, and is in general exponential in the system size even for Hamiltonians with a constant spectral gap.
But the fixed point itself is guaranteed to be a ground state as long as the Hamiltonian is frustration-free.

Here, I will construct something similar, but for arbitrary (i.e.\ possibly frustrated) Hamiltonians.
As in \cite{VWC}, the rate of convergence to the fixed point will depend on spectral properties of the Hamiltonian, via the AGSP used to construct the CPT map.
However, in contrast to \cite{VWC}, the proximity of the fixed point itself to the ground state subspace now also depends on these parameters.
In particular, it depends on how close the parameter $\Gamma$ in \cref{def:AGSP} is to~1.

For frustration-free Hamiltonians, a rescaled version of the Hamiltonian itself is an AGSP with $\Gamma=1$, recovering results analogous to those of~ \cite{VWC}.
For frustrated Hamiltonians, the more complex transformation of \cref{Chebyshev_H} can give AGSPs with $\Gamma$ as close to~1 as desired, at a cost of slower convergence time and requiring increasingly complex local measurements.

However, note that this transformation of the Hamiltonian can only be constructed if the ground state energy is known in advance.
The ground state energy of frustration-free Hamiltonians can always be taken to be~0 without loss of generality by rescaling and shifting, so this is consistent with the frustration-free results.
But for general Hamiltonians, approximating the ground state energy is itself QMA-hard; it is still NP-hard even for gapped, commuting Hamiltonians.
So although \cref{Chebyshev_H} together with the results of this section imply that, for arbitrary Hamiltonians, there exists a local CPT map whose fixed point is the ground state for arbitrary Hamiltonians, this CPT map cannot be constructed efficiently in general.
We will remedy this in \cref{sec:stopped}.

\begin{theorem}\label{CPTP_map}
  Let $K$ be a $(\Delta,\Gamma,\epsilon)$-AGSP for $\Pi_0$ on a Hilbert space of dimension~$D$, and let $N:=\tr\Pi_0$ be the ground state degeneracy.
  Define the CPT map
  \begin{equation}
    \cE(\rho) = K\rho K^\dg + \left(\tr\rho-\tr(K\rho K^\dg)\right) \frac{\1}{D}.
  \end{equation}
  The fixed points $\rho_\infty$ of $\cE$ satisfy:
  \begin{align}
    \tr(\Pi_0\rho_\infty)
    &\geq \frac{1-\Delta}{(\Gamma-\Delta) + \frac{D}{N}(1-\Gamma)} - \epsilon \\
    &= 1 - \frac{1-\Gamma}{1-\Delta} \left(\frac{D}{N}-1\right)
             \frac{1}{1 + \frac{1-\Gamma}{1-\Delta} \left(\frac{D}{N}-1\right)} - \epsilon.
  \end{align}
\end{theorem}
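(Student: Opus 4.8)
The plan is to use the fixed-point equation $\cE(\rho_\infty) = \rho_\infty$ together with the block structure furnished by $[K,\Pi] = 0$, where $\Pi$ is the projector from \cref{def:AGSP}. Writing $p := \tr(K\rho_\infty K^\dg)$ for the (a priori unknown) success probability on the fixed point and using $\tr\rho_\infty = 1$ together with the Hermiticity of $K$, the fixed-point condition reads
\begin{equation}
  \rho_\infty = K\rho_\infty K + (1-p)\frac{\1}{D}.
\end{equation}
Existence of at least one fixed point is automatic, since $\cE$ is a CPT map on a finite-dimensional space; and because $\cE$ is CPT we have $K^2 \leq \1$, so $0 \leq p \leq 1$ and in particular $1-p \geq 0$. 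The bound is claimed for every fixed point, so I do not need uniqueness.

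First I would project this identity onto the two complementary subspaces. Taking the trace against $\Pi$ and using $[K,\Pi]=0$ with the operator inequality $\Pi K^2\Pi \geq \Gamma\Pi$ (from \cref{def:AGSP}\labelcref{def:AGSP:Gamma}) gives
\begin{equation}
  \tr(\Pi\rho_\infty) \geq \Gamma\,\tr(\Pi\rho_\infty) + (1-p)\frac{\tr\Pi}{D},
\end{equation}
while the trace against $\1-\Pi$, using $(\1-\Pi)K^2(\1-\Pi) \leq \Delta(\1-\Pi)$ (from \cref{def:AGSP}\labelcref{def:AGSP:Delta}), gives
\begin{equation}
  \tr\bigl((\1-\Pi)\rho_\infty\bigr) \leq \Delta\,\tr\bigl((\1-\Pi)\rho_\infty\bigr) + (1-p)\frac{D-\tr\Pi}{D}.
\end{equation}
Abbreviating $a := \tr(\Pi\rho_\infty)$ and noting $\tr\Pi = N$ (distinct projectors at operator distance $\epsilon<1$ have equal rank; the statement is vacuous otherwise), these rearrange to $a(1-\Gamma) \geq (1-p)N/D$ and $(1-a)(1-\Delta) \leq (1-p)(D-N)/D$.

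The next step is to eliminate the nuisance parameter $p$. The first inequality yields the upper bound $1-p \leq a(1-\Gamma)D/N$, and substituting this into the right-hand side of the second gives
\begin{equation}
  (1-a)(1-\Delta) \leq a(1-\Gamma)\frac{D-N}{N}.
\end{equation}
Dividing by $1-\Delta$, this reads $1-a \leq a\,r$ with $r := \frac{1-\Gamma}{1-\Delta}\bigl(\frac{D}{N}-1\bigr)$, which solves to $a \geq 1/(1+r)$. A short rearrangement then identifies $1/(1+r)$ with both closed forms stated in the theorem, and the final step transfers the bound from $\Pi$ to $\Pi_0$ through $\abs{\tr(\Pi_0\rho_\infty) - \tr(\Pi\rho_\infty)} \leq \norm{\Pi_0-\Pi} \leq \epsilon$.

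The only genuinely non-mechanical point is eliminating $p$ with the correct inequality directions: one must bound $1-p$ \emph{from above} using the $\Pi$-projection (where $\Gamma$ enters as a lower bound) and feed that into the $(\1-\Pi)$-projection (where $\Delta$ enters as an upper bound); reversing either direction collapses the argument. Everything else is routine, and I would sanity-check the frustration-free limit $\Gamma=1$, where $r=0$ forces $a = 1$, so the fixed point recovers the exact ground space up to the $\epsilon$ error, consistent with the discussion preceding the theorem.
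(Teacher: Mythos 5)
Your proof is correct and takes essentially the same route as the paper's: exploit the block structure $[K,\Pi]=0$ from \cref{def:AGSP}, turn the fixed-point condition into a linear self-consistency inequality for $\tr(\Pi\rho_\infty)$ via the $\Gamma$ and $\Delta$ bounds, solve it, and transfer from $\Pi$ to $\Pi_0$ at a cost of $\epsilon$. The only difference is bookkeeping: the paper bounds $\tr(\Pi\cE(\rho))$ in a single pass, absorbing the success probability $\tr(K\rho K^\dg)$ through the block decomposition, whereas you keep it as an explicit unknown $p$ in two block inequalities and then eliminate it --- both routes yield the identical (and, by \cref{fixed-point_expr}, tight) bound.
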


\begin{proof}
  Let $\Pi$ be as in \cref{def:AGSP}, so $[K,\Pi]=0$.
  Then,
  \begin{align}
    \tr(K\rho K^\dg)
    &= \tr\left(K(\Pi+\1-\Pi)\rho(\Pi+\1-\Pi)K^\dg\right) \\
    &= \tr(K\Pi\rho\Pi K^\dg) + \tr\left((\1-\Pi) K (\1-\Pi) \rho (\1-\Pi) K^\dg (\1-\Pi) \right) \\
    &\leq \tr(K\Pi\rho\Pi K^\dg) + \Delta(1-\tr(\Pi\rho)).
  \end{align}
  Noting that $\tr\Pi = \tr\Pi_0 = N$ by \cref{AGSP_perturbation}, for any state $\rho$ we have
  \begin{align}
    \tr(\Pi\cE(\rho))
    &= \tr(\Pi K\rho K^\dg)
      + \left(1-\tr(K\rho K^\dg)\right) \tr\left(\frac{\Pi}{D}\right) \\
    &\geq \tr(K\Pi\rho\Pi K^\dg)
      + \frac{N}{D}\Bigl(1 - \tr(K\Pi\rho\Pi K^\dg) - \Delta\bigl(1-\tr(\Pi\rho)\bigr)\Bigr) \\
    &= \left(1-\frac{N}{D}\right) \tr(K\Pi\rho\Pi K^\dg)
      + \frac{N}{D}\Bigl(1 - \Delta\bigl(1-\tr(\Pi\rho)\bigr)\Bigr) \\
    &\geq \left(1-\frac{N}{D}\right) \Gamma\tr(\Pi\rho)
      + \frac{N}{D}\Bigl(1 - \Delta\bigl(1-\tr(\Pi\rho)\bigr)\Bigr) \\
    &= \left(\Gamma-\frac{N}{D}(\Gamma-\Delta)\right)\tr(\Pi\rho)
      + \frac{N}{D}(1-\Delta).
  \end{align}
  For a fixed point $\rho_\infty$, we obtain
  \begin{equation}
    \tr(\Pi\rho_\infty) = \tr(\Pi\cE(\rho_\infty))
    \geq \left(\Gamma-\frac{N}{D}(\Gamma-\Delta)\right)\tr(\Pi\rho_\infty)
      + \frac{N}{D}(1-\Delta),
  \end{equation}
  thus
  \begin{equation}
    \tr(\Pi\rho_\infty) \geq \frac{1-\Delta}{(\Gamma-\Delta) + \frac{D}{N}(1-\Gamma)}.
  \end{equation}

  Since $\norm{\Pi-\Pi_0}\leq \epsilon$ by \cref{def:AGSP}\labelcref{def:AGSP:epsilon}, we have
  \begin{align}
    \tr(\Pi_0\rho_\infty) &= \tr(\Pi\rho_\infty) + \tr\left((\Pi-\Pi_0)\rho_\infty\right) \\
    &\geq \frac{1-\Delta}{(\Gamma-\Delta) + \frac{D}{N}(1-\Gamma)} - \epsilon \\
    &= 1 - \frac{1-\Gamma}{1-\Delta} \left(\frac{D}{N}-1\right)
             \frac{1}{1 + \frac{1-\Gamma}{1-\Delta} \left(\frac{D}{N}-1\right)} - \epsilon,
  \end{align}
  as required.
\end{proof}

\begin{remark}
  Taking $\epsilon=0$ for simplicity:
  \begin{itemize}
  \item If $\Gamma=1$ and $\Delta<1$, then $\tr(\Pi_0\rho_\infty)=1$, i.e.\ the fixed points are exact ground states.
  \item If $\Gamma = \Delta$, then $\tr(\Pi_0\rho_\infty)=\frac{N}{D}$, i.e.\ the overlap of the fixed points with the ground state subspace is no better than that of the maximally mixed state, i.e.\ no better than guessing a state uniformly at random.
  \item If $\Gamma = 1-\delta$, then $\tr(\Pi_0\rho) \geq 1 - \frac{D/N}{1-\Delta}\delta$, i.e.\ if $\Delta=1-O(1)$ then the fixed point is $O(\delta)$-close to a ground state, albeit with a constant prefactor of order the Hilbert space dimension.
  \end{itemize}
\end{remark}

In fact, we can derive an exact expression for the fixed point of this process in terms of its AGSP.

\begin{theorem}\label{fixed-point}
  Let $K$ be a $(\Delta,\Gamma,\epsilon)$-AGSP for $\Pi_0$ with $0\leq \Delta \leq \Gamma < 1$, and
  \begin{equation}
    \cE(\rho) = K\rho K^\dg + \left(\tr\rho-\tr(K\rho K^\dg)\right) \frac{\1}{D}
  \end{equation}
  the CPT map from \cref{CPTP_map}.
  Then $\cE$ has a unique fixed point given by
  \begin{equation}
    \rho_\infty = \frac{(\1-K^2)^{-1}}{\tr\left((\1-K^2)^{-1}\right)}.
  \end{equation}
\end{theorem}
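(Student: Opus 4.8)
The plan is to split the claim into a verification step (the stated operator is indeed a fixed point) and a uniqueness step (there are no others), treating uniqueness as the substantive part. Since $K$ is Hermitian and, as I note below, $\Gamma<1$ forces $\norm{K}<1$, the operator $\1-K^2$ is strictly positive, so $\sigma:=(\1-K^2)^{-1}$ is a well-defined positive operator. First I would verify directly that $\sigma$ is fixed. Because $\sigma$ and $K$ are both functions of $K$ they commute, giving $K\sigma K = K^2(\1-K^2)^{-1}$ and hence
\[
  \sigma - K\sigma K = (\1-K^2)(\1-K^2)^{-1} = \1 .
\]
Taking the trace gives $\tr\sigma - \tr(K\sigma K) = \tr\1 = D$, so
\[
  \cE(\sigma) = K\sigma K + \bigl(\tr\sigma - \tr(K\sigma K)\bigr)\tfrac{\1}{D} = K\sigma K + \1 = \sigma .
\]
By linearity $\cE$ then fixes the whole ray of scalar multiples of $\sigma$, and normalising yields the claimed state $\rho_\infty = \sigma/\tr\sigma$, which is a genuine density matrix since $\sigma>0$.

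For uniqueness I would introduce the superoperator $\Phi(\rho):=\rho-K\rho K = (\mathcal I - \mathcal K)(\rho)$, where $\mathcal K(\rho)=K\rho K$. Rewriting the fixed-point equation $\cE(\rho)=\rho$ as $\Phi(\rho) = \bigl(\tr\rho-\tr(K\rho K)\bigr)\tfrac{\1}{D}$, and noting the right-hand side is always a scalar multiple of $\1$, one sees that $\rho$ is a fixed point of $\cE$ if and only if $\Phi(\rho)$ is a scalar multiple of $\1$. Thus the fixed-point set of $\cE$ is exactly the preimage under $\Phi$ of the line of scalar multiples of $\1$. Once $\Phi$ is known to be invertible this preimage is the one-dimensional space spanned by $\Phi^{-1}(\1)=\sigma$ (using $\Phi(\sigma)=\1$ from the first step), so the only trace-one fixed point is $\sigma/\tr\sigma$, as claimed.

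The crux, and the step I expect to be the main obstacle, is establishing that $\Phi = \mathcal I - \mathcal K$ is invertible. I would argue this by a norm bound: in the Hilbert–Schmidt norm, $\norm{\mathcal K(\rho)}_2 = \norm{K\rho K}_2 \le \norm{K}^2\,\norm{\rho}_2$, so the induced norm of $\mathcal K$ is at most $\norm{K}^2$. Provided $\norm{K}<1$ this is strictly below $1$, whence $\Phi$ is invertible with convergent Neumann series $\Phi^{-1}=\sum_{n\ge0}\mathcal K^n$; applied to $\1$ this even reproduces $\Phi^{-1}(\1)=\sum_{n\ge0}K^{2n}=(\1-K^2)^{-1}=\sigma$, consistently with the first step. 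The only delicate point is securing the strict inequality $\norm{K}<1$ rather than merely $\norm{K}\le1$: the weak bound is forced by complete positivity of $\cE$ (which requires $K^2\le\1$ so that the second Kraus term is non-negative), while the strict bound is what the hypothesis $\Gamma<1$ supplies for the AGSPs under consideration and is precisely the condition under which $(\1-K^2)^{-1}$ exists. With invertibility in hand, the verification and uniqueness arguments above combine to give existence and uniqueness of the fixed point, completing the proof.
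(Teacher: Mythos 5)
Your proposal is correct and takes essentially the same route as the paper: both reduce $\cE(\rho)=\rho$ to the linear equation $\rho - K\rho K^\dg = c\1$ and invert it, the paper by invoking uniqueness of solutions of the discrete Lyapunov equation, you by proving invertibility of $\Phi=\mathcal{I}-\mathcal{K}$ via a Hilbert--Schmidt Neumann series---whose application to $\1$ is precisely the paper's geometric series $\sum_{n\geq 0}K^{2n}=(\1-K^2)^{-1}$. The caveat you flag, that $\norm{K}<1$ must be read out of the hypothesis $\Gamma<1$ even though \cref{def:AGSP} makes $\sqrt{\Gamma}$ only a lower bound on the $\Pi$-sector of $K$, is handled no more rigorously in the paper itself, whose Lyapunov step and series expansion require the same strict spectral-radius bound.
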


\begin{proof}
  Fixed points $\rho_\infty$ of $\cE$ satisfy
  \begin{equation}
    \rho_\infty = \cE(\rho_\infty) = K\rho_\infty K^\dg +  \frac{\tr\rho_\infty - \tr(K\rho_\infty K^\dg)}{D} \1
  \end{equation}
  or, equivalently,
  \begin{equation}\label{eq:Lyapunov}
    K\rho_\infty K^\dg - \rho_\infty + c\1 = 0, \qquad c = \frac{\tr\rho_\infty - \tr(K\rho_\infty K^\dg)}{D}.
  \end{equation}

  This is a discrete Lyapunov equation, so has a unique solution given by
  \begin{equation}\label{eq:rho_infty}
    \rho_\infty = \sum_{n=0}^\infty K^n\, c\1\, (K^\dg)^n = c \sum_{n=0}^\infty K^{2n} = c\,(\1-K^2)^{-1},
  \end{equation}
  recalling that $K$ is Hermitian by \cref{def:AGSP}.

  Since we want $\tr\rho_\infty=1$, we must have that $c=1/\tr(\1-K^2)^{-1}$.
  To see this explicitly, note that
  \begin{equation}
    K\rho_\infty K^\dg = c\,K \Bigl(\sum_{n=0}^\infty K^{2n}\Bigr) K
      = c\sum_{n=1}^\infty K^{2n}
      = c\Bigl(\sum_{n=0}^\infty K^{2n} - \1\Bigr)
      = c(\1-K^2)^{-1} - c\1,
  \end{equation}
  so, from the expression for $c$ in \cref{eq:Lyapunov},
  \begin{align}
    cD = \tr\rho_\infty - \tr\left(K\rho_\infty K^\dg\right)
       = 1 - c\tr(\1-K^2)^{-1} + cD.
  \end{align}
  Hence
  \begin{equation}
    c = \frac{1}{\tr\left((\1-K^2)^{-1}\right)}.
  \end{equation}
\end{proof}

The following result is immediate from \cref{fixed-point} and the properties of the AGSP $K$ in \cref{def:AGSP}.

\begin{corollary}\label{fixed-point_expr}
  Let $\cE$ be as in \cref{fixed-point}, with $\Pi$ the projector for $K$ from \cref{def:AGSP}.
  The fixed point $\rho_\infty$ of $\cE$ has the form
  \begin{equation}
    \rho_\infty = \frac{X_0\oplus X_\perp}{\tr X_0 + \tr X_\perp}
    \quad \text{where} \quad
    X_0 \geq \frac{\1}{1-\Gamma}, \;
    X_\perp \leq \frac{\1}{1-\Delta}
  \end{equation}
  where $X_0$ is $N$-dimensions, $X_\perp$ $D-N$-dimensional, and the direct sum is with respect to the $\Pi,\1-\Pi$ partition of the Hilbert space.
\end{corollary}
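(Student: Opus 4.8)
The plan is to combine the exact expression for the fixed point from \cref{fixed-point} with the defining properties of the AGSP $K$ in \cref{def:AGSP}. The key observation is that since $[K,\Pi]=0$ by \cref{def:AGSP}\labelcref{def:AGSP:commutator}, the operator $K$ decomposes as $K = K_0\oplus K_1$ with respect to the $\Pi,(\1-\Pi)$ partition of the Hilbert space, where $K_0$ acts on the $N$-dimensional range of $\Pi$ and $K_1$ on the $(D-N)$-dimensional range of $\1-\Pi$. Consequently $\1-K^2 = (\1-K_0^2)\oplus(\1-K_1^2)$ is block-diagonal, and so is its inverse. By \cref{fixed-point}, the fixed point is $\rho_\infty = (\1-K^2)^{-1}/\tr\bigl((\1-K^2)^{-1}\bigr)$, which therefore inherits exactly this block-diagonal structure with $X_0 := (\1-K_0^2)^{-1}$ and $X_\perp := (\1-K_1^2)^{-1}$.

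First I would establish the block decomposition of $K$ and hence of $(\1-K^2)^{-1}$, making sure the commutation condition is invoked to justify that the fixed point from \cref{fixed-point} is genuinely block-diagonal (this is the structural half of the claim). The dimension count --- that $X_0$ is $N$-dimensional and $X_\perp$ is $(D-N)$-dimensional --- is immediate from $\tr\Pi = N$. The normalisation denominator $\tr X_0 + \tr X_\perp = \tr\bigl((\1-K^2)^{-1}\bigr)$ then matches the formula in \cref{fixed-point} directly.

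Next I would prove the two operator inequalities. On the range of $\Pi$, \cref{def:AGSP}\labelcref{def:AGSP:Gamma} gives $K_0 \geq \sqrt{\Gamma}$, so $K_0^2 \geq \Gamma$, hence $\1-K_0^2 \leq (1-\Gamma)\1$; inverting this operator inequality (valid since both sides are positive definite, using $0\leq\Delta\leq\Gamma<1$ to guarantee positivity) yields $X_0 = (\1-K_0^2)^{-1} \geq \frac{1}{1-\Gamma}\1$. Symmetrically, \cref{def:AGSP}\labelcref{def:AGSP:Delta} gives $\norm{K_1}\leq\sqrt{\Delta}$, so $K_1^2 \leq \Delta$, hence $\1-K_1^2 \geq (1-\Delta)\1 > 0$, and inverting gives $X_\perp = (\1-K_1^2)^{-1} \leq \frac{1}{1-\Delta}\1$.

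The main subtlety to watch is the order-reversal of inequalities under inversion: $A \leq B$ for positive-definite $A,B$ implies $B^{-1} \leq A^{-1}$, and I must confirm at each step that the operators being inverted are strictly positive so the inversion is legitimate --- this is exactly where the hypothesis $0\leq\Delta\leq\Gamma<1$ is used, since it guarantees $1-\Gamma>0$ and $1-\Delta>0$. Everything else is a direct transcription of the spectral bounds through the block structure, so I expect this to be a short proof with no serious obstacle beyond being careful about the direction of operator inequalities.
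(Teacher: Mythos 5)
Your proof is correct and is essentially the paper's own argument: the paper gives no separate proof, stating the corollary as ``immediate from \cref{fixed-point} and the properties of the AGSP $K$ in \cref{def:AGSP}'', and what you have written out --- block-diagonalising $K=K_0\oplus K_1$ via $[K,\Pi]=0$, pushing the spectral bounds $K_0\geq\sqrt{\Gamma}\1$ and $\norm{K_1}\leq\sqrt{\Delta}$ through $(\1-K^2)^{-1}$ using anti-monotonicity of the inverse --- is exactly the intended filling-in of that remark. The only implicit ingredient is that $\1-K^2>0$ (i.e.\ $\norm{K}<1$), which is already presumed in \cref{fixed-point} for its formula to define a state, so your appeal to positivity there is consistent with the paper's standing assumptions.
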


Using the exact expression for the fixed point from \cref{fixed-point,fixed-point_expr}, we have
\begin{align}
  \tr(\Pi\rho_\infty)
  &= 1 - \tr\bigl((\1-\Pi)\rho_\infty\bigr)
  = 1 - \frac{\tr(X_\perp)}{\tr X_0 + \tr X_\perp}
  = 1 - \frac{1}{\frac{\tr X_0}{\tr X_\perp} + 1} \\
  &\geq 1 - \frac{1}{\frac{N/(1-\Delta)}{(D-N)/(1-\Gamma)} + 1}
  = 1 - \frac{1-\Gamma}{1-\Delta} \left(\frac{D}{N}-1\right)
          \frac{1}{1 + \frac{1-\Gamma}{1-\Delta} \left(\frac{D}{N}-1\right)},
\end{align}
i.e.\ we recover exactly the bound from \cref{CPTP_map}.
Thus this bound is the tightest possible in terms of the AGSP parameters.

Using the AGSP of \cref{AGSP_Chebyshev} in \cref{fixed-point}, we immediately obtain the following:

\begin{corollary}\label{Chebyshev_fixed-point}
  Let $H=\sum_i h_i$ be a $k$-local Hamiltonian on a Hilbert space $\C^D$ with ground state projector $\Pi_0$, ground state energy $\lambda_0$ and spectral gap $\delta$.
  In \cref{fixed-point}, choose $K$ to be the AGSP from \cref{AGSP_Chebyshev} and choose $\epsilon = N/D$.
  Then the CPT map $\cE$ from \cref{fixed-point} can be implemented by $k\ell$-local generalised measurements, and its fixed points $\rho_\infty$:
  \begin{equation}
    \tr(\Pi_0\rho_\infty)
    \geq 1 - \frac{\epsilon}{1-4e^{-4\ell\sqrt{\delta(\norm{H}-\lambda_0)}}} \left(\frac{D}{N}-1\right).
  \end{equation}
\end{corollary}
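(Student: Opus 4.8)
The plan is to obtain the corollary by direct substitution: feed the explicit AGSP parameters supplied by \cref{AGSP_Chebyshev} into the fixed-point overlap bound already established in \cref{CPTP_map} (equivalently, into the exact fixed-point characterisation of \cref{fixed-point,fixed-point_expr}). The locality claim needs no separate argument, since \cref{AGSP_Chebyshev} already asserts that the AGSP it constructs is implementable by $k\ell$-local generalised measurements, and the CPT map $\cE$ of \cref{fixed-point} is assembled from that AGSP together with a global resampling term $\tr(\cdots)\tfrac{\1}{D}$; it therefore inherits the same locality.

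First I would record the parameters of the AGSP $K$ produced by \cref{AGSP_Chebyshev}, identifying them by their \emph{role} in \cref{def:AGSP} rather than by position: the parameter governing amplification on the ground space is within $O(\epsilon)$ of $1$, the parameter bounding $K$ on the complement is $\Delta = 4e^{-4\ell\sqrt{\delta/(\norm{H}-\lambda_0)}}$ (inherited from the underlying Chebyshev AGSP of \cref{AGSP_H}), and the support-mismatch parameter $\norm{\Pi-\Pi_0}$ is $O(\epsilon^2)$, where $\epsilon$ is the weak-measurement strength of the local approximation. I would then substitute these into the second (already simplified) form of the bound in \cref{CPTP_map}. The controlling quantity there is the ratio $\frac{1-\Gamma}{1-\Delta}$; with $\Gamma = 1-O(\epsilon)$ and $\Delta$ the exponentially small Chebyshev value, this reduces to $\frac{\epsilon}{\,1-4e^{-4\ell\sqrt{\delta/(\norm{H}-\lambda_0)}}\,}$, and bounding the trailing factor $\frac{1}{1+\cdots}\leq 1$ collapses the expression to the advertised deficit $\frac{\epsilon}{1-\Delta}\bigl(\tfrac{D}{N}-1\bigr)$.

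The final step is to fix the remaining free parameter. Choosing $\epsilon = N/D$ ties the measurement strength to the inverse relative degeneracy, which renders the $O(\epsilon^2)$ support-mismatch contribution subdominant to the leading $O(\epsilon)$ correction and leaves the clean closed form quoted in the statement.

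The main obstacle I anticipate is bookkeeping rather than conceptual. One must be careful about which of the two parameters returned by \cref{AGSP_Chebyshev} plays the role of $\Gamma$ and which the role of $\Delta$, since it is precisely the Chebyshev amplification that drives $\Gamma\to 1$ while keeping $\Delta$ exponentially small in $\ell$, and the bound is only meaningful with that assignment (the opposite assignment makes $\frac{1-\Gamma}{1-\Delta}$ blow up). One must also verify that the $O(\epsilon)$ and $O(\epsilon^2)$ error terms accumulated in passing from the ideal Chebyshev AGSP $C_\ell(H)$ to its local approximation do not spoil the leading-order estimate. Since both \cref{AGSP_Chebyshev} and \cref{CPTP_map} are already proved, no genuinely new inequality is required: the corollary is a substitution followed by elementary algebra and the single parameter choice $\epsilon = N/D$.
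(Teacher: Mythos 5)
Your proposal is correct and is essentially the paper's own argument: the paper gives no separate proof body, stating only that the corollary follows \emph{immediately} by using the AGSP of \cref{AGSP_Chebyshev} in \cref{fixed-point}, i.e.\ by substituting $\Gamma = 1-O(\epsilon)$, $\Delta = 4e^{-4\ell\sqrt{\delta/(\norm{H}-\lambda_0)}}$ and $\epsilon'=O(\epsilon^2)$ into the fixed-point overlap bound of \cref{CPTP_map} (equivalently \cref{fixed-point,fixed-point_expr}), exactly as you do, including discarding the factor $\tfrac{1}{1+\cdots}\leq 1$ and treating the $O(\epsilon^2)$ support mismatch as subleading after choosing $\epsilon = N/D$. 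Your role-based identification of which listed parameter plays $\Gamma$ and which plays $\Delta$ is also the intended reading of \cref{AGSP_Chebyshev}, and the locality claim is, as you say, inherited directly from that corollary.
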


Since $\epsilon$ can be chosen as small as desired and $\ell$ can also be chosen freely, this means that for any local Hamiltonian $H$, including frustrated ones, there exists a local CPT map whose fixed point is as close as desired to the ground state of $H$.
Since we take $\epsilon = O(D/N)$, the convergence time will scale with the total Hilbert space dimension $D$, i.e.\ exponentially in the number of qudits, consistent with the complexity-theoretic considerations discussed in \cref{sec:introduction}.
However, even though \cref{Chebyshev_fixed-point} shows that this CPT map \emph{exists}, we cannot construct it efficiently just from the description of the local Hamiltonian $H = \sum_i h_i$.
Because we need to know the ground state energy and spectral gap of $H$ before we can construct the AGSP of \cref{AGSP_H}.

Contrast this with the results of \cite{VWC}, which show that there exists a CPT map that efficiently prepares ground states of frustration-free Hamiltonians, but to construct that CPT map requires knowing an MPS or PEPS description of the ground state.
In the case of \cref{Chebyshev_fixed-point}, even were we given the CPT map, it does not construct the ground state efficiently.
But this is perhaps not surprising, as the AGSP of \cref{AGSP_H} only uses information about the spectrum of $H$, not a full description of the ground state itself as in \cite{VWC}.
Even knowing the full spectrum of $H$ does not necessarily make finding a description of its ground state easy.
Indeed, for general Hamiltonians (unlike in the case of MPS or PEPS ground states), an efficient, explicit description of the ground state may not even exist.

\clearpage

\section{Epsilon schedules}\label{sec:epsilon}
For \cref{stopped_CPT_map} to succeed, $\epsilon$ must be chosen sufficiently small that $\Gamma'>\Delta'$ in \cref{AGSP_product,AGSP_mixture}.
However, $\Gamma'$ and $\Delta'$ ultimately depend on the spectrum of the Hamiltonian, which one typically does not know.
So it is not clear what $\epsilon$ one should choose when applying the algorithm to a given Hamiltonian, unless one has prior knowledge of its spectral properties.

Similar issues occur in most ground state preparation algorithms.
The standard approach is to make some assumption on the spectrum of $H$, e.g.\ a lower-bound on the spectral gap and/or the density of states~\cite{AQC}, and show that the algorithm succeeds on Hamiltonians satisfying those assumptions.
As $\Gamma-\Delta$ is directly related to the spectral gap of $H$ via \cref{AGSP_sum}, in our case any lower-bound on the spectral gap allows a sufficiently small $\epsilon$ to be chosen to guarantee success.

However, determining the spectral gap of a Hamiltonian is in general harder than finding the ground state in the first place! (The spectral gap problem is $P^{UQMA[\log n]}$-hard~\cite{Ambainis} for Hamiltonians on $n$ particles, and even becomes undecidable in the asymptotic limit~\cite{spectral-gap_short,spectral-gap_long,spectral-gap_1D}.)
Furthermore, as there is in general no way to determine whether a given quantum state has good overlap with the ground state having some information about the spectrum of the Hamiltonian, it is not possible to run such algorithms with some guess at suitable parameters, and then verify at the end whether they succeeded or not.

For the dissipative quantum eigensolver of \cref{DQE}, we can do better.
We will prove that if, instead of a constant $\epsilon$, we choose it according to a suitable decreasing sequence, then \cref{DQE} will always converge to the ground state of the Hamiltonian, without any assumptions on or prior knowledge about the Hamiltonian.

\begin{lemma}\label{Kn_bounds}
  Let $K_t$ be a family of $(\Gamma_t,\Delta_t,\epsilon_t)$-AGSPs for the same ground state projector $\Pi_0$.
  Let $K=\prod_{t=n}^1 K_t$.
  Then
  \begin{align}
    \tr\left(K \rho K^\dg\Pi_0\right)
      &\geq \tr(\rho\Pi_0)\prod_{t=1}^n \Gamma_t - 2\sum_{t=1}^n\epsilon_t\prod_{s=t+1}^n\Gamma_s\\
    \tr\left(K\rho K^\dg(1-\Pi_0)\right)
      &\leq \tr(\rho(\1-\Pi_0))\prod_{t=1}^n \Delta_t + 2\sum_{t=1}^n\epsilon_t\prod_{s=t+1}^n\Delta_s.
  \end{align}
\end{lemma}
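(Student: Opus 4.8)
The plan is to prove both bounds by induction on $n$, tracking how each successive AGSP $K_t$ acts on the ground-space overlap. The key difficulty compared to a single AGSP is that the different $K_t$ need not commute with a common projector: each $K_t$ commutes with its own $\Pi_t$ (satisfying $\norm{\Pi_t - \Pi_0}\leq\epsilon_t$ by \cref{def:AGSP}\labelcref{def:AGSP:epsilon}), but not necessarily with $\Pi_0$ itself. So the strategy is to apply the clean commuting estimates from \cref{def:AGSP} with respect to $\Pi_t$, and then pay a correction of order $\epsilon_t$ each time I swap $\Pi_t$ for $\Pi_0$.

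Concretely, for the first inequality I would let $\sigma := (\prod_{t=n-1}^1 K_t)\,\rho\,(\prod_{t=n-1}^1 K_t)^\dg$ denote the unnormalised state after the first $n-1$ applications, so that $K\rho K^\dg = K_n \sigma K_n^\dg$. Using $[K_n,\Pi_n]=0$ together with \cref{def:AGSP}\labelcref{def:AGSP:Gamma} ($K_n\Pi_n \geq \sqrt{\Gamma_n}\,\Pi_n$), I get
\begin{equation}
  \tr\!\left(K_n\sigma K_n^\dg\,\Pi_n\right)
    = \tr\!\left((K_n\Pi_n)\,\sigma\,(\Pi_n K_n^\dg)\right)
    \geq \Gamma_n \tr(\sigma\Pi_n).
\end{equation}
I would then replace $\Pi_n$ by $\Pi_0$ on both sides, each substitution costing at most $\epsilon_n\,\tr(\sigma)$ in absolute value by \cref{def:AGSP}\labelcref{def:AGSP:epsilon} and the Hölder bound $\abs{\tr((\Pi_n-\Pi_0)X)}\leq\norm{\Pi_n-\Pi_0}\,\tr\abs{X}$; since $\tr\sigma$ is itself dominated by the overlaps being bounded, this yields an estimate of the form $\tr(K\rho K^\dg\Pi_0)\geq \Gamma_n\tr(\sigma\Pi_0) - 2\epsilon_n(\cdots)$. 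Feeding in the inductive hypothesis for $\tr(\sigma\Pi_0) = \tr\!\bigl((\prod_{t=n-1}^1 K_t)\rho(\cdots)^\dg\Pi_0\bigr)$ and collecting the geometric factors produces exactly the claimed $\prod_{t=1}^n\Gamma_t$ and the telescoping error sum $2\sum_t\epsilon_t\prod_{s>t}\Gamma_s$. The second inequality is entirely symmetric, using \cref{def:AGSP}\labelcref{def:AGSP:Delta} in the form $\norm{(\1-\Pi_n)K_n(\1-\Pi_n)}\leq\sqrt{\Delta_n}$ (so $K_n(\1-\Pi_n)K_n \leq \Delta_n(\1-\Pi_n)$) to get an upper bound $\tr(K_n\sigma K_n^\dg(\1-\Pi_n))\leq\Delta_n\tr(\sigma(\1-\Pi_n))$, again swapping $\Pi_n\to\Pi_0$ at cost $\epsilon_n$ each time.

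The step I expect to be the main obstacle is bookkeeping the $\epsilon_n$ correction terms cleanly so that the error accumulates as the stated weighted sum rather than something larger. The subtlety is that when I swap $\Pi_n$ for $\Pi_0$, the prefactor multiplying the error is a trace of an unnormalised positive operator, and I must bound it uniformly (by $1$, after noting the relevant traces of $\sigma$ are controlled) so that the recursion contracts by the right geometric factor $\Gamma_s$ (resp.\ $\Delta_s$) at each subsequent level. I would handle this by carrying the bound in unnormalised form throughout the induction and only using $\tr\rho\leq 1$ and $\Gamma_s,\Delta_s\leq 1$ at the end, which guarantees each earlier error term $\epsilon_t$ gets multiplied by precisely $\prod_{s=t+1}^n\Gamma_s$ as it propagates through the remaining applications. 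The factor of $2$ accounts for the two substitutions (one on each side of the trace, or equivalently the two-sided perturbation $\Pi_n-\Pi_0$ appearing in both the $K_n\Pi_n$ and the overlap terms).
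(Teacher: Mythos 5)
Your proposal is correct and takes essentially the same route as the paper's own proof: the paper likewise peels off one factor $K_k$ at a time, swaps $\Pi_0\to\Pi_k$ at cost $\epsilon_k$, applies \cref{def:AGSP} with the native projector $\Pi_k$ (giving the factor $\Gamma_k$, resp.\ $\Delta_k$), swaps back $\Pi_k\to\Pi_0$ at cost another $\epsilon_k$ (hence the factor of~2), and uses $\Gamma_k\leq 1$ so the recursion yields the weighted error sum $2\sum_t\epsilon_t\prod_{s=t+1}^n\Gamma_s$. The one point you flag as delicate---bounding the trace of the unnormalised operator $K^{(k)}\rho K^{(k)\dg}$ by~1 so each swap costs only $\epsilon_k$---is handled identically (and equally implicitly) in the paper, relying on $\norm{K_t}\leq 1$, which holds for the weak-measurement AGSPs to which this lemma is applied.
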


\begin{proof}
  Denote $K^{(k)}=\prod_{t=k}^1 K_t$.
  Noting that $\norm{\Pi_t-\Pi_s} \leq \norm{\Pi_t-\Pi_0} + \mathmbox{\norm{\Pi_s-\Pi_0}} \leq \epsilon_t + \epsilon_s$, we have
  \begin{align}
    \tr\left(K^{(k)} \rho K^{(k)}\Pi_0\right)
    &= \tr\left(K_k K^{(k-1)} \rho K^{(k-1)} K_k\Pi_0\right)\\
    &\geq \tr\left(K_k K^{(k-1)} \rho K^{(k-1)} K_k\Pi_k\right) - \epsilon_k\\
    &= \tr\left((K_k\Pi_k) K^{(k-1)} \rho K^{(k-1)} (K_k\Pi_k)\right) - \epsilon_k\\
    &\geq \Gamma_k \tr\left(K^{(k-1)} \rho K^{(k-1)} \Pi_k\right) - \epsilon_k \label{eq:Kn_inequality}\\
    &\geq \Gamma_k \tr\left(K^{(k-1)} \rho K^{(k-1)} \Pi_0\right) - 2\epsilon_k,
  \end{align}
  where we have used \cref{def:AGSP} in \cref{eq:Kn_inequality} and $\Gamma_k \leq 1$ in the final line.
  Applying this recursively to $\tr(K\rho K^\dg\Pi_0) = \tr(K^{(n)}\rho K^{(n)}\Pi_0)$ gives the first inequality in the \namecref{Kn_bounds}.
  The second inequality follows by a very similar argument.
\end{proof}

We will need the following result.

\begin{lemma}\label{sum_prod}
  For any $0 < c < 1/2$,
  \begin{equation}
    \sum_{t=1}^\infty \frac{1}{t^2}\prod_{s=1}^t\frac{1}{1-\frac{c}{s}} \leq e^{2c}\zeta\left(2(1-c)\right),
  \end{equation}
  where $\zeta(s)$ is the Riemann zeta-function.
\end{lemma}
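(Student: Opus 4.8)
The plan is to reduce the claim to a single uniform bound on the partial products. Writing $\frac{1}{1-c/s} = \frac{s}{s-c}$, the target right-hand side unfolds as $e^{2c}\zeta(2(1-c)) = e^{2c}\sum_{t=1}^\infty t^{2c}/t^2$, so it suffices to establish the termwise estimate $\prod_{s=1}^t \frac{s}{s-c} \leq e^{2c}\, t^{2c}$ for every $t \geq 1$. Granting this, multiplying by $1/t^2$ and summing gives $\sum_{t=1}^\infty t^{-2}\prod_{s=1}^t (1-c/s)^{-1} \leq e^{2c}\sum_{t=1}^\infty t^{2c-2} = e^{2c}\zeta(2(1-c))$, where the series converges precisely because $c < 1/2$ forces $2(1-c) > 1$.

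To prove the product bound I would take logarithms and peel off the $s=1$ factor, which carries the constant prefactor, from the tail $s \geq 2$, which I expect to telescope. The $s=1$ term is $-\log(1-c)$, and the elementary inequality $-\log(1-x) \leq \frac{x}{1-x}$ gives $-\log(1-c) \leq \frac{c}{1-c} \leq 2c$ for $c < 1/2$, supplying the factor $e^{2c}$. For the tail, the crux is the pointwise inequality $\frac{s}{s-c} \leq \bigl(\frac{s}{s-1}\bigr)^{2c}$, equivalently $\bigl(1-\tfrac1s\bigr)^{2c} \leq 1 - \tfrac{c}{s}$. This I would obtain from Bernoulli's inequality $(1+y)^r \leq 1+ry$ (valid for $0 \leq r \leq 1$, $y \geq -1$) applied with $y=-1/s$ and $r=2c$, which yields $\bigl(1-\tfrac1s\bigr)^{2c} \leq 1 - \tfrac{2c}{s} \leq 1-\tfrac{c}{s}$. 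Taking logs and summing over $s=2,\dots,t$ telescopes via $\sum_{s=2}^t\bigl(\log s - \log(s-1)\bigr) = \log t$ to give $\sum_{s=2}^t \log\frac{s}{s-c} \leq 2c\log t$, i.e.\ the factor $t^{2c}$.

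The constraint $0 < c < 1/2$ enters in exactly two places, and noticing this is the main conceptual (rather than technical) point: it is what makes $2c \leq 1$ so that Bernoulli's inequality applies with exponent $2c$, and it is what guarantees convergence of $\zeta(2(1-c))$ on the right. There is no real obstacle beyond selecting the right elementary inequality; the slack in weakening $1-2c/s$ to $1-c/s$ is harmless and is precisely what makes the telescoping land on the clean exponent $2c$ matching the stated bound. Once the telescoping identity $\log t = \sum_{s\geq 2}\log\frac{s}{s-1}$ is in view, the remaining steps are routine.
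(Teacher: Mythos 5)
Your proof is correct. It shares the paper's overall strategy---both arguments reduce the lemma to the uniform termwise bound $\prod_{s=1}^t (1-c/s)^{-1} \leq e^{2c}\,t^{2c}$ and then sum against $1/t^2$ to obtain $e^{2c}\zeta(2(1-c))$---but you establish that product bound by a genuinely different elementary route. The paper bounds each factor as $(1-c/s)^{-1} \leq 1+2c/s \leq e^{2c/s}$ (valid since $c/s \leq c < 1/2$), so the product becomes $e^{2cH_t}$ with $H_t$ the harmonic number, and then invokes $H_t \leq \ln t + 1$ to extract $e^{2c}t^{2c}$. You instead peel off the $s=1$ factor, which supplies the constant $e^{2c}$ via $-\log(1-c) \leq c/(1-c) \leq 2c$, and handle the tail $s\geq 2$ by Bernoulli's inequality $(1-1/s)^{2c} \leq 1-2c/s \leq 1-c/s$ (requiring $2c\leq 1$, which is exactly where $c<1/2$ enters, mirroring its role in the paper's factor estimate), after which $\prod_{s=2}^t \bigl(s/(s-1)\bigr)^{2c}$ telescopes exactly to $t^{2c}$. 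Both derivations use $c<1/2$ twice (once for the per-factor inequality, once for convergence of $\zeta(2(1-c))$); yours has the mild aesthetic advantage that the $t^{2c}$ growth comes out with constant exactly $1$ by telescoping, with all slack isolated in the $s=1$ term, whereas the paper's is a single chain of inequalities uniform in $s$ that routes the constant through the harmonic-number bound.
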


\begin{proof}
  For  $c < 1/2$, we have
  \begin{align}
    \prod_{s=1}^t\frac{1}{1-\frac{c}{s}}
    &\leq \prod_s\left(1+\frac{2c}{s}\right)
    \leq \prod_s e^{2c/s}
    = e^{2c H_t}
    \leq e^{2c(\ln t + 1)}
    = e^{2c} t^{2c},
  \end{align}
  where $H_t$ is the $t$'th Harmonic number and we have used $H_t < \ln t + 1$.
  Thus
  \begin{equation}
    \sum_{t=1}^\infty \frac{1}{t^2}\prod_{s=t+1}^n\frac{1}{1-\frac{c}{s}}
    \leq e^{2c}\sum_{t=1}^\infty \frac{1}{t^{2(1-c)}}
    = e^{2c}\zeta\left(2(1-c)\right),
\end{equation}
  as claimed.
\end{proof}

We will also need the following standard fact from analysis (see e.g.~\cite{Leonard_lecture_notes}).
\begin{theorem}\label{convergent_product}
  If $\epsilon_t\geq 0$, then $\prod_{t=1}^\infty (1-\epsilon_t) = \text{const} > 0$ converges iff $\epsilon_t$ is summable (i.e.\ $\lim_{n\to\infty}\sum_{t=1}^n\epsilon_t$ converges).
  Conversely, if $\lim_{n\to\infty}\sum_{t=1}^n\epsilon_t$ diverges, then $\prod_{t=1}^\infty(1-\epsilon_t) = 0$.
\end{theorem}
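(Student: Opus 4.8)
The plan is to prove this classical fact by taking logarithms, thereby reducing the convergence of the infinite product to the convergence of a series of non-negative terms, which I then compare term-by-term with $\sum_t \epsilon_t$. Throughout I work in the relevant regime $0 \le \epsilon_t < 1$ (as in \cref{DQE}), so that every factor $1-\epsilon_t$ lies in $(0,1]$ and the partial products $P_n := \prod_{t=1}^n(1-\epsilon_t)$ form a non-increasing sequence in $(0,1]$. By monotone convergence the limit $P_\infty := \lim_{n\to\infty} P_n$ always exists in $[0,1]$, so the only question is whether $P_\infty > 0$ or $P_\infty = 0$. Passing to logarithms, $\log P_n = -\sum_{t=1}^n a_t$ where $a_t := -\log(1-\epsilon_t) \ge 0$, and so $P_\infty > 0$ holds precisely when $\sum_t a_t$ converges to a finite value.

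The heart of the argument is the pair of elementary bounds, valid for $0 \le x < 1$,
\begin{equation}
  x \;\le\; -\log(1-x) \;=\; \sum_{k=1}^\infty \frac{x^k}{k}, \qquad\text{and}\qquad -\log(1-x) \;\le\; \sum_{k=1}^\infty x^k \;=\; \frac{x}{1-x},
\end{equation}
the first because every term of the series is non-negative, the second because $1/k \le 1$. These sandwich $a_t$ between $\epsilon_t$ and $\epsilon_t/(1-\epsilon_t)$, which is exactly what a limit-comparison with $\sum_t \epsilon_t$ requires.

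For the forward implication, suppose $\sum_t \epsilon_t$ converges. Then $\epsilon_t \to 0$, so $\epsilon_t \le \tfrac12$ for all $t$ beyond some $t_0$, whence $a_t \le \epsilon_t/(1-\epsilon_t) \le 2\epsilon_t$ there; comparison with the convergent tail $\sum_{t > t_0}\epsilon_t$ shows $\sum_t a_t$ converges and hence $P_\infty > 0$. For the converse, suppose $\sum_t \epsilon_t$ diverges. The lower bound $a_t \ge \epsilon_t$ immediately gives $\sum_t a_t = \infty$, so $\log P_n \to -\infty$ and $P_\infty = 0$. Together these establish both the stated equivalence and the \emph{conversely} clause.

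I expect no serious obstacle here, as this is a classical result. The only points requiring a little care are recording the regime $0 \le \epsilon_t < 1$ so that all factors are strictly positive (otherwise a single factor equal to zero would force $P_\infty = 0$ irrespective of summability), and observing that finitely many large initial terms affect neither the convergence of $\sum_t a_t$ nor the positivity of $P_\infty$, so that the tail comparison beyond $t_0$ suffices.
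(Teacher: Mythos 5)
Your proof is correct, and it is worth noting that the paper itself offers no proof of this statement at all: it is quoted as a standard fact from analysis with a citation to lecture notes, so there is no in-paper argument to compare against. Your argument is the standard textbook one — pass to logarithms, observe that the partial products are monotone so the only question is positivity of the limit, and sandwich $-\log(1-x)$ between $x$ and $x/(1-x)$ to run a two-sided comparison with $\sum_t\epsilon_t$ — and every step checks out: the forward direction via $a_t\leq 2\epsilon_t$ for $\epsilon_t\leq\tfrac12$ beyond some $t_0$, and the converse via $a_t\geq\epsilon_t$. You also caught a genuine (if minor) imprecision in the paper's statement: with only $\epsilon_t\geq 0$ assumed, the claim is false as written (take $\epsilon_1=1$ and $\epsilon_t=0$ thereafter: the sum converges but the product vanishes), so the restriction $0\leq\epsilon_t<1$ that you record is not merely bookkeeping but is needed for the equivalence to hold; it is harmless in context, since \cref{DQE} and the application in \cref{decaying_CPT_map} only ever use $0<\epsilon_t<1$.
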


We are now in a position to prove that the DQE \cref{DQE} converges to the ground state unconditionally, for suitable choices of the parameters.

\begin{theorem}\label{decaying_CPT_map}
  Let $H=\sum_i h_i$ be a $k$-local Hamiltonian on a Hilbert space $\C^D$ with ground state projector $\Pi_0$.
  In \cref{DQE}, choose $K$ to be the AGSP from \cref{AGSP_product,AGSP_mixture}, and
  $\epsilon_t = \epsilon/(t-t_1)$, where $t_1$ is the time step at which the last outcome~1 was obtained.\footnote{Here, when we say a 0~outcome was obtained in a given time step $t$, we mean that the $\cE_{i,0}^{(t)}$ outcome was obtained \emph{for all} $i$ at that time step in \cref{DQE}. Conversely, a 1~outcome is obtained if the outcome $\cE_{i,1}^{(t)}$ is obtained for \emph{any} $i$.}

  Choose any initial state $\rho_0$ such that $\tr\Pi_0\rho_0>0$ (e.g.\ $\rho_0=\1/D$), and choose any $\cR$ such that $\cR(\rho)$ has full support if $\rho$ does (e.g.\ $\cR(\rho) = \1/D$).
  Let $\tau$ be one of the stopping rules from \cref{stopped_CPT_map,secretary_CPT_map,expectation_CPT_map} (setting some maximum allowed run-time $t$ in the \cref{stopped_CPT_map} case, and choosing $n$ increasing with $t$).

  Then there is some sufficiently small constant $\epsilon$ such that the state $\rho_\tau$ at the stopping time after running \cref{DQE} for at most $t$ steps satisfies
  \begin{equation}
    \lim_{t\to\infty}\tr(\Pi_0\rho_\tau) = 1.
  \end{equation}
\end{theorem}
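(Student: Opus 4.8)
The plan is to reduce everything to the single, final ``winning'' run of $n$ zeros on which the process stops, and to track the competition between ground- and excited-space weights under the product of weak measurements applied during that run. Since the state is reset by $\cR$ after every $1$~outcome, the schedule $\epsilon_t=\epsilon/(t-t_1)$ means that during the winning run (time steps $t_1+1,\dots,t_1+n$) the $j$-th step applies the AGSP $K_j$ of \cref{AGSP_product} with weak-measurement parameter $\epsilon/j$. By \cref{AGSP_product} each $K_j$ is a $(\Delta_j,\Gamma_j,\epsilon'_j)$-AGSP for the \emph{same} $\Pi_0$, with
\begin{equation}
  \Gamma_j = 1 - \frac{c_\Gamma\epsilon}{j} + O\!\left(\frac{\epsilon^2}{j^2}\right),\quad
  \Delta_j = 1 - \frac{c_\Delta\epsilon}{j} + O\!\left(\frac{\epsilon^2}{j^2}\right),\quad
  \epsilon'_j = O\!\left(\frac{\epsilon^2}{j^2}\right),
\end{equation}
where $c_\Delta-c_\Gamma=(\lambda_1-\lambda_0)/\kappa=\delta/\kappa>0$ is controlled by the (unknown, but strictly positive) spectral gap. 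Writing $K=\prod_{j=n}^1 K_j$, the stopped state is $\rho_\tau=K\rho'K^\dg/\tr(K\rho'K^\dg)$, where $\rho'$ is the resampled state that begins the run (so $\tr(\Pi_0\rho')>0$ by the hypotheses on $\rho_0$ and $\cR$; for global resampling $\rho'=\1/D$ and $\tr(\Pi_0\rho')=N/D$). It then suffices to show $\tr(K\rho'K^\dg(\1-\Pi_0))/\tr(K\rho'K^\dg\Pi_0)\to 0$ as $n\to\infty$.

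Next I would feed these $K_j$ into \cref{Kn_bounds} to obtain
\begin{align}
  \tr(K\rho'K^\dg\Pi_0) &\geq \tr(\rho'\Pi_0)\,P_\Gamma - 2S_\Gamma, \\
  \tr(K\rho'K^\dg(\1-\Pi_0)) &\leq P_\Delta + 2S_\Delta,
\end{align}
where $P_\Gamma=\prod_{j=1}^n\Gamma_j$, $P_\Delta=\prod_{j=1}^n\Delta_j$, $S_\Gamma=\sum_{j=1}^n\epsilon'_j\prod_{s=j+1}^n\Gamma_s$ and $S_\Delta=\sum_{j=1}^n\epsilon'_j\prod_{s=j+1}^n\Delta_s$. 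Dividing numerator and denominator through by $P_\Gamma$ reduces the theorem to controlling the three normalised quantities $P_\Delta/P_\Gamma$, $S_\Gamma/P_\Gamma$, $S_\Delta/P_\Gamma$, after which $\tr(\Pi_0\rho_\tau)\geq\bigl(1 + (P_\Delta/P_\Gamma + 2S_\Delta/P_\Gamma)/(\tr(\rho'\Pi_0) - 2S_\Gamma/P_\Gamma)\bigr)^{-1}$.

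The three facts I would establish are: \textbf{(a)} $P_\Delta/P_\Gamma=\prod_{j=1}^n(\Delta_j/\Gamma_j)\to 0$, because $\Delta_j/\Gamma_j\leq 1-\tfrac12(\delta/\kappa)\epsilon/j$ for small $\epsilon$, so $\sum_j(1-\Delta_j/\Gamma_j)$ diverges and \cref{convergent_product} forces the product to $0$; \textbf{(b)} $S_\Gamma/P_\Gamma=\sum_{j=1}^n\epsilon'_j/\prod_{s=1}^j\Gamma_s$ stays bounded uniformly in $n$ by $O(\epsilon^2)$ — this is exactly the sum \cref{sum_prod} is built for, using $\Gamma_s\geq 1-c/s$ with $c=O(\epsilon)<1/2$ and $\epsilon'_j=O(\epsilon^2/j^2)$, giving the bound $O(\epsilon^2)e^{2c}\zeta(2(1-c))$; and \textbf{(c)} $S_\Delta/P_\Gamma\to 0$ by dominated convergence, since each summand $\epsilon'_j\bigl(\prod_{s=j+1}^n(\Delta_s/\Gamma_s)\bigr)/\prod_{s=1}^j\Gamma_s$ is dominated by the summable sequence from (b) and, for each fixed $j$, tends to $0$ because $\prod_{s=j+1}^n(\Delta_s/\Gamma_s)\to 0$. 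Together with (b), a sufficiently small constant $\epsilon$ keeps $\tr(\rho'\Pi_0)-2S_\Gamma/P_\Gamma$ bounded below by a positive constant, so the displayed lower bound tends to $1$. Finally I would check that each stopping rule of \cref{stopped_CPT_map,secretary_CPT_map,expectation_CPT_map} has stopping run-length $n\to\infty$ as $t\to\infty$ (by choice of $n$ in the first case, and via $n=\Omega(\log_{1/\Gamma}(tN/D))$ in the other two), activating these limits.

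The main obstacle is facts (b)/(c): the error terms produced by \cref{Kn_bounds} do \emph{not} vanish on their own and must be weighed against $P_\Gamma$, which itself decays to $0$. The delicate point is that $P_\Gamma$ decays only like $j^{-c_\Gamma\epsilon}$, so $1/\prod_{s\le j}\Gamma_s$ grows like $j^{c_\Gamma\epsilon}$; convergence of $S_\Gamma/P_\Gamma$ therefore hinges on the AGSP-closeness parameter decaying as fast as $\epsilon'_j=O(\epsilon^2/j^2)$ (so that $\sum_j j^{c_\Gamma\epsilon-2}$ converges), which is precisely why the $1/t^2$ weighting appears in \cref{sum_prod} and why $\epsilon$ must be small enough that $c_\Gamma\epsilon<1/2$. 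Choosing the base constant $\epsilon$ small (it may depend on the Hamiltonian, and for global resampling small enough that $O(\epsilon^2)<N/2D$) is exactly what buys unconditional convergence with no knowledge of the spectral gap.
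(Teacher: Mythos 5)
Your proposal is correct and follows essentially the same route as the paper's proof: reduce to the final run of $n$ zeros (where the schedule gives weak-measurement parameters $\epsilon/j$), apply \cref{AGSP_product} and \cref{Kn_bounds}, normalise by $P_\Gamma=\prod_j\Gamma_j$, control the error sums via \cref{sum_prod} (which is exactly what forces $\epsilon$ small enough that the exponent $c<1/2$), kill $\prod_j(\Delta_j/\Gamma_j)$ via \cref{convergent_product}, and observe that $n\to\infty$ as $t\to\infty$ under each of the stopping rules. The only cosmetic difference is your fact (c): the paper disposes of the $S_\Delta$ term by factoring $S_\Delta/P_\Gamma=(S_\Delta/P_\Delta)(P_\Delta/P_\Gamma)$ and bounding the first factor with \cref{sum_prod}, whereas you invoke dominated convergence; the two arguments are equivalent.
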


\begin{proof}
  The argument is reminiscent of \cref{stopped_CPT_map}.

  Denote $K^{(k)} = \prod_{t=k}^1 K_t$ as before.
  For any state $\rho_0$ at the start of a run of $n$ 0's, the state at the end of the run is
  \begin{equation}
    \rho_n = \frac{K^{(n)}\rho_0{K^{(n)}}^\dg}{\tr(K^{(n)}\rho_0 {K^{(n)}}^\dg)}.
  \end{equation}

  Recall from \cref{AGSP_product} that
  \begin{align}
    \Delta'   &= (1-\epsilon)^{2m-1}\Bigl(1 - (1-2\sqrt{\Delta})\epsilon\Bigr) + O(\epsilon^2),\\
    \Gamma'   &= (1-\epsilon)^{2m-1}\Bigl(1-(1-2\sqrt{\Gamma})\epsilon\Bigr) - O(\epsilon^2),\\
    \epsilon' &= O(\epsilon^2).
  \end{align}
  Using \cref{Kn_bounds},
  \begin{align}
    \tr\Pi_0\rho_n
    &= 1 - \frac{\tr\bigl(K^{(n)}\rho_0{K^{(n)}}^\dg (\1-\Pi_0)\bigr)}
                {\tr\bigl(K^{(n)}\rho_0{K^{(n)}}^\dg \Pi_0\bigr)
                 + \tr\bigl(K^{(n)}\rho_0{K^{(n)}}^\dg (\1-\Pi_0)\bigr)} \\
    &\geq 1 - \frac{(1-\tr\Pi_0\rho_0) \prod_{t=1}^n\Delta'_t + 2\sum_{t=1}^n \epsilon'_t \prod_{s=t+1}^n\Delta'_s}
                   {\tr\Pi_0\rho_0 \prod_{t=1}^n\Gamma'_t - 2\sum_{t=1}^n\epsilon'_t \prod_{s=1}^n\Gamma'_s} \\
    &= 1 - \frac{(1-\tr\Pi_0\rho_0) + 2\sum_{t=1}^n \epsilon'_t \prod_{s=1}^t\frac{1}{\Delta'_s}}
                {\tr\Pi_0\rho_0 - 2\sum_{t=1}^n\epsilon'_t \prod_{s=1}^t\frac{1}{\Gamma'_s}}
           \left(\prod_{t=1}^n\frac{\Delta'_t}{\Gamma'_t}\right) \\
    &= 1 - O\left(
             \frac{(1-\tr\Pi_0\rho_0)
                   + 2 \sum_{t=1}^\infty \frac{\epsilon^2}{t^2}
                       \prod_{s=1}^t\left(1-\frac{\epsilon}{t}\right)^{1-2m}
                                   \left(1-\frac{\lambda_0\epsilon}{\kappa t}\right)^{-1}}
                  {\tr\Pi_0\rho_0
                   + 2 \sum_{t=1}^\infty \frac{\epsilon^2}{t^2}
                       \prod_{s=1}^t\left(1-\frac{\epsilon}{t}\right)^{1-2m}
                                   \left(1-\frac{\lambda_1\epsilon}{\kappa t}\right)^{-1}}
             \prod_{t=1}^n\frac{1-\frac{\epsilon\lambda_1}{\kappa t}}
                              {1-\frac{\epsilon\lambda_0}{\kappa t}}
           \right) \\
    &\geq 1 - O\left(
                \frac{(1-\tr\Pi_0\rho_0)
                      + \sum_{t=1}^\infty \frac{\epsilon^2}{t^2}
                        \prod_{s=1}^t\left(1-\frac{(2m-1+\lambda_0/\kappa)\epsilon}{t}\right)^{-1}}
                     {\tr\Pi_0\rho_0
                      + \sum_{t=1}^\infty \frac{\epsilon^2}{t^2}
                        \prod_{s=1}^t\left(1-\frac{(2m-1+\lambda_1/\kappa)\epsilon}{t}\right)^{-1}}
                \prod_{t=1}^n\left(1 - \frac{(\lambda_1-\lambda_0)\epsilon}
                                           {\kappa t - \lambda_0\epsilon} \right)
              \right) \\
    &\geq 1 - O\left(\frac{(1-\tr\Pi_0\rho_0) + \epsilon^2 e^{2c_0}\zeta(2(1-c_0))}
                          {\tr\Pi_0\rho_0 - \epsilon^2 e^{2c_1}\zeta(2(1-c_1))}
                     \prod_{t=1}^n\left(1 - \frac{(\lambda_1-\lambda_0)\epsilon}
                                                {\kappa t - \lambda_0\epsilon} \right)
              \right)
  \end{align}
  where we have used \cref{sum_prod} in the final line with $c_{0/1} := (2m+1+\lambda_{0/1}/\kappa)\epsilon$ and have assumed $c_{0/1} < 1/2$, i.e.
  \begin{equation}\label{eq:epsilon_bound1}
    \epsilon < (4m+2+2\lambda_1/\kappa)^{-1}.
  \end{equation}

  Now, for
  \begin{equation}\label{eq:epsilon_bound2}
    \epsilon < \sqrt{\frac{\tr(\Pi_0\rho_0)}{2^{2c_1}\zeta(2(1-c_1))}}
  \end{equation}
  the first denominator is lower-bounded by some positive constant, so that
  \begin{equation}\label{eq:decaying_overlap}
    \tr\Pi_0\rho_n
    \geq 1 - O\left(\prod_{t=1}^n\left(1 - \frac{(\lambda_1-\lambda_0)\epsilon}
        {\kappa t - \lambda_0\epsilon} \right) \right).
  \end{equation}
  This holds for $\epsilon$ a sufficiently small constant such that both the conditions on $\epsilon$ from \cref{eq:epsilon_bound1,eq:epsilon_bound2} are satisfied.

  All of the stopping rules from \cref{stopped_CPT_map,secretary_CPT_map,expectation_CPT_map} stop after a run of some number $n$ of 0's, with $n\to\infty$ as the total run-time $t\to\infty$.
  Since $\lim_{n\to\infty}\sum_{k=1}^n \frac{a}{k-b}$ diverges for any constants $a,b>0$, and we have $\lambda_1-\lambda_0 > 0$ by definition (see \cref{AGSP_sum}), \cref{convergent_product} implies that the product in \cref{eq:decaying_overlap} tends to~0 in the limit $n\to\infty$.
  Thus, for sufficiently small constant $\epsilon$, we have $\lim_{t\to\infty}\tr\Pi_0\rho_\tau = \lim_{n\to\infty}\tr\Pi_0\rho_n = 1$, as claimed.
\end{proof}

\clearpage

\section{Resampling strategies}\label{sec:resampling}
In \cref{sec:stopping}, we derived expressions and bounds for the expected ground space overlap and run-time of the DQE \cref{DQE} in the simplest case of global resampling $\cR(\rho) = \1/D$, where the entire state is discarded and replaced by the maximally mixed state after each wrong measurement outcome.
Global resampling allows clean analytical results.
It also has some practical benefits in implementations on near-term quantum computers: it only requires coherence of the quantum state to be maintained during each individual run of consecutive 0's, but not across multiple such runs.

However, discarding the entire state and starting from scratch each time the wrong measurement outcome is obtained is inefficient.
Resampling just the subsystem upon which that measurement acted, is likely to be a more efficient strategy in terms of the overall run-time.
Or just resampling one of the qubits that measurement acted on, or not doing any resampling at all,\footnote{For frustrated Hamiltonians, resampling is not necessarily required to avoid getting stuck in a higher energy state.} or other variations on the resampling strategy.

In this section, I generalise \cref{stopped_CPT_map,stopping_time} to arbitrary resampling strategies $\cR$ in \cref{DQE}.
In the general case, analytical results are expressed in terms of more complex expressions involving the transfer matrices of the CP maps involved, which cannot readily be reduced to upper-bounds on the run-time in terms of the parameters of the AGSP.
However, these analytic results still admit exact numerical calculations of the expected run-time.
\Cref{fig:1D_Heisenberg_resampling} gives an example of these calculations, where it is clearly seen that local resampling performs more efficiently than global resampling.

\subsection{Expected ground space overlap}\label{sec:expected_rho}

\begin{theorem}\label{local_resampling}
  Let $K$ be a $(\Delta,\Gamma,\epsilon)$-AGSP for $\Pi_0$ on a Hilbert space of dimension~$D$, and let $N:=\tr\Pi_0$ be the ground state degeneracy.
  Let $\{\cE_0,\cE_1\}$ be the quantum instrument defined by
  \begin{equation}
    \cE_0(\rho) = K\rho K^\dg, \quad \cE_1(\rho) = \left(1-\tr(K\rho K^\dg)\right) \cR(\rho)
  \end{equation}
  where the resampling map $\cR$ is any CPT map satisfying the condition \mbox{$\tr(\cE_0\circ\cR(\rho)) > 0$}.

  Consider the stopped process whereby we iterate $\{\cE_0,\cE_1\}$, starting from the state $\rho_0$, until we obtain a sequence of $n$ 0's.
  The expected state at the stopping time is
  \begin{equation}\label{eq:local_expected_state}
    \E\kett{\rho_n} = E_0^n W^{-1} \kett{\rho_0}
    \quad\text{where}\quad
    W := \1 - E_1 \frac{\1 - E_0^n}{\1-E_0},
  \end{equation}
  hence satisfies
  \begin{equation}\label{eq:local_expected_overlap}
    \E\tr(\Pi_0\rho_n)
    = \braa{\1} (\Pi_0\ox\Pi_0) E_0^n W^{-1} \kett{\rho_0}
  \end{equation}
\end{theorem}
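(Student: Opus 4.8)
I would build the proof on a standard feature of quantum instruments: along any fixed string of outcomes the normalisation of the conditional state cancels exactly against the probability of that string. Writing $\cE_\xi := \cE_{o_T}\circ\cdots\circ\cE_{o_1}$ for the composition of instrument elements along an outcome string $\xi=(o_1,\dots,o_T)$, the string $\xi$ is observed with probability $\tr(\cE_\xi(\rho_0))$ and leaves the normalised conditional state $\cE_\xi(\rho_0)/\tr(\cE_\xi(\rho_0))$. Hence the expected stopped state is the \emph{unnormalised} sum
\begin{equation}
  \E(\rho_n) = \sum_\xi \tr\bigl(\cE_\xi(\rho_0)\bigr)\,\frac{\cE_\xi(\rho_0)}{\tr\bigl(\cE_\xi(\rho_0)\bigr)} = \sum_\xi \cE_\xi(\rho_0),
\end{equation}
where $\xi$ ranges over exactly those finite strings that first produce a run of $n$ consecutive $0$'s at their last step. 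Vectorising and using $\kett{\cE_b\circ\cE_a(\rho)}=E_bE_a\kett{\rho}$ turns this into $\E\kett{\rho_n}=\bigl(\sum_\xi E_\xi\bigr)\kett{\rho_0}$, so the whole task reduces to summing the transfer matrices $E_\xi$ over all stopping strings.

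I would then exploit the renewal structure of the stopping strings, which is available precisely because the instrument here is time-homogeneous (fixed $K$ and $\cR$). Every stopping string decomposes uniquely as a concatenation of ``failed blocks'', each of the form $0^k1$ with $0\le k\le n-1$, followed by a single terminating ``success block'' $0^n$. In transfer-matrix language (the latest operation sitting on the left) a failed block of length $k$ contributes $E_1E_0^k$ and the success block contributes $E_0^n$, so a string whose failed blocks have lengths $k_1,\dots,k_j$ contributes $E_0^n\,(E_1E_0^{k_j})\cdots(E_1E_0^{k_1})$. Summing over each $k_i\in\{0,\dots,n-1\}$ and then over the number $j\ge 0$ of failed blocks factorises into a matrix geometric series,
\begin{equation}
  \sum_\xi E_\xi = E_0^n\sum_{j=0}^\infty B^j,
  \qquad
  B := \sum_{k=0}^{n-1}E_1E_0^k = E_1\,\frac{\1-E_0^n}{\1-E_0},
\end{equation}
where the last equality uses the finite-sum convention $\frac{\1-E_0^n}{\1-E_0}=\sum_{k=0}^{n-1}E_0^k$, so no invertibility of $\1-E_0$ is needed. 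Provided the series converges this equals $E_0^n(\1-B)^{-1}=E_0^nW^{-1}$, which is precisely \cref{eq:local_expected_state}.

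The step I expect to require the most care is justifying convergence of $\sum_jB^j$, i.e.\ showing that the spectral radius of $B$ is strictly below~$1$ (equivalently that $W=\1-B$ is invertible and the Neumann series is valid). Here $B$ is the transfer matrix of one complete ``failed attempt'': for any state $\rho$ the quantity $\sum_{k=0}^{n-1}\tr(\cE_1\circ\cE_0^k(\rho))$ is the probability that the current run of $0$'s ends before reaching length~$n$. The hypothesis $\tr(\cE_0\circ\cR(\rho))>0$ is what prevents the process from stalling: it guarantees that from a resampled state the probability of a $0$-outcome is positive, and a short induction then gives a uniformly positive probability of completing a full run of $n$ zeros, so each attempt fails with probability bounded away from~$1$. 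This both forces the spectral radius of $B$ below~$1$ and shows the process terminates almost surely, validating the interchange of expectation and infinite sum in the first step. I would make this quantitative by bounding the per-attempt success probability from below in terms of the AGSP parameters $\Gamma,\Delta$ and the resampling map.

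Finally, the overlap formula follows by composing with the appropriate linear functionals. Since $\Pi_0$ is a projector, $\tr(\Pi_0\rho)=\tr(\Pi_0\rho\Pi_0)=\braa{\1}(\Pi_0\ox\Pi_0)\kett{\rho}$, using $\Pi_0^2=\Pi_0$, that $\Pi_0$ is real-symmetric (so $\bar\Pi_0=\Pi_0$), and the vectorisation identity together with $\braa{\1}\kett{\sigma}=\tr\sigma$. Taking expectations and substituting \cref{eq:local_expected_state} yields
\begin{equation}
  \E\tr(\Pi_0\rho_n)=\braa{\1}(\Pi_0\ox\Pi_0)\,E_0^nW^{-1}\kett{\rho_0},
\end{equation}
which is \cref{eq:local_expected_overlap}.
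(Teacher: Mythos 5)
Your proof is correct and follows essentially the same route as the paper's: the same cancellation of the conditional-state normalisation against the outcome probabilities, the same unique decomposition of stopping strings into failed blocks $0^k1$ ($0\le k\le n-1$) followed by a terminating $0^n$, and the same geometric-series resummation to $E_0^n W^{-1}$, with the overlap formula obtained by vectorisation. The convergence point you flag (spectral radius of $E_1\sum_{j=0}^{n-1}E_0^j$ strictly below $1$) is indeed not argued inside the paper's proof of this theorem, but it is established immediately afterwards in \cref{expected_normalisation} via the Perron--Frobenius-type result of \cref{spectral_radius_eigenvalue}, using exactly the hypothesis $\tr(\cE_0\circ\cR(\rho))>0$ as you anticipated.
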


\begin{proof}
  Since the process stops on the first run of $n$ 0's, the sequence of measurement outcomes that led to $\rho_n$ could be any sequence of 0's and 1's that does not contain a run of $n$ or more zeros, followed by a 1 and a final run of $n$ 0's.
  Let $\mathcal{W}^{\langle n\rangle}$ denote the collection of binary strings starting with a 1 that do \emph{not} contain $n$ consecutive 0's.
  Note that, as a formal sum,
  \begin{equation}
    \sum_{\mathclap{s\in\mathcal{W}^{\langle k\rangle}}} s = \sum_{k=0}^\infty\biggl(1\sum_{j=0}^{n-1}0^j\biggr)^k.
  \end{equation}
  Therefore, letting the $i$'th bit $s_i$ of bit-string $s\in\{0,1\}^*$ denote the outcome of the $i$'th measurement of the quantum instrument, we have
  \begin{align}\label{eq:expected_rho}
    \E\rho_n
    &= \sum_{\mathclap{s\in \mathcal{W}^{\langle n\rangle}}} \Pr(0^n s) \,
       \frac{\cE_0^n\circ\cE_{s_1}\circ\cE_{s_2}\circ\cdots\circ\cE_{s_{\abs{s}}}(\rho_0)}
       {\tr(\cE_0^n\circ\cE_{s_1}\circ\cE_{s_2}\circ\cdots\circ\cE_{s_{\abs{s}}}(\rho_0))}\\
    &= \sum_{\mathclap{s\in \mathcal{W}^{\langle n\rangle}}}
       \cE_0^n\circ\cE_{s_1}\circ\cE_{s_2}\circ\cdots\circ\cE_{s_{\abs{s}}}(\rho_0) \\
    &= \cE_0^n \circ \sum_{k=0}^\infty \biggl(\cE_1 \circ\sum_{j=0}^{n-1}\cE_0^j\biggr)^k (\rho_0).
    \intertext{So}
    \E\kett{\rho_n}
    &= E_0^n \sum_{k=0}^\infty \biggl(E_1 \sum_{j=0}^{n-1}E_0^j\biggr)^k \kett{\rho_0}
    = E_0^n \left(\1 - E_1 \frac{\1-E_0^n}{\1-E_0} \right)^{-1} \kett{\rho_0},
  \end{align}
  where $E_{0,1}$ are the transfer matrices corresponding to $\cE_{0,1}$.

  By linearity of expectation, $\E\tr(\Pi_0\rho_n) = \tr(\Pi_0\,\E\rho_n)$ and the \namecref{local_resampling} follows.
\end{proof}

We also can rewrite \cref{eq:local_expected_overlap} in a form that is more convenient for numerical calculations (as it doesn't require explicitly computing matrix inverses, and prioritises matrix-vector multiplications over matrix-matrix multiplications).

\begin{corollary}
  The state $\ket{\rho_n}$ at the stopping time in \cref{local_resampling} equivalently satisfies
  \begin{equation}
    \E\tr(\Pi_0\rho_n) = \braa{\1} (\Pi_0\ox\Pi_0) E_0^n (\1-E_0)(\1-E_0-E_1+E_1E_0^n)^{-1} \kett{\rho_0}.
  \end{equation}
\end{corollary}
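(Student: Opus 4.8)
The plan is to obtain this corollary from \cref{local_resampling} purely by algebraically rewriting the inverse factor $W^{-1}$, where $W = \1 - E_1\frac{\1-E_0^n}{\1-E_0}$. The goal is to establish that $E_0^n W^{-1} = E_0^n(\1-E_0)(\1-E_0-E_1+E_1E_0^n)^{-1}$, after which the claim follows immediately upon substitution into \cref{eq:local_expected_overlap}.

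First I would unpack the geometric factor using the convention of the Preliminaries, namely $\frac{\1-E_0^n}{\1-E_0} = \sum_{j=0}^{n-1}E_0^j$, so that $W = \1 - E_1\sum_{j=0}^{n-1}E_0^j$. Writing this factor as an explicit polynomial in $E_0$ is the point that makes the manipulation clean: it means the rewriting never actually inverts $\1-E_0$, and only $W$ (whose inverse already appears in \cref{local_resampling}) and the final matrix $\1-E_0-E_1+E_1E_0^n$ need be invertible.

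The one genuine computation is to right-multiply $W$ by $(\1-E_0)$:
\begin{equation}
  W(\1-E_0) = (\1-E_0) - E_1\sum_{j=0}^{n-1}E_0^j(\1-E_0).
\end{equation}
The sum telescopes, $\sum_{j=0}^{n-1}(E_0^j - E_0^{j+1}) = \1-E_0^n$, yielding $W(\1-E_0) = \1-E_0-E_1+E_1E_0^n =: M$. I would then rearrange: left-multiplying $W(\1-E_0)=M$ by $W^{-1}$ and right-multiplying by $M^{-1}$ gives $W^{-1} = (\1-E_0)M^{-1}$, whence $E_0^n W^{-1} = E_0^n(\1-E_0)(\1-E_0-E_1+E_1E_0^n)^{-1}$. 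Inserting this into the expression for $\E\tr(\Pi_0\rho_n)$ from \cref{local_resampling} produces the stated form.

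There is essentially no hard step here, this being an immediate restatement of the theorem; the only points needing care are (i) keeping the operator ordering correct, since $E_1$ need not commute with $E_0$, whereas the geometric factor $\sum_{j=0}^{n-1}E_0^j$ and $(\1-E_0)$ are both polynomials in $E_0$ and hence commute with each other, and (ii) noting that invertibility of $M$ (equivalently of $W$) is all that is required, so the rewriting remains valid even when $\1-E_0$ is singular. The payoff is exactly as the corollary advertises: a form requiring a single linear solve rather than an explicit matrix inverse, and built from matrix-vector products, which is better suited to the numerical evaluations of \cref{sec:resampling}.
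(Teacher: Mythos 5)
Your core derivation is correct and is exactly the algebra the paper intends: the paper states this corollary without proof as an immediate consequence of \cref{local_resampling}, and your route---read the geometric factor as $\sum_{j=0}^{n-1}E_0^j$, telescope $W(\1-E_0) = \1-E_0-E_1+E_1E_0^n =: M$, and conclude $W^{-1} = (\1-E_0)M^{-1}$---is the natural one.

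However, your closing remark (ii) is wrong as stated. Invertibility of $M$ is \emph{not} equivalent to invertibility of $W$: from your own identity $M = W(\1-E_0)$ one has $\det M = \det W \cdot \det(\1-E_0)$, so $M$ is invertible iff \emph{both} $W$ and $\1-E_0$ are. Consequently the rewritten formula does \emph{not} remain valid when $\1-E_0$ is singular; in that case $M^{-1}$ simply does not exist, and only the original form of \cref{local_resampling} (with $\frac{\1-E_0^n}{\1-E_0}$ read as the polynomial $\sum_{j=0}^{n-1}E_0^j$) survives. This is not a vacuous caveat: it occurs, e.g., when the AGSP $K$ has eigenvalue $1$ (such as $K=\Pi$ exactly a projector, $\Gamma=1$), where $E_0 = K\ox\bar{K}$ has a unit eigenvalue while $W$ can still be invertible. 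So the corollary's rewritten form carries a strictly stronger implicit hypothesis than the theorem's, rather than an equivalent one. This does not affect the validity of your main derivation, which assumes both inverses exist, but the claimed equivalence should be deleted.
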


\begin{lemma}\label{expected_normalisation}
  The expected density matrix $\E\rho_n$ from \cref{local_resampling} is normalised,\linebreak i.e.\ $\tr(\E\rho_n) = \braakett{\1|\E\rho_n} = 1$.
\end{lemma}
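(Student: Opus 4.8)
The plan is to verify that the expected stopping-time state $\E\rho_n$ from \cref{local_resampling} has unit trace, by directly evaluating $\braakett{\1|\E\rho_n}$ using the explicit formula \cref{eq:local_expected_state}. The key observation is that the trace functional corresponds to the vector $\braa{\1}$ in the transfer-matrix formalism, and that $\braa{\1}$ interacts cleanly with the transfer matrices $E_0, E_1$ because the underlying maps $\cE_0, \cE_1$ are the two components of a trace-preserving instrument.

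First I would record the fundamental trace identities in the vectorised picture. Since $\{\cE_0,\cE_1\}$ is a quantum instrument, $\cE_0 + \cE_1$ is CPT, so $\braa{\1}(E_0 + E_1) = \braa{\1}$, i.e.\ $\braa{\1}E_1 = \braa{\1}(\1 - E_0)$. Also, for the initial state $\tr\rho_0 = \braakett{\1|\rho_0} = 1$. These two facts are the only structural inputs needed. Applying the trace functional to \cref{eq:local_expected_state} gives
\begin{equation}
  \braakett{\1|\E\rho_n} = \braa{\1} E_0^n W^{-1} \kett{\rho_0},
  \qquad W = \1 - E_1\frac{\1 - E_0^n}{\1 - E_0}.
\end{equation}

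Next I would push $\braa{\1}$ through $E_0^n W^{-1}$. The cleanest route is to show $\braa{\1} E_0^n W^{-1} = \braa{\1}$, equivalently $\braa{\1} E_0^n = \braa{\1} W$. Expanding the right-hand side and using $\braa{\1}E_1 = \braa{\1}(\1 - E_0)$,
\begin{align}
  \braa{\1} W
  &= \braa{\1} - \braa{\1} E_1 \frac{\1 - E_0^n}{\1 - E_0} \\
  &= \braa{\1} - \braa{\1}(\1 - E_0)\frac{\1 - E_0^n}{\1 - E_0} \\
  &= \braa{\1} - \braa{\1}(\1 - E_0^n)
   = \braa{\1} E_0^n,
\end{align}
exactly as required. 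Here the factor $(\1 - E_0)$ cancels against the geometric-series resolvent $(\1 - E_0^n)/(\1 - E_0)$, which commute as polynomials in $E_0$. Combining, $\braa{\1} E_0^n W^{-1} = \braa{\1}$, and hence $\braakett{\1|\E\rho_n} = \braakett{\1|\rho_0} = 1$.

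The only subtlety to watch for is the invertibility of $\1 - E_0$ (needed to make sense of the resolvent) and of $W$ (implicit in writing $W^{-1}$). The former is guaranteed because $E_0$ is the transfer matrix of the strictly trace-decreasing map $\cE_0(\rho) = K\rho K^\dg$ with $\|K\| < 1$ on the relevant support, so its spectral radius is below $1$; the latter is exactly the well-definedness of the geometric sum $\sum_k (E_1\sum_j E_0^j)^k$ already established in the proof of \cref{local_resampling}. Since \cref{local_resampling} assumes this formula is well-defined, I may take both inverses as given and the identity above is a purely algebraic manipulation. I do not expect any genuine obstacle — the result is a consistency check confirming that the expectation over stopping histories preserves normalisation, and the entire argument reduces to the single instrument identity $\braa{\1}(E_0 + E_1) = \braa{\1}$ together with the telescoping cancellation of the resolvent.
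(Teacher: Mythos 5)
Your algebraic route is correct as algebra, and it is genuinely different from the paper's proof. The paper does not touch the formula $E_0^nW^{-1}\kett{\rho_0}$ at all: it goes back to the sum-over-histories definition of $\E\rho_n$, writes $\tr(\E\rho_n) = 1 - \tr(\cE^\infty(\rho_0))$ with $\cE := \cE_1\circ\sum_{j=0}^{n-1}\cE_0^j$, and then proves $\cE^\infty = 0$ by showing its spectral radius is strictly below~$1$, using the Perron--Frobenius-type result \cref{spectral_radius_eigenvalue} together with the hypothesis $\tr(\cE_0\circ\cR(\rho))>0$. Your telescoping step $\braa{\1}W = \braa{\1} - \braa{\1}(\1-E_0)\sum_{j=0}^{n-1}E_0^j = \braa{\1}E_0^n$ is valid, and in fact needs no invertibility of $\1-E_0$ at all, since $\frac{\1-E_0^n}{\1-E_0}$ is by construction the polynomial $\sum_{j=0}^{n-1}E_0^j$; your worry there is moot.

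The genuine gap is exactly the point you wave through: the invertibility of $W$. You claim it is ``already established in the proof of \cref{local_resampling}'', but it is not --- that proof formally resums the geometric series $\sum_k\bigl(E_1\sum_{j=0}^{n-1}E_0^j\bigr)^k$ into $W^{-1}$ without proving convergence. Establishing that convergence (equivalently, that the spectral radius of $E_1\sum_{j=0}^{n-1}E_0^j$ is strictly less than~$1$, equivalently that $W$ is invertible) is precisely the analytic content of \cref{expected_normalisation}, and it is where the hypothesis $\tr(\cE_0\circ\cR(\rho))>0$ gets used. Your argument never invokes that hypothesis, which is the tell-tale sign: if $\cR$ could reset into the kernel of $K$, the process could produce 1~outcomes forever, the stopping probability would be strictly less than~$1$ (so $\E\rho_n$ is genuinely sub-normalised), and $W$ would be singular --- your identity $\braa{\1}W = \braa{\1}E_0^n$ still holds in that case but says nothing. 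So what you have proved is the conditional statement ``if the series converges to $W^{-1}$, then $\braakett{\1|\E\rho_n} = \braakett{\1|\rho_0} = 1$'', a consistency check on the formula, while the lemma's actual content --- that the stopped process terminates almost surely --- is assumed rather than proven. To close the gap you would need to supply the spectral-radius argument (or an equivalent termination argument), which is essentially the paper's proof.
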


\begin{proof}
  To see this, note that in \cref{eq:expected_rho} we are summing over measurement outcomes which eventually end in a run of $n$ 0's.
  The probability of this is:
  \begin{equation}
    \Pr(\text{eventually obtain a run of } n \text{ 0's})
    = 1-\Pr(\text{never obtaining a run of } n \text{ 0's}).
  \end{equation}
  Thus, denoting $\rho_s := \cE_{s_{\abs{s}}}\circ\cdots\circ\cE_{s_2}\circ\cE_{s_1}(\rho_0)$ where $s$ is a binary string, we have
  \begin{align}
    \tr(\E\rho_n)
    &= \tr\left[\sum_{s\in \mathcal{W}^{\langle n\rangle}} \rho_{0^n s}\right]
     = 1 - \tr\left[\sum_{s\in(\mathcal{W}^{\langle n\rangle})^\infty}\mspace{-10mu} \rho_s\right] \\
    &= 1 - \tr\biggl[\biggl(\cE_1\circ\sum_{j=0}^{n-1}\cE_0^j\biggr)^\infty (\rho_0)\biggr] \\
    &=: 1 - \tr\left(\cE^\infty(\rho_0)\right) \label{eq:trace_E(rho_n)}
  \end{align}
  where we denote the CP trace-non-increasing map $\cE := \cE_1\circ\sum_{j=0}^{n-1}\cE_0^j$.

  Note that, for any state $\rho$, we have
  \begin{align}
    \tr\left(\cE_1\circ\cE_0^k(\rho)\right)
    &= \tr\left(\cE_1\circ\cE_0^k(\rho) + \cE_0\circ\cE_0^k(\rho) - \cE_0^{k+1}(\rho)\right) \\
    &= \tr\left((\cE_0+\cE_1)\circ\cE_0^k(\rho) - \cE_0^{k+1}\right) \\
    &= \tr\left(\cE_0^k(\rho) - \cE_0^{k+1}\right), \label{eq:E1_E0}
  \end{align}
  where the final equality follows from the fact that $\cE_0+\cE_1$ is trace-preserving.
  Applying \cref{eq:E1_E0} repeatedly, we obtain
  \begin{align}
    \tr(\cE(\rho))
    &= \tr\biggl(\cE_1\circ\sum_{j=0}^{n-1}\cE_0^j(\rho)\biggr) \\
    &= \tr\biggl(\cE_1\circ\sum_{j=0}^{n-2}\cE_0^j(\rho) + \cE_0^{n-1}(\rho) - \cE_0^n(\rho)\biggr) \\
    &= \tr\biggl(\cE_1\circ\sum_{j=0}^{n-3}\cE_0^j(\rho) + \cE_0^{n-2}(\rho) - \cE_0^n(\rho)\biggr) \\
    &= \tr\biggl(\cE_1(\rho) - \cE_0^n(\rho)\biggr) \\
    &= 1 - \tr\left(\cE_0^n(\rho)\right).
  \end{align}

  If $\tr(\cE_0^n(\rho)) = 1$ (hence $\cE_0(\rho) = 1$), then the process of \cref{local_resampling} will immediately produce a string of $n$ 0's with probability~1, and the claim in the \namecref{expected_normalisation} is trivially satisfied. Therefore, assume that $\tr(\cE_0(\rho)) < 1$.

  If $\tr(\cE_0(\rho)) = 0$, then the process must produce a 1~outcome initially:
  \begin{equation}
    \cE(\rho) = \cE_1\circ\sum_{j=0}^{n-1}\cE_0^j(\rho) = \cE_1(\rho).
  \end{equation}
  In this case, $\cE_0\circ\cE(\rho) = \tr(\cE_0\circ\cE_1(\rho)) > 0$ by the condition on $\cR$ in \cref{local_resampling}.
  Otherwise,
  \begin{equation}
    \tr(\cE(\rho)) = 1-\tr\left(\cE_0^n(\rho)\right) < 1.
  \end{equation}

  In either case, we have that $\tr(\cE^n(\rho)) < 1$ (for $n\geq 2$ in the case $\tr(\cE_0(\rho)) = 0$).
  If the spectral radius of $\cE$ were $\geq 1$, this would contradict \cref{spectral_radius_eigenvalue}.
  Thus $\cE$ must have spectral radius~$<1$, hence $\cE^\infty = 0$.
  Therefore, from \cref{eq:trace_E(rho_n)}, we have
  \begin{equation}
    \tr(\E\rho_n) = 1 - \tr\left(\cE^\infty(\rho_0)\right) = 1,
  \end{equation}
  as claimed.
\end{proof}

\begin{corollary}\label{run-time_local-global}
  For global resampling $\cR(\rho) = \1/D$ with initial state $\rho_0=\1/D$, \cref{local_resampling} reduces (as it should) to \cref{expected_stopped_state}.
\end{corollary}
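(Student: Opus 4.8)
The plan is to simply unwind the abstract transfer-matrix formula of \cref{local_resampling} in the special case $\cR(\rho)=\1/D$, $\rho_0=\1/D$, and check that it collapses to the expression in \cref{expected_stopped_state}. The first step is to record how the relevant transfer matrices act. Since $K$ is Hermitian, $\cE_0(\rho)=K\rho K^\dg$ has transfer matrix $E_0 = K\ox\bar K$ acting as $E_0^n\kett{\1}=\kett{K^n\1 K^n}=\kett{K^{2n}}$, and more generally $E_0^n\kett{\rho_0}=\kett{K^n\rho_0 K^n}$. For \emph{global} resampling, $\cE_1(\rho)=(1-\tr(K^2\rho))\,\1/D$ (using $\tr(K\rho K^\dg)=\tr(K^2\rho)$), so its transfer matrix is the rank-one operator $E_1=\tfrac1D\kett{\1}\bigl(\braa{\1}-\braa{K^2}\bigr)$, where $\braa{A}$ denotes the functional $\kett{\rho}\mapsto\tr(A^\dg\rho)$.

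The key observation is that, writing $S:=\tfrac{\1-E_0^n}{\1-E_0}=\sum_{j=0}^{n-1}E_0^j$, the product $E_1 S=\kett{\1}\braa{v}$ with $\braa{v}:=\tfrac1D(\braa{\1}-\braa{K^2})S$ is still rank one. Hence $W=\1-E_1S$ can be inverted by the Sherman--Morrison formula,
\begin{equation}
  W^{-1}=\1+\frac{\kett{\1}\braa{v}}{1-\braa{v}\kett{\1}},
\end{equation}
reducing everything to the single scalar $\braa{v}\kett{\1}$. This is the only genuine computation: using $E_0^j\kett{\1}=\kett{K^{2j}}$ one gets $\braa{\1}S\kett{\1}=\sum_{j=0}^{n-1}\tr(K^{2j})$ and $\braa{K^2}S\kett{\1}=\sum_{j=1}^{n}\tr(K^{2j})$, whose difference telescopes to $\tr\1-\tr(K^{2n})=D-\tr(K^{2n})$. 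Thus $\braa{v}\kett{\1}=1-\tr(K^{2n})/D$ and $1-\braa{v}\kett{\1}=\tr(K^{2n})/D$.

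Finally I would substitute $\kett{\rho_0}=\tfrac1D\kett{\1}$ into $\E\kett{\rho_n}=E_0^nW^{-1}\kett{\rho_0}$. The identity term contributes $\tfrac1D\kett{K^{2n}}$, while the Sherman--Morrison correction contributes $\kett{K^{2n}}\,\tfrac{1-\tr(K^{2n})/D}{\tr(K^{2n})}$; the two $\tfrac1D$ pieces cancel, leaving $\E\kett{\rho_n}=\kett{K^{2n}}/\tr(K^{2n})$, i.e.\ $\E\rho_n=K^{2n}/\tr(K^{2n})$, exactly \cref{expected_stopped_state}. There is no real obstacle here; the statement is a direct specialisation. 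The only thing to watch is the bookkeeping of vectorisation conventions --- that $\braa{\1}$ and $\braa{K^2}$ implement the correct trace functionals, and that Hermiticity of $K$ is precisely what lets $E_0^n\kett{\1}$ collapse to $\kett{K^{2n}}$ --- since a stray transpose or conjugate there would break the telescoping.
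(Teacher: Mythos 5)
Your proof is correct and takes essentially the same approach as the paper's: both exploit the fact that, for global resampling, $E_1$ (and hence $E_1\frac{\1-E_0^n}{\1-E_0}$) is rank one, so that $W$ can be inverted in closed form --- the paper by summing the Neumann series $\sum_{k}\bigl(E_1\frac{\1-E_0^n}{\1-E_0}\bigr)^k$ applied to $\kett{\1/D}$, which collapses to a scalar geometric series, and you via the Sherman--Morrison formula, which is precisely the closed form of that same series for a rank-one perturbation. Your telescoping evaluation of $\braa{v}\kett{\1}$ is the same algebra as the paper's cancellation of $(\1-E_0)$ against the denominator of $\frac{\1-E_0^n}{\1-E_0}$, so the two arguments differ only in presentation.
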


\begin{proof}
  Recall that for $\cR(\rho)=\1/D$ we have $\cE_1(\rho) = \bigl(1-\tr\cE_0(\rho)\bigr) \1/D$, or equivalently
  \begin{equation}
    E_1 = \kettbraa{\tfrac{\1}{D}}{\1}(\1-E_0).
  \end{equation}
  Thus $W^{-1}$ becomes
  \begin{align}
    W^{-1}\kett{\tfrac{\1}{D}}
    &:= \left(\1 - E_1\frac{\1-E_0^n}{\1-E_0}\right)^{-1} \kett{\tfrac{\1}{D}} \\
    &= \sum_{k=0}^\infty\left(E_1\frac{\1-E_0^n}{\1-E_0}\right)^k \kett{\tfrac{\1}{D}} \\
    &= \sum_{k=0}^\infty\left(\kettbraa{\tfrac{\1}{D}}{\1}(\1-E_0)\frac{\1-E_0^n}{\1-E_0}\right)^k
       \kett{\tfrac{\1}{D}} \\
    &= \sum_{k=0}^\infty\Bigl(\kettbraa{\tfrac{\1}{D}}{\1}(\1-E_0^n)\Bigr)^k \kett{\tfrac{\1}{D}} \\
    &= \kett{\tfrac{\1}{D}} \sum_{k=0}^\infty\Bigl(\braa{\1}(\1-E_0^n)\kett{\tfrac{\1}{D}} \Bigr)^k \\
    &= \kett{\tfrac{\1}{D}} \sum_{k=0}^\infty\Bigl(1 - \tfrac{1}{D}\tr(K^{2n})\Bigr)^k \\
    &= \frac{1}{\tr(K^{2n})} \kett{\1}. \label{eq:Winv_global}
  \end{align}
  The \namecref{run-time_local-global} follows by substituting this identity in \cref{eq:local_expected_state} of \cref{local_resampling} and recalling that $\cE_0^n(\1) = K^{2n}$.
\end{proof}

\subsection{Expected run-time}
To derive an expression for the expected stopping time, we will need a few preliminary results.
Since \cref{expected_normalisation} holds for any initial state $\rho_0$, it can equivalently be restated as:
\begin{corollary}\label{E0Wi_TP}
  The map
  \begin{equation}
    \cN := \cE_0^n \circ \sum_{k=0}^\infty \biggl(\cE_1 \circ\sum_{j=0}^{n-1}\cE_0^j\biggr)^k,
  \end{equation}
  whose corresponding transfer matrix is $E_0^n W^{-1}$, is trace-preserving.
\end{corollary}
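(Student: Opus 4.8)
The plan is to recognise $\cN$ as exactly the linear map already appearing in \cref{local_resampling}: its transfer matrix is $E_0^n W^{-1}$, and by \cref{eq:local_expected_state} its action on any initial density matrix $\rho_0$ is $\cN(\rho_0) = \E\rho_n$, the (genuine, normalised) expected state at the stopping time. The analytic heart of the claim has therefore already been established in \cref{expected_normalisation}, which shows $\tr(\cN(\rho_0)) = \tr(\E\rho_n) = 1$ for every density matrix $\rho_0$. The nontrivial input there was that the trace-non-increasing map $\cE = \cE_1\circ\sum_{j=0}^{n-1}\cE_0^j$ has spectral radius strictly below~$1$, so that $\cE^\infty = 0$ and the geometric series defining $\cN$ is well-behaved.

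First I would restate this as the identity $\tr(\cN(\rho_0)) = \tr(\rho_0)$ valid on the set of all density matrices, which holds simply because every density matrix has unit trace. The only remaining step is to upgrade this from density matrices to arbitrary operators, since ``trace-preserving'' means $\tr(\cN(X)) = \tr(X)$ for all operators $X$, not merely for states $\rho_0$.

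For this I would invoke linearity. Both $X \mapsto \tr(\cN(X))$ and $X \mapsto \tr(X)$ are linear functionals on the $D^2$-dimensional complex vector space of operators, and by the above they agree on every density matrix. Since the density matrices linearly span this space---for instance the diagonal projectors $\proj{i}$, together with the rank-one states proportional to $(\ket{i}+\ket{j})(\bra{i}+\bra{j})$ and $(\ket{i}+\mathrm{i}\ket{j})(\bra{i}-\mathrm{i}\ket{j})$, furnish a basis---two linear functionals that coincide on a spanning set coincide everywhere. Hence $\tr(\cN(X)) = \tr(X)$ for all $X$, which is precisely the statement that $\cN$ is trace-preserving.

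The main obstacle---controlling the infinite sum defining $\cN$ through the spectral-radius bound---was already overcome in the proof of \cref{expected_normalisation}; what remains here is only the elementary observation that trace-preservation on the spanning set of density matrices extends by linearity to the whole operator space. I expect no difficulty in this final extension, so the content of the \namecref{E0Wi_TP} really is just the restatement asserted in the text preceding it.
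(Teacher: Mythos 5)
Your proposal is correct and follows essentially the same route as the paper, which likewise presents this \namecref{E0Wi_TP} as a direct restatement of \cref{expected_normalisation} applied to an arbitrary initial state $\rho_0$ (the paper does not even write a separate proof). The only addition you make is the explicit, elementary extension from density matrices to all operators by linearity, which the paper leaves implicit.
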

Intuitively, this is obvious: it simply says that if we allow arbitrarily many attempts, we are guaranteed to eventually get a run of $n$ 0's.

Similarly, the following result expresses that we are guaranteed to eventually get a 1~outcome:
\begin{lemma}\label{E1E0i_TP}
  The map $\cO := \cE_1\circ\sum_{k=0}^\infty\cE_0^k$, whose corresponding transfer matrix is $E_1(\1-E_0)^{-1}$, is trace-preserving.
\end{lemma}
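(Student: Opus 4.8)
The plan is to mirror the telescoping computation already used in the proof of \cref{expected_normalisation}, now applied to the full geometric series $\sum_{k=0}^\infty\cE_0^k$ rather than the truncated one. Concretely, for an arbitrary state $\rho$ I would expand the sum by linearity of the trace,
\[
  \tr(\cO(\rho)) = \tr\Bigl(\cE_1\circ\sum_{k=0}^\infty\cE_0^k(\rho)\Bigr)
                 = \sum_{k=0}^\infty\tr\bigl(\cE_1\circ\cE_0^k(\rho)\bigr).
\]
The key ingredient is the identity \cref{eq:E1_E0}, namely $\tr(\cE_1\circ\cE_0^k(\rho)) = \tr(\cE_0^k(\rho)) - \tr(\cE_0^{k+1}(\rho))$, which follows immediately from $\cE_0+\cE_1$ being trace-preserving. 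Substituting it into the sum makes the series telescope, so the partial sum up to $k=K$ equals $\tr(\rho) - \tr(\cE_0^{K+1}(\rho))$, and it remains only to show that this boundary term vanishes in the limit.

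The main (and essentially only) obstacle is establishing $\lim_{K\to\infty}\tr(\cE_0^{K}(\rho)) = 0$. This is the point at which trace-preservation of $\cO$ genuinely reflects the stated intuition that the process almost surely eventually registers a $1$~outcome—i.e.\ it cannot get stuck emitting $0$'s forever. I would derive it from the spectral radius of $\cE_0$ being strictly less than~$1$, equivalently $\norm{K}<1$, which is precisely the condition under which the geometric series $\sum_{k=0}^\infty E_0^k = (\1-E_0)^{-1}$ that appears in the statement converges in the first place. Since $K$ is Hermitian, cyclicity gives $\tr(\cE_0^{K}(\rho)) = \tr(K^{2K}\rho) \leq \norm{K}^{2K}\tr\rho$, which tends to $0$. (Alternatively, one can invoke the same spectral-radius argument via \cref{spectral_radius_eigenvalue} that was used in \cref{expected_normalisation}.)

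Combining the telescoping identity with this limit yields $\tr(\cO(\rho)) = \tr(\rho)$ for every state, and hence for every operator by linearity, so $\cO$ is trace-preserving. I would close by noting that this is the dual statement to \cref{E0Wi_TP}: allowing arbitrarily many applications of $\cE_0$ before the single $\cE_1$, the process is certain to produce the $1$~outcome eventually, just as \cref{E0Wi_TP} expressed certainty of eventually obtaining a run of $n$~$0$'s.
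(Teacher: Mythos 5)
Your proposal is correct and is essentially the paper's own argument: the paper writes $\tr(\cO(\rho_0)) = \Pr(\text{eventually obtaining a 1}) = 1 - \Pr(\text{never obtaining a 1}) = 1 - \tr(\cE_0^\infty(\rho_0))$, which is exactly your telescoping identity in probabilistic shorthand, and it likewise concludes by noting $\cE_0$ is trace-decreasing so its spectral radius is $<1$ by \cref{spectral_radius_eigenvalue}, hence $\cE_0^\infty = 0$. Your explicit bound $\tr(\cE_0^k(\rho)) = \tr(K^{2k}\rho) \leq \norm{K}^{2k}$ is a minor variant of that final step, resting on the same (implicit in both proofs) assumption that the spectral radius of $\cE_0$ is strictly below~$1$.
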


\begin{proof}
  For any state $\rho_0$,
  \begin{align}
    \tr\left(\cO(\rho_0)\right)
    &= \Pr(\text{eventually obtaining a 1}) \\
    &= 1 - \Pr(\text{never obtaining a 1}) \\
    &= 1 - \tr\left(\cE_0^\infty(\rho_0)\right)
     = 1,
  \end{align}
  where the final equality follows since $\cE_0$ is trace-decreasing, hence the spectral radius of $\cE_0$ is $<1$ by \cref{spectral_radius_eigenvalue}.
\end{proof}

This allows us to prove that more complex sequences of 0's and 1's are eventually guaranteed to occur.
\begin{lemma}\label{sequence_probability}
  \begin{equation}
    \braa{\1} E_0^n W^{-1} E_1 (\1-E_0)^{-1} E_0^n W^{-1} \kett{\rho_0} = 1.
  \end{equation}
\end{lemma}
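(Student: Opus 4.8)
The plan is to read the left-hand side as a product of transfer matrices, each block of which has already been shown to implement a trace-preserving map, and to exploit the fact that trace preservation is exactly the statement that the transfer matrix fixes $\braa{\1}$ acting on the left. Concretely, in the vectorised formalism the trace of the output of a map $\Phi$ on $\rho$ is $\braa{\1} M \kett{\rho}$, where $M$ is the transfer matrix of $\Phi$; hence $\Phi$ being trace-preserving, i.e.\ $\braa{\1} M \kett{\rho} = \braakett{\1|\rho}$ for every $\rho$, is equivalent to the single identity $\braa{\1} M = \braa{\1}$.

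Two such identities are already available. First, \cref{E0Wi_TP} asserts that $\cN$, with transfer matrix $E_0^n W^{-1}$, is trace-preserving, so $\braa{\1} E_0^n W^{-1} = \braa{\1}$. Second, \cref{E1E0i_TP} asserts that $\cO$, with transfer matrix $E_1(\1-E_0)^{-1}$, is trace-preserving, so $\braa{\1} E_1(\1-E_0)^{-1} = \braa{\1}$. The left-hand side is exactly $\braa{\1}$ followed by three such blocks, namely $E_0^n W^{-1}$, then $E_1(\1-E_0)^{-1}$, then $E_0^n W^{-1}$, applied to $\kett{\rho_0}$.

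The steps then simply peel factors off from the left. Applying $\braa{\1} E_0^n W^{-1} = \braa{\1}$ collapses the leftmost block, leaving $\braa{\1} E_1(\1-E_0)^{-1} E_0^n W^{-1}\kett{\rho_0}$; applying $\braa{\1} E_1(\1-E_0)^{-1} = \braa{\1}$ collapses the middle block, leaving $\braa{\1} E_0^n W^{-1}\kett{\rho_0}$; and a final application of $\braa{\1} E_0^n W^{-1} = \braa{\1}$ yields $\braakett{\1|\rho_0} = \tr\rho_0 = 1$, as required.

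There is no genuinely hard step here: the only things to get right are the translation between ``trace-preserving map'' and ``$\braa{\1}$ is a left-eigenvector with eigenvalue~1 of the transfer matrix'', and matching each matrix block to the correct earlier result (\cref{E0Wi_TP} for the two outer blocks, \cref{E1E0i_TP} for the inner one). As a sanity check, the identity also has a transparent probabilistic reading: right to left, $E_0^n W^{-1}$ realises ``eventually obtain a run of $n$~0's'', $E_1(\1-E_0)^{-1}$ realises ``then eventually obtain a~1'', and the second $E_0^n W^{-1}$ realises ``then eventually obtain a further run of $n$~0's''; contracting against $\braa{\1}$ computes the total probability of this sequence of events, each of which occurs with probability~1.
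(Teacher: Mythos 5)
Your proof is correct and is essentially the paper's own argument: the paper likewise identifies the three blocks with the trace-preserving maps $\cN$ and $\cO$ of \cref{E0Wi_TP,E1E0i_TP} and concludes the left-hand side equals $\tr\left(\cN\circ\cO\circ\cN(\rho_0)\right)=1$. Your peeling-off of $\braa{\1}$ as a left fixed point of each transfer matrix is just an explicit unpacking of that same composition-of-trace-preserving-maps step.
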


\begin{proof}
  With $\cN$, $\cO$ the maps from \cref{E0Wi_TP,E1E0i_TP}, we have
  \begin{equation}
    \braa{\1} E_0^n W^{-1} E_1 (\1-E_0)^{-1} E_0^n W^{-1} \kett{\rho_0}
    = \tr\left(\cN\circ\cO\circ\cN(\rho_0)\right) = 1.
  \end{equation}
\end{proof}

With this and the results of \cref{sec:expected_rho}, we are in a position to derive an analytic expression for the expected stopping time of \cref{local_resampling} under arbitrary resampling maps $\cR$ in $\cE_1$.

\begin{theorem}\label{local_resampling_run-time}
  For the process of \cref{local_resampling}, the expected stopping time is given by
  \begin{align}\label{eq:local_expected_run-time}
    \E(\tau_n) &= \braa{\1} E_0^n W^{-1} E_1 \left(\frac{\1-E_0^n}{(\1-E_0)^2}\right) W^{-1} \kett{\rho_0}
  \end{align}
  where
  \begin{equation}
    W := \1 - E_1 \frac{\1 - E_0^n}{\1-E_0}.
  \end{equation}
\end{theorem}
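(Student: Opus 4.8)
The plan is to compute $\E(\tau_n)$ by the same sum-over-trajectories that underpins \cref{local_resampling}, but now weighting each trajectory by its length rather than by $1$. Recall that a trajectory stopping on the first run of $n$ zeros decomposes uniquely as a (possibly empty) string of \emph{failure blocks} $0^{j}1$ with $0\le j\le n-1$, followed by the terminal \emph{success block} $0^n$; in transfer-matrix language the sum over all failure-block prefixes is $W^{-1}=\sum_{k\ge0}\bigl(E_1\tfrac{\1-E_0^n}{\1-E_0}\bigr)^k$ and the terminal block is $E_0^n$, exactly as established there. Since the probability of a given trajectory equals the trace of its unnormalised transfer matrix applied to $\rho_0$, any expectation over trajectories is obtained as a suitable weighted transfer matrix sandwiched between $\braa{\1}$ and $\kett{\rho_0}$.

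First I would split the length of a trajectory as the sum of the lengths of its failure blocks plus $n$ for the terminal run. The terminal contribution is deterministic, and since the process stops with probability one it contributes $n\,\braa{\1}E_0^nW^{-1}\kett{\rho_0}=n$ by the trace-preservation statement of \cref{E0Wi_TP}. For the failure-block contribution I would use linearity of expectation to \emph{mark} a single failure block: summing over arbitrarily many failure blocks, then one marked block weighted by its length, then arbitrarily many failure blocks, then the terminal $0^n$ yields
\begin{equation}
  \E(\tau_n) = n + \braa{\1}\,E_0^n\,W^{-1}\,E_1\Bigl(\textstyle\sum_{j=0}^{n-1}(j+1)E_0^j\Bigr)\,W^{-1}\,\kett{\rho_0},
\end{equation}
since the marked block contributes $\sum_{j=0}^{n-1}(j+1)E_1E_0^j$ and the two surrounding runs each contribute a factor $W^{-1}$. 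As a cross-check, the same line follows from a generating-function argument: introduce a bookkeeping variable $z$ weighting every elementary measurement, so that $G(z)=\braa{\1}(zE_0)^n\bigl(\1-zE_1\tfrac{\1-(zE_0)^n}{\1-zE_0}\bigr)^{-1}\kett{\rho_0}=\E(z^{\tau_n})$, and evaluate $\E(\tau_n)=G'(1)$ using $\tfrac{d}{dz}M(z)^{-1}=-M^{-1}M'M^{-1}$.

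The remaining, and fiddliest, step is to fold the additive $n$ back into the operator expression to recover the stated closed form. For this I would use the scalar identity
\begin{equation}
  \frac{1-x^{n}}{(1-x)^2} = \sum_{j=0}^{n-1}(j+1)x^{j} + \frac{n\,x^{n}}{1-x},
\end{equation}
valid as an identity of power series and hence of functions of the single operator $E_0$, giving $\sum_{j=0}^{n-1}(j+1)E_0^j = \frac{\1-E_0^n}{(\1-E_0)^2} - nE_0^n(\1-E_0)^{-1}$. Substituting this produces the claimed main term $\braa{\1}E_0^nW^{-1}E_1\frac{\1-E_0^n}{(\1-E_0)^2}W^{-1}\kett{\rho_0}$, and it remains to check that the leftover $-n\,\braa{\1}E_0^nW^{-1}E_1E_0^n(\1-E_0)^{-1}W^{-1}\kett{\rho_0}$ cancels the $+n$ exactly. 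Here the trace-preservation facts do the work: trace-preservation of $\cE_0+\cE_1$ gives $\braa{\1}E_1=\braa{\1}(\1-E_0)$, and since $E_0^n$ and $(\1-E_0)^{-1}$ are functions of $E_0$ they commute with $(\1-E_0)$, so $\braa{\1}E_1E_0^n(\1-E_0)^{-1}=\braa{\1}E_0^n$; combined with $\braa{\1}E_0^nW^{-1}=\braa{\1}$ from \cref{E0Wi_TP}, the leftover equals $n\,\braa{\1}E_0^nW^{-1}\kett{\rho_0}=n$, cancelling precisely.

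The main obstacle is exactly this last reconciliation: the deterministic terminal run contributes an additive constant that is not manifestly of the stated operator form, and matching it forces the partial-fraction identity together with careful tracking of operator orderings, the commutation of functions of $E_0$, and the two trace-preservation identities. I would also flag the standing finiteness point, that $\E(\tau_n)<\infty$ and all the Neumann series above converge, which holds because $\cE_0$ and $\cE_1\circ\sum_{j=0}^{n-1}\cE_0^j$ have spectral radius strictly below $1$, as already established in \cref{expected_normalisation,E1E0i_TP}; this legitimises both the sum-over-trajectories and the differentiation of $G$ at $z=1$.
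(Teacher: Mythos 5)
Your proposal is correct, and it is essentially the paper's own argument in lightly different clothing: the paper performs the length-weighting by differentiating the generating function $x^nE_0^n\bigl(\1 - xE_1\tfrac{\1-x^nE_0^n}{\1-xE_0}\bigr)^{-1}$ at $x=1$ (exactly the cross-check you mention), which produces the same three terms as your marked-block decomposition combined with your partial-fraction identity, namely the main term, an additive $n\,\braa{\1}E_0^nW^{-1}\kett{\rho_0}=n$, and a leftover $-n\,\braa{\1}E_0^nW^{-1}E_1(\1-E_0)^{-1}E_0^nW^{-1}\kett{\rho_0}$ that must cancel it. The one point of genuine (though minor) divergence is how that cancellation is established: the paper isolates it as \cref{sequence_probability} and proves it probabilistically, writing the expression as $\tr\bigl(\cN\circ\cO\circ\cN(\rho_0)\bigr)=1$ for the trace-preserving maps of \cref{E0Wi_TP,E1E0i_TP}, whereas you derive it algebraically from $\braa{\1}E_1=\braa{\1}(\1-E_0)$ and $\braa{\1}E_0^nW^{-1}=\braa{\1}$ together with commutation of functions of $E_0$. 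The two arguments are interchangeable; yours is a hair more self-contained (it never needs \cref{E1E0i_TP}), and your explicit remark on finiteness and convergence of the Neumann series, justified by the spectral-radius facts from \cref{expected_normalisation}, is a point the paper leaves implicit.
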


\begin{proof}
  We use a generating function approach this time.\footnote{Cf.\ the Martingale approach used to prove \cref{stopping_time}, which is more elegant but doesn't generalise so easily to arbitrary resampling maps $\cR$.}
  Expressing the expected run-time as a derivative of a generating function in the standard way (with $\mathcal{W}^{\langle n\rangle}$ as in \cref{local_resampling}):
  \begin{align}
    \E(\tau_n)
    &=\sum_{\mathclap{s\in \mathcal{W}^{\langle n\rangle}}} \left(n+\abs{s}\right) \, \Pr(0^n s) \\
    &=\sum_{\mathclap{s\in \mathcal{W}^{\langle n\rangle}}}
      \tr\left[
        \left(n+\abs{s}\right)
        \cE_0^n\circ\cE_{s_1}\circ\cE_{s_2}\circ\cdots\circ\cE_{s_\abs{s}}(\rho_0)
      \right] \\
    &=\tr\left[
        \cE_0^n \circ \sum_{k=0}^\infty
        \biggl(\cE_1 \circ\sum_{j=0}^{n-1}\left(n+1+j\right)\cE_0^j\biggr)^k (\rho_0)
      \right] \\
    &=\frac{\dd}{\dd x}\left.
      \tr\left[
        x^n\cE_0^n \circ \sum_{k=0}^\infty
        \biggl(x\cE_1 \circ\sum_{j=0}^{n-1} x^j \cE_0^j\biggr)^k (\rho_0)
      \right]\right|_{x=1} \\
    &=\braa{\1} \left.\frac{\dd}{\dd x}\left[
        x^n E_0^n \sum_{k=0}^\infty \biggl(x E_1 \sum_{j=0}^{n-1} x^j E_0^j\biggr)^k
      \right] \right|_{x=1} \!\! \kett{\rho_0}
      \\
    &=\braa{\1} \frac{\dd}{\dd x}\left.\left[
        x^n E_0^n \left(\1 - x E_1 \frac{\1 - x^n E_0^n}{\1 - x E_0} \right)^{-1}
      \right]\right|_{x=1}\!\! \kett{\rho_0}.
  \end{align}

  Making use of the fact from matrix calculus that $\frac{\dd}{\dd x} A^{-1} = -A^{-1}\frac{\dd A}{\dd x}A^{-1}$, we have
  \begin{align}
    \E(\tau_n)
    &= \braa{\1} \frac{\dd}{\dd x}\left.\left[
      x^n E_0^n \left(\1 - x E_1 \frac{\1 - x^n E_0^n}{\1 - x E_0} \right)^{-1}
    \right]\right|_{x=1}\!\! \kett{\rho_0} \\
    &=\braa{\1}
        n E_0^n W^{-1} + E_0^n W^{-1} \left(E_1 \frac{\1 - E_0^n - n (\1-E_0)E_0^n}{(\1-E_0)^2}\right) W^{-1}
      \kett{\rho_0} \\
    &=\braa{\1} E_0^n W^{-1} E_1 \left(\frac{\1-E_0^n}{\1-E_0}\right)
        \Bigl(\1 - (E_0+E_1) + E_1 E_0^n\Bigr)^{-1}
      \kett{\rho_0} \\
    \begin{split}
      &= n \braakett{\1|\E\rho_n} - n \braa{\1} E_0^n W^{-1} E_1 (\1-E_0)^{-1} E_0^n W^{-1} \kett{\rho_0} \\
      &\phantom{=}\;\mspace{145mu}
        + \braa{\1} E_0^n W^{-1} E_1 \left(\frac{\1-E_0^n}{(\1-E_0)^2}\right) W^{-1} \kett{\rho_0}
    \end{split}\label{eq:reduce_to_Erho}\\
    &=\braa{\1} E_0^n W^{-1} E_1 \left(\frac{\1-E_0^n}{(\1-E_0)^2}\right) W^{-1} \kett{\rho_0},
      \label{eq:normalised_cancellation}
  \end{align}
  where \cref{eq:reduce_to_Erho} follows from \cref{local_resampling}, and \cref{eq:normalised_cancellation} from
  \cref{expected_normalisation,sequence_probability}.
\end{proof}

As in the case of \cref{local_resampling}, we can also rewrite \cref{eq:local_expected_run-time} in a form that is more convenient for numerical calculations.

\begin{corollary}
  The expected runtime $\E(\tau_n)$ in \cref{eq:local_expected_run-time} can equivalently be written as
  \begin{equation}
    \begin{split}
      \E(\tau_n)
      &= \braa{\1} E_0^n (\1-E_0)(\1-E_0-E_1+E_1E_0^n)^{-1} E_1 (\1-E_0^n)\cdot \\
      &\mspace{170mu}
         (\1-E_0)^{-1} (\1-E_0-E_1+E_1E_0^n)^{-1} \kett{\rho_0}.
    \end{split}
  \end{equation}
\end{corollary}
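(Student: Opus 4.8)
The plan is to reduce the whole corollary to a single algebraic identity for $W^{-1}$ and then substitute it into the expression of \cref{local_resampling_run-time}. The one fact that makes this work is that, although $E_0$ and $E_1$ do \emph{not} commute, $E_0$ commutes with every polynomial (and inverse polynomial) in $E_0$, as noted in the preliminaries. Hence the factor $\frac{\1-E_0^n}{\1-E_0} = (\1-E_0^n)(\1-E_0)^{-1}$ appearing in $W$, as well as the factor $\frac{\1-E_0^n}{(\1-E_0)^2} = (\1-E_0^n)(\1-E_0)^{-2}$ in the run-time formula, may be treated as rational functions of $E_0$ alone and freely rearranged, whereas $E_1$ must be kept in place.

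First I would right-multiply $W$ by $(\1-E_0)$ and simplify:
\begin{align}
  W(\1-E_0)
  &= (\1-E_0) - E_1(\1-E_0^n)(\1-E_0)^{-1}(\1-E_0) \\
  &= (\1-E_0) - E_1(\1-E_0^n) \\
  &= \1 - E_0 - E_1 + E_1 E_0^n,
\end{align}
where the cancellation in the middle line uses only that $(\1-E_0^n)$, $(\1-E_0)^{-1}$ and $(\1-E_0)$ are polynomials in $E_0$ and therefore commute. Inverting (which reverses the factor order) yields the central identity
\begin{equation}
  W^{-1} = (\1-E_0)\,(\1-E_0-E_1+E_1E_0^n)^{-1}.
\end{equation}
Invertibility of $\1-E_0-E_1+E_1E_0^n$ is equivalent to that of $W$, which was already used in \cref{local_resampling_run-time}, so nothing new is needed here.

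Next I would substitute this identity for each of the two occurrences of $W^{-1}$ in $\E(\tau_n) = \braa{\1} E_0^n W^{-1} E_1 \bigl(\tfrac{\1-E_0^n}{(\1-E_0)^2}\bigr) W^{-1} \kett{\rho_0}$. The leftmost $W^{-1}$ becomes $(\1-E_0)(\1-E_0-E_1+E_1E_0^n)^{-1}$; the middle factor is $E_1(\1-E_0^n)(\1-E_0)^{-2}$; and the rightmost $W^{-1}$ again contributes $(\1-E_0)(\1-E_0-E_1+E_1E_0^n)^{-1}$. Collecting the purely-$E_0$ factors sitting between the middle term and the rightmost $W^{-1}$ gives $(\1-E_0)^{-2}(\1-E_0) = (\1-E_0)^{-1}$, and what remains is exactly the claimed form. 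There is no genuine obstacle—this is a formal rearrangement—but the one place a careless step would introduce an error is the operator bookkeeping: only the factors that are polynomials in $E_0$ may be commuted or cancelled, while the two copies of $E_0^n$, the $E_1$, and the two inverse factors $(\1-E_0-E_1+E_1E_0^n)^{-1}$ must be left in their fixed positions.
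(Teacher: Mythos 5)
Your proposal is correct and matches the intended argument: the paper states this corollary as an immediate algebraic rewriting of \cref{eq:local_expected_run-time}, and the rewriting rests precisely on the identity $W^{-1} = (\1-E_0)(\1-E_0-E_1+E_1E_0^n)^{-1}$, obtained exactly as you do by computing $W(\1-E_0)$ and inverting (with invertibility of $\1-E_0$ already guaranteed since $E_0$ has spectral radius $<1$). Your bookkeeping of which factors commute (rational functions of $E_0$) and which must stay fixed ($E_1$ and the inverses of $\1-E_0-E_1+E_1E_0^n$) is exactly right, and the same identity also yields the analogous corollary following \cref{local_resampling}.
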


\begin{corollary}
  For global resampling $\cR(\rho) = \1/D$ with initial state $\rho_0 = \1/D$, \cref{local_resampling_run-time} reduces (as it should) to \cref{stopping_time}.
\end{corollary}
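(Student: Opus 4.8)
The plan is to substitute the global-resampling data directly into the expression for $\E(\tau_n)$ from \cref{local_resampling_run-time} and watch it collapse to the formula of \cref{stopping_time}. The two ingredients that do all the work have already been established in the proof of \cref{run-time_local-global}: first, that for $\cR(\rho)=\1/D$ the transfer matrix of $\cE_1$ factorises as the rank-one object $E_1 = \kettbraa{\tfrac{\1}{D}}{\1}(\1-E_0)$; and second, the identity \cref{eq:Winv_global}, namely $W^{-1}\kett{\tfrac{\1}{D}} = \tfrac{1}{\tr(K^{2n})}\kett{\1}$. I would also use the elementary dictionary $\braa{\1}E_0^k\kett{\1} = \tr(K^{2k})$ and $E_0^n\kett{\1}=\kett{K^{2n}}$, which follow from $\cE_0(\rho)=K\rho K^\dg$ with $K$ Hermitian.

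First I would set $\rho_0 = \1/D$ and push the rightmost $W^{-1}\kett{\rho_0}$ through \cref{eq:Winv_global}, replacing it by $\tfrac{1}{\tr(K^{2n})}\kett{\1}$; this extracts an overall scalar $1/\tr(K^{2n})$ and leaves $\braa{\1} E_0^n W^{-1} E_1 \tfrac{\1-E_0^n}{(\1-E_0)^2}\kett{\1}$. Next I would evaluate the central block $E_1\tfrac{\1-E_0^n}{(\1-E_0)^2}\kett{\1}$ using the factorised $E_1$: one factor of $(\1-E_0)$ cancels against the denominator to give the finite geometric sum $\tfrac{\1-E_0^n}{\1-E_0} = \sum_{k=0}^{n-1}E_0^k$ (so there are no invertibility concerns), and the surviving $\braa{\1}$ from the rank-one $E_1$ turns the operator into the scalar $\braa{\1}\sum_{k=0}^{n-1}E_0^k\kett{\1} = \sum_{k=0}^{n-1}\tr(K^{2k}) = \tr\!\bigl(\tfrac{\1-K^{2n}}{\1-K^2}\bigr)$, multiplying the ket $\kett{\tfrac{\1}{D}}$.

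At this point the expression is $\tfrac{1}{\tr(K^{2n})}\,\tr\!\bigl(\tfrac{\1-K^{2n}}{\1-K^2}\bigr)\,\braa{\1} E_0^n W^{-1}\kett{\tfrac{\1}{D}}$, and I would finish by applying \cref{eq:Winv_global} a second time to the remaining $W^{-1}\kett{\tfrac{\1}{D}}$ and using $\braa{\1}E_0^n\kett{\1} = \tr(K^{2n})$. The factor $1/\tr(K^{2n})$ produced by this second application cancels against $\braa{\1}E_0^n\kett{\1}=\tr(K^{2n})$, so that $\braa{\1} E_0^n W^{-1}\kett{\tfrac{\1}{D}}=1$; only the overall scalar $1/\tr(K^{2n})$ from the first step survives, yielding exactly $\E(\tau_n) = \tfrac{1}{\tr(K^{2n})}\tr\!\bigl(\tfrac{\1-K^{2n}}{\1-K^2}\bigr)$, the claim of \cref{stopping_time}. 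The computation is essentially bookkeeping; the only point requiring care is tracking the order of operations so that the rank-one $E_1$ is allowed to act on $W^{-1}\kett{\1/D}$ on both sides, and confirming that the $\tr(K^{2n})$ factors cancel rather than accumulate. Only the exact expression of \cref{stopping_time} is needed, not its bound.
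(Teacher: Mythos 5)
Your proposal is correct and takes essentially the same route as the paper's own proof: both substitute the rank-one factorisation $E_1 = \kettbraa{\tfrac{\1}{D}}{\1}(\1-E_0)$ and the identity \cref{eq:Winv_global} into \cref{eq:local_expected_run-time}, cancel one factor of $(\1-E_0)$ against the geometric sum $\frac{\1-E_0^n}{\1-E_0}$, and read off $\frac{1}{\tr(K^{2n})}\tr\bigl(\frac{\1-K^{2n}}{\1-K^2}\bigr)$. The only cosmetic difference is that where you verify the leftover factor $\braa{\1}E_0^n W^{-1}\kett{\tfrac{\1}{D}}=1$ by a second application of \cref{eq:Winv_global} together with $\braa{\1}E_0^n\kett{\1}=\tr(K^{2n})$, the paper recognises it as $\braakett{\1}{\E\rho_n}$ and cites \cref{expected_normalisation}; the two justifications are interchangeable.
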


\begin{proof}
  Using the identity shown in \cref{eq:Winv_global}, namely that for global resampling
  \begin{equation}
    W^{-1}\kett{\tfrac{\1}{D}} = \frac{1}{\tr(K^{2n})} \kett{\1},
  \end{equation}
  together with \cref{expected_normalisation} in \cref{local_resampling_run-time}, and recalling that $\cE_0(\1) = K^2$, we obtain
  \begin{align}
    \E(\tau_n)
    &= \braa{\1} E_0^n W^{-1} E_1 \left(\frac{\1-E_0^n}{(\1-E_0)^2}\right) W^{-1} \kett{\rho_0} \\
    &= \braa{\1} E_0^n W^{-1} \kett{\tfrac{\1}{D}}
       \braa{\1}(\1-E_0) \frac{\1-E_0^n}{(\1-E_0)^2} W^{-1} \kett{\tfrac{\1}{D}} \\
    &= \frac{1}{\tr(K^{2n})} \braakett{\1}{\E\rho_n}
       \braa{\1}\frac{\1-E_0^n}{\1-E_0} \kett{\1} \\
    &= \frac{1}{\tr(K^{2n})} \braa{\1}\sum_{k=0}^\infty (E_0^k - E_0^{n+k})\kett{\1} \\
    &= \frac{1}{\tr(K^{2n})} \tr\left(\sum_{k=0}^\infty K^{2k} - K^{2n+2k}\right) \\
    &= \frac{1}{\tr(K^{2n})} \tr\left((\1-K^{2n})\sum_{k=0}^\infty K^{2k}\right) \\
    &= \frac{1}{\tr(K^{2n})} \tr\left(\frac{\1-K^{2n}}{\1-K^2}\right),
  \end{align}
  which exactly matches the result of \cref{stopping_time}.
\end{proof}

\begin{figure}[!htbp]
  \centering
  \begin{subfigure}{.9\textwidth}
    \includegraphics[width=\textwidth]{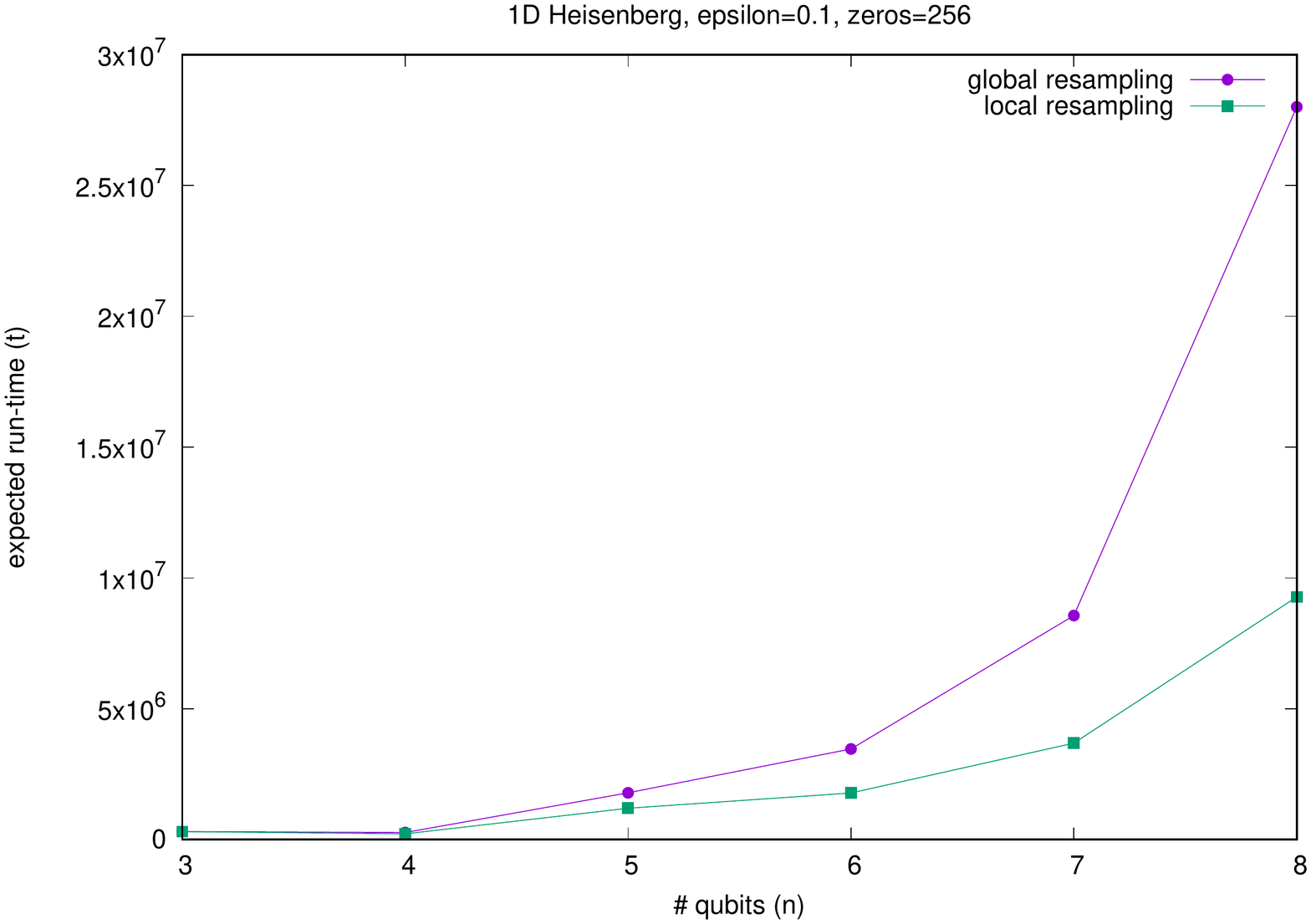}
    \caption{%
      Expected run-time ($t$) as a function of system-size ($n$) with global (purple circles) and local (green squares) resampling for the length $n$ 1D Heisenberg chain, stopping on the first run of 256 consecutive 0's in \cref{stopped_CPT_map} and using \cref{AGSP_mixture} with $\epsilon=0.1$ as the AGSP.
      The plotted values are calculated exactly using \cref{stopping_time} and \cref{local_resampling_run-time}, respectively, with the exception of the $n=8$ points.
      For $n=8$, the approximation $E_0^{256} \approx \Pi_0$ was used, as exact computation of the matrix power becomes too computationally expensive at this system size.
      (Note that $n=256$ is chosen significantly larger here than required in practice (see \cref{fig:1D_Heisenberg_fixed_global}, where $n=4$ sufficed) in order to illustrate the scaling more clearly.
      Thus the magnitudes of the run-times shown here are not indicative of actual run-times.)
    }
    \label{fig:1D_Heisenberg_resampling:lin}
  \end{subfigure}\\[1em]
  \begin{subfigure}{.9\textwidth}
    \includegraphics[width=\textwidth]{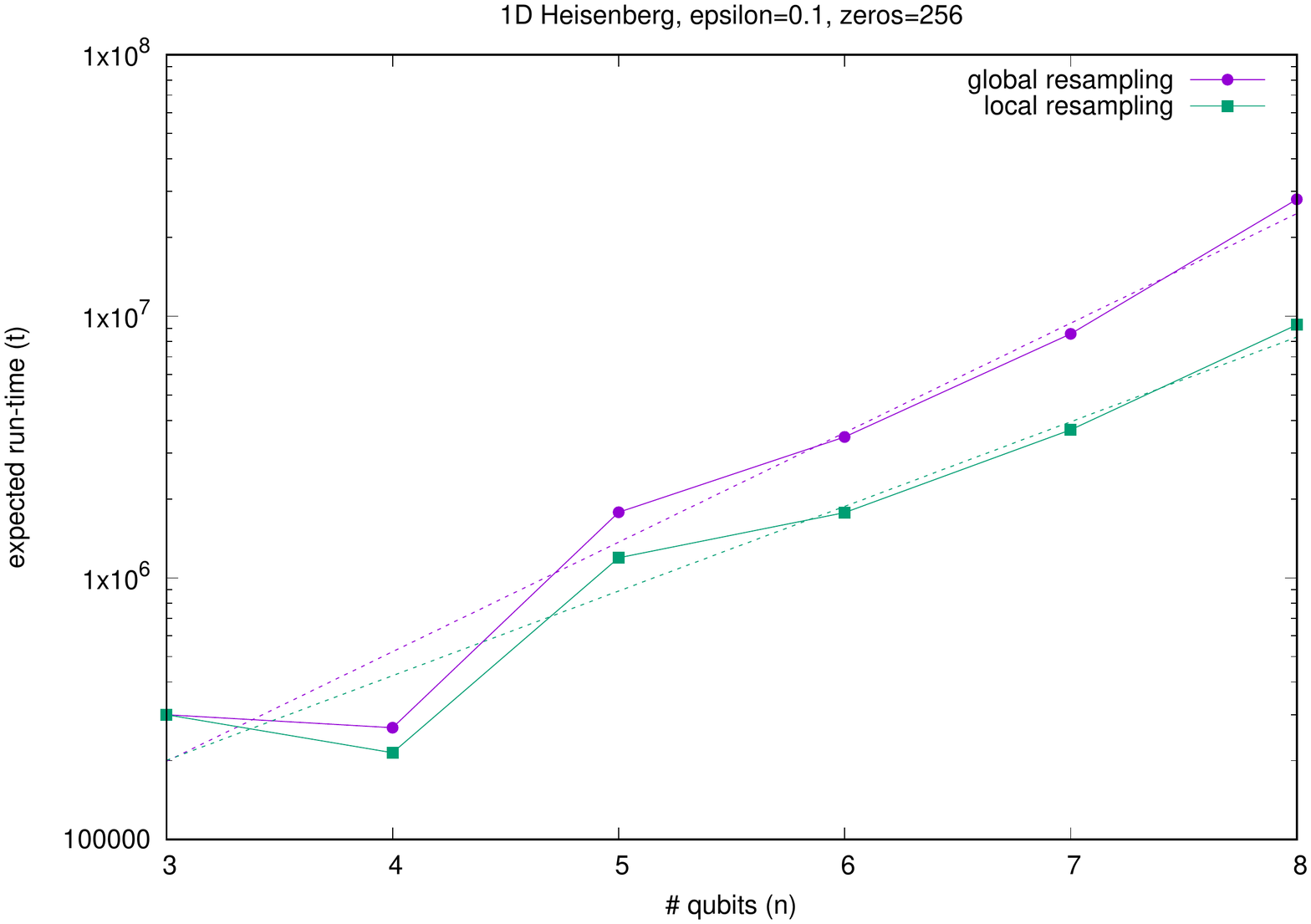}
    \caption{%
      Log-linear fits (dashed lines) to the run-time scaling.
      The run-time scales roughly exponentially in both cases, as expected, with some anomalies at small system sizes.
      But local resampling achieves a lower exponent than global resampling.
      Best fit of the form $a e^{m t}$ for global resampling: %
      $m = 0.96$, 
      $a = 9.31$; 
      local resampling: %
      $m = 0.75$, 
      $a = 9.97$. 
    }
    \label{fig:1D_Heisenberg_resampling:log}
  \end{subfigure}
  \label{fig:1D_Heisenberg_resampling}
\end{figure}

Global resampling discards the entire state on a 1~outcome, so it is clear that the required quantum circuit depth (as opposed to the total run-time) scales favourably as shown in \cref{stopped_circuit_depth}.
The results of \cref{fig:1D_Heisenberg_resampling} indicate that more sophisticated resampling strategies, such as local resampling, can reduce the total run-time of DQE.
However, as the state is no longer completely discarded on a 1~outcome, the required circuit depth would appear to now be the same as the total run-time, i.e.\ exponential rather than polynomial in the system size.
However, it is again misleading to equate total run-time of a dissipative process with depth of a unitary quantum circuit.
Just as with global resampling, for the DQE \cref{DQE} to succeed, coherence of the quantum state need only be maintained for the number of applications of $\cE_0$ required to get close to the ground state.
Local resampling -- or any other resampling strategy -- only changes the expected number of iterations required to see that outcome.
Thus the required circuit depth for the DQE \cref{DQE} is given by \cref{stopped_circuit_depth}, independent of the resampling strategy.

\clearpage

\section{Error- and Fault-Resilience}\label{sec:fault-resilience}

In the presence of noise and without any error correction, the probability of obtaining the correct output of a computation decreases exponentially with time, limiting the maximum run-time of computations to some constant determined by the error rate.
Error correction enables a success probability arbitrarily close to~1 in the presence external noise, independent of the length of the computation or the error rate, assuming the computational steps themselves can be carried out perfectly.
Fault-tolerance enables the success probability to be maintained even when the computational steps themselves are faulty and themselves introduce errors, as long the error rate is below some threshold.
However, quantum error correction and fault-tolerance come at a large cost in resource overhead (qubits and gates).

I will show that the dissipative ground state algorithms of \cref{CPTP_map,stopped_CPT_map,secretary_CPT_map,expectation_CPT_map,decaying_CPT_map} achieve a limited form of error correction and fault-tolerance inherently, without any additional overhead.
Specifically, as long as the error rate $\delta$ is below a threshold related to $\Gamma-\Delta$, the output of the algorithm from \cref{stopped_CPT_map,secretary_CPT_map,expectation_CPT_map,decaying_CPT_map} is $O(\delta)$-close to the correct output, \emph{independent of the run-time of the algorithm}.
However, unlike full error correction and fault-tolerance, the error in the output cannot be made arbitrarily small independent of the error rate $\delta$.
I will call this weaker version of fault-tolerance, where the output error is independent of the run-time but not of the error rate, \keyword{fault-resilience}.
\Cref{fig:1D_Heisenberg_noise} gives a numerical illustration of these analytic results, showing how the DQE algorithm still produces a state close to the ground state even in the presence of a constant noise rate.

As in fault-tolerant quantum computation, to give a rigorous proof of fault-resilience, we must fix an error model.
As in the standard fault-tolerance threshold theorems, we will assume errors occur at each step of the algorithm, and that these errors are iid in time.
Also, as in standard fault-tolerance proofs, we will need to assume that we are able to perform some form of noisy state initialisation (or reset).
However, the state initialisation requirements are very weak: we will only require that we are able to (probabilistically) initialise qudits to a density matrix with full support.
I.e.\ it will suffice for our purposes to e.g.\ be able to initialise the qudits in a random, noisy computational basis state, as long as this process gives a non-zero probability of measuring any bit string on the resulting state.

We denote the induced 1-norm on a map $\cE$ by $\norm{\cE}_1 := \max_{\rho\neq0} \norm{\cE(\rho)}_1/\norm{\rho}_1$, and the completely-bounded version thereof---the diamond norm---by $\norm{\cE}_\diamond := \max_d\norm{\cE\ox\cI_d}$ where $\cI_d$ is the identity map on dimension $d$.
Note that $\norm{\cE}_1\leq\norm{\cE}_\diamond$ by definition, so upper-bounding the noise rate in terms of the 1-norm is a less stringent requirement than bounding it in terms of the diamond-norm.

\begin{definition}[Noise model]\label{def:noise_model}
  Let $\{\cE_0,\cE_1\}$ be the quantum instrument of \cref{stopped_CPT_map}, and $\cE = \cE_0+\cE_1$ the quantum process of \cref{CPTP_map}.
  In each iteration of the algorithm, the maps $\{\cE'_0,\cE'_1\}$ or $\cE'=\cE'_0+\cE'_1$ are applied instead.
  We say that the \emph{error rate} is $\norm{\cE' - \cE}_1$.

  We also assume we have access to a state initialisation procedure whose output has non-zero support on the full Hilbert space, so that $\cE'_1(\rho) \geq \mu\1$.
\end{definition}

Note that this noise model encompasses external noise, as well as faulty implementations of $\cE$.
If an additional, external noise map $\cN$ is applied at each step of the algorithm, we can take $\cE' = \cN\circ\cE$, and $\norm{\cE'-\cE}_\diamond = \norm{\cN\circ\cE - \cE}_\diamond \leq \norm{\cN - \cI}_\diamond\norm{\cE}_\diamond = \norm{\cN - \cI}_\diamond$.
Also note that this error model does not assume that the same error occurs at each step of the process, only that the probability distribution over the different errors that can occur is iid: if the noise is given by an ensemble $\{p_i\cE^{(i)}\}$ where $p_i$ is the probability that the CP(T) map ${\cE}'^{(i)}$ is applied, by linearity this is equivalent to applying the CP(T) map $\cE' = \sum_i p_i{\cE'}^{(i)}$.
Note also that \cref{def:noise_model} encompasses global noise, where the errors or noise can act globally across all the qudits.
We will see, below, that restricting to local noise improves the dimension dependence of our results.

In the following, we will make use of the fact that the action of a CP map $\cE(\rho) = \sum_i A_i\rho A_i^\dg$ as a linear map on a $D$\nbd-dimensional density matrix $\rho$ can be represented as multiplication of the vectorised density matrix $\kett{\rho} = \vec\rho$ by the $D^2$-dimensional transfer matrix $E = \sum_i A_i\ox\bar{A_i}$, where $A_i$ are the Kraus operators of $\cE$.

Perturbations of the transfer matrix $E$ are related to perturbations of the map $\cE$ up to an overall dimension factor, by standard norm relations:

\begin{lemma}\label{transfer_matrix_perturbation}
  If $\cE$ and $\cE'$ are CP maps on $\cB(\cH^D)$ satisfying $\norm{\cE-\cE'}_1\leq\delta$ or $\norm{\cE-\cE'}_\diamond\leq\delta$, then their transfer matrices $E$ and $E'$ satisfy $\norm{E-E'} \leq \sqrt{D}\delta$.
\end{lemma}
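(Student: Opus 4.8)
The plan is to reduce the operator-norm bound on the transfer matrices to a statement about an induced Schatten norm of the maps themselves, exploiting that vectorisation is an isometry in the Hilbert–Schmidt norm. First I would observe that, since $\kett{X} = \vec X$ identifies $(\cB(\cH^D),\norm{\;\cdot\;}_2)$ isometrically with $(\C^{D^2},\norm{\;\cdot\;}_2)$, the spectral norm of the difference of transfer matrices is \emph{exactly} the induced Schatten-$2$-to-$2$ norm of the difference of maps:
\begin{equation}
  \norm{E-E'} = \max_{X\neq0} \frac{\norm{(\cE-\cE')(X)}_2}{\norm{X}_2}.
\end{equation}
This is immediate because $\norm{(E-E')\kett{X}}_2 = \norm{\vec\bigl((\cE-\cE')(X)\bigr)}_2 = \norm{(\cE-\cE')(X)}_2$ and $\norm{\kett{X}}_2 = \norm{X}_2$.

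Next I would convert this induced $2\to2$ norm into the induced $1$-norm at the cost of a dimension factor, using the standard Schatten-norm interpolation inequalities $\norm{Y}_2\leq\norm{Y}_1$ and $\norm{Y}_1\leq\sqrt{D}\,\norm{Y}_2$, valid for any $Y\in\cB(\cH^D)$ (the latter by Cauchy–Schwarz on the at-most-$D$ singular values). For an arbitrary $X$ these give
\begin{equation}
  \norm{(\cE-\cE')(X)}_2 \leq \norm{(\cE-\cE')(X)}_1 \leq \norm{\cE-\cE'}_1\,\norm{X}_1 \leq \sqrt{D}\,\norm{\cE-\cE'}_1\,\norm{X}_2,
\end{equation}
so dividing by $\norm{X}_2$ and taking the supremum over $X$ yields $\norm{E-E'}\leq\sqrt{D}\,\norm{\cE-\cE'}_1\leq\sqrt{D}\,\delta$, which is the claim under the $1$-norm hypothesis. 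For the diamond-norm hypothesis I would simply invoke $\norm{\cE-\cE'}_1\leq\norm{\cE-\cE'}_\diamond\leq\delta$, the relation noted just before the lemma, and feed this into the same chain.

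There is no real obstacle here beyond bookkeeping. The only points that need care are getting the \emph{direction} of the two Schatten inequalities right, so that the dimension factor $\sqrt{D}$ lands on the step $\norm{X}_1\leq\sqrt{D}\,\norm{X}_2$ (applied to the input $X$) rather than being lost or mislocated; and confirming the isometry step, which is exactly the Hilbert–Schmidt identity $\norm{\kett{X}}_2 = \norm{X}_2$ underlying the vectorised formalism used throughout the paper.
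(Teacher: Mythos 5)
Your proof is correct and follows essentially the same route as the paper's: both identify $\norm{E-E'}$ with the induced Schatten-$2\to2$ norm of $\cE-\cE'$ via the vectorisation isometry, then apply the relations $\norm{Y}_2\leq\norm{Y}_1\leq\sqrt{D}\norm{Y}_2$ to pass to the induced 1-norm, and finally use $\norm{\cE-\cE'}_1\leq\norm{\cE-\cE'}_\diamond$ for the diamond-norm case. The only difference is presentational: you apply the two Schatten inequalities to the numerator and input separately, whereas the paper folds them into a single ratio bound.
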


\begin{proof}
  This follows from the standard relation $\norm{X}_2 \leq \norm{X}_1 \leq \sqrt{D}\norm{X}_2$ on Schatten $p$-norms of $D\times D$ matrices:
  \begin{align}
    \norm{E-E'}
    &=\max_{X\neq 0} \frac{\bigNorm{(\cE-\cE')(X)}_2}{\norm{X}_2}
    \leq \sqrt{D} \max_{\rho\neq 0} \frac{\bigNorm{(\cE-\cE')(\rho)}_1}{\norm{\rho}_1} \\
    &= \sqrt{D}\norm{\cE-\cE'}_1 \leq \sqrt{D} \norm{\cE-\cE'}_\diamond.
  \end{align}
\end{proof}

If the noise is local, the dimension dependence reduces from a global dimension factor to a local one, by standard arguments:

\begin{lemma}\label{product_transfer_matrix_perturbation}
  If $\cE = \cE_m\circ\cE_{m-1}\circ\cdots\circ\cE_1$ and $\cE' = \cE'_m\circ\cE'_{m-1}\circ\cdots\circ\cE'_1$ are CP maps on qudits, where $\cE_i,\cE'_i$ are trace non-increasing CP maps that act non-trivially on at most $k$ qudits and satisfy $\norm{\cE-\cE'}_1\leq\delta$, then their transfer matrices $E$ and $E'$ satisfy $\norm{E-E'} \leq m d^{k/2}\delta$.
\end{lemma}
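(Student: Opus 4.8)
The plan is to reduce the global statement to $m$ applications of \cref{transfer_matrix_perturbation} on the \emph{local} $d^k$-dimensional subsystems, via a standard telescoping (hybrid) argument. I read the hypothesis as a per-factor bound $\norm{\cE_i-\cE'_i}_1\le\delta$, since it is the locality of each individual perturbation that is meant to replace the global dimension factor $\sqrt D$ of \cref{transfer_matrix_perturbation} by the local $d^{k/2}$. First I would record that composition of CP maps corresponds to multiplication of transfer matrices, so $E = E_m\cdots E_1$ and $E' = E'_m\cdots E'_1$, and then telescope:
\begin{equation}
  E - E' = \sum_{i=1}^m E_m\cdots E_{i+1}\,(E_i-E'_i)\,E'_{i-1}\cdots E'_1.
\end{equation}
By the triangle inequality and submultiplicativity of the operator norm, it then suffices to bound each of the three factors in a single summand.

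For the middle factor, since $\cE_i-\cE'_i$ acts non-trivially on at most $k$ qudits, its transfer matrix has the form $(E^{\mathrm{loc}}_i-{E'}^{\mathrm{loc}}_i)\ox\1$, and operator norm is stable under tensoring with the identity, so $\norm{E_i-E'_i}=\norm{E^{\mathrm{loc}}_i-{E'}^{\mathrm{loc}}_i}$. Applying \cref{transfer_matrix_perturbation} on the $d^k$-dimensional local Hilbert space gives $\norm{E^{\mathrm{loc}}_i-{E'}^{\mathrm{loc}}_i}\le\sqrt{d^k}\,\norm{\cE^{\mathrm{loc}}_i-{\cE'}^{\mathrm{loc}}_i}_\diamond = d^{k/2}\norm{\cE_i-\cE'_i}_1\le d^{k/2}\delta$, where I use that the diamond norm coincides with the induced $1$-norm stabilised by the remaining qudits. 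This is the step that converts the per-factor error $\delta$ into the \emph{local} factor $d^{k/2}$ rather than the global $\sqrt D$.

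The crux—and what I expect to be the main obstacle—is bounding the two surrounding products $E_m\cdots E_{i+1}$ and $E'_{i-1}\cdots E'_1$ by~$1$. These are transfer matrices of compositions of trace-non-increasing CP maps, and one is tempted to assert $\norm{E_i}\le1$ ``by standard arguments''. But this is \emph{false} for arbitrary CP maps: a reset-to-pure-state channel on one qudit already has transfer-matrix operator norm $\sqrt d$, and a product of $n$ such local channels resets to a global pure state with transfer norm $\sqrt D$, so the naive bound would blow up to the global dimension and the lemma would fail. What rescues it is the specific structure of the DQE maps: each clean factor is either a weak measurement $\cE_{i,0}(\rho)=E_i\rho E_i^\dg$ with a single Kraus operator satisfying $0\le E_i\le\1$, whose transfer matrix $E_i\ox\bar{E_i}$ has norm $\norm{E_i}^2\le1$; or a resampling to the maximally mixed state, whose transfer matrix one checks directly has norm $\le1$ (in contrast to reset to a pure state). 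Hence both surrounding products are genuine operator-norm contractions, and I would carry the same property over to the faulty factors under the noise model (a faulty weak measurement is still a contractive single-Kraus map, faulty resampling still maps into the maximally mixed state). Combining the three bounds, each summand is at most $1\cdot d^{k/2}\delta\cdot1$, and summing the $m$ terms yields $\norm{E-E'}\le m\,d^{k/2}\delta$. The honest version of the argument therefore hinges on this contractivity, which I would either state as an explicit hypothesis or verify for the concrete weak-measurement and maximally-mixed-resampling maps to which the lemma is applied.
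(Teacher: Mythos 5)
Your proposal follows the same route as the paper's own proof: the same telescoping decomposition of $E-E'$, the same per-factor reading of the hypothesis ($\norm{\cE_i-\cE'_i}_1\le\delta$, which is also what the paper's proof uses), and the same application of \cref{transfer_matrix_perturbation} on the local $d^k$-dimensional factor to get $\norm{E_i-E'_i}\le d^{k/2}\delta$ (your diamond-norm detour is unnecessary --- the induced 1-norm version suffices, since the local 1-norm is bounded by the embedded 1-norm --- but it is harmless). The substantive point is the step you flag as the crux. The paper disposes of the surrounding products by asserting that $\norm{E}\le 1$ for the transfer matrix of any trace non-increasing CP map, and you are right that this assertion is false: the reset channel $\rho\mapsto\tr(\rho)\proj{0}$ is trace-preserving, yet its transfer matrix $\kettbraa{\proj{0}}{\1}$ has operator norm $\norm{\1}_2=\sqrt{d}$. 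Indeed, the lemma as stated is not merely unproven but false without a contractivity assumption: take $m=2$, $k=1$, let $\cE_1=\cE'_1$ reset qudit~1 to $\ket{0}$, and let $\cE_2,\cE'_2$ reset qudit~2 to states $\sigma,\sigma'$ with $\sigma-\sigma'=\tfrac{\delta}{2}(\proj{0}-\proj{1})$. Then $\norm{\cE-\cE'}_1\le\delta$ (under either reading of the hypothesis), but $\norm{E-E'}=\norm{E_1}\,\norm{E_2-E'_2}=\sqrt{d}\cdot\sqrt{d}\,\delta/\sqrt{2}=d\delta/\sqrt{2}$, which exceeds the claimed bound $2\sqrt{d}\,\delta$ once $d>8$.

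Your repair is also the correct one. The general fact that rescues the argument is that a CP map which is both trace non-increasing ($\cE^\dg(\1)\le\1$) and sub-unital ($\cE(\1)\le\1$) has transfer operator norm at most~$1$, via the Schur-test-type bound $\norm{E}\le\norm{\cE(\1)}^{1/2}\,\norm{\cE^\dg(\1)}^{1/2}$; the reset-to-a-pure-state channel fails sub-unitality, whereas the maps to which the lemma is actually applied in \cref{sec:fault-resilience} --- single-Kraus weak measurements $\rho\mapsto A\rho A^\dg$ with $\norm{A}\le 1$, and resampling to the maximally mixed state --- satisfy both conditions, exactly as you verify directly. So the lemma needs either sub-unitality (or explicit transfer-norm contractivity of the factors) added to its hypotheses, or to be stated for the concrete DQE instruments; with that amendment your proof is complete, and the gap lies in the paper's proof rather than in yours.
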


\begin{proof}
  Since $\cE_i,\cE'_i$ act non-trivially on at most $k$ qudits, we have $\cE_i = \mathfrak{e}_i\ox\1$ where $\mathfrak{e}_i$ is a CP map on $k$ qudits, and similarly for $\cE'_i$, with $\norm{\mathfrak{e}'_i-\mathfrak{e}_i}_1 \leq \delta$.
  Let $e_i,e'_i$ be the transfer matrices corresponding to $\mathfrak{e}_i,\mathfrak{e}'_i$.
  Using \cref{transfer_matrix_perturbation}, we have
  \begin{equation}\label{eq:local_norm}
    \norm{E'_i-E_i}
    = \norm{(e'_i-e_i)\ox\1}
    = \norm{e'_i-e_i}
    \leq \sqrt{d^k}\norm{\mathfrak{e}'_i-\mathfrak{e}_i}
    \leq d^{k/2}\delta.
  \end{equation}
  We have
  \begin{equation}
    \norm{E'_i E'_j - E_i E_j}
    \leq \norm{E'_i-E_i}\norm{E'_j} + \norm{E_i}\norm{E'_j-E_j}
    \leq \norm{E'_i-E_i} + \norm{E'_j-E_j},
  \end{equation}
  recalling that $\norm{E}\leq 1$ for the transfer matrix of a trace non-increasing CP map.
  The \namecref{product_transfer_matrix_perturbation} follows from applying this repeatedly and using \cref{eq:local_norm}.
\end{proof}

\begin{lemma}\label{sum_transfer_matrix_perturbation}
  If $\cE = \sum_{i=1}^m\cE_i$ and $\cE' = \sum_{i=1}^m\cE'_i$ are CP maps on qudits, where $\cE_i,\cE'_i$ are trace non-increasing CP maps that act non-trivially on at most $k$ qudits and satisfy $\norm{\cE-\cE'}_1\leq\delta$, then their transfer matrices $E$ and $E'$ satisfy $\norm{E-E'} \leq m d^{k/2}\delta$.
\end{lemma}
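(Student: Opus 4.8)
The plan is to exploit the fact that vectorisation is linear, so that the transfer matrix of a sum of CP maps is simply the sum of the individual transfer matrices. Concretely, if $\cE_i(\rho) = \sum_\alpha A_{i\alpha}\rho A_{i\alpha}^\dg$ then $\cE = \sum_i \cE_i$ has transfer matrix $E = \sum_{i,\alpha} A_{i\alpha}\ox\bar{A_{i\alpha}} = \sum_{i=1}^m E_i$, and likewise $E' = \sum_{i=1}^m E'_i$. This is the key structural observation, and it immediately gives the decomposition $E - E' = \sum_{i=1}^m (E_i - E'_i)$, reducing the global perturbation to a sum of local ones.

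From there the argument is essentially identical in spirit to \cref{product_transfer_matrix_perturbation}, but simpler because there is no telescoping product to unwind --- just a triangle inequality. First I would write
\begin{equation}
  \norm{E-E'} = \Bignorm{\sum_{i=1}^m (E_i-E'_i)} \leq \sum_{i=1}^m \norm{E_i-E'_i},
\end{equation}
and then bound each summand. Since each $\cE_i,\cE'_i$ acts non-trivially on at most $k$ qudits, I can write $\cE_i = \mathfrak{e}_i\ox\cI$ and $\cE'_i = \mathfrak{e}'_i\ox\cI$ with $\mathfrak{e}_i,\mathfrak{e}'_i$ CP maps on $k$ qudits, so that the per-term bound $\norm{E_i-E'_i} \leq d^{k/2}\delta$ is exactly \cref{eq:local_norm}, established in the proof of \cref{product_transfer_matrix_perturbation} via \cref{transfer_matrix_perturbation} and the fact that tensoring with the identity does not change the operator norm. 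Summing over the $m$ terms yields $\norm{E-E'} \leq m\, d^{k/2}\delta$, as claimed.

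The only point that requires care --- and the closest thing to an obstacle --- is the interpretation of the hypothesis $\norm{\cE-\cE'}_1\leq\delta$. As in the companion \cref{product_transfer_matrix_perturbation}, the relevant content is the local one: each local piece satisfies $\norm{\mathfrak{e}_i-\mathfrak{e}'_i}_1\leq\delta$ (equivalently $\norm{\cE_i-\cE'_i}_1\leq\delta$), which is what the per-term bound \cref{eq:local_norm} consumes. With that reading in place the lemma follows in two lines, and no genuinely new estimate beyond \cref{transfer_matrix_perturbation} is needed; the factor $m$ arises purely from the triangle inequality over the $m$ summands, mirroring how the factor $m$ arose from the product expansion in the previous lemma.
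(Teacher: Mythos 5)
Your proposal is correct and matches the paper's own proof, which is exactly the triangle inequality $\norm{E-E'} \leq \sum_{i=1}^m\norm{E'_i-E_i} \leq m d^{k/2}\delta$ combined with the per-term bound \cref{eq:local_norm} from \cref{product_transfer_matrix_perturbation}. Your remark about reading the hypothesis as a per-term bound $\norm{\mathfrak{e}_i-\mathfrak{e}'_i}_1\leq\delta$ is a fair point of care, and it is the same (implicit) reading the paper uses in both lemmas.
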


\begin{proof}
  \begin{equation}
    \norm{E-E'} \leq \sum_{i=1}^m\norm{E'_i-E_i} \leq m d^{k/2}\delta
  \end{equation}
  by \cref{eq:local_norm}.
\end{proof}

\Cref{product_transfer_matrix_perturbation,sum_transfer_matrix_perturbation} imply that, for the AGSPs of \cref{AGSP_product,AGSP_mixture}, if each local operation involved in implementing \cref{DQE} is faulty and/or subject to noise, but still acts locally on at most $k$ qudits (which may still be more qudits than the original operation acted upon), we do not pick up a global dimension dependence in our fault-resilience results.

\subsection{Stopped process resilience}
We will make use of the following Perron-Frobenius-type result from~\cite{Michael_channels}.

\begin{theorem}[{\cite[Theorem~6.5]{Michael_channels}}]\label{spectral_radius_eigenvalue}
  Let $\cE$ be a positive map with spectral\linebreak radius $r$.
  Then $r$ is an eigenvalue of $\cE$ and there is an eigenoperator $X\geq 0$ of $\cE$ such that $\cE(X) = r X$.
\end{theorem}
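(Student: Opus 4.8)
The plan is to recognise this as the finite-dimensional Perron--Frobenius (Krein--Rutman) theorem for a linear map preserving a proper cone, specialised to the cone of positive semidefinite operators. First I would note that a positive map preserves Hermiticity: writing any Hermitian $X$ as a difference $X_+-X_-$ of positive operators, $\cE(X)=\cE(X_+)-\cE(X_-)$ is again Hermitian, so $\cE$ restricts to a real-linear map on the real vector space $V$ of Hermitian operators on $\C^D$. Since the spectrum of a real-linear map and of its complexification coincide, the spectral radius $r$ computed on all of $\cB(\C^D)$ equals that of $\cE|_V$. The cone $\mathcal{P}\subset V$ of positive semidefinite operators is closed, convex, pointed and full-dimensional, and $\cE(\mathcal{P})\subseteq\mathcal{P}$ by positivity, so the task reduces exactly to producing an eigenvector of $\cE|_V$ lying in $\mathcal{P}$ with eigenvalue $r$.

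Next I would dispatch the strictly positive (primitive) case by a fixed-point argument. I would perturb to $\cE_\delta(X):=\cE(X)+\delta\,\tr(X)\,\1$, which sends $\mathcal{P}\setminus\{0\}$ into the positive-definite operators, i.e.\ into $\operatorname{int}\mathcal{P}$, so that $\tr(\cE_\delta(X))>0$ on the compact convex base $B:=\{X\in\mathcal{P}:\tr X=1\}$ of density matrices. The continuous self-map $X\mapsto\cE_\delta(X)/\tr(\cE_\delta(X))$ of $B$ then has a fixed point by Brouwer's theorem, yielding a positive-definite eigenoperator $X_\delta\in\operatorname{int}\mathcal{P}$ with eigenvalue $\lambda_\delta=\tr(\cE_\delta(X_\delta))>0$.

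I would then pin $\lambda_\delta$ to the spectral radius $r(\cE_\delta)$ by a cone-sandwiching estimate. Since $X_\delta$ is positive definite, every Hermitian $Y$ satisfies $-cX_\delta\leq Y\leq cX_\delta$ for some $c>0$ (take $c=\norm{Y}/\lambda_{\min}(X_\delta)$); applying the positive map $\cE_\delta^k$ preserves this order, and as $\cE_\delta^k(X_\delta)=\lambda_\delta^k X_\delta$ we get $-c\lambda_\delta^k X_\delta\leq\cE_\delta^k(Y)\leq c\lambda_\delta^k X_\delta$, whence $\norm{\cE_\delta^k(Y)}\leq c\lambda_\delta^k\norm{X_\delta}$ (using that $-A\leq M\leq A$ forces $\norm{M}\leq\norm{A}$ for Hermitian $M$). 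Uniformly over $\norm{Y}\leq1$ this gives $\norm{\cE_\delta^k}\leq c'\lambda_\delta^k$, so $r(\cE_\delta)=\lim_k\norm{\cE_\delta^k}^{1/k}\leq\lambda_\delta$; the reverse inequality is automatic since $\lambda_\delta$ is an eigenvalue, and hence $\lambda_\delta=r(\cE_\delta)$.

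Finally I would remove the perturbation by compactness and continuity. The operators $X_\delta$ lie in the compact base $B$, so along a subsequence $\delta\to0$ they converge to some $X_0\in B\subseteq\mathcal{P}$ with $X_0\neq0$; since $\cE_\delta\to\cE$ in norm and eigenvalues depend continuously on the map, $r(\cE_\delta)\to r(\cE)=r$, and passing to the limit in $\cE_\delta(X_\delta)=r(\cE_\delta)X_\delta$ gives $\cE(X_0)=rX_0$ with $X_0\geq0$, as required. The main obstacle is the third step---identifying the positive eigenvalue from the fixed point with the spectral radius rather than merely with some eigenvalue---which is precisely where the cone order does the essential work through the sandwiching estimate; the perturbation and the limiting argument are routine once that bound is in hand.
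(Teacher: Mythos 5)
Your proof is correct, and it necessarily differs from what the paper does, because the paper does not prove this statement at all: it is imported verbatim, with citation, from Wolf's lecture notes \cite{Michael_channels} and used as a black box (it underpins, e.g., \cref{expected_normalisation,E1E0i_TP,contraction_lower_bound}). The proof in that cited source follows the resolvent route of Evans and H{\o}egh-Krohn: for real $z>r$ the resolvent series $(z\,\mathrm{id}-\cE)^{-1}=\sum_{k\geq0}\cE^k z^{-k-1}$ is itself a positive map, positivity lets one dominate the resolvent at any point of the circle $\abs{z}=r$ by its behaviour on the positive real axis, so the singularity that must exist somewhere on that circle is forced to occur at $z=r$ itself, and normalising the resolvent applied to a full-rank state as $z\downarrow r$ produces the PSD eigenvector by compactness. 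You instead run the classical Krein--Rutman/Perron--Frobenius scheme: strictly positivise via $\cE_\delta(X)=\cE(X)+\delta\tr(X)\1$, obtain a positive-definite fixed point from Brouwer's theorem, identify its eigenvalue with $r(\cE_\delta)$, and remove the perturbation using compactness of the set of density matrices together with continuity of the spectral radius in finite dimensions. Your execution is sound at every step, including the two places where such arguments usually go wrong: you correctly reduce to the real vector space of Hermitian operators (positive maps preserve Hermiticity, and the spectral radius is unchanged under complexification), and you correctly pin the fixed-point eigenvalue to the spectral radius rather than merely exhibiting \emph{some} nonnegative eigenvalue --- the sandwich $-cX_\delta\leq Y\leq cX_\delta$ propagated through the positive map $\cE_\delta^k$ gives $\norm{\cE_\delta^k}\leq c'\lambda_\delta^k$, hence $r(\cE_\delta)\leq\lambda_\delta$ by Gelfand's formula, with the reverse inequality automatic. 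As for what each approach buys: yours is self-contained and elementary modulo Brouwer's fixed-point theorem, whereas the resolvent proof avoids topological fixed-point machinery altogether and carries over directly to the infinite-dimensional ($C^*$-algebraic) setting, which is why it is the standard one in the operator-algebra literature.
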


We will also need some standard perturbation results for eigenvalues and eigenspaces.

\begin{theorem}[{e.g.\ \cite[Theorem~7.2.2]{Golub+vanLoan}}]\label{eigenvalue_perturbation}
  If $\mu$ is an eigenvalue of $A + E \in \cM_n(\C)$ and $X^{-1}AX = \diag(\lambda_1,\dots,\lambda_n)$, then
  \begin{equation}
    \min_{\lambda\in\lambda(A)}\abs{\lambda-\mu} \leq \norm{X}\,\norm{X^{-1}}\,\norm{E}.
  \end{equation}
\end{theorem}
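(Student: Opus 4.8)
The plan is to prove this as an instance of the Bauer--Fike theorem. The statement is trivially true whenever $\mu\in\lambda(A)$, since then the left-hand side is $0$; so the first step is to assume $\mu\notin\lambda(A)$. Under this assumption the diagonal matrix $D-\mu\1 := \diag(\lambda_1-\mu,\dots,\lambda_n-\mu)$ is invertible, because every diagonal entry $\lambda_i-\mu$ is nonzero.

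Next I would exploit the hypothesis that $\mu$ is an eigenvalue of $A+E$, which means $A+E-\mu\1$ is singular. Writing $A = XDX^{-1}$ with $D=\diag(\lambda_1,\dots,\lambda_n)$, I would factor
\begin{equation}
  A+E-\mu\1 = X(D-\mu\1)X^{-1}+E = X(D-\mu\1)\Bigl(\1 + (D-\mu\1)^{-1}X^{-1}EX\Bigr)X^{-1}.
\end{equation}
Since $X$, $D-\mu\1$ and $X^{-1}$ are all invertible, singularity of the left-hand side forces the remaining factor $\1 + (D-\mu\1)^{-1}X^{-1}EX$ to be singular.

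The crux of the argument is then the elementary observation that if $\1+M$ is singular, then $-1$ is an eigenvalue of $M$, so the spectral radius of $M$ is at least $1$, and hence $\norm{M}\geq 1$ because the operator norm dominates the spectral radius. Applying this with $M = (D-\mu\1)^{-1}X^{-1}EX$ and using submultiplicativity gives
\begin{equation}
  1 \leq \norm{(D-\mu\1)^{-1}}\,\norm{X^{-1}}\,\norm{E}\,\norm{X}.
\end{equation}
The final step is to evaluate $\norm{(D-\mu\1)^{-1}}$: since it is diagonal, its operator norm is $\max_i\abs{\lambda_i-\mu}^{-1} = \bigl(\min_i\abs{\lambda_i-\mu}\bigr)^{-1}$. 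Rearranging yields $\min_{\lambda\in\lambda(A)}\abs{\lambda-\mu}\leq\norm{X}\,\norm{X^{-1}}\,\norm{E}$, as claimed.

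I do not expect any genuine obstacle here, as this is a classical result; the only points requiring a little care are keeping track of which norm is in play (the operator norm $\norm{\;\cdot\;}=\norm{\;\cdot\;}_\infty$, so that both the diagonal-matrix norm identity and the bound $\norm{M}\geq{}$ spectral radius hold), and disposing of the degenerate case $\mu\in\lambda(A)$ separately at the outset.
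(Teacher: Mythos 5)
Your proof is correct: it is the standard Bauer--Fike argument, which is precisely the proof given in the reference the paper cites (Golub \& van Loan, Theorem~7.2.2); the paper itself states this result as a citation without providing its own proof. Your factorisation, the singularity argument forcing $\norm{(D-\mu\1)^{-1}X^{-1}EX}\geq 1$, and the evaluation of the diagonal inverse norm are all handled correctly, including the degenerate case $\mu\in\lambda(A)$.
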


\begin{theorem}[{\cite[Theorem~4.11]{Stewart73}}]\label{eigenspace_perturbation}
  Let
  \begin{equation}
    UAU^\dg = \blockmatrix{A_{11}}{A_{12}}{0}{A_{22}}, \quad U = (V_1 \;|\; V_2)
  \end{equation}
  be a Schur decomposition of $A\in\cM_n{\C}$.
  Let $E\in\cM_n(\C)$ be partitioned as
  \begin{equation}
    UEU^\dg = \blockmatrix{E_{11}}{E_{12}}{E_{21}}{E_{22}}.
  \end{equation}
  Let
  \begin{equation}
    s = \sep(A_{11},A_{22}) - \norm{E_{11}} - \norm{E_{22}}.
  \end{equation}
  If
  \begin{equation}
    \frac{\norm{E_{21}}(\norm{A_{12}}+\norm{E_{12}})}{s^2} \leq \frac{1}{4},
  \end{equation}
  there is a matrix $Q$ satisfying
  \begin{equation}
    \norm{Q} \leq 2\frac{\norm{E_{21}}}{s}
  \end{equation}
  such that the columns of $V'_1 = (V_1+V_2Q)(\1+Q^\dg Q)^{-1/2}$ span an invariant subspace of $A+E$.
\end{theorem}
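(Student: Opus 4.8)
The plan is to reduce the statement to an algebraic Riccati equation and then solve that equation by a fixed-point (contraction) argument, with the separation function $\sep$ controlling the relevant Sylvester operator. First I would pass to the Schur basis by conjugating with $U$, so that $A$ becomes the block upper-triangular $\blockmatrix{A_{11}}{A_{12}}{0}{A_{22}}$ and $E$ becomes $\blockmatrix{E_{11}}{E_{12}}{E_{21}}{E_{22}}$. In this basis $V_1,V_2$ are the coordinate blocks, so the candidate subspace $V_1+V_2Q$ is spanned by the columns of the block vector $\blockvector{\1}{Q}$. Requiring this to be invariant under $A+E$ means there is an $L$ with $(A+E)\blockvector{\1}{Q}=\blockvector{\1}{Q}L$; eliminating $L$ from the top block and substituting into the bottom block yields the quadratic equation
\begin{equation}
  (A_{22}+E_{22})Q-Q(A_{11}+E_{11}) = Q(A_{12}+E_{12})Q - E_{21}.
\end{equation}

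Next I would introduce the perturbed Sylvester operator $\tilde T(Q):=(A_{22}+E_{22})Q-Q(A_{11}+E_{11})$. Recalling that $\sep(A_{11},A_{22})=\min_{\norm{Q}=1}\norm{T(Q)}$ for the unperturbed operator $T(Q)=A_{22}Q-QA_{11}$, and that $\tilde T=T+(E_{22}Q-QE_{11})$ differs from $T$ by an operator of norm at most $\norm{E_{11}}+\norm{E_{22}}$, a one-line bound-below on $\norm{\tilde T(Q)}$ at $\norm{Q}=1$ gives $\sep(A_{11}+E_{11},A_{22}+E_{22})\geq s$, so that $\tilde T$ is invertible with $\norm{\tilde T^{-1}}\leq 1/s$ whenever $s>0$. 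The Riccati equation then becomes the fixed-point equation $Q=\Psi(Q)$ with $\Psi(Q):=\tilde T^{-1}\bigl[Q(A_{12}+E_{12})Q-E_{21}\bigr]$.

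The heart of the argument is to run the Banach fixed-point theorem for $\Psi$ on the closed ball $\mathcal B=\{\norm{Q}\leq 2\norm{E_{21}}/s\}$. Writing $\gamma=\norm{A_{12}}+\norm{E_{12}}$ and $\rho=2\norm{E_{21}}/s$, the self-mapping estimate $\norm{\Psi(Q)}\leq (\gamma\rho^2+\norm{E_{21}})/s$ is $\leq\rho$ exactly when $\gamma\norm{E_{21}}/s^2\leq 1/4$, which is precisely the hypothesis of the theorem; this is where both the threshold $1/4$ and the radius $2\norm{E_{21}}/s$ come from. Using the factorisation $Q_1BQ_1-Q_2BQ_2=Q_1B(Q_1-Q_2)+(Q_1-Q_2)BQ_2$ with $B=A_{12}+E_{12}$ gives a Lipschitz constant $2\gamma\rho/s=4\gamma\norm{E_{21}}/s^2\leq 1$ on $\mathcal B$, so $\Psi$ contracts. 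The resulting fixed point obeys $\norm{Q}\leq\rho=2\norm{E_{21}}/s$, and $\blockvector{\1}{Q}$ spans an invariant subspace; transforming back, its span equals that of $V_1+V_2Q$, and right-multiplying by $(\1+Q^\dg Q)^{-1/2}$ only orthonormalises the columns without changing the span, producing the claimed $V'_1$.

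The main obstacle is the boundary case $\gamma\norm{E_{21}}/s^2=1/4$, where the contraction constant degrades to exactly $1$ and the naive Banach argument only yields a non-strict contraction. I would handle this either by proving the result under the strict inequality on the open ball and passing to the limit, or, following the monotone-majorant analysis of the scalar map $x\mapsto(\gamma x^2+\norm{E_{21}})/s$, by showing that the iterates $Q_{k+1}=\Psi(Q_k)$ started from $Q_0=0$ have monotonically increasing norms bounded by the smaller root $\rho$ of the majorant and hence converge. A secondary technical point is establishing the $\sep$-perturbation bound $\norm{\tilde T^{-1}}\leq 1/s$ cleanly, which relies on $s>0$ to guarantee invertibility of $\tilde T$ in the first place.
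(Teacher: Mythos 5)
You should note that the paper itself offers no proof of this statement---it is imported verbatim from \cite[Theorem~4.11]{Stewart73}---and your proposal correctly reconstructs Stewart's original argument: reduce invariance of the graph subspace spanned by the columns of $\blockvector{\1}{Q}$ to the quadratic Riccati equation, lower-bound the perturbed Sylvester operator by $s = \sep(A_{11},A_{22})-\norm{E_{11}}-\norm{E_{22}}$, and solve by successive approximation on the ball of radius $2\norm{E_{21}}/s$, with the $1/4$ threshold arising exactly from the self-mapping and Lipschitz estimates you give. Your treatment of the boundary case $\norm{E_{21}}(\norm{A_{12}}+\norm{E_{12}})/s^2 = 1/4$ via the monotone scalar majorant of the iterates started at $Q_0=0$ is also precisely how the original proof handles it, so the proposal is correct and takes essentially the same route as the cited source.
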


The following \namecref{Q_to_P} relates $\norm{Q}$ to the difference between the orthogonal projectors onto the unperturbed and perturbed invariant subspaces.

\begin{lemma}[{\cite[Theorems~2.2 and~2.7]{Stewart73}}]\label{Q_to_P}
  Let $U=(V_1|V_2)$ be unitary, and $V'_1 = (V_1+V_2Q)(\1+Q^\dg Q)^{-1/2}$.
  If $P$, $P'$ are the orthogonal projectors onto the subspaces spanned by the columns of $V_1$, $V'_1$, respectively, then
  \begin{equation}
    \norm{P-P'} \leq \norm{Q}.
  \end{equation}
\end{lemma}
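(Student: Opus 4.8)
The plan is to reduce the bound to a statement about the principal angles between the two subspaces, exploiting the explicit formula for $V'_1$. First I would verify that $V'_1$ is an isometry, so that the orthogonal projectors are simply $P = V_1 V_1^\dg$ and $P' = V'_1 {V'_1}^\dg$. Using that $U = (V_1\,|\,V_2)$ is unitary, hence $V_1^\dg V_1 = \1$, $V_2^\dg V_2 = \1$ and $V_1^\dg V_2 = 0$, one computes $(V_1 + V_2 Q)^\dg (V_1 + V_2 Q) = \1 + Q^\dg Q$, and therefore ${V'_1}^\dg V'_1 = (\1 + Q^\dg Q)^{-1/2}(\1 + Q^\dg Q)(\1 + Q^\dg Q)^{-1/2} = \1$, confirming that $V'_1$ has orthonormal columns.

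The key computation is the overlap $V_1^\dg V'_1$. By the same orthogonality relations,
\begin{equation}
  V_1^\dg V'_1 = (V_1^\dg V_1 + V_1^\dg V_2 Q)(\1 + Q^\dg Q)^{-1/2} = (\1 + Q^\dg Q)^{-1/2}.
\end{equation}
Next I would invoke the standard fact (as already used in the proof of \cref{Davis-Kahan}; see \cite{Bhatia}) that for two equal-rank orthogonal projectors the singular values of $V_1^\dg V'_1$ are the cosines $\cos\theta_i$ of the principal angles between the ranges of $V_1$ and $V'_1$, and that $\norm{P - P'} = \norm{\sin\Theta} = \max_i \sin\theta_i$. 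Since $V_1$ and $V'_1$ have the same number of columns, the equal-rank hypothesis is satisfied.

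From the displayed identity, the singular values of $V_1^\dg V'_1$ are the eigenvalues of $(\1 + Q^\dg Q)^{-1/2}$, namely $(1 + \sigma_i^2)^{-1/2}$, where $\sigma_i$ are the singular values of $Q$. Hence $\cos\theta_i = (1 + \sigma_i^2)^{-1/2}$, which gives $\sin\theta_i = \sigma_i (1 + \sigma_i^2)^{-1/2}$ and therefore $\tan\theta_i = \sigma_i$. Combining these,
\begin{equation}
  \norm{P - P'} = \max_i \sin\theta_i = \max_i \frac{\sigma_i}{\sqrt{1 + \sigma_i^2}} \leq \max_i \sigma_i = \norm{Q},
\end{equation}
using $\sin\theta \leq \tan\theta$ on $[0,\pi/2)$, which is exactly the claimed bound.

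The individual computations are all routine, so there is no single hard step; the only point requiring care is the appeal to the principal-angle identity $\norm{P - P'} = \max_i \sin\theta_i$, which relies on $V_1$ and $V'_1$ spanning subspaces of the same dimension --- guaranteed here because the two matrices have the same number of columns. If one preferred an entirely self-contained argument, one could instead conjugate by $U$ to put $V_1 = \binom{\1}{0}$ and $V_2 = \binom{0}{\1}$, write out $P - P'$ in $2\times 2$ block form, and bound its norm directly; but the principal-angle route is cleaner and reuses the tools already introduced for \cref{Davis-Kahan}.
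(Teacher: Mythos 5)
Your proof is correct. Note that the paper itself does not prove \cref{Q_to_P} at all: it is stated as an imported result, citing Stewart's Theorems~2.2 and~2.7, so there is no in-paper argument to compare against; your proposal supplies a self-contained proof of the cited fact. The three ingredients all check out: the isometry computation $(V_1+V_2Q)^\dg(V_1+V_2Q) = \1+Q^\dg Q$ (valid because unitarity of $U$ gives $V_1^\dg V_1=\1$, $V_2^\dg V_2=\1$, $V_1^\dg V_2=0$); the overlap identity $V_1^\dg V'_1 = (\1+Q^\dg Q)^{-1/2}$; and the principal-angle identity $\norm{P-P'}=\max_i\sin\theta_i$ for equal-dimensional subspaces, which is legitimate here since both $V_1$ and $V'_1$ are isometries with the same number of columns, and which is exactly the Bhatia fact the paper already invokes in its proof of \cref{Davis-Kahan}, so you are reusing a tool the paper has already admitted. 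Your chain $\cos\theta_i=(1+\sigma_i^2)^{-1/2}$, $\tan\theta_i=\sigma_i$, $\sin\theta_i\leq\sigma_i$ in fact proves the slightly sharper equality
\begin{equation}
  \norm{P-P'} = \frac{\norm{Q}}{\sqrt{1+\norm{Q}^2}} \leq \norm{Q},
\end{equation}
which is the form the bound takes in Stewart's original treatment; the stated lemma follows a fortiori. The alternative you sketch at the end (conjugating by $U$ and bounding the $2\times2$ block form of $P-P'$ directly via the SVD of $Q$) also works and gives the same equality, so either route is sound.
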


From these, we can derive the following useful perturbation bounds.

\begin{corollary}\label{spectral_radius_perturbation}
  Let $\cE$ be a CP map whose transfer matrix $E$ is normal.
  Let $\cE'$ be a CP map satisfying $\norm{E-E'}\leq\delta$, with corresponding transfer matrix $E'$.
  Then the spectral radii $r(E)$ and $r(E')$ satisfy
  \begin{equation}
    \abs{r(E') - r(E)} \leq \delta.
  \end{equation}
\end{corollary}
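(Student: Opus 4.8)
The plan is to prove the two one-sided bounds $r(E')\le r(E)+\delta$ and $r(E)\le r(E')+\delta$ separately, exploiting normality of $E$ throughout. First I would record that, since $\cE$ and $\cE'$ are completely positive (hence positive) maps, \cref{spectral_radius_eigenvalue} guarantees that $r(E)$ and $r(E')$ are genuine eigenvalues of $E$ and $E'$, attained by nonnegative eigenoperators; in particular both are real and nonnegative, so it suffices to control each one-sided difference.

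For the upper direction I would use that a normal matrix has spectral radius equal to its operator norm, so $r(E)=\norm{E}$, together with the elementary fact that the spectral radius never exceeds the operator norm. Then
\begin{equation}
  r(E') \le \norm{E'} \le \norm{E} + \norm{E-E'} \le r(E) + \delta,
\end{equation}
which needs nothing beyond the triangle inequality and the hypothesis $\norm{E-E'}\le\delta$. Equivalently, one may apply \cref{eigenvalue_perturbation} with base matrix $E$: since $E$ is normal it is diagonalised by a unitary, so $\norm{X}\,\norm{X^{-1}}=1$ and every eigenvalue of $E'=E+(E'-E)$ lies within $\delta$ of the spectrum of $E$; taking this eigenvalue to be $r(E')$ recovers the same bound.

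The lower direction $r(E)\le r(E')+\delta$ is the crux, and the main obstacle. Bauer--Fike (\cref{eigenvalue_perturbation}) is inherently one-sided: it requires the base matrix to be diagonalisable with controlled eigenvector condition number, but $E'$ is only assumed to be a CP transfer matrix and need be neither normal nor even diagonalisable, so I cannot simply swap the roles of $E$ and $E'$. Instead I would establish the \emph{reverse} spectral inclusion: every eigenvalue of $E$, in particular the Perron eigenvalue $r(E)$, lies within $\delta$ of $\operatorname{spec}(E')$. To do this I would interpolate along the segment $E_t := E + t(E'-E)$ for $t\in[0,1]$, whose characteristic polynomial depends polynomially on $t$, so its roots can be followed by continuous eigenvalue branches $\mu_i(t)$ with $\{\mu_i(0)\}=\operatorname{spec}(E)$. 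Applying the one-sided bound above to each $E_t$ (whose base $E$ is normal, with $\norm{E-E_t}\le\delta$) confines every branch to the $\delta$-neighbourhood $\bigcup_{\lambda\in\operatorname{spec}(E)}\{z:\abs{z-\lambda}\le\delta\}$ for all $t$. A branch starting at $r(E)$ therefore cannot escape the connected component around $r(E)$ whenever the $\delta$-disks about distinct eigenvalues are disjoint, so $\mu_i(1)\in\operatorname{spec}(E')$ satisfies $\abs{\mu_i(1)-r(E)}\le\delta$, whence $r(E')\ge\abs{\mu_i(1)}\ge r(E)-\delta$.

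The remaining technical point is the clustered case, where the $\delta$-disks overlap and a branch might in principle drift towards a smaller eigenvalue; I would dispose of it by a limiting argument. Approximate $E$ by normal matrices $E_\eta$ with simple, well-separated spectrum (perturbing the diagonal in the spectral decomposition of $E$) and set $E'_\eta := E_\eta + (E'-E)$, so that $\norm{E_\eta-E'_\eta}=\norm{E-E'}\le\delta$; the separated case gives $r(E'_\eta)\ge r(E_\eta)-\delta$, and letting $\eta\to0$ the continuity of the spectral radius yields $r(E')\ge r(E)-\delta$. Combining the two directions gives $\abs{r(E')-r(E)}\le\delta$, as claimed.
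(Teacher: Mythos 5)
Your proof of the upper bound $r(E')\le r(E)+\delta$ is correct, and your diagnosis of the asymmetry is exactly the right one: \cref{eigenvalue_perturbation} with base matrix $E$ constrains where the eigenvalues of $E'$ may lie relative to $\mathrm{spec}(E)$, but it does not produce an eigenvalue of $E'$ near $r(E)$, and it cannot be applied with base $E'$, which need not be diagonalisable. (The paper's own one-line proof, citing \cref{transfer_matrix_perturbation,spectral_radius_eigenvalue,eigenvalue_perturbation}, in fact only delivers this upper direction.) Your homotopy and eigenvalue-counting argument for the lower bound is also sound in the separated case: when the closed $\delta$-disks around the eigenvalues of modulus $r(E)$ are disjoint from the disks around the rest of $\mathrm{spec}(E)$, the number of eigenvalues of $E_t=E+t(E'-E)$ in each disk union is constant in $t$, so $E'$ has an eigenvalue of modulus at least $r(E)-\delta$.

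The genuine gap is the final limiting argument, and it cannot be repaired, because the lower bound is simply false once $\mathrm{spec}(E)$ clusters at scale $\delta$. The separated case needs distinct eigenvalues of $E_\eta$ to be more than $2\delta$ apart, where $\delta$ is \emph{fixed}; if $E$ has distinct eigenvalues closer than $2\delta$, then any normal $E_\eta$ with $\norm{E_\eta-E}=\eta$ small still has eigenvalues closer than $2\delta-2\eta$, so the separated case never becomes applicable and letting $\eta\to0$ proves nothing. A concrete counterexample to the corollary as stated: let $\cE(\rho)=A\rho A^\dagger$ and $\cE'(\rho)=B\rho B^\dagger$ with
\begin{equation}
  A=\begin{pmatrix}1&0\\0&4/5\end{pmatrix},\qquad
  B=\begin{pmatrix}9/10&1/10\\-1/10&7/10\end{pmatrix}.
\end{equation}
Then $E=A\otimes\bar A$ is diagonal, hence normal (indeed positive semidefinite, as in the intended application $E_0=K\otimes\bar K$), with $r(E)=1$. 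The matrix $B$ has the single defective eigenvalue $4/5$, so $r(E')=r(B)^2=16/25$, while $\norm{A-B}=\sqrt{2}/10$ and
\begin{equation}
  \norm{E-E'}\le\norm{A-B}\,\bigl(\norm{A}+\norm{B}\bigr)
  \le\frac{\sqrt{2}}{10}\left(2+\frac{\sqrt{2}}{10}\right)<0.31,
\end{equation}
yet $\abs{r(E)-r(E')}=9/25=0.36$. Bauer--Fike is satisfied here (indeed $16/25\in\mathrm{spec}(E)$); it is the reverse inclusion that fails, through exactly the mechanism you were worried about: the $\delta$-disks around $1$, $4/5$ and $16/25$ merge into one component, and the Perron root of $E'$ sits at its bottom. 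The upshot is that the two-sided bound requires a gap hypothesis of the kind your separated case uses---and that is precisely the regime in which the paper invokes this corollary: \cref{stopped_fault-resilience} assumes $\delta<\frac{1}{2}\sqrt{\Gamma}(\sqrt{\Gamma}-\sqrt{\Delta})$, which keeps the $\delta$-disks around the eigenvalues of $E_0$ of magnitude $\ge\Gamma$ disjoint from those around the rest of its spectrum. Under that hypothesis your homotopy argument is a complete and correct proof of the inequality actually used, namely $r(E'_0)\ge\Gamma-\delta$.
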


\begin{proof}
  Follows immediately by combining \cref{transfer_matrix_perturbation,spectral_radius_eigenvalue,eigenvalue_perturbation}, noting that a normal matrix is diagonalisable by a unitary $U$ and $\norm{U}=1$.
\end{proof}

\begin{corollary}\label{spectral_projector_perturbation}
  Let $K$ be a $(\Delta,\Gamma,\epsilon)$-AGSP for $\Pi_0$ on a Hilbert space of dimension~$D$, $N:=\tr\Pi_0$ the ground state degeneracy, and $E_0$ the transfer matrix of the CP map $\cE_0(\rho) = K\rho K^\dg$.
  Let $\cE'_0$ be a CP map with transfer matrix $E'$ such that
  \begin{equation}
    \norm{E'_0-E_0} \leq \delta < \frac{\sqrt{\Gamma}(\sqrt{\Gamma}-\sqrt{\Delta})}{2N}.
  \end{equation}
  Let $P$ be the orthogonal projector onto the invariant subspace spanned by the generalised right-eigenvectors of $E$ associated with eigenvalues of magnitude $\leq\sqrt{\Gamma\Delta}$.

  Then there exists an orthogonal projector $P'$ onto the invariant subspace spanned by the generalised right-eigenvectors of $E'$ associated with eigenvalues of magnitude $\geq\sqrt{\Gamma\Delta}-\delta$, which satisfies
  \begin{equation}
    \norm{P'-P} \leq \frac{2\delta}{\sqrt{\Gamma}(\sqrt{\Gamma}-\sqrt{\Delta})-2\delta}.
  \end{equation}
\end{corollary}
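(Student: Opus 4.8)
The plan is to move to the transfer-matrix picture and reduce the statement to a single application of the non-normal eigenspace-perturbation theorem \cref{eigenspace_perturbation} together with \cref{Q_to_P}. The key structural fact is that, because $K$ is Hermitian, the transfer matrix $E_0 = K\ox\bar K$ is \emph{normal}, and its eigenvalues are exactly the products $\lambda_i\lambda_j$ of the (real) eigenvalues of $K$. First I would use the decomposition $K = K_0\oplus K_1$ relative to the $\Pi,\1-\Pi$ splitting provided by \cref{def:AGSP}, in which the eigenvalues of $K_0$ are $\geq\sqrt\Gamma$ and those of $K_1$ have magnitude $\leq\sqrt\Delta$. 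This exhibits $\Pi\ox\Pi$ as a reducing subspace of $E_0$ on which every eigenvalue (a product of two eigenvalues $\geq\sqrt\Gamma$) is $\geq\Gamma$, while on the complementary reducing subspace every eigenvalue contains at least one factor coming from $K_1$.

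The step I expect to be the main obstacle is upgrading this block structure to a genuine spectral gap of $E_0$ located at $\sqrt{\Gamma\Delta}$. On the complement, each eigenvalue $\lambda_i\lambda_j$ with $\lambda_j$ an eigenvalue of $K_1$ is bounded in magnitude by $\norm{K}\sqrt\Delta$; using that the top eigenvalue of $K$ equals $\sqrt\Gamma$ — which holds for the AGSPs built from $H$ in \cref{AGSP_sum,AGSP_product}, where $K$ acts as the scalar $\sqrt\Gamma$ on its dominant eigenspace — this gives magnitude $\leq\sqrt{\Gamma\Delta}$ for all complementary eigenvalues. Since $E_0$ is normal and block diagonal with respect to $\Pi\ox\Pi$ and its complement, the quantity $\sep(A_{11},A_{22})$ appearing in \cref{eigenspace_perturbation} equals the distance between the two spectra, and is therefore at least $\Gamma-\sqrt{\Gamma\Delta} = \sqrt\Gamma(\sqrt\Gamma-\sqrt\Delta)$. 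Verifying that $\sep$ coincides with the spectral distance for these normal blocks, and pinning down the eigenvalue bound on the complement, are the parts that genuinely need care.

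Finally I would feed this into \cref{eigenspace_perturbation}. Aligning the Schur decomposition to the gap makes $A_{12}=0$, and writing the perturbation as $E'_0 = E_0 + (E'_0-E_0)$ with $\norm{E'_0-E_0}\leq\delta$ forces each block of the perturbation to have norm $\leq\delta$; thus $s = \sep(A_{11},A_{22}) - \norm{E_{11}} - \norm{E_{22}} \geq \sqrt\Gamma(\sqrt\Gamma-\sqrt\Delta) - 2\delta$, which the hypothesis on $\delta$ keeps strictly positive (the factor $N$ providing the margin needed to keep the final bound below one) while also securing the smallness condition $\norm{E_{21}}(\norm{A_{12}}+\norm{E_{12}})/s^2\leq\tfrac14$, with $A_{12}=0$ and $\norm{E_{12}},\norm{E_{21}}\leq\delta$. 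The theorem then produces $V'_1$ spanning an $E'_0$-invariant subspace with $\norm{Q}\leq 2\norm{E_{21}}/s\leq 2\delta/s$, whose eigenvalues are the perturbed dominant ones and so have magnitude $\geq\Gamma-\delta\geq\sqrt{\Gamma\Delta}-\delta$, matching the description of $P'$. Taking $P'$ to be the orthogonal projector onto this span and applying \cref{Q_to_P} yields $\norm{P'-P}\leq\norm{Q}\leq\frac{2\delta}{\sqrt\Gamma(\sqrt\Gamma-\sqrt\Delta)-2\delta}$. Because $\norm{P'-P}=\norm{(\1-P')-(\1-P)}$, the identical estimate governs the complementary low-eigenvalue ($\leq\sqrt{\Gamma\Delta}$) subspaces as well, so the bound is insensitive to which of the two matched invariant subspaces one tracks.
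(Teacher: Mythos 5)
Your overall route is the paper's own: pass to the transfer-matrix picture, use Hermiticity of $E_0 = K\ox\bar K$ to split its spectrum into a dominant part (magnitude $\geq\Gamma$) and a low part (magnitude $\leq\sqrt{\Gamma\Delta}$), lower-bound $\sep$ by the spectral distance $\sqrt{\Gamma}(\sqrt{\Gamma}-\sqrt{\Delta})$, and invoke \cref{eigenspace_perturbation} followed by \cref{Q_to_P}. Indeed, your remark that the $\sqrt{\Gamma\Delta}$ threshold requires $K$ to act as the scalar $\sqrt\Gamma$ on its dominant eigenspace is \emph{more} careful than the paper, which silently assumes this partition (it does hold for the AGSPs of \cref{AGSP_sum,AGSP_product}).

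The gap is in your final step. You apply \cref{eigenspace_perturbation} with $V_1$ spanning the \emph{dominant} eigenspace of $E_0$, so what you actually prove is $\norm{P'_{\mathrm{dom}}-\Pi\ox\Pi}\leq 2\delta/s$, where $P'_{\mathrm{dom}}$ is the orthogonal projector onto the perturbed dominant invariant subspace of $E'_0$. This cannot be compared directly with the statement's $P$ (the low-eigenvalue projector of $E_0$): those two projectors are nearly mutually complementary, so $\norm{P'_{\mathrm{dom}}-P}$ is close to~$1$; the printed eigenvalue range for $P'$ is a typo, and the paper's own proof and its use in \cref{stopped_fault-resilience} make clear that \emph{both} projectors are meant to be onto low-eigenvalue subspaces. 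Your attempted repair via $\norm{P'-P}=\norm{(\1-P')-(\1-P)}$ is a true identity, but it does not deliver the required object: $\1-P'_{\mathrm{dom}}$ projects onto the \emph{orthogonal complement} of the perturbed dominant subspace, and since $E'_0$ is in general not normal, that complement is neither $E'_0$-invariant nor spanned by generalised right-eigenvectors of $E'_0$ with low eigenvalues. (For $E_0$ itself the two notions coincide by Hermiticity; for $E'_0$ they do not.) So the projector you construct is not "an orthogonal projector onto the invariant subspace spanned by the generalised right-eigenvectors of $E'$" for either half of the spectrum, which is precisely the structure the corollary asserts and that the downstream argument (\cref{contraction_bound} inside \cref{stopped_fault-resilience}) invokes. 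The fix is to run your identical Stewart argument with $V_1$ spanning the \emph{low} ($\leq\sqrt{\Gamma\Delta}$) eigenspace of $E_0$ instead---the $\sep$ bound is unchanged for the Hermitian, block-diagonal $E_0$---so that Stewart's $V'_1$ directly spans the perturbed low invariant subspace of $E'_0$ and \cref{Q_to_P} compares its projector with $P$; this is exactly what the paper does.
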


\begin{proof}
  Since $E_0=K\ox\bar{K}$, by \cref{def:AGSP} the eigenvalues of $E_0$ can be partitioned into two sets: those with magnitudes $\geq\Gamma$ and those with magnitudes $\leq\sqrt{\Gamma\Delta}$.

  Let $U=(V_1\;|\;V_2)$ be unitary, with the columns of $V_1$ spanning the\linebreak $N^2$\nbd-dimensional invariant subspace associated with the eigenvalues with magnitude $\leq\sqrt{\Gamma\Delta}$ and $V_2$ spanning the $N^2$-dimensional subspace associated with those of magnitude $\geq\Gamma$, so that $P=V_1V_1^\dg$.
  Since $E_0$ is Hermitian, we have that $U^\dg E_0 U = \diag(A_{11},A_{22})$ is block diagonal, with (see \cite[\S 4.3]{Stewart73})
  \begin{equation}
    \sep(A_{11},A_{22}) \geq \sqrt{\Gamma}(\sqrt{\Gamma}-\sqrt{\Delta}).
  \end{equation}

  Noting that
  \begin{align}
    \norm{E_{12}}
    &= \max_{\ket{\psi}} \norm{E_{12}\ket{\psi}} \\
    &\leq \max_{\ket{\psi}} \Norm{\blockvector{E_{12}\ket{\psi}}{E_{22}\ket{\psi}}} \\
    &= \max_{\ket{\psi}} \Norm{\blockmatrix{E_{11}}{E_{12}}{E_{21}}{E_{22}} \blockvector{0}{\ket{\psi}}} \\
    &\leq \max_{\ket{\varphi}} \Norm{\blockmatrix{E_{11}}{E_{12}}{E_{21}}{E_{22}} \ket{\varphi}} \\
    &= \norm{E},
  \end{align}
  and similarly $\norm{E_{21}},\norm{E_{11}},\norm{E_{22}} \leq \norm{E}$, we see that the conditions of \cref{eigenspace_perturbation} are fulfilled by $A=E_0$ and $E=E'_0-E_0$, implying existence of an isometry
  \begin{equation}
    V'_1 = (V_1+V_2Q)(\1+Q^\dg Q)^{-1/2}
  \end{equation}
  whose columns span an invariant subspace of $E'_0$ corresponding to eigenvalues of magnitude $\geq\sqrt{\Gamma\Delta}-\delta$, with
  \begin{equation}
    \norm{Q} \leq \frac{2\delta}{\sqrt{\Gamma}(\sqrt{\Gamma}-\sqrt{\Delta})-2\delta}.
  \end{equation}
  The \namecref{spectral_projector_perturbation} follows by \cref{Q_to_P}.
\end{proof}

We will also need the following result from \cite{Michael_channels}.

\begin{lemma}[{\cite[Lemma~8.5]{Michael_channels}}]\label{triangular_bound}
  Let $X\in\cM_D(\C)$ have Schur decomposition $UXU^\dg=\Lambda + T$, where $\Lambda$ is diagonal with $\norm{\Lambda}\leq 1$, $T$ is strictly upper-triangular, and $U$ unitary.
  Then, for $n\in\N$,
  \begin{equation}
    \norm{X^n} \leq \norm{\Lambda^n} + C_{D,n}\norm{\Lambda}^{n-D+1}\max(\norm{T},\norm{T}^{D-1}),
  \end{equation}
  with $C_{D,n} = (D-1)n^{D-1}$.
  If in addition $2(D-1)\leq n$ then this holds with $C_{D,n} = (D-1)\binom{n}{D-1}$.
\end{lemma}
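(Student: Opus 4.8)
The plan is to use unitary invariance of the operator norm to reduce to bounding $\norm{(\Lambda+T)^n}$, and then to expand this power combinatorially, exploiting the nilpotent ``degree'' structure of the strictly upper-triangular part $T$. First I would observe that since $U$ is unitary, $\norm{X^n} = \norm{U X^n U^\dg} = \norm{(\Lambda+T)^n}$, so it suffices to bound the latter. Expanding the product gives $(\Lambda+T)^n = \sum_w w$, a sum over the $2^n$ words $w$ of length $n$ in the two letters $\Lambda$ and $T$.

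The crucial observation is a degree argument: $T$ has nonzero entries only at positions $(i,j)$ with $j-i\geq 1$ while $\Lambda$ has entries only at $j=i$, and under matrix multiplication these degrees add, so any word containing exactly $k$ factors of $T$ has nonzero entries only for $j-i\geq k$. Since a $D\times D$ matrix has $j-i\leq D-1$, every word containing $\geq D$ copies of $T$ vanishes identically. Hence only words with at most $D-1$ copies of $T$ survive in the expansion.

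Next I would bound the surviving terms. The unique word with zero copies of $T$ is $\Lambda^n$, contributing $\norm{\Lambda^n}$. A word with exactly $k\geq 1$ copies of $T$ has the form $\Lambda^{a_0}T\Lambda^{a_1}\cdots T\Lambda^{a_k}$ with $\sum_i a_i = n-k$, so by submultiplicativity its norm is at most $\norm{\Lambda}^{n-k}\norm{T}^k$, and there are exactly $\binom{n}{k}$ such words. The triangle inequality then gives $\norm{(\Lambda+T)^n} \leq \norm{\Lambda^n} + \sum_{k=1}^{D-1}\binom{n}{k}\norm{\Lambda}^{n-k}\norm{T}^k$. Using $\norm{\Lambda}\leq 1$ to replace $\norm{\Lambda}^{n-k}$ by the larger $\norm{\Lambda}^{n-D+1}$ (valid since $n-k\geq n-D+1$), and bounding $\norm{T}^k\leq\max(\norm{T},\norm{T}^{D-1})$ for $1\leq k\leq D-1$ (the map $t\mapsto t^k$ is monotone in $k$, so the maximum over this range is attained at one of the endpoints $k=1$ or $k=D-1$), factors the bound into $\norm{\Lambda}^{n-D+1}\max(\norm{T},\norm{T}^{D-1})\sum_{k=1}^{D-1}\binom{n}{k}$.

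Finally I would bound the binomial sum in the two stated regimes. In general $\binom{n}{k}\leq n^k\leq n^{D-1}$ for $1\leq k\leq D-1$, giving $\sum_{k=1}^{D-1}\binom{n}{k}\leq (D-1)n^{D-1}=C_{D,n}$. Under the extra hypothesis $2(D-1)\leq n$ the binomials are monotonically increasing on $1\leq k\leq D-1$ (these indices lie below $n/2$), so each term is at most $\binom{n}{D-1}$ and the sum is at most $(D-1)\binom{n}{D-1}$, the sharper constant. I do not anticipate a serious obstacle here: the only genuine idea is the degree/nilpotency argument that annihilates words with too many copies of $T$, after which everything is careful bookkeeping. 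The one point needing a little care is the case split in bounding $\norm{T}^k$ by $\max(\norm{T},\norm{T}^{D-1})$ according to whether $\norm{T}$ lies below or above $1$, together with the two binomial-sum regimes.
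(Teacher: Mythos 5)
Your proof is correct, and it is essentially the standard argument behind this result: the paper itself gives no proof (the lemma is imported verbatim from \cite[Lemma~8.5]{Michael_channels}), and the proof in that reference proceeds exactly as you do --- expand $(\Lambda+T)^n$ into words, kill every word with at least $D$ factors of $T$ by the nilpotency/degree argument, and bound the $\binom{n}{k}$ surviving words of each type by submultiplicativity, with the two stated constants coming from the two binomial-sum estimates. Nothing further is needed.
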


With this, we can prove the following contraction upper-bound.

\begin{proposition}\label{contraction_bound}
  Let $E\in\cM_D(\C)$ be a (not necessarily normal) matrix of spectral radius $\leq 1$, with Jordan decomposition $E = VJV^{-1}$.
  Let $P$ be the orthogonal projector onto the $d$-dimensional invariant subspace spanned by the generalised eigenvectors of $E$ associated with eigenvalues of magnitude $\leq\Delta$.

  Then
  \begin{equation}
    \norm{P E^n} \leq \kappa_J^2 \left(1 + (d-1) n^{d-1} \Delta^{-d+1}\right) \Delta^n.
  \end{equation}
  where $\kappa_J = \norm{V}\norm{V^{-1}}$ is the Jordan condition number.
\end{proposition}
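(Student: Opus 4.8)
The plan is to exploit the given Jordan decomposition $E=VJV^{-1}$ and reduce everything to a bound on the powers of the single block collecting the \emph{slow} eigenvalues. First I would reorder the Jordan blocks so that all those whose eigenvalues have magnitude $\leq\Delta$ come first, grouping them into a $d\times d$ matrix $J_1$ and the rest into $J_2$, so that $J=J_1\oplus J_2$ and $J^n=J_1^n\oplus J_2^n$. The $d$-dimensional invariant subspace in the statement is precisely the span of the first $d$ columns of $V$, and the associated (oblique) spectral projector is $\hat P=V(\1_d\oplus 0)V^{-1}$, which commutes with $E$ and satisfies $\hat P E^n=V(J_1^n\oplus 0)V^{-1}$. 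Since $P$ and $\hat P$ have the same range, I would pass from the orthogonal projector $P$ of the statement to $\hat P$ by expressing one in terms of the other and the eigenbasis $V$; bounding this relation, together with the similarity $V(\cdot)V^{-1}$ that sandwiches $J_1^n$, is what accounts for the two factors of $\kappa_J=\norm{V}\norm{V^{-1}}$, giving $\norm{PE^n}\leq\kappa_J^2\,\norm{J_1^n}$ up to this bookkeeping.

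The remaining task is to control $\norm{J_1^n}$, which is exactly what \cref{triangular_bound} supplies. Being already in Jordan form, $J_1$ has Schur form $J_1=\Lambda+T$ with $\Lambda$ the diagonal of eigenvalues and $T$ the strictly upper-triangular nilpotent part; here $\norm{\Lambda}\leq\Delta$ since all these eigenvalues have magnitude $\leq\Delta$, while $\norm{\Lambda}\leq1$ follows from the spectral-radius bound (so the hypothesis of \cref{triangular_bound} holds), and $\norm{T}\leq1$ because the off-diagonal Jordan entries are $0$ or $1$. Applying \cref{triangular_bound} with $D\to d$ and $C_{d,n}=(d-1)n^{d-1}$ then yields
\begin{equation}
  \norm{J_1^n}\leq \norm{\Lambda^n}+(d-1)n^{d-1}\norm{\Lambda}^{n-d+1}\max(\norm{T},\norm{T}^{d-1})
  \leq \Delta^n+(d-1)n^{d-1}\Delta^{n-d+1},
\end{equation}
which is exactly $\bigl(1+(d-1)n^{d-1}\Delta^{-d+1}\bigr)\Delta^n$. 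Combining this with the condition-number bookkeeping from the first step produces the claimed inequality.

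The main obstacle is the non-normality of $E$. Because $J_1$ is not diagonal, its powers do not simply decay as $\Delta^n$ but pick up the polynomial prefactor $n^{d-1}$, and any mixing between the slow and fast invariant subspaces is controlled only up to the condition number $\kappa_J$ of the eigenbasis rather than being orthogonal. Consequently the delicate point is the first step: cleanly relating the orthogonal projector $P$ to the commuting spectral projector $\hat P$ and honestly accounting for both factors of $\kappa_J$, since for a strongly non-normal $E$ the two projectors can differ substantially. Once that is pinned down, \cref{triangular_bound} disposes of the non-normal growth of $J_1^n$ in a black-box fashion and the rest is routine norm bookkeeping.
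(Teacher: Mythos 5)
Your reduction follows the same strategy as the paper's own proof: split the Jordan form as $J=J_1\oplus J_2$ with the slow blocks in $J_1$, bound $\norm{J_1^n}\leq\bigl(1+(d-1)n^{d-1}\Delta^{-d+1}\bigr)\Delta^n$ via \cref{triangular_bound} (this half of your argument is correct and matches the paper), and reduce $\norm{PE^n}$ to $\kappa_J^2\norm{J_1^n}$. The genuine gap is precisely the step you defer as ``bookkeeping'': passing from the orthogonal projector $P$ to the spectral projector $\hat{P}=V(\1_d\oplus 0)V^{-1}$. The identities you get from equality of ranges are $P\hat{P}=\hat{P}$ and $\hat{P}P=P$, and neither lets you replace $P$ on the \emph{left} of $E^n$: writing $V=(V_1\,|\,V_2)$, one finds
\begin{equation}
  V^{-1}PV=\begin{pmatrix} \1_d & C \\ 0 & 0 \end{pmatrix},
  \qquad\text{hence}\qquad
  PE^n = V\begin{pmatrix} J_1^n & C J_2^n \\ 0 & 0 \end{pmatrix}V^{-1}
       = \hat{P}E^n + P(\1-\hat{P})E^n,
\end{equation}
where $C$ vanishes iff the complementary invariant subspace is orthogonal to the range of $P$. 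For genuinely non-normal $E$ it does not vanish, and the cross term $CJ_2^n$ carries the eigenvalues of magnitude $>\Delta$, which do not decay at rate $\Delta^n$. No condition-number bookkeeping removes it; the inequality $\norm{PE^n}\leq\kappa_J^2\norm{J_1^n}$ you are aiming for is simply false. Concretely, take $E=\begin{pmatrix}0&1\\0&1\end{pmatrix}$, which is idempotent with eigenvalues $0,1$ and eigenvectors $(1,0)^T$, $(1,1)^T$. For any $\Delta<1$ the slow subspace is $\mathrm{span}\{(1,0)^T\}$, so $d=1$, $P=\begin{pmatrix}1&0\\0&0\end{pmatrix}$ and $J_1^n=0$, yet $PE^n=\begin{pmatrix}0&1\\0&0\end{pmatrix}$ has norm $1$ for every $n$, while the claimed bound $\kappa_J^2\Delta^n\to 0$.

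You should also know that the paper's own proof glosses over exactly this point: it asserts $P=V(S_1\oplus 0)V^{-1}$ with $S_1$ invertible, but a projector of that form satisfies $S_1^2=S_1$, forcing $S_1=\1$ and hence $P=\hat{P}$ --- i.e.\ it tacitly assumes the orthogonal and spectral projectors coincide, which is the case $C=0$ excluded by non-normality (and by the example above). So your instinct that this is the delicate point is exactly right; what fails is the hope that it is only bookkeeping. What your argument (and the paper's) does prove is the statement with the projector placed where invariance protects it: $\norm{\hat{P}E^n}=\norm{V(J_1^n\oplus 0)V^{-1}}\leq\kappa_J\norm{J_1^n}$ (only one factor of $\kappa_J$), and, since $E^nP=E^n\hat{P}P=\hat{P}E^nP$, also $\norm{E^nP}\leq\kappa_J\norm{J_1^n}$ and $\norm{PE^nP}\leq\kappa_J\norm{J_1^n}$. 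Whether such a repaired version suffices downstream is a separate question: \cref{stopped_fault-resilience} invokes the proposition with the orthogonal projector $P'$ standing to the left of ${E'_0}^n$, which is exactly the unprotected configuration.
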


\begin{proof}
  Write the Jordan decomposition $VEV^{-1} = J_1 \oplus J_2$ where $J_1$ contains all the Jordan blocks for eigenvalues with magnitude $\leq\Delta$.
  The first $d$ rows of $V$ span the invariant subspace onto which $P$ projects, so $P = V(S_1\oplus 0)V^{-1}$ for some invertible matrix $S_1$.
  Thus
  \begin{equation}
    P E^n = V (S_1\oplus 0) (J_1\oplus J_2)^n V^{-1} = V (S_1 J_1^n\oplus 0) V^{-1}.
  \end{equation}

  Now,
  \begin{equation}
    \norm{S_1} = \norm{S_1\oplus 0} = \norm{V^{-1} P V} \leq \kappa_J\norm{P} = \kappa_J.
  \end{equation}
  So
  \begin{equation}
    \norm{PE^n} \leq \kappa_J\norm{S_1}\norm{J_1^n} \leq \kappa_J^2 \norm{J_1^n}.
  \end{equation}

  Since $J_1 = \Lambda + T$ is a $d$-dimensional upper-triangular matrix with diagonal part $\Lambda\leq\Delta\1$ and $T$ the matrix with 1's along the off-diagonal, we can apply \cref{triangular_bound} to obtain
  \begin{align}
    \norm{PE^n}
    &\leq \kappa_J^2 \left(\norm{\Lambda^n} + C_{d,n}\norm{\Lambda^{n-d+1}}\max(\norm{T},\norm{T^{d-1}})\right)\\
    &\leq \kappa_J^2\left(1 + (d-1)n^{d-1} \Delta^{-d+1}\right)\Delta^n.
  \end{align}
\end{proof}

We will also need a lower-bound on the contraction of the trace under $\cE'_0$, which follows straightforwardly from \cref{spectral_radius_eigenvalue}.

\begin{lemma}\label{contraction_lower_bound}
  Let $\cE$ be a positive map with spectral radius $\geq r$, and $\sigma > 0$ positive definite with minimum eigenvalue $\lambda_0(\sigma) \geq s > 0$.
  Then \mbox{$\tr(\cE^n(\sigma)) \geq s r^n$}.
\end{lemma}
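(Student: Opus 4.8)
The plan is to exploit the Perron--Frobenius-type result \cref{spectral_radius_eigenvalue} to extract a positive semidefinite eigenoperator sitting at the spectral radius, and then to sandwich a multiple of it below $\sigma$ using positivity of $\cE^n$. First I would write $\varrho := r(\cE) \geq r$ for the actual spectral radius of $\cE$ and apply \cref{spectral_radius_eigenvalue} to obtain an eigenoperator $X\geq 0$ with $\cE(X)=\varrho X$. Normalising so that $\norm{X}=1$, the positivity of $X$ together with its largest eigenvalue being $1$ gives simultaneously $X\leq\1$ and $\tr X\geq 1$; the latter holds because $X$ has an eigenvalue equal to $1$ while all its eigenvalues are non-negative, so the sum of eigenvalues is at least $1$.

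Next I would use the hypothesis $\lambda_0(\sigma)\geq s$, i.e.\ $\sigma\geq s\1$, in combination with $X\leq\1$ to obtain $\sigma \geq s\1 \geq sX$, and hence $\sigma - sX \geq 0$. Since $\cE$ is positive, so is the $n$-fold composition $\cE^n$ (composition preserves positivity), and applying it to the operator inequality $\sigma - sX\geq 0$, then using linearity and the iterated eigenvalue relation $\cE^n(X)=\varrho^n X$, yields $\cE^n(\sigma) \geq s\varrho^n X$. Taking traces of this operator inequality — valid because the trace of a positive semidefinite operator is non-negative — gives
\begin{equation}
  \tr(\cE^n(\sigma)) \;\geq\; s\,\varrho^n\,\tr X \;\geq\; s\,r^n,
\end{equation}
where the final step uses $\varrho\geq r\geq 0$ (so $\varrho^n\geq r^n$) together with $\tr X\geq 1$. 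This is exactly the claimed bound.

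I do not expect any genuine obstacle here: once \cref{spectral_radius_eigenvalue} supplies the eigenoperator, the argument is a short sandwiching computation. The only points meriting care are the normalisation of $X$ (arranging $X\leq\1$ and $\tr X\geq 1$ at the same time) and the observation that mere positivity of $\cE$, rather than complete positivity, already suffices, since positivity is preserved under composition and the trace functional is monotone on the positive cone.
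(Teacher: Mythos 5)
Your proof is correct and follows essentially the same route as the paper's: both invoke \cref{spectral_radius_eigenvalue} to obtain a positive eigenoperator at the spectral radius, sandwich a suitably normalised copy of it below $\sigma$ using $\sigma\geq s\1$, and conclude via positivity of $\cE^n$ and non-negativity of the trace on the positive cone. The only (immaterial) difference is the normalisation — you take $\norm{X}=1$ so that $X\leq\1$ and $\tr X\geq 1$, whereas the paper takes $\rho = X/\tr X$ so that $\rho\leq\1$ and $\tr\rho=1$.
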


\begin{proof}
  Let $\rho = X/\tr(X)$ with the $X\geq 0$ from \cref{spectral_radius_eigenvalue}.
  Then $\sigma/s-\rho \geq 0$ and
  \begin{align}
    \tr(\cE^n(\sigma))
    &= s\tr(\cE^n(\sigma/s)) = s\tr\bigl(\cE^n(\rho)\bigr) + s\tr\bigl(\cE^n(\sigma/s-\rho)\bigr) \\
    &\geq s \tr(\cE^n(\rho)) \geq s r^n\tr(\rho) = s r^n.
  \end{align}
\end{proof}

We are now in a position to prove fault-resilience of the stopped process ground state preparation algorithms of \cref{stopped_CPT_map,secretary_CPT_map,expectation_CPT_map}.

\begin{theorem}\label{stopped_fault-resilience}
  Let $K$ be a $(\Delta,\Gamma,\epsilon)$-AGSP for $\Pi_0$ on a Hilbert space of dimension~$D$, and let $N:=\tr\Pi_0$ be the ground state degeneracy.
  Let $\{\cE_0,\cE_1\}$ be the quantum instrument of \cref{stopped_CPT_map} for $K$, and let $\{\cE'_0,\cE'_1\}$ be a quantum instrument such that the corresponding transfer matrices $E_0$, $E'_0$ satisfy
  \begin{equation}
    \norm{E'_0-E_0} \leq \delta < \frac{1}{2}\sqrt{\Gamma}(\sqrt{\Gamma}-\sqrt{\Delta}),
  \end{equation}
  and $\cE'_1$ satisfies $\cE'_1(\rho) \geq \mu\1$ for any state $\rho$.

  Consider the stopped process whereby we iterate $\{\cE'_0,\cE'_1\}$, starting from the maximally mixed state $\rho_0 = \tfrac{\1}{D}$, and stop at some point when we have just obtained a run of $n$ 0's.
  The state $\rho'_n$ at the stopping time satisfies
  \begin{equation}
    \begin{split}
    \tr\Pi_0\rho'_n
    \geq 1 &- \epsilon
            - \frac{2\delta}{\sqrt{\frac{\Gamma}{N}}(\sqrt{\Gamma}-\sqrt{\Delta})-2\delta} \\
           &- \left(\frac{\sqrt{\Delta}+\delta/\sqrt{\Gamma}}
                         {\sqrt{\Gamma}-\delta/\sqrt{\Gamma}}\right)^n
              \frac{\sqrt{D}\kappa_J}{\mu} \left(1 + (D^2-N^2-1) n^{D^2-N^2-1} {\Delta'}^{-D^2+N^2+1}\right),
    \end{split}
  \end{equation}
  where $\kappa_J$ is the Jordan condition number of the transfer matrix $E'_0$.

  In particular,
  \begin{equation}
    \lim_{n\to\infty}\tr(\Pi_0\rho'_n)
    \geq 1 - \epsilon
            - \frac{2\delta}{\sqrt{\Gamma}(\sqrt{\Gamma}-\sqrt{\Delta})-2\delta}.
  \end{equation}
\end{theorem}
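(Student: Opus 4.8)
The plan is to analyse the stopped state directly in the transfer-matrix picture and to exploit the spectral gap of $E_0 = K\ox\bar{K}$, which is robust under the perturbation $\cE_0\mapsto\cE'_0$. After any run of $n$ consecutive $0$'s the renormalised state is $\rho'_n = (\cE'_0)^n(\sigma)/\tr\bigl((\cE'_0)^n(\sigma)\bigr)$, where $\sigma$ is the state at the start of that run. The key structural observation is that $\sigma$ is \emph{always} either the initial maximally mixed state or the output of a resampling step, so in every case $\sigma\geq\mu\1$ (taking $\mu\leq 1/D$). It therefore suffices to bound $\tr(\Pi_0\rho'_n)$ uniformly over all admissible $\sigma\geq\mu\1$; this uniformity is exactly what makes the final bound independent of the run-time, i.e.\ of how many $1$-outcomes and how much elapsed time preceded the final successful run.

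Vectorising, $\kett{\rho'_n} = (E'_0)^n\kett{\sigma}/Z$ with $Z = \braa{\1}(E'_0)^n\kett{\sigma}$. First I would pin down the spectral structure of $E'_0$. Since $K$ is Hermitian, $E_0$ is normal, and \cref{def:AGSP} places its $N^2$ ground-block eigenvalues at magnitude $\geq\Gamma$, separated by a gap from the remaining eigenvalues of magnitude $\leq\sqrt{\Gamma\Delta}$. By \cref{spectral_radius_perturbation} the spectral radius satisfies $r(E'_0)\geq\Gamma-\delta$, and by \cref{spectral_projector_perturbation} there is an orthogonal projector $P'$ onto the dominant $N^2$-dimensional invariant subspace of $E'_0$ with $\norm{P'-P}\leq\frac{2\delta}{\sqrt{\Gamma}(\sqrt{\Gamma}-\sqrt{\Delta})-2\delta}$, where $P=\Pi\ox\Pi$ is precisely the projector onto ground-space-supported operators $\Pi X\Pi$; the complementary block has spectral radius $\leq\Delta'=\sqrt{\Gamma\Delta}+\delta$.

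Next I would split $(E'_0)^n\kett{\sigma} = P'(E'_0)^n\kett{\sigma} + (\1-P')(E'_0)^n\kett{\sigma}$ and bound the two contributions to $\tr\bigl((\1-\Pi)\rho'_n\bigr)$ separately. For the low block, $\mathrm{range}(\1-P')$ is $E'_0$-invariant, so I would apply \cref{contraction_bound} (with $d=D^2-N^2$ and contraction factor $\Delta'$) to get $\norm{(\1-P')(E'_0)^n}\leq\kappa_J^2\bigl(1+(D^2-N^2-1)n^{D^2-N^2-1}{\Delta'}^{-D^2+N^2+1}\bigr){\Delta'}^n$, and lower-bound the normalisation by \cref{contraction_lower_bound}, $Z\geq\mu\,(\Gamma-\delta)^n$. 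Collecting the constants ($\norm{\kett{\sigma}}\leq 1$, the factor $\sqrt{D-N}\leq\sqrt{D}$ from the trace functional, and $\kappa_J$, $\mu$) yields the decaying term, whose base is exactly $\Delta'/(\Gamma-\delta)=\bigl(\sqrt{\Delta}+\delta/\sqrt{\Gamma}\bigr)/\bigl(\sqrt{\Gamma}-\delta/\sqrt{\Gamma}\bigr)$. For the high block, since $\bigl((\1-\Pi)\ox(\1-\Pi)\bigr)(\Pi\ox\Pi)=0$, the whole $(\1-\Pi)$-overlap of the dominant part is carried by $P'-P$, producing an $n$-independent error governed by $\norm{P'-P}$. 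Finally I would convert $\Pi$ back to $\Pi_0$ using $\norm{\Pi-\Pi_0}\leq\epsilon$ from \cref{def:AGSP}, and let $n\to\infty$ to kill the decaying term and leave $1-\epsilon-\norm{P'-P}$.

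The main obstacle is the high-block estimate. The perturbation bound $\norm{P'-P}$ is a Frobenius-norm statement, whereas $\tr\bigl((\1-\Pi)\rho'_n\bigr)$ is a trace-norm quantity, and a naive conversion costs a factor $\sqrt{D-N}$. Obtaining the correct factor requires using the positivity of $\rho'_n$ together with the fact that the dominant invariant subspace has dimension $N^2$ (the ground degeneracy), so that the conversion only costs $\sqrt{N}$; this is the origin of the $\sqrt{\Gamma/N}$ appearing in the finite-$n$ bound, and it sharpens to the clean $\norm{P'-P}$ as $n\to\infty$, when $\rho'_n$ concentrates on the single Perron--Frobenius eigenoperator of $\cE'_0$ guaranteed by \cref{spectral_radius_eigenvalue}. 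A secondary technical point is that $E'_0$ need not be normal, so its powers must be controlled through the Jordan condition number $\kappa_J$ via \cref{triangular_bound,contraction_bound} rather than by a purely spectral estimate.
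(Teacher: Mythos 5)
Your proposal follows the same skeleton as the paper's proof---vectorisation, the spectral gap of $E_0=K\ox\bar{K}$, the Stewart-type perturbation bound of \cref{spectral_projector_perturbation}, the Jordan-form contraction bound of \cref{contraction_bound}, the normalisation lower bound $\mu(\Gamma-\delta)^n$ from \cref{spectral_radius_perturbation,contraction_lower_bound}, and the final conversion $\norm{\Pi-\Pi_0}\leq\epsilon$---but one step fails, and it fails exactly where the non-normality of $E'_0$ matters. You define $P'$ as the \emph{orthogonal} projector onto the dominant $N^2$-dimensional invariant subspace of $E'_0$ and then assert that $\mathrm{range}(\1-P')$ is $E'_0$-invariant, so that \cref{contraction_bound} yields $\norm{(\1-P')(E'_0)^n}\leq\kappa_J^2(\cdots){\Delta'}^n$. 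For a non-normal matrix this is false: the orthogonal complement of an $E'_0$-invariant subspace is invariant under $(E'_0)^\dg$, not under $E'_0$, and the small-eigenvalue invariant subspace is in general \emph{not} orthogonal to the dominant one, so $\1-P'$ is not the orthogonal projector onto any invariant subspace of $E'_0$, and \cref{contraction_bound} (stated precisely for the orthogonal projector onto the small-eigenvalue invariant subspace) does not apply to $(\1-P')(E'_0)^n$. In other words, $\1=P'+(\1-P')$ does not split $(E'_0)^n$ into a dominant plus a decaying piece. The paper avoids this by orienting everything towards the low subspace: it sets $P=\1-\Pi\ox\Pi$, invokes \cref{spectral_projector_perturbation} to obtain an orthogonal projector $P'$ onto the \emph{small}-eigenvalue invariant subspace of $E'_0$ with $\norm{P-P'}\leq 2\delta/\bigl(\sqrt{\Gamma}(\sqrt{\Gamma}-\sqrt{\Delta})-2\delta\bigr)$, bounds the single quantity $\abs{\braakett{\1|P'{E'_0}^n|\rho'_0}}\leq\sqrt{D}\,\norm{P'{E'_0}^n}$ by \cref{contraction_bound}, divides by $\braakett{\1|{E'_0}^n|\rho'_0}\geq\mu(\Gamma-\delta)^n$, and only at the very end exchanges $P'$ for $P$ at cost $\norm{P-P'}$. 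Your argument can be repaired either by adopting that arrangement, or by replacing $\1-P'$ with the oblique spectral projector onto the low subspace (which does commute with $E'_0$), at the price of additional $\kappa_J$ factors; but as written the low-block estimate has a hole.

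A secondary point: your high-block cancellation $\bigl((\1-\Pi)\ox(\1-\Pi)\bigr)(\Pi\ox\Pi)=0$ is the same one the paper uses, and you are right that a naive Cauchy--Schwarz conversion of $\norm{P'-P}$ into a bound on $\tr\bigl((\1-\Pi)\rho'_n\bigr)$ costs a factor $\norm{\kett{\1-\Pi}}=\sqrt{D-N}$. But your proposed remedy---that positivity of $\rho'_n$ together with $\rank P'=N^2$ reduces this to $\sqrt{N}$, thereby ``explaining'' the $\sqrt{\Gamma/N}$ in the finite-$n$ bound---is asserted without any mechanism: for instance $(P'-P)\kett{\1-\Pi}=P'\kett{\1-\Pi}$ still only admits the bound $\norm{P'-P}\,\norm{\kett{\1-\Pi}}$, and nothing in your sketch improves on it. (For comparison, the paper's own proof bounds $\abs{\braakett{\1|(P-P')|\rho'_n}}$ by $\norm{P-P'}$ with no dimension factor and no comment, so you have correctly identified a point the paper elides; but flagging the obstacle and gesturing at a dimension count is not the same as closing it.)
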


\begin{proof}
  Let $\Pi$ be the projector for $K$ from \cref{def:AGSP}, and let $E_0$ and $E'_0$ be the transfer matrices for $\cE_0$ and $\cE'_0$, respectively.
  Since $E_0 = K\ox\bar{K}$, by \cref{def:AGSP} $\Pi\ox\Pi$ is the orthogonal projector onto the $N^2$-dimensional eigenspace of $E_0$ with eigenvalues $\geq\Gamma$, and $P = \1-\Pi\ox\Pi$ is the orthogonal projector onto the $D^2-N^2$-dimensional eigenspace with eigenvalues $\leq \sqrt{\Gamma\Delta}$.

  Noting that $\norm{E'_0-E_0} \leq \delta$ by assumption, \cref{spectral_projector_perturbation} implies that there is an orthogonal projector $P'$ with
  \begin{equation}
    \norm{P'-P} \leq \frac{2\delta}{\sqrt{\Gamma}(\sqrt{\Gamma}-\sqrt{\Delta})-2\delta}
  \end{equation}
  such that $P'$ projects onto the $(D^2-N^2)$-dimensional subspace corresponding to eigenvalues of magnitude $\leq \Delta' := \sqrt{\Gamma\Delta}+\delta$.
  Note that, since $\delta < \frac{1}{2}\sqrt{\Gamma}(\sqrt{\Gamma}-\sqrt{\Delta})$, this implies that $\rank P' = \rank P = N$.

  Let $\rho'_0$ be the state after the last application of $\cE'_1$.
  Using Cauchy-Schwartz and \cref{contraction_bound}, and noting that $\norm{\kett{\rho}} = \sqrt{\tr(\rho^2)} \leq 1$ for any state $\rho$, we have
  \begin{align}
    \Abs{\braakett{\1| P' {E'_0}^n |\rho'_0}}
    &\leq \norm{\kett{\1}} \Norm{P'{E'_0}^n} \norm{\kett{\rho'_0}}
    \leq \sqrt{D} \Norm{P'{E'_0}^n} \\
    &\leq \sqrt{D}\kappa_J \left(1 + (D^2-N^2-1) n^{D^2-N^2-1} {\Delta'}^{-D^2+N^2+1}\right) {\Delta'}^n.
      \label{eq:Delta_bound}
  \end{align}

  Meanwhile, $\rho'_0 = \cE'_1(\rho)$ has minimum eigenvalue $\lambda_0(E'_1(\rho)) \geq \mu$ by assumption.
  So by \cref{contraction_lower_bound,spectral_radius_perturbation} we also have
  \begin{equation}
    \Abs{\braakett{\1| {E'_0}^n |\rho'_0}} = \tr\left({\cE'_0}^n(\rho'_0)\right) \geq \mu\Gamma'^n, \label{eq:Gamma_bound}
  \end{equation}
  where $\Gamma' := \Gamma - \delta$.

  Proceeding similarly to \cref{stopped_CPT_map}, using \cref{eq:Delta_bound,eq:Gamma_bound} we obtain
  \begin{align}
    \abs{\braakett{\1| P' |\rho'_n}}
    &=\Abs{\frac{\braakett{\1| P' {E'_0}^n |\rho'_0}}{\braakett{\1| {E'_0}^n |\rho'_0}}}\\
    &\leq \frac{\sqrt{D}}{\mu} \left(\frac{\Delta'}{\Gamma'}\right)^n
      \kappa_J \Bigl(1 + (D^2-N^2-1) n^{D^2-N^2-1} {\Delta'}^{-D^2+N^2+1}\Bigr) \\
    &=\frac{\sqrt{D}}{\mu}
      \left(\frac{\sqrt{\Delta}+\delta/\sqrt{\Gamma}}{\sqrt{\Gamma}-\delta/\sqrt{\Gamma}}\right)^n
      \kappa_J \left(1 + (D^2-N^2-1) n^{D^2-N^2-1} {\Delta'}^{-D^2+N^2+1}\right).
  \end{align}

  \noindent
  But
  \begin{align}
    \tr(\Pi_0\rho'_n) &\geq \tr(\Pi\rho'_n) - \norm{\Pi-\Pi_0}\\
    &= \braakett{\1| \Pi\ox\Pi |\rho'_n} - \norm{\Pi-\Pi_0}\\
    &= 1 - \abs{\braakett{\1|P|\rho'_n}} - \norm{\Pi-\Pi_0}\\
    &\geq 1 - \abs{\braakett{\1| P' |\rho'_n}} - \norm{P-P'} - \norm{\Pi-\Pi_0},
  \end{align}
  and the \namecref{stopped_fault-resilience} follows.
\end{proof}

\begin{figure}[!htbp]
  \centering
  \includegraphics[width=\textwidth]{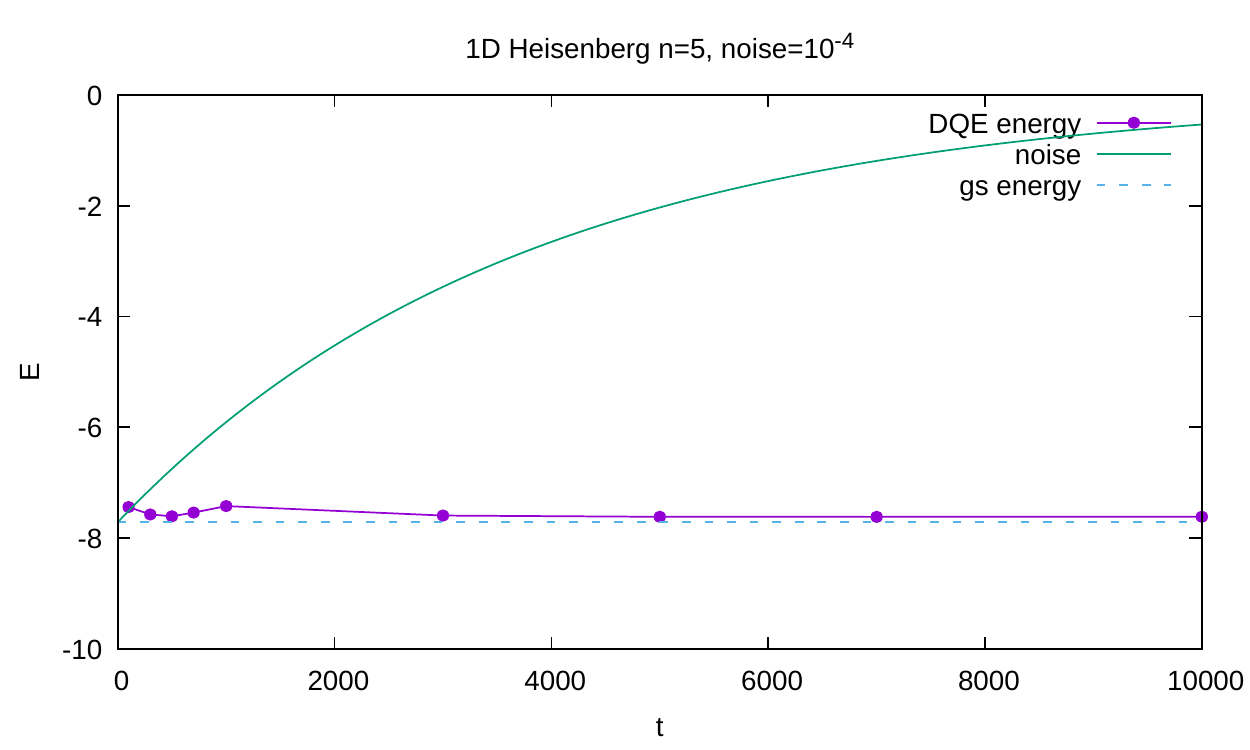}
  \stepcounter{footnote}{\makeatletter\protected@xdef\@thefnmark{\thefootnote}\makeatother}
  \caption{%
    Energy of the DQE output state as a function of maximum run-time $t$ in \cref{secretary_CPT_map} with the standard error model (green circles), for the 1D Heisenberg chain of length~5 using the secretary stopping policy of \cref{secretary_CPT_map}.\makeatletter\@footnotemark\makeatother
    In the standard error model, a 1- or 2-qubit depolarising channel is applied after each 1-qubit or 2-qubit gate, respectively, here with depolarising parameter $10^{-4}$.
    For comparison, the energy of the state that would be obtained by starting from the ground state and applying the same $10^{-4}$ 1-qubit depolarising channel across all qubits $t$ times is shown (blue curve).
    The true ground state energy is also indicated (purple horizontal line).
    The energy of the DQE output state is clearly seen to be independent of the run-time and remains close in energy to the true ground state (green), illustrating the noise- and fault-resilience discussed in \cref{sec:fault-resilience}.
    Whereas without the DQE process, the state decays over the same run-time to the maximally mixed state, which is far from the ground state.}
  \label{fig:1D_Heisenberg_noise}
\end{figure}
\footnotetext{Numerical simulations performed with the help of the QuEST package~\cite{QuEST}.}

\subsection{Fixed-point process resilience}
For completeness, I also prove fault-resilience for the fixed-point process of \cref{CPTP_map}.
However, note that the simple bound derived here is not as strong as that of \cref{stopped_fault-resilience} for the stopped process, and is very likely not tight.

\begin{theorem}\label{fixed-point_fault-resilience}
  Let $K$ be a $(\Delta,\Gamma,\epsilon)$-AGSP for $\Pi_0$ on a Hilbert space of dimension~$D$, and let $N:=\tr\Pi_0$ be the ground state degeneracy.
  Let $\cE$ be the CPT map of \cref{CPTP_map} for $K$, and let $\cE'$ be a CPT map such that $\norm{\cE'-\cE}_1 \leq \delta$.

  The fixed points $\rho'_\infty$ of $\cE'$ satisfy:
  \begin{equation}
    \tr(\Pi_0\rho'_\infty)
    \geq 1 - \left(\frac{1-\Gamma}{1-\Delta}+\delta\right) \left(\frac{D}{N}-1\right) - \epsilon - \delta.
  \end{equation}
\end{theorem}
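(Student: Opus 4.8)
The plan is to mirror the fixed-point argument of \cref{CPTP_map}, treating the faulty channel as a $\delta$-perturbation and absorbing the resulting error into the self-consistent inequality for the ground-space overlap. First I would start from the defining relation of a fixed point, $\rho'_\infty = \cE'(\rho'_\infty)$, split $\cE' = \cE + (\cE'-\cE)$, and test against the projector $\Pi$ of \cref{def:AGSP} associated with $K$:
\[
  \tr(\Pi\rho'_\infty) = \tr\bigl(\Pi\,\cE(\rho'_\infty)\bigr) + \tr\bigl(\Pi\,(\cE'-\cE)(\rho'_\infty)\bigr).
\]
The perturbation term is controlled purely by trace-norm duality: since $\norm{\Pi}\le 1$ and $\rho'_\infty$ is a state,
\[
  \bigl|\tr(\Pi\,(\cE'-\cE)(\rho'_\infty))\bigr| \le \norm{(\cE'-\cE)(\rho'_\infty)}_1 \le \norm{\cE'-\cE}_1 \le \delta .
\]

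Next I would reuse verbatim the intermediate estimate inside the proof of \cref{CPTP_map}, namely that for every state $\rho$,
\[
  \tr(\Pi\,\cE(\rho)) \ge \Bigl(\Gamma - \tfrac{N}{D}(\Gamma-\Delta)\Bigr)\tr(\Pi\rho) + \tfrac{N}{D}(1-\Delta),
\]
which uses only $[K,\Pi]=0$ and the $\Gamma,\Delta$ bounds of \cref{def:AGSP}, and hence applies unchanged to $\rho'_\infty$. Writing $a := \Gamma - \tfrac{N}{D}(\Gamma-\Delta)$ and $b := \tfrac{N}{D}(1-\Delta)$ and combining with the perturbation bound gives the self-consistent inequality $x' \ge a x' + b - \delta$ for $x' := \tr(\Pi\rho'_\infty)$. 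Since $a \le \Gamma < 1$, this rearranges to $x' \ge (b-\delta)/(1-a)$.

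The remaining work is the algebraic simplification into the stated form. I would use the identity $1 - a - b = (1-\Gamma)\bigl(1-\tfrac{N}{D}\bigr)$ to recast the complementary overlap as $1 - x' \le \bigl[(1-\Gamma)(1-\tfrac{N}{D}) + \delta\bigr]/(1-a)$, and then lower-bound the denominator by $1-a = (1-\Gamma)+\tfrac{N}{D}(\Gamma-\Delta)\ge \tfrac{N}{D}(1-\Delta)$ (using $1-\Gamma \ge \tfrac{N}{D}(1-\Gamma)$). This extracts the leading term $\tfrac{1-\Gamma}{1-\Delta}\bigl(\tfrac{D}{N}-1\bigr)$, which already reproduces the $\delta=0$ bound of \cref{CPTP_map}, plus a perturbation contribution governed by $\delta/(1-a)$. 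Rearranging that contribution into the advertised $\bigl(\tfrac{1-\Gamma}{1-\Delta}+\delta\bigr)\bigl(\tfrac{D}{N}-1\bigr)+\delta$ form, and finally passing from $\Pi$ to $\Pi_0$ via $\norm{\Pi-\Pi_0}\le\epsilon$ (costing an additive $\epsilon$), yields the claim.

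I expect the main obstacle to be exactly this last conversion: securing the clean $\delta$-dependence so that the error enters with dimensional weight $\tfrac{D}{N}$ rather than being amplified by the fixed-point sensitivity — a spectral-gap factor that in the crudest execution of the self-consistent argument surfaces as an additional $1/(1-\Delta)$ in the denominator of the $\delta$-term. This is precisely where the \emph{simple but not tight} character of the result is felt: everything outside the perturbation term is a direct transcription of the $\delta=0$ analysis, whereas controlling how the single-step error $\delta$ propagates through the fixed-point condition (without a genuine dimensional or gap blow-up) is the only delicate bookkeeping, and a slightly sharper route via the exact Lyapunov representation $\rho'_\infty = \sum_{n}K^n Y K^n$ of \cref{fixed-point} could be used if one wanted to pin down the constant precisely.
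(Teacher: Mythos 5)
Your derivation follows exactly the same route as the paper's own proof: the same splitting $\cE'=\cE+(\cE'-\cE)$ tested against the AGSP projector $\Pi$, the same H\"older bound $\abs{\tr\bigl(\Pi\,(\cE'-\cE)(\rho)\bigr)}\le\norm{\cE'-\cE}_1\le\delta$, the same verbatim reuse of the intermediate estimate $\tr(\Pi\cE(\rho))\ge a\,\tr(\Pi\rho)+b$ with $a=\Gamma-\tfrac{N}{D}(\Gamma-\Delta)$, $b=\tfrac{N}{D}(1-\Delta)$ from the proof of \cref{CPTP_map}, the same self-consistent inequality $\tr(\Pi\rho'_\infty)\ge(b-\delta)/(1-a)$, and the same final $\Pi\to\Pi_0$ conversion at cost $\epsilon$. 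Your execution of all of this is correct, and your honest simplification,
\begin{equation}
  \tr(\Pi\rho'_\infty)\;\ge\;1-\frac{1-\Gamma}{1-\Delta}\left(\frac{D}{N}-1\right)-\frac{D}{N}\cdot\frac{\delta}{1-\Delta},
\end{equation}
is exactly what this argument yields.

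The step you flag as the main obstacle---removing the $1/(1-\Delta)$ amplification from the $\delta$-term so as to reach the advertised $\bigl(\tfrac{1-\Gamma}{1-\Delta}+\delta\bigr)\bigl(\tfrac{D}{N}-1\bigr)+\delta$ form---is not something you failed to see; it cannot be done. At precisely this point the paper's proof writes
\begin{equation}
  \frac{1-\Delta-\frac{D}{N}\delta}{(\Gamma-\Delta)+\frac{D}{N}(1-\Gamma)}
  \;=\;1-\left(\frac{1-\Gamma}{1-\Delta}+\delta\right)\left(\frac{D}{N}-1\right)-\epsilon-\delta,
\end{equation}
which is not an equality, and read as ``$\ge$'' it fails in general (for instance whenever $\Gamma=1$ and $\Delta,\delta>0$). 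In fact the advertised bound itself is false, not merely unproven: take $\epsilon=0$, $\Gamma=1$, $K=\Pi_0+\sqrt{\Delta}(\1-\Pi_0)$, and perturb by mixing in a state orthogonal to the ground space, $\cE'(\rho)=(1-p)\cE(\rho)+p\,\tr(\rho)\,\sigma_0$ with $\Pi_0\sigma_0=0$, so that $\norm{\cE'-\cE}_1\le 2p=:\delta$. The unique fixed point then satisfies
\begin{equation}
  1-\tr(\Pi_0\rho'_\infty)\;=\;\frac{p\,D/N}{p\,D/N+(1-p)(1-\Delta)},
\end{equation}
which for $p=10^{-3}$, $D/N=2$, $\Delta=0.99$ is $\approx 0.17$, far above the theorem's claimed $\tfrac{D}{N}\delta=4\times10^{-3}$, yet consistent with your bound $\tfrac{D}{N}\,\delta/(1-\Delta)=0.4$. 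So your ``crude'' bound with the spectral-gap amplification of the error term is what the self-consistent argument actually proves, and it appears to be the correct form of the statement; the cleaner $\delta$-dependence in \cref{fixed-point_fault-resilience} is an artefact of an invalid algebraic step in the paper, and you should not attempt to reproduce it.
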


\begin{proof}
  Let $\Pi$ be as in \cref{def:AGSP}.
  For any state $\rho$, we have
  \begin{equation}
    \tr(\Pi\cE'(\rho))
      = \tr(\Pi\cE(\rho)) - \tr\bigl(\Pi\,(\cE-\cE')(\rho)\bigr)
      \geq \tr(\Pi\cE(\rho)) - \delta.
  \end{equation}
  From the proof of \cref{CPTP_map}, we also have
  \begin{equation}
    \tr(\Pi\cE(\rho))
    \geq \left(\Gamma-\frac{N}{D}(\Gamma-\Delta)\right)\tr(\Pi\rho) + \frac{N}{D}(1-\Delta).
  \end{equation}
  Applying these to the perturbed fixed point $\rho'_\infty$, we obtain
  \begin{align}
    \tr(\Pi\rho'_\infty)
    &= \tr(\Pi\cE'(\rho'_\infty)) \\
    &\geq \tr(\Pi\cE(\rho'_\infty)) - \delta \\
    &\geq \left(\Gamma-\frac{N}{D}(\Gamma-\Delta)\right)\tr(\Pi\rho'_\infty) + \frac{N}{D}(1-\Delta) - \delta.
  \end{align}
  Thus
  \begin{equation}
    \tr(\Pi\rho'_\infty) \geq \frac{1-\Delta-\frac{D}{N}\delta}{(\Gamma-\Delta)+\frac{D}{N}(1-\Gamma)}
    = 1 - \left(\frac{1-\Gamma}{1-\Delta}+\delta\right) \left(\frac{D}{N}-1\right) - \epsilon - \delta,
  \end{equation}
  and the \namecref{fixed-point_fault-resilience} follows from $\norm{\Pi-\Pi_0}\leq\epsilon$ as in \cref{CPTP_map}.
\end{proof}

\clearpage

\section{Circuit implementations}\label{sec:implementation}
I work out explicit implementation details for the DQE \cref{DQE} on Pauli Hamiltonians, i.e.\ Hamiltonians whose local terms are tensor products of Pauli operators.
Since almost all quantum computers act on qubits, it makes sense to focus on this case when working out implementation details.
But it is also without loss of generality, since qudits of any dimension can be embedded into $\log d$ qubits, and all qubit Hamiltonians can be decomposed in terms of Paulis.

Consider, therefore, a general Pauli Hamiltonian
\begin{equation}\label{eq:Pauli_Hamiltonian}
  H = \sum_v \alpha_v h_v, \quad h_v = \bigotimes_{i\in v} \sigma_v^{(i)},
\end{equation}
where $v$ are subsets of qubits, $\alpha_v\in\R$ and $\sigma_v \in \{X,Y,Z\}$.
Note that $h_v^2 = \1$ and $\norm{h_v} = 1$.

For Pauli Hamiltonians, the AGSP of \cref{AGSP_sum} simplifies:
\begin{equation}
  k_v = \frac{\1-s_v h_v}{2} = \Pi_v,
  \qquad
  K = \sum_v\kappa_v\Pi_v
\end{equation}
where $s_v = \mathrm{sign}(\alpha_v)$, $\kappa_v = \frac{\abs{\alpha_v}}{\alpha}$ with $\alpha := \sum_v\abs{\alpha_v}$, and $\Pi_v$ is a projector since we have $h_v^2=\1$.
Thus, to approximate $K$ locally, we need to implement generalised measurements of the form
\begin{equation}
  E_0 = (1-\epsilon)\1 + \epsilon\kappa_v\Pi_v = (1-\epsilon(1-\kappa_v))\Pi_v + (1-\epsilon)(\1-\Pi_v).
\end{equation}
To form a complete set of measurement operators $\{E_0,E_1\}$, we require
\begin{equation}
  E_1^\dg E_1 = \1 - E_0^\dg E_0
  = \epsilon(2-\epsilon)(\1-\Pi_v) + \epsilon(1-\kappa_v)(2-\epsilon(1-\kappa_v))\Pi_v,
\end{equation}
hence
\begin{equation}
  E_1 = \sqrt{\epsilon(1-\kappa_v)(2-\epsilon(1-\kappa_v))}\Pi_v + \sqrt{\epsilon(2-\epsilon)}(\1-\Pi_v).
\end{equation}

The Steinspring dilation of the generalised measurement $\{E_0,E_1\}$ is the isometry
\begin{align}
  V = \ket{0}\ox E_0 + \ket{1}\ox E_1
  =\begin{pNiceMatrix}[last-col]
      (1-\epsilon(1-\kappa_v))\1                            & 0              & \ket{0} \Pi      \\
      0                                                     & (1-\epsilon)\1 & \ket{0} \Pi^\perp \\
      \sqrt{\epsilon(1-\kappa_v)(2-\epsilon(1-\kappa_v))}\1 & 0              & \ket{1} \Pi      \\
      0                                      & \sqrt{\epsilon(2-\epsilon)}\1 & \ket{1} \Pi^\perp
     \end{pNiceMatrix},
\end{align}
followed by a projective measurement of the ancilla qubit in the computational basis.
This can be completed to the unitary
\begin{align}
  U &=
  \begin{pNiceMatrix}[last-col]
    (1-\epsilon(1-\kappa_v))\1                            & 0              & -\sqrt{\epsilon(1-\kappa_v)(2-\epsilon(1-\kappa_v))}\1 & 0             & \ket{0} \Pi       \\
    0                                                     & (1-\epsilon)\1 & 0                                     & -\sqrt{\epsilon(2-\epsilon)}\1 & \ket{0} \Pi^\perp \\
    \sqrt{\epsilon(1-\kappa_v)(2-\epsilon(1-\kappa_v))}\1 & 0              & (1-\epsilon(1-\kappa_v))\1                            & 0              & \ket{1} \Pi       \\
    0                                      & \sqrt{\epsilon(2-\epsilon)}\1 & 0                                                     & (1-\epsilon)\1 & \ket{1} \Pi^\perp
  \end{pNiceMatrix}\\
  &= R_\phi\ox\Pi + R_\theta\ox(\1-\Pi) \\
  &= \bigl(R_\phi\ox\1\bigr) \bigl(\1\ox\Pi + R_{\theta-\phi}\ox(\1-\Pi)\bigr),
\end{align}
where the $R$ operations are the rotations
\begin{equation}
  R_x = \begin{pmatrix}
    \cos x & -\sin x \\
    \sin x & \phantom{-}\cos x
  \end{pmatrix},\qquad
  \theta = \cos^{-1}(1-\epsilon), \qquad
  \phi = \cos^{-1}(1-\epsilon(1-\kappa_v)).
\end{equation}
Note that in the case $\kappa_v = 1$ (equivalent to all Pauli terms in the Hamiltonian~\cref{eq:Pauli_Hamiltonian} having identical coefficients), we have $\phi=0$ and the above simplifies to $R_\phi = \1$, $R_{\theta-\phi} = R_\theta$.

Now, $U$ is made up of a controlled-$R$ operation on the ancilla qubit, controlled by the outcome of a $\{\Pi,\1-\Pi\}$ projective measurement on the main qubit register.
Since $\Pi$ is the projector onto the $+1$ eigenspace of a Pauli string $\bigotimes_{i=1}^k\sigma_i$, this can be implemented by \cite[\S 4.3]{Nielsen+Chuang} a Clifford circuit plus three single-qubit rotations:
\begin{equation}\label{measurement_circuit}
  \Qcircuit @C=1em @R=.7em {
    & \gate{\makebox[\widthof{$R_{\theta-\phi}^{-\frac{1}{2}}$}]{\raisebox{.2em}{$R_\phi$}\rule{0pt}{\heightof{$R_{\theta-\phi}^{-\frac{1}{2}}$}}}} & \targ & \targ & \qw & \cdots && \targ & \gate{R_{\theta-\phi}^{-\frac{1}{2}}} &
      \targ & \qw & \cdots && \targ & \targ & \gate{R_{\theta-\phi}^{\frac{1}{2}}} & \meter \\
    & \Gate{H_{\sigma_1}^\dg} & \ctrl{-1} & \qw & \qw & \cdots && \qw & \qw &
      \qw & \qw & \cdots && \qw & \ctrl{-1} & \Gate{H_{\sigma_1}} & \qw \\
    & \Gate{H_{\sigma_2}^\dg} & \qw & \ctrl{-2} & \qw & \cdots && \qw & \qw &
      \qw &  \qw & \cdots && \ctrl{-2} & \qw & \Gate{H_{\sigma_2}} & \qw \\
    & \vdots & & & & \ddots & & & & & & \iddots & & & & \vdots \\\\
    & \Gate{H_{\sigma_k}^\dg} & \qw & \qw & \qw & \cdots && \ctrl{-5} & \qw &
      \ctrl{-5} & \qw & \cdots && \qw & \qw & \Gate{H_{\sigma_k}} & \qw \\
  }
\end{equation}
where $H_{\sigma}$ are the single-qubit unitaries that rotate into the $\sigma$ eigenbasis, i.e.\ $H_\1 = \1$, $H_x = H$, $H_y = SH$ and $H_z = \1$.

The above circuit implements the required weak measurement of a single local Hamiltonian term $\alpha_v h_v$ in \cref{eq:Pauli_Hamiltonian}.
To implement one complete step of \cref{DQE}, a circuit of this form must be replicated for each local term in \cref{eq:Pauli_Hamiltonian}.
Note that for Hamiltonian terms acting on disjoint sets of qubits, these circuits can be implemented in parallel.
So the for a local Hamiltonian on a lattice, the total circuit depth for a single step of \cref{DQE} is constant, independent of the number of qubits.
(More generally, the optimal circuit depth is proportional to the chromatic number of the interaction graph of the Hamiltonian.)

This whole quantum circuit must then be applied repeatedly, until the stopping condition of \cref{DQE} is satisfied.
After each circuit application, depending on the measurement outcomes in that step, the resampling operation $\cR$ of \cref{DQE} may need to be applied before applying the next iteration of the circuit.
If simple replacement (either global or local) with the maximally mixed state is used as the resampling rule, this can be implemented straightforwardly, e.g.\ by discarding the appropriate qubits and bringing in fresh ones initialised in a random computational basis state.
For global resampling, the entire quantum state is discarded and replaced; for local resampling, only those qubits involved in the specific local term(s) that gave outcome~1 in the weak measurement are discarded and replaced.
In both cases, local and global, the resampling can also be applied immediately after a 1~outcome is obtained for a local term, rather than waiting until all local terms have been measured in the current time step.
If the resampling is applied after each local measurement, then only a single ancilla qubit is required; this ancilla can be reused in every measurement.
The overall structure of \cref{DQE} is illustrated in \cref{fig:DQE_product_schematic,fig:DQE_mixture_schematic}.

\begin{figure}
  \center
  \makebox[\textwidth][c]{
  \Qcircuit @C=1em @R=.7em {
    &&&                &&        & \cwx[4]                                &            & \cwx[4]                                &           &       &&     & \cwx[4]                                &           &        &&     & \cwx[4]                                &           & \\
    \\
    &&&                &&        &                                        & \cctrl{2}  &                                        & \cctrl{2} &       &&     &                                        & \cctrl{2} &        &&     &                                        & \cctrl{2} \\
    \\
    &&&\ustick{\ket{0}}&&        & \multigate{5}{\{\cE_{1,0},\cE_{1,1}\}} & \gate{X}   & \multigate{5}{\{\cE_{2,0},\cE_{2,1}\}} & \gate{X} & \cdots && \qw & \multigate{5}{\{\cE_{i,0},\cE_{i,1}\}} & \gate{X}  & \cdots && \qw & \multigate{5}{\{\cE_{m,0},\cE_{m,1}\}} & \gate{X}  & \qw & \ustick{\ket{0}} \qw & \qw & \qw \ar@{->} `r/.5em[dddddddddddd]`[ddddddddddd]`[dddddddddddlllllllllllllllllllllll]`[lllllllllllllllllll] [llllllllllllllllll]\\
    &&&                &&        & \nghost{\{\cE_{1,0},\cE_{1,1}\}}       &            & \nghost{\{\cE_{2,0},\cE_{2,1}\}}       &          &        &&     & \nghost{\{\cE_{i,0},\cE_{i,1}\}}       &           &        &&     & \nghost{\{\cE_{m,0},\cE_{m,1}\}}       \\
    &&&                &&        & \ghost{\{\cE_{1,0},\cE_{1,1}\}}        & \qw        & \ghost{\{\cE_{2,0},\cE_{2,1}\}}        & \qw      & \cdots && \qw & \ghost{\{\cE_{i,0},\cE_{i,1}\}}        & \qw       & \cdots && \qw & \ghost{\{\cE_{m,0},\cE_{m,1}\}}        & \qw       & \qw & \qw & \qw \ar@{->} `r/.5em[dddddddd]`[dddddddd] `[dddddddlllllllllllllllllllll]`[llllllllllllllllll] [lllllllllllllllll]\\
    &&&                &&        & \ghost{\{\cE_{1,0},\cE_{1,1}\}}        & \qw        & \ghost{\{\cE_{2,0},\cE_{2,1}\}}        & \qw      & \cdots && \qw & \ghost{\{\cE_{i,0},\cE_{i,1}\}}        & \qw       & \cdots && \qw & \ghost{\{\cE_{m,0},\cE_{m,1}\}}        & \qw       & \qw & \qw \ar@{->} `r/.5em[dddddd]`[dddddd]`[ddddddlllllllllllllllllll]`[lllllllllllllllll] [llllllllllllllll]\\
    &&&                && \vdots & \nghost{\{\cE_{1,0},\cE_{1,1}\}}       & \vdots     & \nghost{\{\cE_{2,0},\cE_{2,1}\}}       &          & \vdots &&     & \nghost{\{\cE_{i,0},\cE_{i,1}\}}       &           & \vdots &&     & \nghost{\{\cE_{m,0},\cE_{m,1}\}}       & \vdots    \\
    &&&                &&        & \ghost{\{\cE_{1,0},\cE_{1,1}\}}        & \qw        & \ghost{\{\cE_{2,0},\cE_{2,1}\}}        & \qw      & \cdots && \qw & \ghost{\{\cE_{i,0},\cE_{i,1}\}}        & \qw       & \cdots && \qw & \ghost{\{\cE_{m,0},\cE_{m,1}\}}        & \qw       \ar@{->} `r/.5em[ddd]`[dd]`[ddlllllllllllllll]`[lllllllllllllll] [llllllllllllll]\\
    &&&&&&&&&&&&&&&&&&&&&&&&&\\
    &&&&&&&&&&&&&&&&&&&&&&&&&\\
    &&&&&&&&&&&&&&&&&&&&&&&&&\\
    &&&&&&&&&&&&&&&&&&&&&&&&&\\
    &&&&&&&&&&&&&&&&&&&&&&&&&\\
    &&&&&&&&&&&&&&&&&&&&&&&&&\\
    &&&&&&&&&&&&&&&&&&&&&&&&&
  }}
  \caption{Schematic illustration of the DQE \cref{DQE} when using the AGSP from \cref{AGSP_product}.
    The qubits are passed through the circuit of \cref{measurement_circuit}, repeated for each local term in the Hamiltonian.
    The top-most qubit is the ancilla qubit that gets measured in \cref{measurement_circuit} to produce the classical measurement outcomes, indicated by the double-wires.
    This ancilla is then reset to $\ket{0}$ by the controlled-$X$ gate and reused in the subsequent measurement.
    (Alternatively, a fresh $\ket{0}$ ancilla could also be used for each measurement.)
    The qubits are repeatedly cycled through this circuit until the measurement outcomes satisfy the stopping condition of \cref{DQE}.}
  \label{fig:DQE_product_schematic}
\end{figure}

\begin{figure}
  \center
  \makebox[\textwidth][c]{
  \Qcircuit @C=1em @R=.7em {
    &&&&&        &                                                                 &     &     &     &     &     &     &        & \cwx[4]                                &            \\
    \\
    &&&&&        &                                                                 &     &     &     &     &     &     &        &                                        & \cctrl{2}  \\
    \\
    &&&&&        &                                                                 &     &     &     &     &     &     & \qw    & \multigate{5}{\{\cE_{2,0},\cE_{2,1}\}} & \gate{X}   & \qw \ar@{-}[ddddddddddddddrrrrr] \\
    &&&&&        &                                                                 &     &     &     &     &     &     &        & \nghost{\{\cE_{2,0},\cE_{2,1}\}}       &            \\
    &&&&&        &                                                                 &     &     &     &     &     &     & \qw    & \ghost{\{\cE_{2,0},\cE_{2,1}\}}        & \qw        & \qw \ar@{-}[ddddddddddddddrrrrr] \\
    &&&&&        &                                                                 &     &     &     &     &     &     & \qw    & \ghost{\{\cE_{2,0},\cE_{2,1}\}}        & \qw        & \qw \ar@{-}[ddddddddddddddrrrrr] \\
    &&&&&        &                                                                 &     &     &     &     &     &     & \vdots & \nghost{\{\cE_{2,0},\cE_{2,1}\}}       & \vdots     \\
    &&&&&        &                                                                 &     &     &     &     &     &     & \qw    & \ghost{\{\cE_{2,0},\cE_{2,1}\}}        & \qw        & \qw \ar@{-}[ddddddddddddddrrrrr] \\
    \\
    &&&&&        &                                                                 &     &     &     &     &     &     &        & \vdots \\
    \\
    \\
    &&&&&        &                                                                 &     &     &     &     &     &     &        & \cwx[4]                                &            \\
    \\
    &&&&&        &                                                                 &     &     &     &     &     &     &        &                                        & \cctrl{2}  \\
    \\
    &&& \ustick{\ket{0}} &&        & \qw \ar@{-}[uuuuuuuuuuuuuurrrrrr] \ar@{-}[ddddddddddddddrrrrrr] & \qw & \qw & \qw & \qw & \qw & \qw & \qw    & \multigate{5}{\{\cE_{i,0},\cE_{i,1}\}} & \gate{X}   & \qw & \qw & \qw & \qw & \qw & \qw & \qw    & \qw & \ustick{\ket{0}} \qw & \qw & \qw \ar@{->} `r/.5em[dddddddddddddddddddddddddd]`[dddddddddddddddddddddddddd]`[ddddddddddddddddddddddddddllllllllllllllllllllllllll]`[llllllllllllllllllllllllll] [lllllllllllllllllllll]\\
    &&&                  &&        &                                                                 &     &     &     &     &     &     &        & \nghost{\{\cE_{i,0},\cE_{i,1}\}}       &            \\
    &&&                  &&        & \qw \ar@{-}[uuuuuuuuuuuuuurrrrrr] \ar@{-}[ddddddddddddddrrrrrr] & \qw & \qw & \qw & \qw & \qw & \qw & \qw    & \ghost{\{\cE_{i,0},\cE_{i,1}\}}        & \qw        & \qw & \qw & \qw & \qw & \qw & \qw & \qw    & \qw & \qw & \qw \ar@{->} `r/.5em[ddddddddddddddddddddddd]`[ddddddddddddddddddddddd]`[dddddddddddddddddddddddllllllllllllllllllllllll]`[llllllllllllllllllllllll] [llllllllllllllllllll]\\
    &&&                  &&        & \qw \ar@{-}[uuuuuuuuuuuuuurrrrrr] \ar@{-}[ddddddddddddddrrrrrr] & \qw & \qw & \qw & \qw & \qw & \qw & \qw    & \ghost{\{\cE_{i,0},\cE_{i,1}\}}        & \qw        & \qw & \qw & \qw & \qw & \qw & \qw & \qw    & \qw & \qw \ar@{->} `r/.5em[ddddddddddddddddddddd]`[ddddddddddddddddddddd]`[dddddddddddddddddddddllllllllllllllllllllll]`[llllllllllllllllllllll] [lllllllllllllllllll]\\
    &&&                  && \vdots &                                                                 &     &     &     &     &     &     & \vdots & \nghost{\{\cE_{i,0},\cE_{i,1}\}}       & \vdots     &     &     &     &     &     &     & \vdots &     &     &     &     \\
    &&&                  &&        & \qw \ar@{-}[uuuuuuuuuuuuuurrrrrr] \ar@{-}[ddddddddddddddrrrrrr] & \qw & \qw & \qw & \qw & \qw & \qw & \qw    & \ghost{\{\cE_{i,0},\cE_{i,1}\}}        & \qw        & \qw & \qw & \qw & \qw & \qw & \qw & \qw \ar@{->} `r/.5em[ddddddddddddddddd]`[ddddddddddddddddd]`[dddddddddddddddddllllllllllllllllll]`[llllllllllllllllll] [lllllllllllllllll]\\
    \\
    &&&                  &&        &                                                                 &     &     &     &     &     &     &        & \vdots \\
    \\
    \\
    &&&                  &&        &                                                                 &     &     &     &     &     &     &        & \cwx[4]                                &            & \\
    \\
    &&&                  &&        &                                                                 &     &     &     &     &     &     &        &                                        & \cctrl{2}  & \\
    \\
    &&&                  &&        &                                                                 &     &     &     &     &     &     & \qw    & \multigate{5}{\{\cE_{m,0},\cE_{m,1}\}} & \gate{X}   & \qw \ar@{-}[uuuuuuuuuuuuuurrrrr] \\
    &&&                  &&        &                                                                 &     &     &     &     &     &     &        & \nghost{\{\cE_{m,0},\cE_{m,1}\}}       &            &     \\
    &&&                  &&        &                                                                 &     &     &     &     &     &     & \qw    & \ghost{\{\cE_{m,0},\cE_{m,1}\}}        & \qw        & \qw \ar@{-}[uuuuuuuuuuuuuurrrrr] \\
    &&&                  &&        &                                                                 &     &     &     &     &     &     & \qw    & \ghost{\{\cE_{m,0},\cE_{m,1}\}}        & \qw        & \qw \ar@{-}[uuuuuuuuuuuuuurrrrr] \\
    &&&                  &&        &                                                                 &     &     &     &     &     &     & \vdots & \nghost{\{\cE_{m,0},\cE_{m,1}\}}       & \vdots     &     \\
    &&&                  &&        &                                                                 &     &     &     &     &     &     & \qw    & \ghost{\{\cE_{m,0},\cE_{m,1}\}}        & \qw        & \qw \ar@{-}[uuuuuuuuuuuuuurrrrr] \\
    &&&&&&&&&&&&&&&&&&&&&&&&&&\\
    &&&&&&&&&&&&&&&&&&&&&&&&&&\\
    &&&&&&&&&&&&&&&&&&&&&&&&&&\\
    &&&&&&&&&&&&&&&&&&&&&&&&&&\\
    &&&&&&&&&&&&&&&&&&&&&&&&&&\\
    &&&&&&&&&&&&&&&&&&&&&&&&&&\\
    &&&&&&&&&&&&&&&&&&&&&&&&&&
  }}
  \caption{Schematic illustration of the DQE \cref{DQE} when using the AGSP from \cref{AGSP_mixture}.
    The qubits are passed through the circuit of \cref{measurement_circuit}, selected at random from each local term in the Hamiltonian.
    The top-most qubit in any block is the ancilla qubit that gets measured in circuit \cref{measurement_circuit}, producing the classical measurement outcomes indicated by the double-wires.
    This ancilla is then reset to $\ket{0}$ by the controlled-$X$ gate and reused in the subsequent iteration.
    (Alternatively, a fresh $\ket{0}$ ancilla could also be used for each measurement.)
    The qubits are repeatedly cycled through this circuit until the measurement outcomes satisfy the stopping condition of \cref{DQE}.}
  \label{fig:DQE_mixture_schematic}
\end{figure}

Although this algorithm can be implemented on any quantum hardware able to carry out intermediate measurements during the circuit, the structure of the DQE \cref{DQE} lends itself particularly well to hardware architectures with ``flying'' qubits which can readily be cycled through the same circuit repeatedly.
In particular, many of the features of integrated photonic quantum hardware seem especially well-suited to DQE~\cite{PsiQ_architecture}:
\begin{itemize}
\item Repeatedly cycling photonic qubits through the same quantum circuit module is a natural part of photonic quantum computing hardware.
\item Clifford circuits are inherently less costly to implement in the measurement- and fusion-based quantum computation schemes being implemented on photonic hardware; most of the DQE circuit consists of Clifford gates.
\item The only non-Clifford gate required---the single-qubit $R$ rotation---is the natural single-qubit gate produced by a non-equal beam-splitter~\cite{PsiQ_gates}.
\item Moreover, this single-qubit rotation is always applied to the same ancilla qubit (if the same ancilla is reused for each measurement, as described above).
  So the non-Clifford gate in the circuit only needs to be implemented for this one qubit line.
\end{itemize}

\clearpage

\section{Extensions}\label{sec:conclusions}

There are a number of natural extensions of the DQE \cref{DQE}, beyond the general algorithm discussed so far.
We describe some of these briefly here.

\subsection{Stabiliser-encoded Hamiltonians}

Stabiliser Hamiltonians, formed by taking the sum over the projectors onto the -1~eigenspaces of the local stabilisers in a stabiliser code~\cite[\S 10.5]{Nielsen+Chuang}, are a special case whose ground states can be prepared \emph{efficiently} using the original dissipative state engineering algorithm~\cite{VWC}.
(Note that this is not the only way to efficiently prepare stabiliser states.)

Hamiltonians consisting of a stabiliser part, and a non-stabiliser part which commutes with this, arise naturally in various contexts:
\begin{gather}
  H = H_{\text{stab}} + H_{\text{enc}}, \\
  \text{where } H_{\text{stab}} = \sum_{s\in S} \Pi_s, \quad
  \norm{H_{\text{enc}}} \leq 1, \quad
  [H_{\text{enc}}, H_{\text{stab}}] = 0,
\end{gather}
where $S$ is a set of generators of a stabiliser code and $\Pi_s$ is the projector onto the -1~eigenspace of $s\in S$.
For example, fermion-to-qubit encodings generate Hamiltonians of this form~\cite{VerstraeteCirac,BravyiKitaev,compact_encoding}, as do recent rigorous constructions of toy models of AdS/CFT boundary Hamiltonians~\cite{AdS-CFT,AdS-CFT_random}.

For such Hamiltonians, instead of running the DQE algorithm on the full Hamiltonian, we can first efficiently prepare a state in the stabiliser code subspace,\footnote{Or in any other subspace with a given stabiliser syndrome.}
then run the DQE algorithm starting from that state just on the non-stabiliser part of the Hamiltonian $H_{\text{enc}}$.
Since $H_{\text{enc}}$---end hence the generalised measurement operators in \cref{DQE}---commute with $H_{\text{stab}}$, in the absence of errors the state will remain in the stabiliser subspace, and will converge to the lowest energy state within that subspace.
(Which in many such cases is guaranteed to be the global ground state.)

However, in the presence of errors and noise, the state will in general leak out of the stabiliser subspace.
Therefore, better than only running the stabiliser state preparation at the beginning, is to simultaneously run both the dissipative stabiliser state preparation algorithm on $H_{\text{stab}}$ and the DQE algorithm on $H_{\text{enc}}$.
This amounts to running the DQE \cref{DQE} on the whole Hamiltonian, but setting $\epsilon = 1$ for the $H_{\text{stab}}$ terms so that we perform a projective measurement of $\Pi_s$ for those terms.
(One can also choose a different resampling procedure for the stabiliser terms, such as locally rotating to the +1~eigenspace of $s$ when obtaining the wrong outcome of the $\Pi_s$ measurement, rather than locally resetting to the maximally mixed state.)

In this way, the state is continuously driven back to the correct stabiliser subspace, whilst converging within that subspace to the lowest energy state.

\subsection{Probabilistic gates}

Many photonic architectures can only naturally carry out non-Clifford gates probabilistically, but where failures are heralded~\cite{KLM,PsiQ_gates}.
To construct a scalable quantum computer, one typically uses measurement-based quantum computation schemes, with sophisticated error correcting codes and fault-tolerant gate implementations.
However, the DQE algorithm allows a more simplistic approach: if a probabilistic gate fails, one can simply treat this as a measurement failure (a ``1'' outcome) in the DQE \cref{DQE}, and carry on running.
With global resampling (see \cref{sec:stopping}, this is equivalent to the naive strategy of retrying the while quantum algorithm until all the probabilistic gates happen to succeed.
In the context of DQE, this increases (by an exponential factor) the expected stopping time under any of the stopping strategies in \cref{sec:stopped}.

However, with local resampling strategies (see \cref{sec:resampling}), treating gate failures are measurement failures approach no longer throws away all progress towards the ground state upon a single failure.
Thus it may give reasonable performance without the overhead of general fault-tolerant schemes.
This need not be all-or-nothing: tolerating some additional probability of measurement failures due to gate failures in the DQE algorithm, whilst using fault-tolerant measurement-based schemes to reduce the probability of gate failures, allows a trade-off between increased DQE run-time and increased measurement-based quantum computation overhead.

\subsection{Variational DQE and VQE Initial States}

Whilst the DQE \cref{DQE} does not require any optimisation over parameters to succeed (\cref{decaying_CPT_map}), there is also a natural extension to a variational version of the algorithm.
This may have advantages in achieving faster convergence to the ground state on certain systems, albeit at additional computation cost in implementing the variational optimisation step.

In the variational version of the DQE algorithm (VDQE), the total number of iterations is fixed in advance to some value $t$, e.g.\ determined by what is feasible on available hardware, and an initial sequence of $\epsilon_i$ in \cref{DQE} is chosen.
The DQE algorithm is then run, and the energy of the output state measured as in VQE.
The sequence of $\epsilon_i$ values are then variationally optimised over to minimise the energy.

Another possibility is to use VQE to generate the \emph{starting state} for the DQE \cref{DQE}, instead of initialising and resetting to the maximally mixed state.
This will not affect the proofs of convergence to the ground state in \cref{stopped_CPT_map,decaying_CPT_map} (note that the initial state $\rho_0$ is left unspecified until the end of those proofs).
But starting from a state that is closer to the ground state may reduce the run-time.
This approach can be thought of as using DQE to ``polish up'' a VQE state, potentially combining the speed benefits of VQE (once the VQE circuit has been optimised) with the convergence guarantees of DQE.
Exploring whether this is in fact the case is an interesting future direction.

\subsection{Combinatorial optimisation}

Combinatorial optimisation problems can equivalently be formulated as ground state problems for (classical) Hamiltonians.
For example, in the prototypical \textsc{MAXSAT} problem on boolean variables, we are given a set of clauses $C$ each of which constrains a subset of the variables to a given bit-string.
The task is then to find an assignment to the boolean variables that minimises the number of violated constraints.
This is exactly equivalent to finding the ground state of the Hamiltonian
\begin{equation}
  H_{\text{MAXSAT}} = \sum_{c\in C}\Pi_c
\end{equation}
where $\Pi_c$ is the projector onto the bit strings that violate the constraint $c\in C$.

Many local search algorithms for combinatorial optimisation problems can be viewed as special cases of the DQE \cref{DQE}, where $\epsilon = 1$ so that all the measurements become projective measurements in the computational basis and a particular resampling map is chosen which resamples states in the computational basis.
Just as quantum algorithms such as QAOA~\cite{QAOA} may give an algorithmic speedup on classical constraint optimisation problems~\cite{Sami_QAOA} (albeit still exponential-time, but potentially smaller exponent), the more general quantum DQE algorithm with $\epsilon < 1$ may give a similar computational advantage over classical local search algorithms.

\printbibliography

\end{document}